\newtheorem{theorem}{Theorem}
\newtheorem{lemma}{Lemma}
\newtheorem{claim}[lemma]{Claim}
\newtheorem{remark}{Remark}
\newtheorem{condition}{Condition}
\newcommand{\G}{\mathcal{G}}
\newcommand{\E}{\mathbb{E}}
\newcommand{\B}{B}
\newcommand{\TD}{\mathbb{T}_{\Delta}}
\def\eps{\varepsilon}
\def\epsilon{\varepsilon}
\newcommand{\TDary}{\hat{\mathbb{T}}_{\Delta}}
\newcommand{\fptas}{\mathsf{FPTAS}}
\newcommand{\fpras}{\mathsf{FPRAS}}
\newcommand\Tree{\mathbb{T}}
\newcommand{\norm}[1]{\lVert#1\rVert}
\begin{document}

\title{Inapproximability of the Partition Function for the Antiferromagnetic Ising 
and Hard-Core Models}

\author{Andreas Galanis\thanks{
  Department of Computer Science, University of Oxford, Wolfson Building, Parks Road, Oxford, OX1~3QD, UK. Email: agalanis@cs.ox.ac.uk.
  The research leading to these results has received funding from the European Research Council under
  the European Union's Seventh Framework Programme (FP7/2007-2013) ERC grant agreement no.\ 334828. The paper
  reflects only the authors' views and not the views of the ERC or the European Commission.
  The European Union is not liable for any use that may be made of the information contained therein.}
\and Daniel \v{S}tefankovi\v{c}\thanks{Department of Computer Science, University of Rochester,
Rochester, NY 14627.  Email: stefanko@cs.rochester.edu. Research supported in part by NSF grant CCF-0910415.}
\and Eric Vigoda\thanks{School of Computer Science, Georgia
Institute of Technology, Atlanta GA 30332.
Email: vigoda@cc.gatech.edu. Research supported in part by NSF grant CCF-1217458.}
}

\maketitle

\begin{abstract}
Recent inapproximability results of Sly (2010), together with
an approximation algorithm presented by Weitz (2006) establish
a beautiful picture for the computational complexity of approximating
the partition function of the hard-core model.  Let $\lambda_c(\Tree_\Delta)$
denote the critical activity for the hard-model on the infinite $\Delta$-regular
tree.  Weitz presented an $\fptas$ for the partition function when 
$\lambda<\lambda_c(\Tree_\Delta)$ for graphs with constant maximum degree $\Delta$.
In contrast, Sly showed that for all $\Delta\geq 3$, there exists $\eps_\Delta>0$ such
that (unless $RP=NP$) there is no $\fpras$ for approximating the partition function
on graphs of maximum degree $\Delta$ for activities $\lambda$ satisfying
$\lambda_c(\Tree_\Delta)<\lambda<\lambda_c(\Tree_\Delta)+\eps_\Delta$.

We prove that a similar phenomenon holds for the antiferromagnetic Ising model.
Sinclair, Srivastava, and Thurley (2014) extended Weitz's approach to 
the antiferromagnetic Ising model,
yielding an $\fptas$ for the partition function 
for all graphs of constant maximum degree $\Delta$ when the
parameters of the model lie in the uniqueness region of the infinite $\Delta$-regular tree.
We prove the complementary result for the antiferrogmanetic Ising model without
external field, namely, that
unless $RP=NP$, for all $\Delta\geq 3$, 
there is no $\fpras$ for approximating the partition function
on graphs of maximum degree $\Delta$ when the inverse temperature lies
in the non-uniqueness region of the infinite tree $\Tree_\Delta$.
Our proof works by relating certain second moment calculations for
random $\Delta$-regular bipartite graphs to the tree recursions used to establish
the critical points on the infinite tree.  
\end{abstract}

\section{Introduction}

A remarkable computational transition has recently been established 
for the complexity of approximating the partition
function of the hard-core model.  Amazingly, this computational  
transition coincides exactly with the statistical physics phase
transition on infinite $\Delta$-regular trees.  
In this work we prove that a similar phenomenon holds for a much more
general class of 2-spin models.

Our general setup is 2-spin systems on an input graph $G=(V,E)$
with maximum degree $\Delta$. We follow the setup of several related previous works \cite{GJP,SST,LLY}.
Configurations of the system are assignments $\sigma: V\rightarrow\{-1,+1\}$. For a given 
 configuration $\sigma\in \{-1,+1\}^V$, denote by $n_{-}(\sigma)$ the 
 number of vertices assigned $-1$, $m^{-}(\sigma)$ the number of 
 edges with both endpoints assigned $-1$ and $m^{+}(\sigma)$ the number of 
 edges with both endpoints assigned $+1$. 
 
 There are three non-negative parameters $B_1,B_2$ and $\lambda$ where 
 $B_1$ is the edge activity for $(-,-)$ edges, 
 $B_2$ is the edge activity for $(+,+)$ edges, and 
 $\lambda$ is the vertex activity (or external field).
 A configuration $\sigma\in \{-1,+1\}^V $ has 
 weight 
\[w_G(\sigma)=\lambda^{n_{-}(\sigma)} \B_1^{m^{-}(\sigma)}\B_2^{m^{+}(\sigma)}.\] 
The Gibbs distribution $\mu_G(\cdot)$ is over the set $\{-1,+1\}^V$ 
where $\mu_G(\sigma)=w_G(\sigma)/Z$, where
$Z=Z_G(B_1,B_2,\lambda)$ is a normalizing factor, known as the partition function, defined as:
\[ Z  := \sum_{\sigma\in \{-1,+1\}^V} w_G(\sigma)=\sum_{\sigma\in \{-1,+1\}^V}\lambda^{n_{-}
(\sigma)} \B_1^{m^{-}(\sigma)}\B_2^{m^{+}(\sigma)}.\]

Let us point out a few examples of 2-spin models that are of particular interest.
Setting $B_1=0$ and $B_2=1$, the only configurations with 
positive weight in the Gibbs distribution induce independent sets of $G$.  This case is known as the
{\em hard-core model} with fugacity $\lambda$.
Setting $B_1=B_2=B$,
one recovers the Ising model.  The case $\lambda=1$ is the Ising model without external field. 
When $B<1$ it is the antiferromagnetic Ising model, and when $B\geq1$ it is the
ferromagnetic Ising model.  In general, when $B_1B_2<1$ the model is called {\em antiferromagnetic},
and when $B_1B_2\geq1$ it is ferromagnetic.

The Ising and hard-core models (and, more generally, spin models) have been studied thoroughly in various contexts, especially 
in statistical physics. Their interpretation as idealized models of microscopic interaction within a body have nurtured research on understanding their macroscopic behavior on appropriate underlying graph structures, such as the infinite grid or the infinite triangular lattice. The intriguing questions arising have made an impact on other fields as well, most notably probability and computer science. For illuminating accounts of such directions, we refer the reader to \cite{Mezard} and \cite{Talagrand}. 
%Determining a \textit{ground state} of the Gibbs distribution, i.e., a configuration with largest weight, reduces to NP-complete problems in the cases of the Ising antiferromagnetic and hard-core models.

In this paper, we focus on the computational complexity of approximating the partition function in 2-spin models. A long series of works, which we only touch upon later in this introduction, have identified the connection of the problem with decay of correlation properties in Gibbs measures of the infinite $\Delta$-regular tree. In statistical physics terms, this is known as the phase transition on the Bethe lattice. To briefly sketch this important concept, let $\TD$ denote the infinite $\Delta$-regular tree. When the parameters of the 2-spin model lie in the so-called uniqueness region of $\Tree_\Delta$, 
 for finite  complete trees of height $h$, the influence on the root from fixing a configuration on the leaves dies off in the limit as the height goes to infinity. In sharp contrast, in the non-uniqueness region there exist configurations of the leaves for which their influence on the root persists in the limit. For convenience, we often refer to the uniqueness region omitting the reference to 
$\Tree_\Delta$.

 The uniqueness/non-uniqueness regions can be determined rather easily on $\Tree_\Delta$ (for other infinite graphs this is far from easy, see \cite{Beffara} and \cite{RSTVY} for recent advances in the two dimensional grid). For the hard-core model, Kelly \cite{Kelly} showed non-uniqueness holds iff $\lambda>\lambda_c(\Tree_\Delta):= \frac{(\Delta-1)^{\Delta-1}}{(\Delta-2)^\Delta}$. For the antiferromagnetic Ising model without external field, non-uniqueness holds iff $B<B_c(\Tree_\Delta):=\frac{\Delta-2}{\Delta}$ (see, e.g., \cite{SST}). For general antiferromagnetic 2-spin models with soft constraints
(i.e., $B_1B_2>0$), non-uniqueness holds iff $\sqrt{B_1B_2}<(\Delta-2)/\Delta$ and $\lambda\in (\lambda_1,\lambda_2)$ for some critical values $\lambda_1(\Delta,B_1,B_2),$ $\lambda_2(\Delta,B_1,B_2)$ (see, e.g., \cite{LLY}). 

Before stating our results, it will be instructive to review some known results from the literature. When $\lambda=1$, the partition function is the number of independent  sets in $G$, and this quantity is \#P-complete to compute exactly \cite{Valiant},  even when $\Delta=3$ \cite{Greenhill}. For the Ising model, once again, exact computation of the partition function 
for graphs of maximum degree $\Delta$ is \#P-complete, even for the ferromagnetic model \cite{JS}. Hence, the focus is on the computational complexity of approximating the partition function.

This is where the phase transition on $\Tree_\Delta$ comes into play. Specifically, for the hard-core model, recently, a deep connection between uniqueness/non-uniqueness and the complexity of approximating the partition function was established. Namely, for graphs of constant maximum degree $\Delta$, Weitz \cite{Weitz} presented a beautiful $\fptas$ for approximating the partition function when $\lambda<\lambda_c(\Tree_\Delta)$. On the other hand, Sly \cite{Sly10} presented a clever reduction which proves that
 for every $\Delta\geq 3$, there exists $\eps_\Delta>0$, such that  it is NP-hard (unless RP=NP) to approximate the partition function for graphs of maximum degree $\Delta$ for any $\lambda$ where $\lambda_c(\Tree_\Delta)<\lambda<\lambda_c(\Tree_\Delta)+\eps_\Delta$.
 
It was believed that Sly's inapproximability result should hold for all $\lambda>\lambda_c(\Tree_\Delta)$.
However, Sly's work utilized results of Mossel, Weitz and Wormald \cite{MWW}
which showed that for a random $\Delta$-regular bipartite graph, an independent set
chosen from the Gibbs distribution is ``unbalanced'' with high probability
for $\lambda$ under the same condition as in Sly's result.  

\cite{MWW} used a technically complicated second moment argument that is characterized in \cite{Sly10} as a ``technical tour-de-force''.  
In \cite{Galanis} we extended the results of \cite{MWW} to 
all $\lambda>\lambda_c(\Tree_\Delta)$
for $\Delta=3$ and $\Delta\geq 6$.  However, that work built upon \cite{MWW}
and had an even more complicated analysis.  
In this work, as a byproduct of the approach we devise for
coping with the Ising model, we can prove inapproximability for the
hard-core model for all $\lambda>\lambda_c(\Tree_\Delta)$
for $\Delta=3,4,5$, and our proof is simpler than in \cite{Galanis}.  
This settles the picture for the hard-core model,
and also proves Conjecture 1.2 of \cite{MWW}.

\begin{theorem}
\label{thm:hardcore}
Unless $NP=RP$, for the hard-core model, for all $\Delta\geq 3$, for all $\lambda$ in the 
non-uniqueness region of the infinite tree $\Tree_\Delta$, 
there does not exist an $\fpras$ for the partition function at activity $\lambda$
for graphs of maximum degree at most $\Delta$.
\end{theorem}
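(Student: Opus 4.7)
The plan is to use Sly's reduction from approximating $Z_G(\lambda)$ on bounded degree graphs to a structural problem about the Gibbs distribution on a random $\Delta$-regular bipartite graph $G_n=(V_1\cup V_2,E)$ with $|V_1|=|V_2|=n$. The reduction succeeds whenever the Gibbs measure exhibits two symmetric ``phases'' of asymptotically equal weight: one concentrated on independent sets $I$ with $|I\cap V_1|/n\approx \alpha^*$ and $|I\cap V_2|/n\approx \beta^*$, and a second with the roles of $V_1$ and $V_2$ swapped; together these phases must carry a $(1-o(1))$ fraction of the total weight, separated by an exponential bottleneck. This reduces Theorem~\ref{thm:hardcore} to a first and second moment statement on $G_n$: for every $\lambda>\lambda_c(\Tree_\Delta)$, show $\mathbb{E}[Z(G_n,\lambda)^2]=O(\mathbb{E}[Z(G_n,\lambda)]^2)$, with the dominant contribution to the second moment coming from pairs of configurations in the same phase.

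For the first moment, I would decompose $\mathbb{E}[Z]=\sum_\alpha \mathbb{E}[Z_\alpha]$ over side-densities $\alpha=(\alpha_1,\alpha_2)$ and use the configuration model to write $\mathbb{E}[Z_\alpha]=\exp(n\Phi_1(\alpha)+o(n))$, where $\Phi_1$ combines a vertex entropy with the log-probability that a random bipartite matching avoids placing both endpoints of any edge in the independent set. After eliminating the edge Lagrange multiplier, stationarity of $\Phi_1$ collapses to the hard-core tree recursion on $\Tree_\Delta$. In the non-uniqueness region this recursion has three fixed points: a symmetric one and a period-two orbit; by the bipartite symmetry this yields two strict maximizers $(\alpha^*,\beta^*)$ and $(\beta^*,\alpha^*)$ of $\Phi_1$, lying strictly above the symmetric critical point. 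These are the two candidate phases.

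The heart of the argument is the second moment $\mathbb{E}[Z^2]=\sum_\gamma \mathbb{E}[Z_\gamma^{(2)}]$ indexed by joint side-marginals $\gamma$ that record, on each side $V_i$, the empirical distribution of $(\mathbf{1}_{v\in I_1},\mathbf{1}_{v\in I_2})$ for a pair of independent sets. One obtains $\mathbb{E}[Z_\gamma^{(2)}]=\exp(n\Phi_2(\gamma)+o(n))$, and the goal is to prove $\max_\gamma \Phi_2(\gamma)=2\max_\alpha \Phi_1(\alpha)$, attained exactly at the four product marginals corresponding to pairs drawn independently from the two phases. Given this, a standard small subgraph conditioning argument upgrades expected behavior to high-probability behavior, and Sly's reduction then yields inapproximability. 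My strategy would be to project: fix the marginal of $I_1$ and optimize $\Phi_2$ over the conditional marginal of $I_2$; the first-order conditions reduce to a pair of coupled tree recursions on $\Tree_\Delta$, whose fixed-point structure can be read off from the single-copy recursion together with the bipartite symmetry.

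The main obstacle will be ruling out spurious non-product maximizers of $\Phi_2$ uniformly in $\lambda>\lambda_c(\Tree_\Delta)$ for the degrees $\Delta\in\{3,4,5\}$ that are not covered by \cite{Galanis}. The earlier analysis of \cite{MWW} is perturbative near $\lambda_c$ and does not extend; here the plan is to exploit the tree-recursion reformulation to replace the brute-force four-dimensional optimization by a monotonicity analysis of the coupled recursion, combined with a finite boundary analysis (where one of the two copies degenerates) and an explicit check at the symmetric critical point. The key calculation will be to show that any interior critical point of $\Phi_2$ outside the product ones has a Hessian with a strictly positive eigenvalue, which follows if the coupled recursion is expanding in some direction at that point; this in turn reduces to properties of the single-variable hard-core recursion in the non-uniqueness region, which for small $\Delta$ should be tractable analytically. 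Together with the cases $\Delta=3$ and $\Delta\geq 6$ established in \cite{Galanis}, this completes inapproximability for all $\Delta\geq 3$.
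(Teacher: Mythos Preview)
Your framework is correct and matches the paper: Sly's reduction plus first/second moment analysis on random $\Delta$-regular bipartite graphs, with the crux being that the second moment functional $\Phi_2$ is maximized only at the ``product'' overlaps $(\gamma,\delta)=(\alpha^2,\beta^2)$ corresponding to independent pairs from the phase $(\alpha,\beta)=(p^+,p^-)$.  You also correctly isolate the open cases (the paper needs $\Delta=4,5$; $\Delta=3$ and $\Delta\ge 6$ were done in \cite{Galanis}, though the paper reproves $\Delta=3$ as well).

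Where your plan diverges is in the mechanism for excluding non-product maximizers.  You propose to show that any non-product critical point has an unstable Hessian direction, and to deduce this from an ``expanding'' property of a coupled tree recursion.  This is plausible in spirit but you give no concrete reduction, and the paper does something quite different: it does not analyze Hessians at hypothetical non-product critical points at all, but eliminates such points directly.  After the Lagrange step the critical-point system collapses (via the substitutions $r_i=R_i/R_2$, $c_j=C_j/C_2$) to a pair of symmetric relations \eqref{recurone}; for the hard-core case $B_1=0$, $B_2=1$ these become the clean system \eqref{ettt}, which is then reparametrized by $x=(r_1r_4)^{1/d}$, $y=(c_1c_4)^{1/d}$, $a=1/r_4$, $b=1/c_4$.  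The paper then feeds in the single additional constraint $\alpha(1-\alpha)^d=\beta(1-\beta)^d$ coming from the specific choice $(\alpha,\beta)=(p^+,p^-)$, and after eliminating $a,b$ via \eqref{barn} obtains a polynomial identity in $x,y$ whose nonvanishing away from $x=y=1$ is what must be shown.  A final substitution $x=(ty+y^d)/(t+1)$ factors these polynomials into pieces whose sign pattern (powers of $(y-1)$ times polynomials with all positive coefficients) is verified by a short Mathematica check for $d=2,3,4$.  So the paper's proof is a computer-assisted polynomial positivity argument tailored to the specific tree values $p^\pm$, not a Hessian/monotonicity argument at generic critical points; your proposal would need to either supply the missing analytic link from ``coupled recursion expanding'' to the Hessian signature, or adopt a reduction of this kind.
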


Our main focus in this paper is to address whether the above phenomenon for the
hard-core model occurs for other models.  In particular, our goal is to address for
2-spin antiferromagnetic models whether
the computational complexity of
approximating the partition function for  graphs with maximum degree $\Delta$
undergoes a transition that coincides exactly with the uniqueness/non-uniqueness
phase transition on the infinite tree $\Tree_\Delta$.

Let us first review existing results in the literature.  
For the ferromagnetic Ising model, Jerrum and Sinclair \cite{JS}
presented an $\fpras$ for all graphs, for all $B\geq 1$ where $B=B_1=B_2$.  
For the antiferromagnetic Ising model, for general graphs there does not exist an $\fpras$ for $B<1$ \cite{JS}, unless $NP=RP$.  
For constant $\Delta$, Sinclair, Srivastava, and Thurley \cite{SST} extend Weitz's approach 
to the antiferromagnetic Ising model (see also, \cite{ZLB,RSTVY,LLY} for other
results on the Ising model), yielding in the uniqueness region an $\fptas$ for the partition function
of graphs with maximum degree $\Delta$.
 
 We prove for the antiferromagnetic Ising model without external field
 that for all $B$ in the non-uniqueness region of the infinite tree
  there does not exist an $\fpras$ for the antiferromagnetic Ising model.
 
\begin{theorem}\label{thm:Ising}
Unless $NP=RP$, for the antiferromagnetic Ising model without external field,
for all $\Delta\geq 3$, 
for all  $B$ in the non-uniqueness region of the infinite tree $\Tree_\Delta$,
there does not exist an $\fpras$ for the partition function 
at inverse temperature $B$
for graphs of maximum degree at most~$\Delta$.
\end{theorem}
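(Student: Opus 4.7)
The plan is to follow the reduction framework introduced by Sly~\cite{Sly10} for the hard-core model, adapted to the antiferromagnetic Ising setting. The ultimate goal is to reduce from \textsc{MaxCut} on bounded-degree graphs: starting from an instance $H$ of bounded degree, I would construct a graph $G_H$ of maximum degree $\Delta$ by replacing each vertex of $H$ with an independent copy of a random $\Delta$-regular bipartite gadget $G=(V_1\cup V_2,E)$, $|V_1|=|V_2|=n$, and connecting these gadgets along the edges of $H$. The reduction is calibrated so that, with high probability over $G_H$, a sample from the Ising Gibbs measure concentrates on a configuration in which each gadget is in one of two ``phases'' $(\alpha^\ast,1-\alpha^\ast)$ or $(1-\alpha^\ast,\alpha^\ast)$ encoding a $\pm 1$ label for the corresponding vertex of $H$, and the total weight of an assignment of phases is essentially determined by the cut it induces in $H$. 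An $\fpras$ for $Z_{G_H}$ would then distinguish \textsc{MaxCut} values on $H$, contradicting its $\mathsf{NP}$-hardness.

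The core work is a moment computation on the random bipartite gadget $G$. Writing $Z_{\alpha,\beta}(G)$ for the partial partition function restricted to configurations with $+$-density $\alpha$ on $V_1$ and $\beta$ on $V_2$, a configuration-model computation gives $\mathbb{E}[Z_{\alpha,\beta}]=\exp(n\Psi(\alpha,\beta)+o(n))$ for an explicit $\Psi$. Its critical points satisfy a pair of coupled equations that, because of the spin-flip symmetry afforded by the absence of an external field, reduce to the single-variable Ising tree recursion on $\TD$. In the non-uniqueness regime $B<(\Delta-2)/\Delta$ this recursion admits a nontrivial attracting two-cycle $\{\alpha^\ast,1-\alpha^\ast\}$ with $\alpha^\ast\neq 1/2$, and the corresponding value of $\Psi$ strictly exceeds its value at all other critical points, including the symmetric $(1/2,1/2)$. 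This identifies the dominant phases.

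The main obstacle is the matching second-moment bound
\[
\mathbb{E}\!\left[Z_{\alpha^\ast,1-\alpha^\ast}(G)^2\right] \;=\; O\!\left(\bigl(\mathbb{E}\!\left[Z_{\alpha^\ast,1-\alpha^\ast}(G)\right]\bigr)^2\right).
\]
Expanding $Z^2$ as a sum over pairs of configurations $(\sigma,\tau)$, the relevant exponent $\Phi$ is a function of the joint empirical distributions of $(\sigma,\tau)$ on $V_1$ and on $V_2$, each taking values in $\{-,+\}^2$; the ``independent'' point, where the two copies are uncorrelated, contributes $(\mathbb{E}[Z_{\alpha^\ast,1-\alpha^\ast}])^2$, and one must show it is the global maximizer of $\Phi$ modulo the symmetry swapping the two copies. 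The planned novelty is to interpret the critical equations of $\Phi$ as a two-variable extension of the Ising tree recursion on $\TD$, and to leverage the contraction/monotonicity properties of the single-variable recursion in the non-uniqueness region --- precisely the properties that underlie the uniqueness threshold $(\Delta-2)/\Delta$ --- to certify that the independence point is the unique relevant maximizer. Once this second-moment bound is in hand, Paley--Zygmund yields the required phase concentration on $G$, and the reduction sketched above can be completed along the lines of~\cite{Sly10}.
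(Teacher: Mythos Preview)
Your high-level plan matches the paper's: Sly's reduction, first-moment analysis identifying the extremal pair $(p^+,p^-)$ with the tree fixed points, and a second-moment argument showing the independence point is the global maximizer. Two substantive gaps, however, separate your outline from a proof.

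\textbf{The second-moment optimization does not go through for all $B$ in the non-uniqueness region.} Your ``planned novelty'' is exactly what the paper does in its Lemma establishing Condition~\ref{cond:maxima}: after Lagrange multipliers, the critical-point equations for the second-moment exponent reduce (via an AM--GM step) to an inequality version of the tree recursion for the \emph{ferromagnetic} Ising model with effective parameter $B'=(1+B^2)/(2B)$, and one needs $B'\le (d+1)/(d-1)$ for this ferromagnetic model to be in its uniqueness regime and hence for the inequality to force $r_1r_4=c_1c_4=1$. That condition translates to $B\ge (\sqrt d-1)/(\sqrt d+1)$, so the argument only covers the band $(\sqrt d-1)/(\sqrt d+1)\le B<(d-1)/(d+1)$. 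The paper closes the remaining range not by a direct second-moment analysis but by two extra devices: (i) the observation that hardness at parameters $(B,\Delta)$ implies hardness at $(B,\Delta+1)$, together with the fact that the bands for consecutive $\Delta$ overlap, which extends the result to all $B$ once $\Delta\ge 4$; and (ii) a separate ad hoc analysis (using the extreme bias of $p^\pm$ when $B$ is very small) to handle $\Delta=3$ and $0<B<(\sqrt2-1)/(\sqrt2+1)$. Your proposal omits both, and the sentence ``leverage the contraction/monotonicity properties \ldots precisely the properties that underlie the uniqueness threshold $(\Delta-2)/\Delta$'' is exactly where the direct argument fails for small $B$.

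\textbf{Paley--Zygmund is not enough.} The paper computes the limiting ratio $\E[Z^2]/(\E[Z])^2$ exactly and finds it equals $(1-\omega^2)^{-(\Delta-1)/2}(1-(\Delta-1)^2\omega^2)^{-1/2}$, a constant strictly greater than $1$. Paley--Zygmund then only gives $Z\ge c\,\E[Z]$ with constant probability, which is insufficient for Sly's reduction: one needs the gadget to work a.a.s.\ so that a union bound over the (polynomially many) gadgets in $G_H$ succeeds. The paper obtains the a.a.s.\ lower bound via the small subgraph conditioning method, matching the excess variance exactly to the contribution of short cycles. You should replace the Paley--Zygmund step by this.
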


The hard-core model and Ising model are the two most well-studied examples
 of 2-spin systems. For ferromagnetic 2-spin systems with no external field, i.e., $B_1B_2>1$ and $\lambda=1$, 
  Goldberg, Jerrum, and Paterson \cite{GJP} presented an $\fpras$ for the partition function for any graph.  
   For any antiferromagnetic 2-spin model, for constant $\Delta$, an $\fptas$ for the partition
 function was obtained by   Sinclair, Srivastava, and Thurley \cite{SST} for $\Delta$-regular
 graphs in the uniqueness region of the infinite tree $\Tree_\Delta$,
 and by Li, Lu, and Yin \cite{LLY} for graphs of maximum degree $\Delta$ in the
 intersection of the uniqueness regions for infinite trees $\Tree_d$ for $d\leq\Delta$.
  For antiferromagnetic  2-spin models we obtain  a complementary inapproximability result in the non-uniqueness region.
  Our results do not reach the uniqueness/non-uniqueness threshold in general.

\begin{theorem}\label{thm:2spin}
Unless $NP=RP$, for all $\Delta\geq 3$, for all $B_1,B_2,\lambda$ that 
lie in the non-uniqueness region of the infinite tree $\Tree_{\Delta}$ and
$\sqrt{B_1B_2} \geq \frac{\sqrt{\Delta-1}-1}{\sqrt{\Delta-1}+1}$,
there does not exist an $\fpras$ for the partition function  
at parameters $B_1,B_2,\lambda$
for graphs of maximum degree at most $\Delta$.
\end{theorem}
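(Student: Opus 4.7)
\emph{Overall strategy.} The plan is to follow the framework of Sly \cite{Sly10}, as refined in the proof of Theorem~\ref{thm:hardcore}, by analysing the Gibbs distribution on a uniformly random bipartite $(d+1)$-regular graph $G$ with bipartition $V_1\cup V_2$, $|V_1|=|V_2|=n$. In the non-uniqueness region of $\Tree_{d+1}$, the tree recursion for the 2-spin model has a pair of asymmetric dominant fixed points $(q_1,q_2)$ and $(q_2,q_1)$, corresponding to two semi-translation-invariant Gibbs measures on the infinite tree. These induce two ``phases'' on $G$, specified by the approximate densities of $-$ spins on the two sides. The aim is to show that, with non-vanishing probability over the choice of $G$, the contribution of each phase to $Z_G$ is within a polynomial factor of its expectation; the gadget reduction of \cite{Sly10} then converts this ``phase detection'' ability into an NP-hardness reduction from an approximate bisection-type problem on random bipartite graphs.

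\emph{First and second moments.} Write $Z_G^Y$ for the contribution to $Z_G$ of configurations whose spin densities on the two sides lie in a small window around a fixed phase $Y\in\{(q_1,q_2),(q_2,q_1)\}$. A first-moment computation via the configuration model, followed by Stirling's formula, expresses $\E[Z_G^Y]$ as $\exp(n\,\Phi_1(\alpha)+O(\log n))$ optimized over a single density parameter $\alpha$, with maximizers exactly the tree-recursion fixed points in the non-uniqueness region. The second moment is controlled by a pair of configurations $(\sigma,\tau)$ on $G$: parameterizing by the joint type matrix counting $(\sigma_v,\tau_v)$ on each side of the bipartition, together with the corresponding edge statistics, reduces $\E[(Z_G^Y)^2]$ to an optimization of a function $\Phi_2(\alpha,\beta,\gamma)$ over a three-parameter region. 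The desired bound $\E[(Z_G^Y)^2]=O(\E[Z_G^Y]^2)$ requires the global maximum of $\Phi_2$ to occur at the ``product point'' where $\sigma$ and $\tau$ are independent samples from the phase, and Paley--Zygmund then gives the required lower bound on $Z_G^Y$.

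\emph{Main obstacle.} The hardest step is ruling out competing maxima of $\Phi_2$ over the three-parameter domain. At the product point, the Hessian along the feasible directions must be negative definite; one must additionally exclude boundary maxima arising from highly correlated or anticorrelated pairs $(\sigma,\tau)$. The two-parameter activity $(B_1,B_2)$ makes the underlying tree recursion genuinely two-dimensional and enlarges the search space compared with the Ising and hard-core proofs. I expect the assumption $\sqrt{B_1B_2}\geq (\sqrt{d}-1)/(\sqrt{d}+1)$ to be exactly the algebraic condition under which the Hessian/boundary analysis goes through uniformly in the non-uniqueness region covered by the theorem---outside this range, a competing fixed point of the coupled two-copy tree recursion threatens to dominate, and the second moment bound as presented breaks down. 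With the moment bounds in hand, plugging $G$ into the Sly-style reduction of \cite{Sly10}, with the asymmetric phase pair $(q_1,q_2),(q_2,q_1)$ playing the role of the hard-core ``even/odd side'' phases, yields Theorem~\ref{thm:2spin}.
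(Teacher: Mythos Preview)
Your high-level framework matches the paper's: first and second moment analysis of the phase-restricted partition function on random bipartite $\Delta$-regular graphs, combined with Sly's reduction. However, there are two concrete gaps between your sketch and what is actually needed.

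\textbf{Paley--Zygmund is not enough.} The ratio $\E[(Z_G^Y)^2]/(\E[Z_G^Y])^2$ does \emph{not} tend to $1$; it converges to an explicit constant strictly greater than $1$ (depending on $\Delta,B_1,B_2,\lambda$), so Paley--Zygmund only gives $Z_G^Y\geq c\,\E[Z_G^Y]$ with \emph{positive} probability, not a.a.s. The paper closes this gap with the small graph conditioning method: one computes $\E[Z_G^Y\prod_i [X_i]_{m_i}]/\E[Z_G^Y]$ for the short-cycle counts $X_i$ and matches the resulting $(\lambda_i,\delta_i)$ with the second-to-first-moment ratio via $\exp(\sum_i\lambda_i\delta_i^2)$. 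Without this step Sly's reduction, which requires the gadget to behave correctly a.a.s., does not go through.

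\textbf{The second-moment optimization and the role of the threshold.} The second moment is not a three-parameter problem. With soft constraints the overlap variables $(\gamma,\delta)$ are accompanied by nine free edge-type variables $y_{ij}$, and one cannot eliminate them algebraically. The paper's key device is to write the stationarity conditions via a maximum-entropy parametrization $Z_{ij}^*=M_{ij}R_iC_j$ and reduce the critical-point equations for $(\gamma,\delta)$ to a pair of inequalities of the form
\[
x^{1/d}\leq \frac{B'y+1}{y+B'},\qquad y^{1/d}\leq \frac{B'x+1}{x+B'},\qquad B'=\frac{1+B_1B_2}{2\sqrt{B_1B_2}},
\]
which are precisely (an inequality version of) the tree recursions for the \emph{ferromagnetic} Ising model at parameter $B'$. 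The condition $\sqrt{B_1B_2}\geq(\sqrt d-1)/(\sqrt d+1)$ is equivalent to $B'\leq(d+1)/(d-1)$, i.e.\ uniqueness for ferromagnetic Ising on $\Tree_{d+1}$, and this is exactly what forces $x=y=1$ (hence $\gamma=\alpha^2$, $\delta=\beta^2$) as the only critical point. So the threshold enters not as a Hessian positivity condition but as the uniqueness threshold of an auxiliary ferromagnetic Ising recursion; your proposal does not anticipate this reduction, and a direct Hessian/boundary attack on the eleven-dimensional problem is unlikely to succeed cleanly.
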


 \subsection{Independent Results of Sly and Sun}
%\andreas{Changed the wording of this section a bit, deleted the exponential factor remark}
 In an independent and simultaneous work, Sly and Sun \cite{SlySun} obtained
 closely related results.  They prove inapproximability of the partition
 function in the case of the hard-core model  and the antiferromagnetic Ising model.  Their
 result for the Ising model also covers the case of an external field, which then extends to general 2-spin antiferromagnetic models \cite{SST,SlySun}.

The main technical result in our proof is a second moment argument to 
analyze the partition function for the Ising model on random $\Delta$-regular bipartite graphs,
as outlined in the following Section \ref{sec:proof-approach}.
As a consequence we can estimate the partition function within any arbitrarily small
polynomial factor, which allows us to use the reduction in \cite{Sly10}.
In contrast, Sly and Sun's approach builds upon the recent interpolation scheme
 of Dembo, Montanari, and Sun \cite{DMS} to 
 analyze the logarithm of the partition function, which yields estimates
of the partition function within an arbitrarily small exponential factor. 
 To get their NP-hardness results, they use a modification of the approach
 appearing in \cite{Sly10} which allows to use a constant-sized gadget. 
 
 \subsection{Further work}
 
 Since the original publication of this work on arXiv \cite{GSV:arxiv} there has
been further progress on this topic.
In~\cite{GSV-STOC} we present a general approach for
analyzing the second moment of spin systems on random regular bipartite graphs
using induced matrix norms.  As a byproduct, we obtain hardness
results for all 2-spin antiferromagnetic systems in the tree non-uniqueness
region as in \cite{SlySun}, and in addition hardness results for approximately counting
$k$-colorings for even $k<\Delta$ and for the antiferromagnetic Potts model in
the conjectured tree non-uniqueness region. 
 
We point out that the present high-level approach to analyze the second moment is different than the one in \cite{GSV-STOC}. Here, we give an analysis of the critical points of the second moment in the case of the antiferromagnetic Ising model and the hard-core model; this analysis is partly possible because the spins take binary values. As a consequence, in certain cases we obtain stronger results than those needed for the desired hardness results (see for example the discussion in Section~\ref{sec:remarks}).   In contrast, in \cite{GSV-STOC} the approach is targeted to the maximizers of the second moment (which is sufficient for the desired hardness results) and connecting them to the maximizers of the first moment.

We should also remark that, in \cite{GSV-STOC}, the  gadget, which is a random regular bipartite graph, has fixed size as in \cite{SlySun}.
In contrast, the gadget studied here, which is Sly's original gadget in \cite{Sly10},
allows to quantify certain properties in terms of its size (see Lemma~\ref{lem:gadget2}) which is desirable in approximation-preserving reductions. For example, the results in this
paper combined with those in \cite{GSV-STOC} were used in \cite{CGGGJSV}
to prove \#BIS-hardness on bipartite graphs
for 2-spin antiferromagnetic systems in the tree non-uniqueness region.

\section{Proof Outline}
\label{sec:proof-approach}

The main element of the proof of 
Theorem \ref{thm:2spin} (and similarly, the proofs of
Theorems \ref{thm:hardcore} and \ref{thm:Ising})
is to analyze random $\Delta$-regular bipartite graphs to show
a certain bimodality as in \cite{MWW} for the
hard-core model.
From there the same reduction of \cite{Sly10} applies
with some non-trivial modifications in the analysis.  
Hence, in some sense our main technical result is to show that 
in random $\Delta$-regular bipartite graphs, under the
conditions of Theorem \ref{thm:2spin},
a configuration selected from the Gibbs distribution is ``unbalanced'' with high probability.
 
 To formally state our results, let us introduce some notation. Let $\G(n,\Delta)$
denote the probability distribution over bipartite graphs with $n+n$
vertices formed by taking the union of $\Delta$ random perfect matchings. 
We will denote the two sides of the bipartition of the graphs as $V_1,V_2$. 
Strictly speaking, this distribution is over bipartite multi-graphs.  
However, since our results hold asymptotically almost surely (a.a.s.{}) over
$\G(n,\Delta)$, as noted in \cite{MWW}, 
by contiguity arguments they also hold a.a.s.\  for the uniform distribution over bipartite
$\Delta$-regular graphs. For a complete account of contiguity, we refer the reader to \cite[Chapter 9]{JLR}.

For a graph $G\sim\G(n,\Delta)$ and a given configuration $\sigma\in \{-1,+1\}^V$, denote by $N_{-}(\sigma)$ the set of vertices assigned $-1$ and by $N_{+}(\sigma)$ 
the set of vertices assigned $+1$.  For $\alpha,\beta\geq0$, let
\[ \Sigma^{\alpha,\beta} = \{\sigma\in\{-1,+1\}^V \mid  |N_{-}(\sigma) \cap V_1| = \alpha n,\ |N_{-}(\sigma) \cap V_2| = \beta n\},
\]
that is, $\alpha$ (resp. $\beta$) is the fraction of the vertices 
in $V_1$ (resp. $V_2$) whose spin is $-1$. Denote by $\Sigma^{Bal}$ the set of all balanced configurations, and for $\rho>0$, denote by $\Sigma^\rho$ the set of $\rho n$-unbalanced configurations. Formally,
\[\Sigma^{Bal}=\mathop{\bigcup}_\alpha\Sigma^{\alpha,\alpha}, \quad \Sigma^\rho=\mathop{\bigcup}_{|\alpha-\beta|\geq \rho}\Sigma^{\alpha,\beta}.\]
The following theorem, which is proved later in this section, establishes  that in the Gibbs distribution of a random $\Delta$-regular bipartite graph, balanced configurations have exponentially small measure. 
\begin{theorem}\label{thm:bimodality}
Under the hypotheses of Theorem \ref{thm:2spin},
there exist constants $a>1$ and $\rho>0$,
 such that asymptotically almost surely, for a graph $G$ sampled from $\G(n,\Delta)$, the Gibbs distribution $\mu_G$ satisfies:
\[\mu_G\left(\Sigma^{Bal}\right)\leq a^{-n}\cdot \mu_G(\Sigma^\rho).\]
Therefore, the Glauber dynamics is torpidly mixing.
\end{theorem}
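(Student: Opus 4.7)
The plan is to adapt the second-moment method of Mossel, Weitz, and Wormald \cite{MWW} to the antiferromagnetic two-spin setting. Decompose $Z_G=\sum_{\alpha,\beta}Z_{\alpha,\beta}$ according to the bipartite minority densities, where $Z_{\alpha,\beta}$ collects the weight of configurations in $\Sigma^{\alpha,\beta}$. For $G\sim\G(n,\Delta)$ a standard configuration-model calculation gives
\[
\E[Z_{\alpha,\beta}] \;=\; \binom{n}{\alpha n}\binom{n}{\beta n}\,\lambda^{(\alpha+\beta)n}\,M(\alpha,\beta)^\Delta,
\]
where $M(\alpha,\beta)$ is the expected edge-weight of a uniformly random perfect matching between $V_1$ and $V_2$ with $\alpha n$ and $\beta n$ vertices labelled $-1$ respectively. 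Applying Stirling, $\tfrac{1}{n}\log\E[Z_{\alpha,\beta}]\to\Psi(\alpha,\beta)$ for a smooth function $\Psi:[0,1]^2\to\mathbb{R}$.

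The critical-point equations $\nabla\Psi=0$ reduce, after optimizing internally over the number of $(-,-)$ edges in the definition of $M$, to the tree recursion for the two-spin model on $\Tree_\Delta$; the symmetric critical point $\alpha=\beta$ is unstable precisely in the non-uniqueness regime. Under the hypotheses of Theorem~\ref{thm:2spin} the global maximum of $\Psi$ is therefore attained at a symmetry-broken pair $(\alpha^\ast,\beta^\ast)\neq(\beta^\ast,\alpha^\ast)$, with
\[
\Phi^{\max}:=\Psi(\alpha^\ast,\beta^\ast)\;>\;\Phi^{Bal}:=\max_{\alpha}\Psi(\alpha,\alpha).
\]
Fix $\rho>0$ small enough that $(\alpha^\ast,\beta^\ast)\in\Sigma^\rho$ and the maximum of $\Psi$ on the closure of $\Sigma^\rho$ still strictly exceeds $\Phi^{Bal}$.

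Combine two moment bounds. A union bound over the $n+1$ choices of $\alpha$ together with Markov's inequality gives $Z_{Bal}\le n^{O(1)}\exp(n\Phi^{Bal})$ a.a.s. For the matching lower bound on $Z_\rho$, let $Z_B$ denote the restriction of the partition function to a small neighborhood $B$ of $(\alpha^\ast,\beta^\ast)$. Expanding $\E[Z_B^2]$ as a sum over pairs $(\sigma,\tau)$ of configurations parameterized by their marginal densities together with an overlap profile $\gamma$ encoding the joint statistics on $V_1$ and $V_2$, the exponential order reduces via Stirling to the maximum of a pair functional $\Psi_2$ on a bounded polytope. The aim is to show that this maximum is attained uniquely at the product point---the pair of independent copies of $(\alpha^\ast,\beta^\ast)$---which yields $\E[Z_B^2]\le C(\E[Z_B])^2$. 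Paley-Zygmund then gives $Z_B\ge\tfrac12\E[Z_B]$ with positive probability, and the small-subgraph conditioning method of Robinson-Wormald upgrades this to a.a.s.\ by accounting for the contribution of short cycles, as in \cite[Sec.~5]{MWW}. Dividing the two bounds gives $Z_{Bal}/Z_\rho\le a^{-n}$ a.a.s.\ for some $a>1$; torpid mixing of Glauber dynamics then follows from the standard conductance bottleneck between $\Sigma^\rho$, its mirror image through the bipartition swap, and their balanced separator.

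The principal obstacle is the global maximization of $\Psi_2$: spurious critical points come from correlated pairs of copies whose overlap profile satisfies an induced tree recursion of its own, and a priori these could dominate the product point. This is precisely where the assumption $\sqrt{B_1B_2}\ge(\sqrt{d}-1)/(\sqrt{d}+1)$ enters: under this bound the induced overlap recursion contracts toward the product fixed point, making the Hessian of $\Psi_2$ at the product point negative definite in the overlap directions and, by a monotonicity/concavity analysis along those directions, ruling out any competing global maximizer. Carrying out this overlap analysis is the technical heart of the paper and, as noted in the introduction, is where our approach simplifies the corresponding argument of \cite{MWW} and \cite{Galanis} by decoupling the overlap optimization from the first-moment tree recursion.
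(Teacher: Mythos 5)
Your outline reproduces the paper's proof strategy essentially verbatim: a first-moment maximization whose critical-point equations are the tree recursions (Lemma~\ref{lem:firstmomentmax}), a second-moment overlap optimization shown to be maximized at the uncorrelated pair under the stated hypotheses (Condition~\ref{cond:maxima} via the reduction to a ferromagnetic-Ising-type recursion in Lemma~\ref{lem:seccritical}), small-graph conditioning to convert the bounded second-moment ratio into an a.a.s.\ lower bound (Lemma~\ref{lem:smallgraph}), Markov for the balanced upper bound, and a conductance argument \`a la \cite{DFJ} for torpid mixing. The only cosmetic difference is that you invoke Paley--Zygmund before small-graph conditioning, whereas the paper applies small-graph conditioning directly; this does not change the argument.
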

While Theorem~\ref{thm:bimodality} is interesting on its own right, our inapproximability results require a far more precise quantification of the bimodality than the one given in Theorem~\ref{thm:bimodality}. Nevertheless, it is instructive to outline the proof of  Theorem~\ref{thm:bimodality}, since it will allow us  to introduce the key ingredients that are needed for the hardness results. 

Namely, to establish Theorem~\ref{thm:bimodality}, we use a second moment approach as in \cite{MWW}. For $G\sim \G(n,\Delta)$ we analyze the random variable $Z^{\alpha,\beta}_G$, where 
\[Z^{\alpha,\beta}_G = \sum_{\sigma \in \Sigma^{\alpha,\beta}} w_G(\sigma),\] for  some well-suited values of $\alpha,\beta$. In particular, the $\alpha,\beta$ will be selected to maximize the first moment of $Z^{\alpha,\beta}_G$. One of the key ingredients in obtaining Theorem~\ref{thm:bimodality} is that the maximum of the first moment occurs for $\alpha\neq\beta$. More generally, this is  true for general antiferromagnetic 2-spin systems in the non-uniqueness region of the infinite tree $\TD$ (see the upcoming Lemma~\ref{lem:firstmomentmax}). The proof of this fact is based on an explicit connection with three specific Gibbs measures in the infinite tree $\TD$ which are invariant under parity-preserving transformations. We next introduce these relevant Gibbs measures, deferring a more technical discussion to Section \ref{sec:tree-recursions}.

Let $B_1,B_2,\lambda$ specify an antiferromagnetic 2-spin system. There exists a unique translation invariant Gibbs measure $\mu^*$ on $\TD$, 
known as the free measure, whose marginal probability for the root $\rho$ being assigned spin $-1$ will be denoted by
$p^* := \mu^*(\sigma_\rho=-1)$ (the exponent denotes the boundary condition; $*$ refers to free boundary).
 In the non-uniqueness region,
there also exist two semi-translation invariant measures 
$\mu^+$ and $\mu^-$, corresponding to the ``even" and ``odd" boundary conditions, respectively. These measures 
can be obtained by conditioning the leaves on level $2\ell$ (resp. $2\ell + 1$) of the 
tree to have spin $-1$ and taking the weak limit as $\ell\rightarrow\infty$. We will denote the marginals for the root being assigned the spin $-1$ by $p^+ := \mu^+(\sigma_\rho)$, $p^-:=\mu^-(\sigma_\rho)$. In the non-uniqueness region of $\TD$, it holds that $\mu^{*}\neq \mu^{\pm}$; in fact, it holds that $p^-<p^*<p^+$. 

The following lemma illustrates the relevance of $p^-,p^*,p^+$ in our arguments.
\begin{lemma}\label{lem:firstmomentmax}
Let $\Delta\geq 3$. For the distribution $\G(n,\Delta)$, $\displaystyle\lim_{n\rightarrow\infty}\frac{1}{n}\log\E_\G[Z^{\alpha,\beta}_G]$ is maximized for:
\begin{enumerate}
\item $(\alpha,\beta)=(p^*,p^*)$, whenever $B_1,B_2,\lambda$ are in the uniqueness region of the infinite tree $\TD$.
\item $(\alpha,\beta)=(p^{\pm},p^{\mp})$, whenever $B_1,B_2,\lambda$ are in the non-uniqueness region of the infinite tree $\TD$.
\end{enumerate}
\end{lemma}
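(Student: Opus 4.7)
The plan is to compute $\E_\G[Z^{\alpha,\beta}_G]$ explicitly, extract its exponential growth rate $\Psi(\alpha,\beta)$ via Stirling's approximation, and then show that the critical-point equations for $\Psi$ coincide with the tree recursions whose fixed points are $p^*$ in uniqueness and $p^\pm$ in non-uniqueness.

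First I would evaluate the expectation exactly. The number of configurations in $\Sigma^{\alpha,\beta}$ is $\binom{n}{\alpha n}\binom{n}{\beta n}$, and by symmetry the expected weight of any fixed $\sigma \in \Sigma^{\alpha,\beta}$ is the same. Since $G \sim \G(n,\Delta)$ is the union of $\Delta$ independent uniform perfect matchings between $V_1$ and $V_2$, the expected edge contribution factors as $(h_n(\alpha,\beta))^\Delta$ where $h_n(\alpha,\beta) = \E_M\bigl[B_1^{m^-(M,\sigma)} B_2^{m^+(M,\sigma)}\bigr]$ for a single matching $M$. Grouping matchings by the number $k$ of $(-,-)$ edges (so the number of $(+,+)$ edges is $(1-\alpha-\beta)n+k$) gives an explicit sum over $k$. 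Applying Stirling to this sum, together with the entropies coming from $\binom{n}{\alpha n}\binom{n}{\beta n}$ and the vertex activity $\lambda^{(\alpha+\beta)n}$, yields
\begin{equation*}
\Psi(\alpha,\beta) := \lim_{n\to\infty}\frac{1}{n}\log\E_\G[Z^{\alpha,\beta}_G] = H(\alpha)+H(\beta)+(\alpha+\beta)\log\lambda+\Delta\,\phi(\alpha,\beta),
\end{equation*}
where $H(x)=-x\log x - (1-x)\log(1-x)$ and $\phi(\alpha,\beta)$ is the concave maximum over $\gamma\in[0,\min(\alpha,\beta)]$ of an explicit functional whose inner optimizer $\gamma^*$ satisfies $B_1 B_2(\alpha-\gamma^*)(\beta-\gamma^*)=\gamma^*((1-\alpha-\beta)+\gamma^*)$.

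Next I would identify the critical points of $\Psi$. Setting $\partial_\alpha\Psi=\partial_\beta\Psi=0$, substituting the inner optimality condition for $\gamma^*$, and writing things in terms of the odds ratios $R_\alpha=\alpha/(1-\alpha)$ and $R_\beta=\beta/(1-\beta)$, the system collapses after routine algebra to
\begin{equation*}
R_\alpha = \lambda\!\left(\frac{B_1 R_\beta+1}{R_\beta+B_2}\right)^{\!\Delta-1}, \qquad R_\beta = \lambda\!\left(\frac{B_1 R_\alpha+1}{R_\alpha+B_2}\right)^{\!\Delta-1}.
\end{equation*}
This is precisely the pair of tree recursions for the $\Delta$-regular tree iterated twice: fixed points of the one-step recursion give solutions with $R_\alpha=R_\beta$, while period-two points of the one-step recursion give solutions with $R_\alpha\neq R_\beta$. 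By the standard characterization of uniqueness on $\TD$, in the uniqueness regime the only solution is the diagonal one corresponding to $(\alpha,\beta)=(p^*,p^*)$, whereas in the non-uniqueness regime two additional solutions appear, corresponding exactly to $(p^+,p^-)$ and $(p^-,p^+)$.

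Finally I would compare the values of $\Psi$ at the critical points and rule out maximizers on the boundary. A direct computation of the Hessian of $\Psi$ at $(p^*,p^*)$ shows that its determinant is governed by $1-(\Delta-1)^2 T'(p^*)^2$, where $T$ is the one-step tree recursion map; by definition this quantity is negative exactly in the non-uniqueness regime, so $(p^*,p^*)$ becomes a saddle there while remaining the unique maximum in the uniqueness regime. The hardest step is the global claim that in the non-uniqueness regime $(p^\pm,p^\mp)$ are the \emph{global} maxima, not just local ones. I would handle this by showing that $\Psi$ is strictly concave along each horizontal and vertical section (using concavity of $H$ together with a monotonicity property of $\phi$), which forces every interior local maximum to be a global maximum of the corresponding section, and then by excluding boundary maximizers using the positivity of $\lambda,B_1,B_2$. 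Combined with the enumeration of critical points above, this shows that the global maximum is attained at the claimed points in each regime.
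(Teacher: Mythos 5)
Your proposal follows essentially the same three-step structure as the paper's proof: (i) write the first moment as a binomial sum, apply Stirling and the Laplace method to obtain an exponential rate $\phi_1(\alpha,\beta)$ whose inner optimizer satisfies $B_1B_2(\alpha-x^*)(\beta-x^*)=x^*(1-\alpha-\beta+x^*)$; (ii) show the critical-point system for $\phi_1$ collapses, in the odds ratios $R_\alpha,R_\beta$, to the two-step tree recursion \eqref{eq:densitiesone}, whose solutions are enumerated by Lemma~\ref{lem:densitiesone}; (iii) classify the critical points with the Hessian and exclude the boundary. The paper carries out step (ii) in a more systematic way through the entropy-optimization framework of Section~\ref{sec:maxentropy} (Lemma~\ref{lem:gmax} and Lemma~\ref{lem:maxentropy}, Lagrange multipliers on the matrix $M=\left(\begin{smallmatrix}B_1&1\\1&B_2\end{smallmatrix}\right)$), but your ``routine algebra'' on $R_\alpha,R_\beta$ is the same computation in disguise.

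The one step that is not sound as written is the claim that strict concavity of $\phi_1$ along horizontal and vertical sections ``forces every interior local maximum to be a global maximum.'' Coordinate-wise concavity does not control behaviour along oblique directions and does not rule out saddle points: for instance $f(x,y)=-x^2-y^2+3xy$ is strictly concave in each coordinate yet has a saddle at the origin and is unbounded along the diagonal. Moreover, $\phi_1$ cannot be jointly concave in the non-uniqueness regime since it has two distinct global maximizers, so any concavity-based shortcut would have to be handled with care; the paper does not use one. What actually carries the global conclusion is exactly the compactness argument you sketch alongside it and which the paper makes precise: $\phi_1$ is continuous on the closed square and attains its maximum; Lemma~\ref{lem:firboundary} (derivatives blowing up at the boundary using $\lambda,B_1,B_2>0$) rules out boundary maximizers; Lemma~\ref{lem:fircritical} says the only interior critical points are $(p^\pm,p^\mp)$ and $(p^*,p^*)$; and Lemma~\ref{lem:firhessian} shows $(p^*,p^*)$ is a saddle in non-uniqueness (the sign of $1-(\Delta-1)^2\omega^*$, exactly as you describe) while $(p^+,p^-)$ is a genuine local maximum (using $(\Delta-1)^2\omega<1$ from Lemma~\ref{lem:technicalinequality}). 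If you drop the sectional-concavity remark and lean on the compactness/critical-point/Hessian chain, the argument is complete and matches the paper's.
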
 

Note that Lemma~\ref{lem:firstmomentmax} already yields a weak form of a bimodality in the Gibbs distribution on random $\Delta$-regular graphs: in the non-uniqueness region, the configurations with the largest contribution are unbalanced in expectation. Since $Z^{\alpha,\beta}_G$ is typically exponential in $n$, Markov's inequality implies an upper bound on the number of balanced configurations, which holds with high probability over the choice $G\sim \G(n,\Delta)$.

However, to get tight results, we need a strong form of this bimodality. To do this, we look more carefully at the random variable $Z^{\alpha,\beta}_G$ for $(\alpha,\beta)=(p^\pm,p^\mp)$. In particular, suppose that we are able to show that  $Z^{p^\pm,p^\mp}_G$ is within a polynomial multiplicative factor from its expectation. Lemma~\ref{lem:firstmomentmax} would then yield that, for $\epsilon>0$ and all $\alpha',\beta'$ satisfying $\norm{(\alpha',\beta')-(p^\pm,p^\mp)}\geq \epsilon$,  $Z^{\alpha',\beta'}_G$ is exponentially smaller than $Z^{p^\pm,p^\mp}_G$. Using arguments in \cite{DFJ} and \cite{MWW}, one can then easily deduce Theorem~\ref{thm:bimodality}. 

We will formalize the above argument shortly. Prior to that, let us remark  that the same idea underlies the reduction used by Sly \cite{Sly10} (which we will use to obtain Theorems~\ref{thm:hardcore},~\ref{thm:Ising} and~\ref{thm:2spin}). Namely, define the phase of a configuration of $G$ to be the bipartition which has the largest number of vertices assigned $-1$. With high probability, a configuration sampled from $\mu_G$ will be from the set $\Sigma^{p^{\pm},p^{\mp}}$ (roughly) and hence the phase will take binary values. Informally, this allows to view $G$ as a Boolean gadget (the precise construction of the gadget used by Sly \cite{Sly10} is given in Section~\ref{sec:NP-outline}). Consider now an arbitrary graph $H$ and replace each vertex of $H$ with a (distinct) copy of the graph $G$. Sly showed how to encode the edges of $H$ (i.e., make connections between the copies of $G$), so that the final graph has maximum degree $\Delta$ and its partition function is dominated by configurations where the phases of the copies of $G$ correspond to a maximum cut of $H$.  The quantification of this scheme involves hard work, but this has already been done in \cite{Sly10} (certain calculations do require rather lengthy modifications to account for general antiferromagnetic 2-spin systems; see Section~\ref{sec:NP-outline} for details).

Let us now return to the technical core of the argument, which requires analyzing the second moment of $Z^{\alpha,\beta}_G$ for $(\alpha,\beta)=(p^\pm,p^\mp)$. By symmetry, we may clearly focus on $(\alpha,\beta)=(p^+,p^-)$. Pick two configurations $\sigma_1,\sigma_2$ from the set $\Sigma^{p^+,p^-}$ (not necessarily distinct). We need variables $\gamma,\delta$ that capture the overlap of the configurations $\sigma_1,\sigma_2$. Formally, for $\gamma,\delta\geq 0$, let
\begin{multline*}
\Sigma^{\alpha,\beta}_{\gamma,\delta}=
\big\{(\sigma_1,\sigma_2)\mid \sigma_1,\sigma_2\in \Sigma^{\alpha,\beta}, |N_{-}(\sigma_1)\cap N_{-}(\sigma_2) \cap V_1|=\gamma n,\big.\\ 
\big.|N_{-}(\sigma_1)\cap N_{-}(\sigma_2) \cap V_2|=\delta n\big\},
\end{multline*}
and $Y^{\gamma,\delta}_G=\sum_{(\sigma_1,\sigma_2)\in\Sigma^{\alpha,\beta}_{\gamma,\delta}}w_G(\sigma_1)w_G(\sigma_2).$ We will study $Y^{\gamma,\delta}_G$ for $(\alpha,\beta)=(p^+,p^-)$, so this is why we dropped the dependence on $\alpha,\beta$. Observe that $\E_\G[Y^{\gamma,\delta}_G]$ is the contribution to the second moment $\E_\G[(Z^{\alpha,\beta}_G)^2]$ coming from pairs of configurations in $\Sigma^{\alpha,\beta}$ with overlap $\gamma n,\delta n$.

In our case, for the second moment approach to succeed we need that 
the largest contribution to the second moment comes from ``uncorrelated" pairs of configurations. 
This is captured by the following condition.
\begin{condition}\label{cond:maxima}
For $(\alpha,\beta)=(p^+,p^-)$, $\displaystyle\lim_{n\rightarrow\infty}\frac{1}{n}\log\E_\G[Y^{\gamma,\delta}_G]$ is maximized at $\gamma=\alpha^2, \delta=\beta^2$.
\end{condition}
\begin{lemma}\label{lem:secondmax2spin}
Under the hypotheses of Theorem~\ref{thm:2spin}, Condition~\ref{cond:maxima} holds.
\end{lemma}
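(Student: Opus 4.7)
The plan is to derive an explicit variational expression for $\Psi_2(\gamma,\delta):=\lim_{n\to\infty}\frac{1}{n}\log\E_\G[Y^{\gamma,\delta}_G]$ and then prove that its unique maximum on the admissible region is attained at $(\gamma,\delta)=(\alpha^2,\beta^2)$. First I would enumerate pairs $(\sigma_1,\sigma_2)\in\Sigma^{\alpha,\beta}_{\gamma,\delta}$ by classifying each vertex $v$ according to the pair of spins $(\sigma_1(v),\sigma_2(v))\in\{-1,+1\}^2$: on $V_1$ the four class-sizes are $\gamma n,\,(\alpha-\gamma)n,\,(\alpha-\gamma)n,\,(1-2\alpha+\gamma)n$, and analogously on $V_2$ with $\beta,\delta$. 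Stirling converts the multinomial counts into an entropy term $H(\gamma;\alpha)+H(\delta;\beta)$, and the vertex activities contribute $2(\alpha+\beta)\log\lambda$. Since $G$ is the union of $\Delta$ independent uniform perfect matchings and $w_G(\sigma_1)w_G(\sigma_2)$ factors over edges, the expected edge contribution factors as $R(\gamma,\delta)^\Delta$, where $R(\gamma,\delta)$ is the expectation over a single uniform matching. The matching expectation is a permanent over these block sizes, and Laplace's method yields $\tfrac{1}{n}\log R(\gamma,\delta)\to\phi(\gamma,\delta)$ for an explicit $\phi$; combining,
\[
\Psi_2(\gamma,\delta)=H(\gamma;\alpha)+H(\delta;\beta)+2(\alpha+\beta)\log\lambda+\Delta\,\phi(\gamma,\delta).
\]

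Next I would verify that $(\alpha^2,\beta^2)$ is a critical point of $\Psi_2$. The cleanest way is to observe the identity $\Psi_2(\alpha^2,\beta^2)=2\Psi_1(\alpha,\beta)$, where $\Psi_1:=\lim\tfrac1n\log\E_\G[Z^{\alpha,\beta}_G]$, reflecting that at this overlap the second configuration is, to leading order, \emph{independent} of the first: the count of compatible $\sigma_2$'s factors through the single-configuration count, and the expected edge weight factors accordingly across the two spins. Differentiating and invoking Lemma~\ref{lem:firstmomentmax}, which asserts that $(\alpha,\beta)=(p^+,p^-)$ is the single-spin first-moment maximizer and hence a fixed point of the tree recursion, the stationarity conditions for $\Psi_2$ at $(\alpha^2,\beta^2)$ reduce to the pair-spin tree recursion evaluated at the product fixed point and are automatically satisfied.

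The main obstacle is promoting this critical point to a global maximum, and I would attack it in two stages. The \emph{local} stage computes the Hessian of $\Psi_2$ at $(\alpha^2,\beta^2)$, which decomposes as a negative definite entropy piece plus $\Delta$ times the Hessian of $\phi$; the latter is an off-diagonal coupling controlled by a $2\times 2$ interaction matrix whose spectral norm depends on $\sqrt{B_1B_2}$. The hypothesis $\sqrt{B_1B_2}\geq(\sqrt{d}-1)/(\sqrt{d}+1)$ is precisely the threshold keeping this coupling tame enough for negative definiteness, in the spirit of the second-moment thresholds in \cite{MWW}. The \emph{global} stage recasts the first-order condition $\nabla\Psi_2=0$ as a coupled system of tree recursions on pairs of spins; by reparametrizing through class-size ratios and exploiting the $\sigma_1\leftrightarrow\sigma_2$ symmetry, the system decouples into two copies of the single-spin recursion, whose relevant fixed points are (by Lemma~\ref{lem:firstmomentmax}) only those corresponding to $p^\pm$, forcing $\gamma=\alpha^2,\ \delta=\beta^2$. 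Candidate boundary maxima (where some class size vanishes) are handled separately by showing $\Psi_2$ is strictly dominated there under the hypothesis. I expect this last step, namely ruling out spurious non-product fixed points of the joint recursion, to be the most delicate piece of the analysis.
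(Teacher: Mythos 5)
There is a genuine gap in the global stage, and a misattribution in the local stage. You locate the extra hypothesis $\sqrt{B_1B_2}\geq(\sqrt{d}-1)/(\sqrt{d}+1)$ in the Hessian computation, but in the paper's proof the Hessian being negative definite (Lemma~\ref{lem:sechessian}) requires only non-uniqueness, via Lemma~\ref{lem:technicalinequality} ($(\Delta-1)^2\omega<1$); the hypothesis on $\sqrt{B_1B_2}$ is not used there, and the same is true of the boundary analysis (Lemma~\ref{lem:secboundary}). Where the hypothesis actually enters is precisely the step you flag as most delicate and leave sketchy: ruling out non-product critical points of $\phi_2$.

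The description you give of that step --- that the joint stationarity system ``decouples into two copies of the single-spin recursion, whose relevant fixed points are only those corresponding to $p^\pm$'' --- would not close the argument even if true, because in the non-uniqueness region the single-spin recursion has three fixed points ($q^+$, $q^-$, $q^*$), so a decoupled product system would have up to nine candidate solutions rather than forcing $\gamma=\alpha^2$, $\delta=\beta^2$. The actual mechanism in the paper is different and is the crux of Lemma~\ref{lem:seccritical}: after the $\sigma_1\leftrightarrow\sigma_2$ symmetry reduces the Lagrange-multiplier equations (Claim~\ref{claim:equalityrc}) and a further algebraic reduction (equations~\eqref{recurone}), one applies the AM--GM inequality and the substitution $x=\sqrt{r_1r_4}$, $y=\sqrt{c_1c_4}$, $B'=(1+B_1B_2)/(2\sqrt{B_1B_2})$ to obtain an inequality version of the \emph{ferromagnetic} Ising tree recursion \eqref{ineq:recurb}. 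Since $B'\geq1$, one is in the ferromagnetic regime, where the recursion has a unique fixed point provided $B'\leq(d+1)/(d-1)$; that condition is exactly $\sqrt{B_1B_2}\geq(\sqrt{d}-1)/(\sqrt{d}+1)$, and Lemma~\ref{lem:maininequality} then forces $x=y=1$, i.e.\ $r_1r_4=c_1c_4=1$, equivalently $\gamma=\alpha^2$, $\delta=\beta^2$. This change of variables from the antiferromagnetic pair-recursion to a ferromagnetic single-variable recursion is the key idea, and it is absent from your plan; without it I do not see how your global stage would conclude.
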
 
\begin{lemma}\label{lem:secondmaxising}
For the antiferromagnetic Ising model with $\lambda=1$, $\Delta=3$ and $B$ in the non-uniqueness region of $\TD$, Condition~\ref{cond:maxima} holds.
\end{lemma}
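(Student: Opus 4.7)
The plan is to verify Condition~\ref{cond:maxima} for the antiferromagnetic Ising model at $\lambda=1$ and $\Delta=3$ by explicit analysis of the rate function $\Phi(\gamma,\delta) := \lim_{n\to\infty}(1/n)\log\E_\G[Y^{\gamma,\delta}_G]$ and showing that its unique global maximizer on the feasibility region is the uncorrelated point $(\gamma,\delta)=(\alpha^2,\beta^2)$. Since at $\Delta=3$ one has $d=\Delta-1=2$ and $(\sqrt{d}-1)/(\sqrt{d}+1)=3-2\sqrt{2}$, Lemma~\ref{lem:secondmax2spin} already handles the non-uniqueness subinterval $B\in[\,3-2\sqrt{2},\,1/3\,)$; the remaining case is the strongly antiferromagnetic regime $B\in(0,\,3-2\sqrt{2})$, where the general sufficient condition fails and a $\Delta=3$-specific argument is needed.

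First, I would derive the exact form of $\Phi$. Enumerating configuration pairs in $\Sigma^{\alpha,\beta}_{\gamma,\delta}$ with $(\alpha,\beta)=(p^+,p^-)$ produces two multinomial factors (one per side of the bipartition) indexed by the four joint spin classes $(++),(+-),(-+),(--)$; averaging over the $\Delta$ independent random perfect matchings contributes a permanent over a $4\times 4$ edge-weight matrix, raised to the power $\Delta$. Stirling's formula then yields $\Phi$ as an entropy term minus a Legendre-type optimum over the joint edge-type distribution. The Ising symmetry $\sigma\mapsto -\sigma$ at $\lambda=1$ induces an involution of $\Phi$ that effectively halves the parameter space to explore. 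Substituting the independence ansatz, in which the joint edge-type matrix factorizes as an outer product of its marginals, verifies that $(\alpha^2,\beta^2)$ is a critical point: the first-order equations collapse to the two-step tree-recursion fixed-point equations that define $(p^+,p^-)$, and $\Phi(\alpha^2,\beta^2) = 2\lim_{n\to\infty}(1/n)\log\E_\G[Z^{\alpha,\beta}_G]$, as required for the second moment to be within a subexponential factor of the squared first moment.

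The main obstacle is ruling out other interior critical points of $\Phi$ with value at least as large. For $\Delta=3$ and no external field, the first-order system is a low-degree polynomial system in $(\gamma,\delta)$ which, after using the $\sigma\mapsto -\sigma$ symmetry, reduces essentially to a one-parameter problem. I would parametrize candidate critical points through the associated tree recursion on $\Tree_3$ and reduce the exclusion to an explicit polynomial inequality in a single variable over the window $B\in(0,\,3-2\sqrt{2}\,]$. This inequality is the hard part: the global-concavity argument underlying Lemma~\ref{lem:secondmax2spin} breaks down in this regime, so I expect the verification to combine analytic endpoint control at $B\to 0^+$, where the Ising model degenerates to proper $2$-coloring on a random $3$-regular bipartite graph and the uncorrelated point can be checked directly, and at $B=3-2\sqrt{2}$, where it smoothly meets Lemma~\ref{lem:secondmax2spin}, with interval arithmetic or computer-assisted verification on the intervening interval, exploiting the low polynomial degree available at $\Delta=3$. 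A closed-form argument would be preferable, but some case-specific or computational check seems unavoidable given that the general sufficient condition fails strictly in this range.
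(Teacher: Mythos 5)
Your high-level plan — reduce Condition~\ref{cond:maxima} to the critical-point equations for $\phi_2$ coming out of the Lagrangian optimization, observe that $(\alpha^2,\beta^2)$ is always a critical point matching the squared first moment, and then rule out extraneous critical points in the window $B\in(0,\,3-2\sqrt{2})$ — matches the skeleton of the paper's argument: the paper too arrives at the fixed-point system \eqref{recurone} (with $B_1=B_2=B$, $d=2$), splits by the sign of $r_1r_4-1$, and has to kill the two "correlated" branches. The gap is in how that last exclusion is carried out. You anticipate needing interval arithmetic or a computer-assisted check on the interior of the $B$-window, essentially conceding that you don't see an analytic mechanism. The paper does have one, and it is the key idea you are missing: in the strongly antiferromagnetic regime the optimal $(\alpha,\beta)=(p^+,p^-)$ are heavily polarized, quantitatively $\frac{\alpha}{1-\alpha}=\frac{1-\beta}{\beta}>\frac{4}{9B^3}$ (Lemma~\ref{lem:biasdtwo}), and this polarization is inherited by the dual ratios as $r_1,c_4>\frac{1}{3B^2}$ (and even $\geq\frac{4}{9B^2}$ on the branch $r_1r_4,c_1c_4>1$; Lemma~\ref{lem:biasedvaluesdtwo}). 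Feeding these lower bounds into the inequality
\begin{equation*}
B^2(r_1+r_4)^2+2B(B^2+1)(r_1+r_4)+4B^2>(1-B^2)^2\sqrt{r_1r_4}
\end{equation*}
makes it fall out by elementary estimates (Claims~\ref{claim:proofdtwoa},~\ref{claim:proofdtwo}), with no numerics. So your proposal is not wrong in spirit, but it leaves the central difficulty unresolved; the missing ingredient is to exploit the $O(1/B^2)$ blow-up of $r_1,c_4$ forced by the bias of $p^\pm$, which turns the "hard polynomial inequality" you defer to the computer into a two-line AM--GM/lower-bound argument. (Your proposed endpoint control at $B\to0^+$ is also shakier than it sounds: with $B_1=B_2=B\to0$ the model degenerates to proper $2$-colorings of a $3$-regular bipartite graph, a highly degenerate limit from which interpolating back into the window is itself nontrivial.)
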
 
\begin{lemma}\label{lem:secondmaxhardcore}
For the hard-core model, $\Delta=3,4,5$ and $\lambda$ in the non-uniqueness region of $\TD$, Condition~\ref{cond:maxima} holds.
\end{lemma}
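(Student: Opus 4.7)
The plan is to derive an explicit asymptotic formula for $\E_\G[Y^{\gamma,\delta}_G]$ and show that the resulting rate function $\Psi(\gamma,\delta):=\lim_n \tfrac{1}{n}\log\E_\G[Y^{\gamma,\delta}_G]$ is maximized at $(\gamma,\delta)=(\alpha^2,\beta^2)$ with $(\alpha,\beta)=(p^+,p^-)$. I would partition each side $V_i$ into four blocks according to the joint values of $(\sigma_1,\sigma_2)$; the block sizes on $V_1$ are determined by $\alpha,\gamma$, and on $V_2$ by $\beta,\delta$. For the hard-core model only independent sets contribute, so for each of the $\Delta$ random perfect matchings we need a matching that avoids every edge both of whose endpoints lie in the independent set of $\sigma_1$ or in the independent set of $\sigma_2$. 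The count of such matchings is a permanent of the induced $4\times 4$ block compatibility pattern. Handling this permanent by Laplace's method, or equivalently by maximizing over an auxiliary $4\times 4$ type-to-type frequency matrix with prescribed row and column marginals, gives a closed-form expression for $\Psi$ as a sum of Stirling-like entropy terms and a saddle-point contribution from the matching.

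The next step is to locate and classify the critical points of $\Psi$. A direct computation shows that $(\gamma,\delta)=(\alpha^2,\beta^2)$ is always a critical point, with $\Psi(\alpha^2,\beta^2)=2\lim_n\tfrac{1}{n}\log\E_\G[Z^{\alpha,\beta}_G]$; this is the value that Condition~\ref{cond:maxima} requires to be the maximum. The key conceptual ingredient is that, after passing to ``ratio'' variables adapted to the tree, the first-order conditions $\partial_\gamma\Psi=\partial_\delta\Psi=0$ match a two-step hard-core tree recursion on $\Tree_\Delta$. Consequently, the critical points of $\Psi$ should be in bijection with ordered pairs of fixed points of the standard hard-core tree recursion, and in the non-uniqueness region these fixed points are precisely $p^-, p^*, p^+$.

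With this identification, the uncorrelated critical point $(\alpha^2,\beta^2)$ corresponds to the pairing $(p^+,p^-)$, and the remaining critical points correspond to the other pairings among $\{p^-,p^*,p^+\}^2$. The ``wrong-marginal'' pairings (such as $(p^+,p^+)$, $(p^-,p^-)$, or any pairing involving $p^*$) project onto block sizes different from $(\alpha,\beta)$ and are already exponentially penalized by Lemma~\ref{lem:firstmomentmax} applied to the corresponding first moment. The only delicate competitors are the pairings that preserve the marginal $(\alpha,\beta)=(p^+,p^-)$ but differ in correlation structure; ruling these out reduces to a one-variable inequality along the curve of critical points parametrized by the tree-recursion ratio variable.

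The main obstacle is verifying this one-variable inequality throughout the full non-uniqueness region $\lambda>\lambda_c(\Tree_\Delta)$. For $\Delta=3,4,5$ the tree recursion is a low-degree rational map, so the inequality can be reduced to a polynomial positivity statement in a single variable over an explicit interval, and verified either analytically or by a finite symbolic computation combined with monotonicity of the recursion derivative near the fixed points. The reason the argument does not immediately extend to all $\Delta\geq 3$ is that, as $\Delta$ grows and $\lambda$ approaches $\lambda_c(\Tree_\Delta)$ from above, the three fixed points $p^-,p^*,p^+$ coalesce and the one-variable inequality becomes analytically delicate; this is precisely the regime that required the additional machinery in \cite{MWW,Galanis} and that remains the bottleneck for larger~$\Delta$.
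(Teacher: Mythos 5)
Your overall strategy --- compute the exponential rate $\Psi(\gamma,\delta)$, reduce the first-order conditions to ratio variables ``adapted to the tree,'' and verify a polynomial positivity statement by symbolic computation for $\Delta=3,4,5$ --- is the right shape and matches the spirit of the paper's Section~\ref{sec:hardhard}. But there is a genuine gap in the middle of your argument, and it is not cosmetic.

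Your claim that the critical points of $\Psi$ ``should be in bijection with ordered pairs of fixed points of the standard hard-core tree recursion'' is not correct, and the paper does not establish or use it. The first-order conditions for $\phi_2$ involve, beyond $\gamma,\delta$, the auxiliary ratio variables $r_1,r_4,c_1,c_4$ (equivalently $x=(r_1r_4)^{1/d}$, $y=(c_1c_4)^{1/d}$, $a=1/r_4$, $b=1/c_4$ in the paper's parametrization). After eliminating what can be eliminated, one is left not with a two-step tree recursion in a single pair of variables, but with the constrained system \eqref{xxxyy}--\eqref{barn2}, which still carries the free parameters $a,b$. What pins these down is the extra relation $\alpha(1-\alpha)^d=\beta(1-\beta)^d$ coming from the \emph{choice} $(\alpha,\beta)=(p^+,p^-)$; this is what the paper converts into the polynomial identity~\eqref{eeer}. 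Your sketch never invokes this constraint, and without it the system is underdetermined, so you cannot get a finite list of candidate critical points.

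Relatedly, the step where you rule out ``wrong-marginal pairings such as $(p^+,p^+)$ or any pairing involving $p^*$'' by appealing to Lemma~\ref{lem:firstmomentmax} is a category error. In Condition~\ref{cond:maxima}, $(\alpha,\beta)=(p^+,p^-)$ is fixed \emph{before} the second-moment optimization, and $Y^{\gamma,\delta}_G$ by definition only sums over pairs $\sigma_1,\sigma_2\in\Sigma^{p^+,p^-}$; there are no critical points with other marginals to be penalized. The only thing to exclude is a non-trivial overlap $(\gamma,\delta)\neq(\alpha^2,\beta^2)$ for the \emph{given} marginals, which is exactly what the paper's $x,y\neq 1$ analysis does. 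Finally, the reduction you describe as ``a one-variable inequality'' is in the paper a bivariate polynomial positivity statement in $(y,t)$ after the substitution $x=(ty+y^d)/(t+1)$, handled by factoring the coefficients $c_{00},c_{01},c_{10},c_{11}$ of $q_a,q_b$; the reason it fails for large $\Delta$ is simply that the degree and the symbolic factorization grow out of reach, not an analytic delicacy of fixed points coalescing (that phenomenon is what blocks the \emph{inequality-based} argument of Lemma~\ref{lem:seccritical} on the hard-core model, and is why a separate argument is needed for Lemma~\ref{lem:secondmaxhardcore} in the first place).
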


The innocuous statements of Lemmas~\ref{lem:secondmax2spin},~\ref{lem:secondmaxising},
and \ref{lem:secondmaxhardcore} are rather misleading. Namely proving that uncorrelated pairs of random variables ``dominate'' the second moment is a standard complication of the second moment approach in a variety of settings, see for example \cite{AN,AP,MWW}. To highlight the technical difficulties, the second moment typically has a number of extra other variables that need to be considered and those are usually treated by sophisticated analysis arguments. We escape this paradigm in the following sense: our arguments try to explain the success of the second moment by relating it to the intrinsic nature of the problem at hand (in our context, uniqueness on the infinite tree). To stress the gains of such a scheme, it underlies perhaps a more generic method to translate uniqueness proofs to similar second moment arguments. 

 In the hard-core model, due to the hard constraints, only one extra variable (besides $\gamma,\delta$) was needed and it could be expressed explicitly in terms of $\gamma,\delta$, so that, in fact, 
one had a two-variable optimization. Even with this seemingly simple setup and despite the variety and complexity of approaches appearing in \cite{MWW}, \cite{Sly10} and \cite{Galanis}, certain regions for $\Delta=4,5$ remained open. 
In the general 2-spin model the second moment is more complex than in the hard-core model in the sense that the soft constraints cause the number of extra variables to increase to nine. To make matters even worse, 
one cannot get rid of those extra variables in some easy algebraic way, since such a scheme reduces to a sixth order polynomial equation 
in terms of $\gamma,\delta$. These obstacles are surmounted
 in the proof of Lemma~\ref{lem:secondmax2spin} by a clean argument: the analysis of the second moment is reduced to certain tree recursions on $\TD$, allowing for a short inequality argument.  
 
 After establishing Condition~\ref{cond:maxima}, one obtains that for $(\alpha,\beta)=(p^\pm,p^\mp)$ it holds that $\E_\G[(Z^{\alpha,\beta}_G)^2]\approx \E_\G[Y^{\alpha^2,\beta^2}_G]$. 
At this point, we are in position to study the ratio $\E_\G[(Z^{\alpha,\beta}_G)^2]/(\E_\G[Z^{\alpha,\beta}_G])^2$. This is a technical computation and requires an asymptotic tight approximation of sums by appropriate Gaussian integrals. While the asymptotic value of the ratio is a constant, as in \cite{MWW}, the constant is greater than 1. 
\begin{lemma}\label{lem:ratiofirstsecond}
When Condition~\ref{cond:maxima} is true, for $(\alpha,\beta)=(p^\pm,p^\mp)$, it holds that
\[\lim_{n\rightarrow\infty}\frac{\E_{\G}\left[(Z^{\alpha,\beta}_G)^2\right]}{\left(\E_{\G}[Z^{\alpha,\beta}_G]\right)^2}=\big(1-\omega^2\big)^{-(\Delta-1)/2}\big(1-(\Delta-1)^2\omega^2\big)^{-1/2},\]
where $\omega$ is given by \eqref{eq:definitionofomega}.
\end{lemma}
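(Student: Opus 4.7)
The plan is a saddle-point evaluation of both $\E_\G[Z^{\alpha,\beta}_G]$ and $\E_\G[(Z^{\alpha,\beta}_G)^2]$ at $(\alpha,\beta)=(p^\pm,p^\mp)$, using Condition~\ref{cond:maxima} to localize the second-moment sum near $(\gamma^*,\delta^*)=(\alpha^2,\beta^2)$. First I would write
\[\E_\G[Z^{\alpha,\beta}_G] = \binom{n}{\alpha n}\binom{n}{\beta n}\lambda^{(\alpha+\beta)n}\big(M_1(\alpha,\beta)\big)^\Delta,\]
where $M_1(\alpha,\beta)$ is the expectation $\E[\B_1^{m^-}\B_2^{m^+}]$ over a single uniform perfect matching between $V_1$ and $V_2$ with the configuration $\sigma\in\Sigma^{\alpha,\beta}$ fixed. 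The sum defining $M_1$ runs over the number of $(-,-)$ edges in the matching and admits a one-dimensional Gaussian approximation after Stirling. Analogously,
\[\E_\G[Y^{\gamma,\delta}_G] = N(\alpha,\beta,\gamma,\delta)\big(M_2(\gamma,\delta)\big)^\Delta,\]
where $N(\alpha,\beta,\gamma,\delta)$ is the multinomial count of pairs $(\sigma_1,\sigma_2)\in\Sigma^{\alpha,\beta}_{\gamma,\delta}$, and $M_2(\gamma,\delta)$ is the analogous matching expectation for a pair, a multi-dimensional sum over a table of joint edge types which also admits a multi-dimensional Gaussian approximation.

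Applying Stirling and the Gaussian approximations of $M_1, M_2$ gives $\E_\G[Y^{\gamma,\delta}_G] = e^{nF_2(\gamma,\delta)+O(\log n)}$ for a smooth function $F_2$. By Condition~\ref{cond:maxima}, $F_2$ has its unique global maximum at $(\gamma^*,\delta^*)=(\alpha^2,\beta^2)$, so a standard localization argument (splitting the $(\gamma,\delta)$ sum into an $o(1)$-neighborhood of the maximizer plus an exponentially negligible tail) together with Laplace's method yields
\[\E_\G[(Z^{\alpha,\beta}_G)^2] \sim \frac{C}{n\sqrt{\det(-D^2 F_2(\gamma^*,\delta^*))}}\, e^{nF_2(\gamma^*,\delta^*)},\]
for an explicit constant $C$. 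Since $(\alpha^2,\beta^2)$ is the ``independent'' overlap, $F_2(\gamma^*,\delta^*)=2F_1(\alpha,\beta)$ where $F_1$ is the analogous exponent for the first moment, so in the ratio $\E_\G[(Z^{\alpha,\beta}_G)^2]/(\E_\G[Z^{\alpha,\beta}_G])^2$ the exponential factors cancel and, after the routine cancellation of $n$-dependent Laplace prefactors against the squared first-moment Laplace prefactor, one is left with a purely constant ratio of Hessian determinants.

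The remaining and most delicate step is to identify this constant ratio with $(1-\omega^2)^{-(\Delta-1)/2}(1-(\Delta-1)^2\omega^2)^{-1/2}$. The key structural fact is that, because $G\sim\G(n,\Delta)$ is a union of $\Delta$ independent matchings, the joint log-density in $(\gamma,\delta)$ together with the $\Delta$ internal matching-type variables has a block form in which the $\Delta$ matching blocks are identical and each is coupled only through the shared overlap $(\gamma,\delta)$. Changing basis from the $\Delta$ matching shifts to their symmetric mean plus $\Delta-1$ orthogonal modes decouples the orthogonal modes from $(\gamma,\delta)$: they contribute $\Delta-1$ identical eigenvalues whose product is $(1-\omega^2)^{\Delta-1}$, where $\omega$ of \eqref{eq:definitionofomega} is precisely the single-edge spin-spin correlation at the independent saddle point. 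The symmetric mean mode does couple to $(\gamma,\delta)$; after taking the corresponding Schur complement, the remaining determinant collapses to $1-(\Delta-1)^2\omega^2$. Collecting these factors yields the claimed limit. The main obstacle will be executing this block-diagonalization and Schur-complement calculation while tracking all polynomial prefactors so they cancel cleanly between the numerator and the squared denominator; in particular, one must verify that the single-edge correlation defining $\omega$ is exactly what appears in the off-diagonal Hessian entries of $F_2$ at $(\gamma^*,\delta^*)$.
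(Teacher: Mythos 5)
Your high-level plan --- Stirling plus a saddle-point (Laplace) evaluation of both moments, localization of the second-moment sum near $(\gamma^*,\delta^*)=(\alpha^2,\beta^2)$ via Condition~\ref{cond:maxima}, and the identity $F_2(\gamma^*,\delta^*)=2F_1(\alpha,\beta)$ --- matches the paper's proof exactly (these are Lemmas~\ref{lem:asympfirst}, \ref{lem:asympsec}, \ref{lem:firstsecvalues}). Where you diverge is the final identification of the constant. The paper performs the two-stage Gaussian integration explicitly (first over the nine-dimensional $\mathbf{Y}$ per matching, then over $(\gamma,\delta)$ via a Schur complement) and evaluates the resulting determinants in closed form with Maple in terms of $E_1,E_2,E_3$ of \eqref{eq:helpfulAs}, finishing with the algebraic identity $E_1E_2E_3 / \big((1+B_1Q^+)(1+B_1Q^-)(B_2+Q^+)(B_2+Q^-)\big)^2 = 1-\omega^2$. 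You instead propose to read off the constant structurally via a symmetric/antisymmetric change of basis over the $\Delta$ matching blocks.

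That intuition is appealing and the answer does ultimately factor, but the quantitative claims in your last step are wrong as stated, and the gap is not merely cosmetic. First, the ``$\Delta-1$ orthogonal modes'' are not scalars: each orthogonal matching block is nine-dimensional (for the general 2-spin model), so each contributes a $9\times 9$ determinant $\det(-\mathbf{H}_\mathbf{Y})$, not a single eigenvalue; recognizing that the combination $\det(-\mathbf{H}_\mathbf{Y})\prod y^*_{ij}$ divided by $\big((-H)\prod x^*_{ij}\big)^2$ and by $(\alpha\beta(1-\alpha)(1-\beta))^2$ equals $1-\omega^2$ is exactly the closed-form Hessian algebra the paper does, not something the block structure hands you for free. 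Second, the exponent $\Delta-1$ on $(1-\omega^2)$ does not come from counting $\Delta-1$ orthogonal modes: after the change of basis the full determinant is $\det(-\mathbf{H}_\mathbf{Y})^{\Delta}\cdot\det(\text{Schur complement})$ --- \emph{all} $\Delta$ blocks contribute --- and the $2\times 2$ Schur complement is \emph{not} $1-(\Delta-1)^2\omega^2$; it carries an extra prefactor (in the paper's notation $4DF-E^2 = \frac{E_1^7}{(Q^+Q^-)^2E_2E_3}\big(1-(\Delta-1)^2\omega^2\big)$) which is precisely the reciprocal of one per-matching factor, and it is this cancellation that turns $\Delta$ into $\Delta-1$. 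So the symmetrization step does not sidestep the explicit Hessian computation; it merely reorganizes it. Incidentally, the paper does offer a genuinely conceptual explanation of the split into $(1-\omega^2)^{\Delta-1}$ and $1-(\Delta-1)^2\omega^2$, but it comes later, via the cycle-counting formula $r(\Delta,i)=(\Delta-1)^i+(-1)^i(\Delta-1)$ in the small-graph-conditioning Lemma~\ref{lem:sumasymptotics}, not via the Hessian.
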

\begin{remark}
The formula given in the lemma is valid for the hard-core model as well ($B_1=0, B_2=1$), agreeing with the statement of \cite[Theorem 3.3]{MWW}.
\end{remark}
The fact that the ratio of the second moment to the first moment squared converges to a constant greater than 1 does not allow to get a.a.s.\ results using the standard second moment method. This can be dealt with by the so-called small subgraph conditioning method \cite{Wormald,JLR}. The consequence of applying the small subgraph conditioning method is the following result (proved in Section~\ref{sec:oversmall}).
\begin{lemma}\label{lem:smallgraph}
When Condition~\ref{cond:maxima} is true, for $(\alpha,\beta)=(p^\pm,p^\mp)$, it holds asymptotically almost surely over the graph $G\sim\G(n,\Delta)$ that
\[ Z^{\alpha,\beta}_G\geq \frac{1}{n}\E_\G[Z^{\alpha,\beta}_G].\]
\end{lemma}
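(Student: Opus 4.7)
The plan is to apply the small subgraph conditioning method of Robinson and Wormald, in the form due to Janson (see \cite[Chapter~9]{JLR}), to the random variable $X_n := Z^{\alpha,\beta}_G$ with $G\sim\G(n,\Delta)$ and $(\alpha,\beta)=(p^\pm,p^\mp)$. Let $C_k^{(n)}$ denote the number of cycles of length $k$ in $G$; classically the $C_k^{(n)}$ converge jointly to independent Poisson variables $Z_k$ with explicit means $\lambda_k$ depending on $\Delta$. To apply the method we must exhibit constants $\delta_k>-1$ such that, for every fixed tuple $(j_1,\ldots,j_m)$,
\[\lim_{n\to\infty}\frac{\E_\G[X_n(C_1^{(n)})_{j_1}\cdots (C_m^{(n)})_{j_m}]}{\E_\G[X_n]}=\prod_{k=1}^{m}\bigl(\lambda_k(1+\delta_k)\bigr)^{j_k},\]
with $\sum_k\lambda_k\delta_k^2<\infty$, and such that the exact variance identity
\[\lim_{n\to\infty}\frac{\E_\G[X_n^2]}{(\E_\G[X_n])^2}=\exp\Bigl(\sum_{k\geq 1}\lambda_k\delta_k^2\Bigr)\]
holds. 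Janson's theorem then yields $X_n/\E_\G[X_n]\to W$ in probability for a strictly positive random variable $W$, which immediately gives $X_n\geq n^{-1}\E_\G[X_n]$ a.a.s.\ with room to spare (any subpolynomial slack would suffice).

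The identification of the $\delta_k$ proceeds by conditioning $G\sim\G(n,\Delta)$ on containing a prescribed set of vertex-disjoint cycles of given lengths and recomputing the first moment. Under this conditioning the residual graph is a slight perturbation of the unconditioned model, so that the first-moment sum (from the proof of Lemma~\ref{lem:firstmomentmax}) is modified only by a local factor along each prescribed cycle. A Laplace/saddle-point evaluation at the optimizer $(\alpha,\beta)=(p^\pm,p^\mp)$ shows that each $k$-cycle contributes an extra factor of the form $\mathrm{tr}(M^k)/(\eta_1^k+\eta_2^k)$, where $M$ is the natural $2\times 2$ transfer matrix built from the edge activities $B_1,B_2$ and the marginals $p^+,p^-$ of the semi-translation-invariant Gibbs measures $\mu^\pm$ on $\TD$, and $\eta_1\geq\eta_2$ are its eigenvalues. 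This yields $1+\delta_k$ as an explicit rational function of $\eta_1,\eta_2$, and in particular $\delta_k=O((\eta_2/\eta_1)^k)$, giving absolute summability of $\sum_k\lambda_k\delta_k^2$.

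The hard part is the exact variance identity: the constant produced by Lemma~\ref{lem:ratiofirstsecond}, namely $(1-\omega^2)^{-(\Delta-1)/2}(1-(\Delta-1)^2\omega^2)^{-1/2}$, must be shown to equal $\exp(\sum_k\lambda_k\delta_k^2)$. Following the hard-core template of \cite[Section~5]{MWW}, the matching is carried out by expanding $-\tfrac12\log(1-x)=\sum_{k\geq 1}x^k/(2k)$ in each of the two factors on the left-hand side and verifying term by term that the $k$-th coefficient equals $\lambda_k\delta_k^2$; this works exactly when $\omega$ from \eqref{eq:definitionofomega} is identified with $\eta_2/((\Delta-1)\eta_1)$, an identity that should drop out of the same tree-recursion bookkeeping already used to prove Lemmas~\ref{lem:firstmomentmax} and~\ref{lem:ratiofirstsecond}. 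Once all four hypotheses of Janson's theorem are in hand, the conclusion $X_n/\E_\G[X_n]\to W$ with $W>0$ almost surely yields the lemma.
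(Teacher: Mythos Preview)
Your approach is exactly the paper's: apply the small subgraph conditioning theorem (Theorem~\ref{thm:smallgraphmethod}) with $Y=Z^{\alpha,\beta}_G$ and cycle counts as the auxiliary variables, then verify Conditions~1--4 and take $r(n)=1/n$. A few technical details need correcting, however. Since $G$ is bipartite, only even cycles occur, and the transfer matrix governing the cycle-weighted moments is $4\times 4$ rather than $2\times 2$ because one must track both the spin and the side $V_1/V_2$ along the cycle; its (normalized) eigenvalues are $\pm 1$ and $\pm\sqrt{\omega}$, which gives $\delta_i=\omega^{i/2}$ for even $i$ (Lemma~\ref{lem:ratiosmallgraph}). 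Your proposed identification $\omega=\eta_2/((\Delta-1)\eta_1)$ is off: the factor $\Delta-1$ enters through $\lambda_i=\bigl((\Delta-1)^i+(\Delta-1)\bigr)/i$ (the count of properly $\Delta$-edge-colored $i$-cycles), not through the transfer-matrix eigenvalue ratio. Finally, the variance matching you say ``should drop out'' is precisely the content of Lemmas~\ref{lem:ratiofirstsecond} and~\ref{lem:sumasymptotics}, verified via the elementary identity $\sum_{\text{even }i\geq 2}\lambda_i\,\omega^i=-\tfrac12\log\bigl(1-(\Delta-1)^2\omega^2\bigr)-\tfrac{\Delta-1}{2}\log(1-\omega^2)$, which matches the constant in Lemma~\ref{lem:ratiofirstsecond} exactly.
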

We next conclude Theorem~\ref{thm:bimodality}.
\begin{proof}[Proof of  Theorem~\ref{thm:bimodality}.]
Lemmas~\ref{lem:firstmomentmax} and~\ref{lem:smallgraph} establish the necessary ingredients to carry out the argument in \cite[Proof of Theorem 2.2]{MWW} (see also \cite[Proofs of Theorems 2 and 3]{Galanis}). The torpid mixing of Glauber dynamics  follows immediately by a conductance argument, using  \cite[Claim 2.3]{DFJ}.
\end{proof}

The outline of the rest of the paper is as follows. 
In Section~\ref{sec:tree-recursions} we derive certain 
properties for the infinite tree which are used throughout the remainder of the paper.
We then look at the first and second moments in Section \ref{sec:moments}.
In Section \ref{sec:maxima}, we look at the optimization problems
for the first and second moments and prove Lemmas~\ref{lem:firstmomentmax} and \ref{lem:secondmax2spin}. 
In Section \ref{sec:NP-outline}, we give the proof of our inapproximability
results (Theorems~\ref{thm:hardcore}, \ref{thm:Ising} and \ref{thm:2spin}) based on Sly's reduction for the hard-core model. 

In Section \ref{sec:calculating-moments} we analyze the ratio of the 
second moment to the first moment squared for random $\Delta$-regular graphs (Lemma~\ref{lem:ratiofirstsecond})
and for the modified graph used in Sly's reduction. In Section \ref{sec:smallgraph}, we apply the small subgraph conditioning method  to get a.a.s.\ results (over the choice of the random graph), where we also prove Lemma~\ref{lem:smallgraph}. Finally, in Section~\ref{sec:remainingproofs} we give the proofs of some remaining lemmas, including Lemmas~\ref{lem:secondmaxising} and \ref{lem:secondmaxhardcore}.

\section{Extremal Measures on the Infinite Tree for 2-Spin Models}
\label{sec:tree-recursions}

In Section~\ref{sec:proof-approach}, we defined the measures $\mu^{*},\mu^{\pm}$ on $\TD$ which will be of interest to us.  In our context, the phase transition on the infinite $(\Delta-1)$-ary tree $\TDary$ rooted at $\rho$ will also be of interest (note that $\TDary$ differs from $\TD$ only in that the root has degree $\Delta-1$). The uniqueness/non-uniqueness regions for $\TDary$ coincide with those of $\TD$, albeit with different occupation probabilities for the root in the respective free and fixed boundary measures. 

We denote by $\hat{\mu}^{\pm}, \hat{\mu}^*$ the measures on $\TDary$ which are the analogues of the measures $\mu^{\pm}, \mu^*$ on $\TD$, respectively. Recall that we use $+,-$ to denote the even and odd boundary conditions, and  $*$ for the free boundary condition. We will denote by $q^{\pm},q^*$ the marginal probabilities of  the root being assigned $-1$ in each of these measures on $\TDary$, i.e., $q^{\pm}=\hat{\mu}^{\pm}(\sigma_\rho=-1)$ and $q^{*}=\hat{\mu}^{*}(\sigma_\rho=-1)$. Once again, it can easily be proved that $q^-<q^*<q^+$ using (anti)monotonicity arguments. 

A well known and interesting property that the measures $\mu^{\pm}$ (and $\hat{\mu}^{\pm}$) possess is that of extremality. Namely, in the  non-uniqueness region they cannot be written as nontrivial convex combinations of other Gibbs measures. In this particular setting, this follows from (and can be interpreted as) the fact that the even and odd boundaries force the maximal bias on the probability that the root $\rho$ is assigned the spin $-1$ as the height of the tree goes to infinity; in other words, for every Gibbs measure $\mu\neq \mu^{\pm}$ on $\TD$, it holds that   $p^{-}<\mu(\sigma_\rho=-1)<p^{+}$. For a more complete account of extremality, we refer the reader to \cite{Georgii}.

In this section,  we state some further properties for the densities $q^\pm, q^*$ and $p^\pm,p^*$, which we are going to utilize throughout the text. Standard tree recursions for Gibbs measures (e.g., see \cite{SST}) establish that
\begin{equation}
\frac{q^\pm}{1-q^\pm}=\lambda\left(\frac{B_1\cdot \frac{q^\mp}{1-q^\mp}+1}{\frac{q^\mp}{1-q^\mp}+B_2}\right)^{\Delta-1}, \quad \frac{q^*}{1-q^*}=\lambda\left(\frac{B_1\cdot \frac{q^*}{1-q^*}+1}{\frac{q^*}{1-q^*}+B_2}\right)^{\Delta-1}. \label{eq:recurone}
\end{equation}
We define separate expressions for $\frac{q^\pm}{1-q^\pm}, \frac{q^*}{1-q^*}$, i.e.,
\begin{equation}
Q^{+}=\frac{q^+}{1-q^+},\qquad Q^{-}=\frac{q^-}{1-q^-}, \qquad Q^{*}=\frac{q^*}{1-q^*}.\label{eq:qpqm}
\end{equation}
The following lemma summarizes the properties 
of the occupation densities $q^+, q^-, q^*$ which we are going to utilize. Its rather folklore proof is a straightforward analysis of appropriate tree recursions and is omitted.
\begin{lemma}\label{lem:densitiesone}
For $\Delta\geq 3$ and $\B_1,\B_2,\lambda>0$, consider the system of equations
\begin{equation}
x=\lambda\left(\frac{B_1 y+1}{y+B_2}\right)^{\Delta-1},\ y=\lambda\left(\frac{B_1 x+1}{x+B_2}\right)^{\Delta-1}, \label{eq:densitiesone}
\end{equation} 
with $x,y\geq 0$.  
\begin{enumerate}
\item In the uniqueness region of $\TDary$, the only solution 
to \eqref{eq:densitiesone} is $(Q^*,Q^*)$.
\item In the non-uniqueness region of $\TDary$, \eqref{eq:densitiesone} has 
exactly three solutions which are given by $(Q^\pm,Q^\mp)$ and $(Q^*,Q^*)$. It holds that $Q^-<Q^*<Q^+$.
\end{enumerate}
\end{lemma}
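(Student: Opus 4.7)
The plan is to reduce the two-variable system \eqref{eq:densitiesone} to a one-variable fixed-point analysis. Define $f:(0,\infty)\to(0,\infty)$ by
\[
f(y):=\lambda\Bigl(\tfrac{B_1 y + 1}{y + B_2}\Bigr)^{\Delta-1},
\]
so that \eqref{eq:densitiesone} becomes $x=f(y)$, $y=f(x)$. Any positive solution $(x,y)$ therefore satisfies $x=f(f(x))$ with $y=f(x)$ determined, so the solutions are in bijection with fixed points of $f\circ f$: diagonal solutions $x=y$ correspond to fixed points of $f$ itself, while off-diagonal solutions come in symmetric pairs $\{(x,y),(y,x)\}$ obtained by swapping coordinates.

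First I would treat the diagonal. The direct computation
\[
(\log f)'(y) \;=\; (\Delta-1)\,\frac{B_1 B_2 - 1}{(B_1 y + 1)(y + B_2)}
\]
shows that $f$ is strictly decreasing in the antiferromagnetic regime $B_1 B_2 < 1$, which is the regime relevant for non-uniqueness on $\TD$. Since $f(0)=\lambda/B_2^{\Delta-1}>0$ and $\lim_{y\to\infty}f(y)=\lambda B_1^{\Delta-1}$ is finite, the equation $x=f(x)$ has a unique positive solution $Q^*$ by the intermediate value theorem, so the only diagonal solution is $(Q^*,Q^*)$.

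Next, to count off-diagonal solutions, I would introduce the auxiliary function $T(x):=f(f(x))-x$, whose positive zeros are exactly the fixed points of $f\circ f$. Three ingredients are needed. First, $T(0)>0$ and $T(x)\to-\infty$ as $x\to\infty$, so the number of positive zeros of $T$ is odd. Second, $T'(Q^*)=f'(Q^*)^2-1$, so the sign of $T'$ at $Q^*$ is dictated by the classical threshold $|f'(Q^*)|=1$ characterizing the uniqueness/non-uniqueness boundary on $\TD$. Third and most importantly, the specific rational form of $f$ forces $T'$ to change sign at most twice on $(0,\infty)$, so $T$ has at most three zeros. I would establish this third ingredient by analysing the factors of $(f\circ f)'(x)=f'(f(x))f'(x)$, using the explicit formula for $(\log|f'|)'$ (which in the antiferromagnetic regime has no sign change for $y>0$, making $|f'|$ a monotone positive function), together with the monotonicity of $f$ itself, to pin down the qualitative shape of $(f\circ f)'$. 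Combining the three ingredients, $T$ has exactly one zero in the uniqueness regime and exactly three zeros in the non-uniqueness regime.

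Finally, in the non-uniqueness case I would identify the off-diagonal zeros as $Q^-,Q^+$ with the claimed ordering $Q^-<Q^*<Q^+$. Labelling the off-diagonal zeros $x_0<x_1$, they form a $2$-cycle of $f$, so $f(x_0)=x_1$ and $f(x_1)=x_0$; since $f$ is strictly decreasing and fixes $Q^*$, $x_0<Q^*$ forces $x_1=f(x_0)>f(Q^*)=Q^*$, giving the claimed ordering, and the two off-diagonal solutions to \eqref{eq:densitiesone} are then precisely $(Q^+,Q^-)$ and $(Q^-,Q^+)$. The main obstacle is the critical-point count above: while the rational structure of $f$ makes the at-most-three-zeros conclusion morally transparent, turning it into a rigorous bound requires careful manipulation of the explicit derivatives and systematic use of the antiferromagnetic hypothesis $B_1 B_2 < 1$, and this is where I expect the bulk of the technical work to lie.
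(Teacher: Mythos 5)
The paper explicitly omits the proof of this lemma, calling it folklore, so there is no internal proof to compare against; I can only assess your proposal on its own terms. Your framework is the natural one and most of its pieces are sound: the reduction to fixed points of $f\circ f$, the computation $(\log f)'(y)=(\Delta-1)\frac{B_1B_2-1}{(B_1y+1)(y+B_2)}$ showing $f$ strictly decreasing under $B_1B_2<1$, the uniqueness of the diagonal fixed point, the identification of the threshold $|f'(Q^*)|=1$ (which, as one can check, equals $(\Delta-1)^2\omega^*$ in the paper's notation), and the final ordering argument are all correct. The genuine gap is exactly where you flag it: the claim that $T'=(f\circ f)'-1$ changes sign at most twice. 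The two monotonicity facts you propose to lean on do not deliver this. You are right that $|f'|$ is strictly decreasing --- indeed $(\log|f'|)'(y)=\frac{-2B_1y+(\Delta-2)B_1B_2-\Delta}{(B_1y+1)(y+B_2)}<0$ when $B_1B_2<1$ --- and that $f$ is decreasing, so $(f\circ f)'(x)=|f'(f(x))|\cdot|f'(x)|$ is the product of a positive increasing function (the first factor, a composition of two decreasing maps) and a positive decreasing function (the second). But such a product has no a priori bound on the number of its critical points, let alone on the number of times it crosses the level $1$, so the two-sign-change bound on $T'$ does not follow; your sketch stops precisely where the real work begins.

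A more tractable route, which sidesteps the composition $f\circ f$ altogether, is to divide the two equations in \eqref{eq:densitiesone}: any solution $(x,y)$ satisfies $\psi(x)=\psi(y)$ where $\psi(t):=t\bigl(\tfrac{B_1t+1}{t+B_2}\bigr)^{\Delta-1}=tf(t)/\lambda$. A short computation gives
\[
(\log\psi)'(t)=\frac{B_1t^2+\bigl(1+B_1B_2-(\Delta-1)(1-B_1B_2)\bigr)t+B_2}{t\,(B_1t+1)(t+B_2)},
\]
whose numerator is a quadratic (or linear when $B_1=0$) with nonnegative leading coefficient and positive value $B_2$ at $t=0$, hence at most two positive zeros. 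Thus $\psi$ has at most two critical points on $(0,\infty)$, so any level set $\psi^{-1}(c)$ has at most three points, and this together with the remaining constraint $x=f(y)$ and the $2$-cycle structure you already observed caps the number of solutions at three. This replaces the delicate analysis of $(f\circ f)'$ by the sign of an explicit quadratic, which is the kind of ``straightforward analysis of tree recursions'' the paper presumably has in mind.
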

\begin{proof}
See, for example, \cite[Section 6.2]{MST}. 
\end{proof}

We also define the following quantities $\omega,\ \omega^{*}$ which will emerge throughout the paper
(and will be motivated shortly):
\begin{equation}
\omega:=\frac{(1-B_1B_2)^2Q^+Q^-}{(B_2+Q^-)(B_2+Q^+)(1+B_1Q^-)(1+B_2Q^+)},\quad \omega^*:=\frac{(1-B_1B_2)^2(Q^*)^2}{(B_2+Q^*)^2(1+B_1Q^*)^2}.\label{eq:definitionofomega}
\end{equation}
Note that $\omega^*$ is obtained from $\omega$ after identifying $Q^\pm$ with $Q^*$. The following lemma for $\omega,\omega^*$ will be crucial to prove some optimality conditions required to establish Lemma~\ref{lem:firstmomentmax} and Condition~\ref{cond:maxima}. Its proof is given in Section~\ref{sec:technicalinequality}.
\begin{lemma}\label{lem:technicalinequality}
Let $\Delta\geq 3$. For $(B_1,B_2,\lambda)$ in the non-uniqueness region of $\TDary$, it holds that
\[(\Delta-1)^2\omega<1<(\Delta-1)^2\omega^*.\]
\end{lemma}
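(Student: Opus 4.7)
The plan is to interpret $(\Delta-1)^2\omega$ and $(\Delta-1)^2\omega^*$ as derivatives of the second-iterate tree recursion $g:=f\circ f$ at, respectively, the period-two orbit $(Q^-,Q^+)$ and the symmetric fixed point $Q^*$, and then extract both inequalities from the dynamical structure of $g$ in the non-uniqueness region. Here $f(z)=\lambda\bigl((B_1z+1)/(z+B_2)\bigr)^{\Delta-1}$ is the one-step recursion, with derivative $f'(z)=(\Delta-1)\frac{B_1B_2-1}{(B_1z+1)(z+B_2)}\,f(z)$, so in the antiferromagnetic range $B_1B_2<1$ the map $f$ is strictly decreasing and $g$ is strictly increasing. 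Using $Q^\pm=f(Q^\mp)$ from Lemma~\ref{lem:densitiesone}, one obtains
\[ g'(Q^+)=g'(Q^-)=f'(Q^+)f'(Q^-)=(\Delta-1)^2\omega,\qquad g'(Q^*)=(f'(Q^*))^2=(\Delta-1)^2\omega^*, \]
so the target reduces to: $Q^*$ is a repelling fixed point of $g$ while $(Q^-,Q^+)$ is an attracting period-two orbit.

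For $(\Delta-1)^2\omega^*>1$, this is essentially the defining dynamical characterization of non-uniqueness on $\hat{\mathbb{T}}_\Delta$: were $Q^*$ attracting for $f$, every iterate of $f$ would converge to $Q^*$, forbidding the extra fixed points $Q^\pm$ of $g$ guaranteed by Lemma~\ref{lem:densitiesone}. For $(\Delta-1)^2\omega\le 1$, I would set $h(z):=g(z)-z$. By Lemma~\ref{lem:densitiesone} its only zeros on $(0,\infty)$ are $Q^-<Q^*<Q^+$; moreover $g$ is bounded with $g(0^+)>0$, so $h>0$ near $0$ and $h\to-\infty$ as $z\to\infty$. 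With three zeros and opposite signs at the endpoints, an intermediate-value argument forces $h$ to alternate sign $(+,-,+,-)$ across the four intervals cut by the zeros. In particular at $Q^+$, $h$ passes from positive to negative, giving $h'(Q^+)\le 0$, i.e.\ $(\Delta-1)^2\omega\le 1$; the analogous crossing at $Q^*$ reproves the non-strict form of the other inequality.

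The main obstacle I anticipate is promoting these non-strict inequalities to strict ones throughout the non-uniqueness region, i.e.\ ruling out the tangential case $g'(Q^\pm)=1$. My first plan is algebraic: after clearing denominators, $h(z)$ becomes a polynomial $P(z)$ of degree at most $(\Delta-1)^2+1$ whose positive real roots are exactly the fixed points of $g$. If $g'(Q^\pm)=1$, the sign-alternation at $Q^+$ forces $Q^+$ (and by symmetry $Q^-$) to be a root of $P$ of odd multiplicity at least $3$, contributing at least $7$ to the root count; for $\Delta=3$ this already contradicts the degree bound $(\Delta-1)^2+1=5$, and for larger $\Delta$ I would try to obtain contradictions by a sharper sign/count analysis (e.g.\ Descartes' rule after a suitable substitution) or by an explicit factorisation argument on the remaining factor of $P$. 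A robust fallback is to use openness and connectedness of the non-uniqueness region: $(\Delta-1)^2\omega<1$ is easy to verify near the uniqueness boundary, where $(Q^-,Q^+)$ emerges stably from $Q^*$ via a supercritical pitchfork bifurcation of $g$, and equality at some interior point would correspond to a further period-doubling bifurcation producing new fixed points of $g\circ g$, which a separate monotonicity analysis of the fourth iterate $f^{(4)}$ should exclude.
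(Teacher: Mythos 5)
Your dynamical framing is valid, and the paper itself notes it as motivation (Section~\ref{sec:tree-recursions} records that $(\Delta-1)^2\omega$ is the derivative of the two-step recursion at $Q^\pm$ and $(\Delta-1)^2\omega^*$ at $Q^*$). Your intermediate-value argument does give the \emph{non-strict} bound correctly: since $g$ is increasing, bounded, positive at $0^+$, and its only positive fixed points are $Q^-<Q^*<Q^+$ (Lemma~\ref{lem:densitiesone}), and since $(\Delta-1)^2\omega^*>1$ forces $h=g-\mathrm{id}$ to cross upward at $Q^*$, the crossings at $Q^\pm$ must be downward, so $g'(Q^\pm)\le 1$. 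But the lemma asserts a \emph{strict} inequality, and your route does not close that gap except for $\Delta=3$. The degree count on the cleared-denominator polynomial $P$, of degree $(\Delta-1)^2+1$, gives $5<7$ for $\Delta=3$, but already for $\Delta=4$ the degree is $10\ge 7$ and nothing prevents the remaining degree from being absorbed by nonreal or negative roots. The proposed ``sharper sign/count analysis'' is not carried out, and the bifurcation-and-connectedness fallback as sketched is not sound: $g'(Q^\pm)=1$ is a tangency of $g$ with the diagonal (a saddle-node/transcritical degeneracy of fixed points of $g$ itself), not a period-doubling of $g\circ g$ — indeed $g$ is increasing so $g'=-1$ can never occur — and Lemma~\ref{lem:densitiesone} alone does not rule out a triple fixed point at an interior parameter. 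You would also need connectedness of the non-uniqueness region in $(B_1,B_2,\lambda)$, which you do not establish.

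The paper avoids the dynamical picture entirely. It isolates the key inequality $B_1Q^+Q^-+B_1B_2(Q^++Q^-)+B_2>(\Delta-2)(1-B_1B_2)\sqrt{Q^+Q^-}$ (Lemma~\ref{lem:aidforproof}), which it proves by combining the two tree fixed-point equations, expanding with the binomial formula, and applying a simple AM--GM-type bound on $\frac{x^{k-1}-y^{k-1}}{x-y}$; it then converts this into $(\Delta-1)^2\omega<1$ via another AM--GM step. That argument yields strictness directly and uniformly in $\Delta$, which is precisely what your approach is missing.
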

Let us explain the mysterious looking inequality in Lemma~\ref{lem:technicalinequality}. While not obvious, it is related with the two-step recurrence on the infinite tree $\TDary$. The two step-recurrence is obtained by the composition of two successive one step recurrences. In the non-uniqueness region, the fixed points of the two step recurrence are exactly $q^\pm,q^*$ (this is basically Lemma~\ref{lem:densitiesone}). The inequality in Lemma~\ref{lem:technicalinequality} establishes, after some calculations, that the fixed point $q^+$ (resp. $q^-$) is an attractor of the two step recurrence in $\hat{\mu}^+$ (resp. $\hat{\mu}^-$), while  the fixed point $q^*$ is repulsive. More specifically, the derivative of the two step function at $q^\pm$ is equal to $(\Delta-1)^2\omega$, and similarly at $q^*$ is equal to $(\Delta-1)^2\omega^*$. This property will be used to show that the measures $\hat{\mu}^{\pm}$ satisfy a strong form of non-reconstruction  with exponential decay of correlation. We should also note that the analysis in \cite{SST} (see also \cite{LLY}) establishes the following criterion for uniqueness, which we reformulate for our purposes: uniqueness holds on $\TDary$ iff $(\Delta-1)^2\omega^*\leq 1$.  

Finally, let us translate the above to obtain a handle on $p^\pm, p^*$. Using a slight modification of the tree recursions in $\TDary$, 
one obtains that
\begin{equation}
\frac{p^\pm}{1-p^\pm}=\frac{q^\pm}{1-q^\pm}\left(\frac{B_1\cdot \frac{q^\mp}{1-q^\mp}+1}{\frac{q^\mp}{1-q^\mp}+B_2}\right), \qquad \frac{p^*}{1-p^*}=\frac{q^*}{1-q^*}\left(\frac{B_1\cdot \frac{q^*}{1-q^*}+1}{\frac{q^*}{1-q^*}+B_2}\right). \label{eq:pppmtemp}
\end{equation}
Using \eqref{eq:qpqm} and \eqref{eq:pppmtemp}, an easy calculation gives
\begin{equation}
p^{\pm}=\frac{Q^{\pm}(1+B_1 Q^{\mp})}{B_2+Q^{+}+Q^{-}+B_1 Q^{+} Q^{-}}.\label{eq:pppm}
\end{equation}
Equation \eqref{eq:pppm} will prove to be very useful to simplify seemingly intricate calculations. 

For the case of the Ising model with zero external field ($B_1=B_2=B,\lambda=1$), it holds that $p^++p^-=q^++q^-=1$, $p^*=q^*=1/2$. These properties may be proved easily using the symmetries of the model. Namely, observe that if $(x,y)$ satisfy \eqref{eq:densitiesone} then so do $(1/y,1/x)$. By items 1 and 2 of Lemma~\ref{lem:densitiesone}, this implies that $q^++q^-=1$ and $q^*=1/2$. By \eqref{eq:pppmtemp}, it also follows that $p^++p^-=1$, $p^*=1/2$.

\section{Moment Analysis}
\label{sec:moments}
In this section, we give expressions for $\E_{\G}\big[Z^{\alpha,\beta}_G\big]$, $\E_\G\big[Y^{\gamma,\delta}_G\big]$ and then study the asymptotics of $\frac{1}{n}\log\E_{\G}\big[Z^{\alpha,\beta}_G\big]$, $\frac{1}{n}\log\E_{\G}\big[Y^{\gamma,\delta}_G\big]$.

\subsection{First Moment}\label{sec:firstmomentform}
The first moment of $Z^{\alpha,\beta}_G$ is given by
\begin{equation}
\E_{\G}\big[Z^{\alpha,\beta}_G\big] = \lambda^{n(\alpha+\beta)}\binom{n}{\alpha n}\binom{n}{\beta n}\left(\sum_{\delta\leq \alpha, \delta\leq \beta}
\frac{\binom{\alpha n}{\delta n}\binom{(1-\alpha)n}{(\beta-\delta) n}}{\binom{n}{\beta n}}B_1^{\delta n}B_2^{(1-\alpha-\beta+\delta)n}\right)^{\Delta}.\label{eq:firstmoment1}
\end{equation}
Ignoring the contribution of the external field $\lambda$, the terms outside the sum count $|\Sigma^{\alpha,\beta}_G|$ . 
The sum gives the expected contribution to
$w_G(\sigma)$ (for $\sigma\in \Sigma^{\alpha,\beta}_G$) from a \textit{single} matching, 
where $\delta n$ is the number of edges between 
$N_-(\sigma)\cap V_1$ and $N_-(\sigma)\cap V_2$ in a single matching. 
The weight of $\sigma$ for a single matching depends only 
on $\delta n$ and is equal to $B_1^{\delta n}B_2^{(1-\alpha-\beta+\delta)n}$. 
The fraction in the sum is the probability that there are 
exactly $\delta n$ edges from $N_-(\sigma)\cap V_1$ to 
$N_-(\sigma)\cap V_2$. 
Since the matchings are chosen independently, the exponentiation to the power $\Delta$ gives the 
expected contribution of $\sigma\in \Sigma^{\alpha,\beta}_G$ 
to the partition function for $\Delta$ matchings. Finally, the $\lambda$ term in the beginning accounts for the contribution of the external field for $\sigma\in \Sigma^{\alpha,\beta}_G$. 

A different way to arrive at a slightly different formulation for the first moment follows. Here, we introduce some redundant variables at the cost of introducing some complexity of the presentation, but underlying a general approach to treat such expressions. 

Observe that the sets $N_{-}(\sigma)\cap V_1,\ N_{+}(\sigma)\cap V_1$ form a partition of $V_1$ and similarly for $V_2$. Thus for a perfect matching let $x_{ij}$, ($1\leq i,j\leq 2$) denote the number of edges having their endpoints in partition $i$ of $V_1$ and in partition $j$ of $V_2$. Note that the $x_{ij}$ must satisfy 
\begin{equation}
\begin{array}{ll}
x_{11}+x_{12}=\alpha,& x_{11}+x_{21}=\beta,\\
x_{21}+x_{22}=1-\alpha,& x_{12}+x_{22}=1-\beta,\\
x_{ij}\geq 0.&\\
\end{array}\label{eq:xregion}
\end{equation}
The first moment may be written as
\begin{equation}
\E_{\G}\big[Z^{\alpha,\beta}_G\big] = \lambda^{(\alpha+\beta)n}\binom{n}{\alpha n}\binom{n}{\beta n}\left(\sum_{\mathbf{X}}
\frac{\binom{\alpha n}{x_{11}n}\binom{(1-\alpha)n}{x_{21}n}\binom{\beta n}{x_{11}n}\binom{(1-\beta)n}{x_{21}n}}{\binom{n}{x_{11}n,x_{12}n,x_{21}n,x_{22}n}}B_1^{x_{11}n}B_2^{x_{22}n}\right)^{\Delta}. \label{eq:firstmoment2}
\end{equation}
In the above sum, we denote by $\mathbf{X}$ the $x_{ij}$ which satisfy \eqref{eq:xregion}. Note that the expressions \eqref{eq:firstmoment1} and \eqref{eq:firstmoment2} are completely equivalent after identifying $\delta$ with $x_{11}$ and using \eqref{eq:xregion} to express the variables $x_{12},x_{21},x_{22}$ in terms of $x_{11}$.

\subsection{Second Moment}~\label{sec:secondmomentform}
We continue with the ideas presented in Section~\ref{sec:firstmomentform}, to get an expression for $\E_\G\big[Y^{\gamma,\delta}_G\big]$.
Recall that for the formulation of the second moment $\alpha$ and $\beta$ are fixed, and hence suppressed from the notation. 
Now there are four induced parts by $\sigma_1,\sigma_2$ on each $V_1$ and $V_2$. Let $\gamma n,\delta n$ stand for the number of  vertices in $V_1$ and $V_2$ whose spin is $-1$ in both $\sigma_1,\sigma_2$ and $y_{ij}n$ the number of edges incident with endpoints in partition $i$ of $V_1$ and partition $j$ of $V_2$ in a single matching. For ease of exposition, define 
\begin{equation*}
\begin{array}{lll}
L_1=\gamma,& L_2=L_3=\alpha-\gamma,& L_4=1-2\alpha+\gamma,\\
R_1=\delta,& R_2=R_3=\beta-\delta,& R_4=1-2\beta+\delta.
\end{array}
\end{equation*}
In the same spirit with expression \eqref{eq:firstmoment2} for the first moment, we have that
\begin{align}
\E_\G\big[Y^{\gamma,\delta}_G\big]&=\lambda^{2(\alpha+\beta)n}\binom{n}{\alpha n}\binom{n}{\beta n}\binom{\alpha n}{\gamma n}\binom{(1-\alpha)n}{(\alpha-\gamma)n}\binom{\beta n}{\delta n}\binom{(1-\beta)n}{(\beta-\delta)n}\label{eq:secondmoment}\\
&\quad \times \bigg(\sum_{\mathbf{Y}}\frac{\prod^4_{i=1} \binom{L_i n}{y_{i1}n,y_{i2}n,y_{i3}n,y_{i4}n}\prod^4_{j=1} \binom{R_jn}{y_{1j}n,y_{2j}n,y_{3j}n,y_{4j}n}}{\binom{n}{y_{11}n, y_{12}n, \dots ,y_{44}n}}\notag\\ 
&\qquad \times B_1^{(2y_{11} + y_{12} + y_{13}+ y_{21} + y_{22}+y_{31}+y_{33})n}B_2^{(y_{22}+y_{24}+y_{33}+y_{34}+y_{42}+y_{43}+2y_{44})n}\bigg)^{\Delta}.\notag
\end{align}
In the above sum, we denote by $\mathbf{Y}$ the $y_{ij}$ which satisfy
\begin{equation}
\begin{array}{ll}
\sum_{j}y_{ij}=L_i,&\mbox{ for } i=1,\hdots,4\\
\sum_{i}y_{ij}=R_j,&\mbox{ for } j=1,\hdots,4\\
y_{ij}\geq 0.&\\
\end{array}\label{eq:yregion} 
\end{equation}
The interpretation of the formula \eqref{eq:secondmoment} should be clear: ignoring the external field, the first line accounts for $|\Sigma^{\alpha,\beta}_{\gamma,\delta}|$, a term in the sum accounts for the weight of a matching specified by the cardinalities $Y$ (last line) as well as the probability of such a matching occurring (middle line).

\subsection{Asymptotics of the Logarithms of the Moments}\label{sec:generallog}
We next turn to finding the asymptotics of $\frac{1}{n}\log\E_{\G}\big[Z^{\alpha,\beta}_G\big]$, $\frac{1}{n}\log\E_{\G}\big[Y^{\gamma,\delta}_G\big]$. The asymptotic order of the sums in \eqref{eq:firstmoment2} and  \eqref{eq:secondmoment}
are dominated by their maximum terms. For the sole purpose of having a unifying treatment for soft and hard constrained 2-spin models, here and in the rest of the paper, we adopt the usual conventions that $\ln 0\equiv -\infty$ and $0\ln 0\equiv 0$. Standard application of Stirling's approximation yields for the first moment:
\begin{align}
\lim_{n\rightarrow \infty}&\ \frac{1}{n}\log\E_{\G}\big[Z^{\alpha,\beta}_G\big] = \max_{\mathbf{X}}\Phi_1(\alpha,\beta,\mathbf{X}),\label{eq:limitfirst}\\
\mbox{ where } \ \ \ 
\Phi_1(\alpha,\beta,\mathbf{X})&:=(\alpha+\beta)\ln \lambda+(\Delta-1)f_1(\alpha,\beta)+ \Delta g_1(\mathbf{X})\notag\\
f_1(\alpha,\beta)&:=\alpha\ln\alpha+(1-\alpha)\ln(1-\alpha)+\beta\ln\beta+(1-\beta)\ln(1-\beta)\notag\\
g_1(\mathbf{X})&:=(x_{11}\ln B_1+x_{22}\ln B_2)-\mbox{$\sum_{i,j}$}x_{ij}\ln x_{ij}.\notag
\end{align}
And for the second moment:
\begin{align}
&\lim_{n\rightarrow \infty} \frac{1}{n}\log\E_{\G}\big[Y^{\gamma,\delta}_G\big]= \max_\mathbf{Y}\Phi_2(\gamma,\delta,\mathbf{Y}), \label{eq:limitsecond}\\
\mbox{ where } \ \ \ 
\Phi_2(\gamma,\delta,\mathbf{Y})&:=2(\alpha+\beta)\ln \lambda+ (\Delta-1)f_2(\gamma,\delta)+\Delta g_2(\mathbf{Y})\notag\\[0.15cm]
f_2(\gamma,\delta)&:=2(\alpha-\gamma)\ln(\alpha-\gamma)+\gamma \ln\gamma+(1-2\alpha+\gamma)\ln(1-2\alpha+\gamma)\notag\\
&\quad \quad +2(\beta-\delta)\ln(\beta-\delta)+\delta \ln\delta+(1-2\beta+\delta)\ln(1-2\beta+\delta)\notag\\[0.15cm]
g_2(\mathbf{Y})&:=(2y_{11} + y_{12} + y_{13}+ y_{21} + y_{22}+y_{31}+y_{33})\ln B_1\notag\\
&\qquad +(y_{22}+y_{24}+y_{33}+y_{34}+y_{42}+y_{43}+2y_{44})\ln B_2\notag\\
&\hskip 6.5cm -\mbox{$\sum_{i,j}$}y_{ij}\ln y_{ij}.\notag
\end{align}
Recall that we are going to study the second moment for specific values of $\alpha,\beta$, namely $(\alpha,\beta)=(p^\pm,p^\mp)$, so this is why we dropped the dependence of the functions $\Phi_2,f_2$ on $\alpha,\beta$. The functions $g_1, g_2$ are defined on the regions \eqref{eq:xregion}, \eqref{eq:yregion} respectively. For fixed $\alpha,\beta,\gamma,\delta$, they are strictly concave over the convex regions on which they are defined and hence they have a unique maximum, so this also the case for the functions $\Phi_1, \Phi_2$. The limits \eqref{eq:limitfirst} and \eqref{eq:limitsecond} can thus be justified using the standard Laplace method (see for example \cite[Chapter 4]{deBru}).

\section{Finding the Maxima}
In this section, we use the information obtained in Section~\ref{sec:generallog} to establish Lemmas~\ref{lem:firstmomentmax} and \ref{lem:secondmax2spin}. The proofs for Lemmas \ref{lem:secondmaxising} and~\ref{lem:secondmaxhardcore} are given in Sections~\ref{sec:secondmaxisingproof} and~\ref{sec:hardhard} respectively, see also Section~\ref{sec:remarks} for an overview. These proofs are based on the approach given in this section.
\label{sec:maxima}
\subsection{Optimizing Entropy Distributions}\label{sec:maxentropy}
By the limits \eqref{eq:limitfirst} and \eqref{eq:limitsecond}, to establish Lemmas~\ref{lem:firstmomentmax} and \ref{lem:secondmax2spin} (and~\ref{lem:secondmaxising},~\ref{lem:secondmaxhardcore} as well), it suffices to study the maxima of the functions
\begin{equation}\label{eq:phi1phi2}
\phi_1(\alpha,\beta):=\max_{\mathbf{X}} \Phi_1(\alpha,\beta, \mathbf{X})\mbox{ and }\phi_2(\gamma,\delta):=\max_{\mathbf{Y}} \Phi_2(\gamma,\delta, \mathbf{Y}).
\end{equation}
A small abstraction allows us to treat the two functions in a unifying way. 

Namely, observe that $\phi_1,\phi_2$  have the form  $\phi(u,v)=\max_{\mathbf{Z}}\Phi(u,v,\mathbf{Z})$, where $\Phi$ can be decomposed as $\Phi(u,v,\mathbf{Z})=f(u,v)+g(\mathbf{Z})$, for some functions $f,g$ and $\mathbf{Z}$ stands for $mn$ non-negative variables $Z_{ij}$, $i=1,\hdots,m$, $j=1,\hdots,n$.\footnote{In all the applications of the arguments in this section, we will have $m=n$. The slightly more general setting considered here (which allows $m\neq n$) is not crucial but the arguments extend to this setting without any additional effort.} Thus in the context of $\phi_1$ (resp. $\phi_2$), $u,v,\mathbf{Z}$ correspond 
to $\alpha,\beta,\mathbf{X}$ (resp. $\gamma,\delta,\mathbf{Y}$). Moreover, $\mathbf{Z}$ satisfies some prescribed marginals, i.e.,
\begin{equation}\label{e1}
\mbox{$\sum_{j}$} Z_{ij} = \alpha_i(u),\quad \mbox{$\sum_{i}$} Z_{ij} = \beta_j(v), \quad Z_{ij}\geq 0,
\end{equation}
for some functions $\alpha_i(u)$ and $\beta_j(v)$, which satisfy $\sum_{i} \alpha_i(u) =\sum_{j} \beta_j(v) = 1$. In this section, we will only consider $u,v$ such that $\alpha_i(u)\neq 0\neq \beta_j(v)$ for all $i,j$, since the remaining cases correspond to boundary conditions in our setup and these are treated more easily in the specific instance of the maximization (see the relevant  Lemmas~\ref{lem:firboundary} and \ref{lem:secboundary}). Finally, the function $g(\mathbf{Z})$ is an entropy-like function and has the form
\begin{equation}
g(\mathbf{Z})=\mbox{$\sum_{i}\sum_j$} \left(Z_{ij} \ln M_{ij} - Z_{ij}\ln Z_{ij}\right),
\end{equation}
where $M\in\mathbb{R}^{m\times n}$ is a matrix with non-negative entries.

We will use another interpretation of the function $g(\mathbf{Z})$, which will be handy when we perform integrations: let $\mathbf{Z^f}$ denote the variables $Z_{ij}$ for $i=1,\hdots, m-1,$ $j=1,\hdots,n-1$. Using the equations in \eqref{e1}, we may write $g(\mathbf{Z})$ as a function of the variables $\mathbf{Z^f}$. Note that the region \eqref{e1} must be adjusted to account for the non-negativity of the substituted variables. Other than this, the ``new'' function  has exactly the same behaviour as $g(\mathbf{Z})$ but in a full-dimensional space. Thus we will also think of $g(\mathbf{Z})$ as a function of the variables $\mathbf{Z^f}$ where the $Z_{ij}$ with $i=m$ or $j=n$ are just shorthands for the expressions of these variables when substituted by the equations in \eqref{e1}. We will refer to this reformulation as the \textit{full-dimensional representation} of $g(\mathbf{Z})$. We define similarly the full-dimensional representation of $\Phi(u,v,\mathbf{Z})$.

It is immediate to see that $g(\mathbf{Z})$ is strictly concave in the convex region \eqref{e1}, hence it attains its maximum over the region at a unique point. A tedious but straightforward optimization of $g(\mathbf{Z})$ using Lagrange multipliers, yields:
\begin{lemma}\label{lem:gmax}
Fix $u,v$ such that $\alpha_i(u)\neq 0 \neq \beta_j(v)$ for all $i,j$. Let $\mathbf{Z}^*=\arg\max_{\mathbf{Z}}g(\mathbf{Z})$ (the maximum taken over the region \eqref{e1}). Then
\begin{equation}\label{e2}
Z^*_{ij} = M_{ij} R_i C_j,
\end{equation}
where $R_i$ and $C_j$ are positive real numbers, which satisfy
\begin{equation}\label{e3}
R_i \sum_{j} M_{ij} C_j = \alpha_i\quad\mbox{and}\quad
C_j \sum_{i} M_{ij} R_i = \beta_j.
\end{equation}
Moreover, the full-dimensional representation of $g(\mathbf{Z})$ decays quadratically in a sufficiently small ball around $\mathbf{Z}^*$.
\end{lemma}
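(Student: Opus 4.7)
The plan is to set up the constrained optimization and apply Lagrange multipliers, then exploit strict concavity to get uniqueness and the quadratic decay at the maximum.

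First I would verify that the maximum lies in the relative interior of the feasible polytope \eqref{e1}, i.e.\ that $Z^*_{ij}>0$ whenever $M_{ij}>0$. Writing $g$ in the full-dimensional representation, the partial derivatives take the form $\partial g/\partial Z_{ij} = \ln M_{ij} - \ln Z_{ij} - 1 + (\mbox{terms from substituted boundary variables})$, and as $Z_{ij}\to 0^+$ with $M_{ij}>0$ fixed the $-\ln Z_{ij}$ blows up to $+\infty$. Hence any boundary point of the polytope at which some $Z_{ij}=0$ while $M_{ij}>0$ cannot be optimal; on the other hand, with the convention $0\ln 0\equiv 0$, coordinates with $M_{ij}=0$ are forced to $Z^*_{ij}=0$, which is consistent with the factorization $Z^*_{ij}=M_{ij}R_iC_j$. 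So without loss of generality I restrict attention to the $M_{ij}>0$ coordinates and work in the interior.

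Next I would form the Lagrangian
\begin{equation*}
L(\mathbf{Z},\boldsymbol{\lambda},\boldsymbol{\mu}) = g(\mathbf{Z}) - \mbox{$\sum_{i}$}\lambda_i\Bigl(\mbox{$\sum_{j}$}Z_{ij}-\alpha_i\Bigr) - \mbox{$\sum_{j}$}\mu_j\Bigl(\mbox{$\sum_{i}$}Z_{ij}-\beta_j\Bigr).
\end{equation*}
Setting $\partial L/\partial Z_{ij}=0$ gives $\ln Z_{ij} = \ln M_{ij} - 1 - \lambda_i - \mu_j$, i.e.\ $Z^*_{ij} = M_{ij} R_i C_j$ with $R_i:=e^{-1/2-\lambda_i}$, $C_j:=e^{-1/2-\mu_j}$. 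Substituting this form back into the two marginal constraints of \eqref{e1} yields exactly the equations \eqref{e3}. Existence of $R_i,C_j>0$ satisfying \eqref{e3} follows from the fact that the feasible region is a nonempty compact convex set (marginals are strictly positive by the hypothesis $\alpha_i(u),\beta_j(v)\neq 0$), on which the continuous function $g$ attains its maximum; uniqueness of $\mathbf{Z}^*$ (and hence, up to a common rescaling $R_i\mapsto tR_i,\ C_j\mapsto C_j/t$, of the pair $(R_i,C_j)$) follows from strict concavity of $g$.

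For the quadratic decay, I would compute the Hessian of the full-dimensional representation at $\mathbf{Z}^*$. The map from the free variables $\mathbf{Z}^f$ to the full tableau $\mathbf{Z}$ is affine, and its differential sends a nonzero tangent vector in $\mathbf{Z}^f$-space to a nonzero feasible direction in $\mathbf{Z}$-space. Since $g$ viewed on the constraint polytope has Hessian whose quadratic form on any such direction $\mathbf{v}$ equals $-\sum_{i,j} v_{ij}^2/Z^*_{ij}<0$ (this is immediate from $\partial^2(-z\ln z)/\partial z^2=-1/z$), the Hessian of the full-dimensional representation at $\mathbf{Z}^{*f}$ is negative definite. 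Taylor expansion then gives $g(\mathbf{Z}^f)=g(\mathbf{Z}^{*f})-\tfrac12\mathbf{w}^{\top}H\mathbf{w}+o(\|\mathbf{w}\|^2)$ with $\mathbf{w}=\mathbf{Z}^f-\mathbf{Z}^{*f}$ and $H\succ 0$, from which quadratic decay in a sufficiently small ball is immediate.

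The only real subtlety is the bookkeeping for the full-dimensional representation in the presence of coordinates with $M_{ij}=0$: I would handle these by first deleting the corresponding rows/columns from $M$ (they contribute $0$ to $g$ and force $Z^*_{ij}=0$), and then applying the argument above to the reduced tableau. Everything else is routine Lagrange-multiplier and Taylor-expansion calculation.
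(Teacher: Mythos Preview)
Your proposal is correct and follows essentially the same approach as the paper: Lagrange multipliers to derive the product form $Z^*_{ij}=M_{ij}R_iC_j$, the marginal constraints to obtain \eqref{e3}, and strict concavity together with Taylor expansion for the quadratic decay. One small correction on the $M_{ij}=0$ bookkeeping: you should fix the individual variables $Z_{ij}=0$ for those $(i,j)$ with $M_{ij}=0$ (as the paper does, by restricting to the subregion \eqref{rege1}), rather than ``deleting the corresponding rows/columns'' of $M$, since in general only isolated entries of $M$ vanish (e.g.\ the hard-core first-moment matrix has a single zero entry).
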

The proof of Lemma~\ref{lem:gmax} is given in Section~\ref{sec:maxentropyproof}. It is relevant to point out that while the values of $Z^*_{ij}$ are unique in the region \eqref{e1}, the values of $R_i,C_j$ which satisfy \eqref{e3} are not unique since a positive solution $\{\{R_i\}_i,\{C_j\}_j\}$ of \eqref{e3} gives rise to a positive solution $\{\{tR_i\}_i,\{C_j/t\}_j\}$ for arbitrary $t>0$. Modulo this mapping, the uniqueness of $Z^*_{ij}$ easily implies that the positive $R_i,C_j$ which satisfy \eqref{e3} are otherwise unique.

Plugging \eqref{e2} in $g(\mathbf{Z})$, we obtain 
\begin{equation}\label{eq:QQ}
\max_{\mathbf{Z}}g(\mathbf{Z}) = -\mbox{$\sum_{i} \sum_{j}$} M_{ij} R_i C_j \ln (R_i C_j):=g^{*}(u,v).
\end{equation}

In our setup, it will be hard in general to solve for the $R_i,C_j$, so that we need to study $\frac{\partial g^{*}}{\partial u},\frac{\partial g^{*}}{\partial v}$ using implicit differentiation. The following lemma will be very useful.
\begin{lemma}\label{lem:maxentropy}
For $g^{*}=\max_{\mathbf{Z}} g(\mathbf{Z})$, we have
\begin{equation*}
\frac{\partial g^{*}}{\partial u}= -\sum_{i} \ln(R_i) \frac{\partial \alpha_i}{\partial u} \mbox{\  and \ } \frac{\partial g^{*}}{\partial v}= - \sum_{j} \ln(C_j) \frac{\partial \beta_j}{\partial v}.
\end{equation*}
In particular,
\begin{equation*}
\frac{\partial (f+g^{*})}{\partial u}= \frac{\partial f}{\partial u}-\sum_{i} \ln(R_i) \frac{\partial \alpha_i}{\partial u} \mbox{\  and \ } \frac{\partial (f+g^{*})}{\partial v}= \frac{\partial f}{\partial v}- \sum_{j} \ln(C_j) \frac{\partial \beta_j}{\partial v}.
\end{equation*}
where the $R_i,C_j$ are positive solutions of~\eqref{e3}.
\end{lemma}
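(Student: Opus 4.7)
The plan is to establish the lemma by a direct calculation that reduces to the envelope-theorem identity, exploiting the explicit form of the maximizer given in Lemma~\ref{lem:gmax}. First I would substitute $Z^*_{ij} = M_{ij} R_i C_j$ into $g(\mathbf{Z}^*)$ to obtain
\[
g^*(u,v) \;=\; \sum_{i,j} Z^*_{ij}\bigl(\ln M_{ij} - \ln Z^*_{ij}\bigr) \;=\; -\sum_{i,j} Z^*_{ij}(\ln R_i + \ln C_j).
\]
Collapsing the double sum using the prescribed marginals $\sum_j Z^*_{ij} = \alpha_i(u)$ and $\sum_i Z^*_{ij} = \beta_j(v)$, this becomes the explicit expression
\[
g^*(u,v) \;=\; -\sum_{i} \alpha_i(u)\ln R_i(u,v) \;-\; \sum_{j} \beta_j(v)\ln C_j(u,v),
\]
where $R_i, C_j$ depend on $(u,v)$ implicitly through the system~\eqref{e3}. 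Differentiating with respect to $u$ yields
\[
\frac{\partial g^*}{\partial u} \;=\; -\sum_{i} \frac{\partial \alpha_i}{\partial u}\,\ln R_i \;-\; \underbrace{\sum_{i} \frac{\alpha_i}{R_i}\,\frac{\partial R_i}{\partial u}}_{A} \;-\; \underbrace{\sum_{j} \frac{\beta_j}{C_j}\,\frac{\partial C_j}{\partial u}}_{B}.
\]
To finish, it suffices to show $A+B=0$.

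This is the main (but only mildly delicate) step. Rewrite the marginal system~\eqref{e3} as $\sum_{j} M_{ij} C_j = \alpha_i/R_i$ and $\sum_{i} M_{ij} R_i = \beta_j/C_j$. Since $\beta_j$ is independent of $u$, differentiating the second equation in $u$ gives $\sum_i M_{ij}\,(\partial R_i/\partial u) = -(\beta_j/C_j^2)\,(\partial C_j/\partial u)$. Multiplying by $C_j$, summing over $j$, and switching the order of summation produces
\[
\sum_{i} \frac{\partial R_i}{\partial u} \sum_{j} M_{ij} C_j \;=\; -\sum_{j} \frac{\beta_j}{C_j}\,\frac{\partial C_j}{\partial u},
\]
and invoking $\sum_{j} M_{ij} C_j = \alpha_i/R_i$ on the left-hand side gives exactly $A = -B$. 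Hence $\partial g^*/\partial u = -\sum_i (\partial\alpha_i/\partial u)\ln R_i$, as required. The derivative in $v$ is identical by symmetry (swap the roles of rows and columns and use that $\alpha_i$ is independent of $v$). The second set of formulae follows trivially by adding $\partial f/\partial u$ and $\partial f/\partial v$ to both sides.

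I expect no serious obstacle here; the only point that needs care is the non-uniqueness of $(R_i,C_j)$ noted after Lemma~\ref{lem:gmax} (the family $(tR_i,C_j/t)$ all solve~\eqref{e3}). This ambiguity is harmless because the right-hand side $-\sum_i (\partial\alpha_i/\partial u)\ln R_i$ is invariant under $R_i\mapsto tR_i$: the shift contributes $-(\ln t)\sum_i \partial\alpha_i/\partial u$, which vanishes since $\sum_i \alpha_i(u)\equiv 1$. An alternative, essentially equivalent route would be to apply the envelope theorem to the Lagrangian $\mathcal{L}=g(\mathbf{Z})-\sum_i \lambda_i(\sum_j Z_{ij}-\alpha_i(u))-\sum_j \mu_j(\sum_i Z_{ij}-\beta_j(v))$, where the stationarity condition forces $\lambda_i = -1-\ln R_i$ (up to the same harmless additive constant); then $\partial g^*/\partial u=\sum_i \lambda_i\,\partial\alpha_i/\partial u$ reduces to the claim. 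I favor the direct computation above since it makes the cancellation $A+B=0$ transparent and avoids invoking the envelope principle with redundant constraints.
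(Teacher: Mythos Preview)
Your proof is correct and follows essentially the same approach as the paper: both compute $\partial g^*/\partial u$ directly and use the differentiated constraint equations~\eqref{e3} to achieve the needed cancellation, which is exactly the envelope-theorem identity the paper also invokes. Your version is slightly more streamlined in that you first collapse $g^*$ to the marginal form $-\sum_i \alpha_i\ln R_i-\sum_j \beta_j\ln C_j$ before differentiating, whereas the paper differentiates the double-sum expression $-\sum_{i,j}M_{ij}R_iC_j\ln(R_iC_j)$ directly and then simplifies; the substance of the argument is identical.
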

The proof of Lemma~\ref{lem:maxentropy} is given in Section~\ref{sec:maxentropyproof}.

\subsection{The Logarithm of the First Moment - Maximum}\label{sec:logfirstmax}
In this section, we prove Lemma~\ref{lem:firstmomentmax}. As observed in Sections~\ref{sec:generallog} and~\ref{sec:maxentropy}, it suffices to study the maxima of $\phi_1(\alpha,\beta)$. Lemma~\ref{lem:firstmomentmax} is implied by the following lemmas. 
\begin{lemma}\label{lem:fircritical}
The only critical points of $\phi_1(\alpha,\beta)$ are:
\begin{enumerate}
\item $(\alpha,\beta)=(p^\pm,p^\mp)$ and $(\alpha,\beta)=(p^*,p^*)$, when $B_1,B_2,\lambda$ lie in the non-uniqueness region of $\TD$.
\item $(\alpha,\beta)=(p^*,p^*)$, when $B_1,B_2,\lambda$ lie in the uniqueness region of $\TD$.
\end{enumerate}
\end{lemma}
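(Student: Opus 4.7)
The plan is to invoke the entropy-maximization machinery of Section~\ref{sec:maxentropy} for the inner maximization defining $\phi_1$. Taking $\mathbf{Z}=\mathbf{X}$ to be a $2\times 2$ array with marginals $\alpha_1=\alpha$, $\alpha_2=1-\alpha$, $\beta_1=\beta$, $\beta_2=1-\beta$, and weight matrix $M_{11}=B_1$, $M_{12}=M_{21}=1$, $M_{22}=B_2$, Lemma~\ref{lem:gmax} produces positive $R_1,R_2,C_1,C_2$ with $x^*_{ij}=M_{ij}R_iC_j$. I then focus on the interior critical points of
\[\phi_1(\alpha,\beta)=(\alpha+\beta)\ln\lambda+(\Delta-1)f_1(\alpha,\beta)+\Delta\, g_1^{*}(\alpha,\beta),\]
deferring boundary critical points (where $\alpha\in\{0,1\}$ or $\beta\in\{0,1\}$) to Lemma~\ref{lem:firboundary}.

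By Lemma~\ref{lem:maxentropy},
\[\frac{\partial\phi_1}{\partial\alpha}=\ln\lambda+(\Delta-1)\ln\frac{\alpha}{1-\alpha}+\Delta\ln\frac{R_2}{R_1},\qquad \frac{\partial\phi_1}{\partial\beta}=\ln\lambda+(\Delta-1)\ln\frac{\beta}{1-\beta}+\Delta\ln\frac{C_2}{C_1}.\]
The marginal constraints $\alpha=R_1(B_1C_1+C_2)$ and $1-\alpha=R_2(C_1+B_2C_2)$ give
\[\frac{\alpha}{1-\alpha}=\frac{R_1}{R_2}\cdot\frac{B_1+C_2/C_1}{1+B_2\,C_2/C_1}=\frac{R_1}{R_2}\cdot\frac{B_1(C_1/C_2)+1}{(C_1/C_2)+B_2},\]
where the last step uses the reciprocity identity $(B_1+t)/(1+B_2t)=(B_1/t+1)/(1/t+B_2)$; an analogous identity holds for $\beta/(1-\beta)$. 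Writing $Q_1:=R_1/R_2$ and $Q_2:=C_1/C_2$, equating the partials to zero and routine simplification collapses the critical-point equations to
\[Q_1=\lambda\left(\frac{B_1Q_2+1}{Q_2+B_2}\right)^{\!\Delta-1},\qquad Q_2=\lambda\left(\frac{B_1Q_1+1}{Q_1+B_2}\right)^{\!\Delta-1},\]
which is precisely the system \eqref{eq:densitiesone}.

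I finish by invoking Lemma~\ref{lem:densitiesone} to enumerate its solutions and translating back to $(\alpha,\beta)$. In the uniqueness region the only solution is $(Q^*,Q^*)$; in the non-uniqueness region the solutions are $(Q^*,Q^*)$, $(Q^+,Q^-)$, $(Q^-,Q^+)$. From the formula $\alpha/(1-\alpha)=Q_1(B_1Q_2+1)/(Q_2+B_2)$ derived above, comparison with \eqref{eq:pppmtemp} identifies $(Q_1,Q_2)=(Q^+,Q^-)$ with $\alpha=p^+$, and the symmetric formula for $\beta$ gives $\beta=p^-$; likewise $(Q^-,Q^+)$ and $(Q^*,Q^*)$ correspond to $(p^-,p^+)$ and $(p^*,p^*)$, exhausting the interior critical points. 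The main technical hurdle is the algebraic bookkeeping around the reciprocity identity: one must recognize that the derivative formula from Lemma~\ref{lem:maxentropy} matches the tree recursion only after the substitution $t\mapsto 1/t$, and then verify that the inverse translation from $(Q_1,Q_2)$ back to $(\alpha,\beta)$ reproduces \eqref{eq:pppmtemp} exactly; once that identification is in place, Lemma~\ref{lem:densitiesone} does all the remaining work.
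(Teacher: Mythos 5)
Your proposal is correct and follows essentially the same route as the paper's proof: apply Lemma~\ref{lem:maxentropy} to get the derivative conditions, pass to the ratio variables $Q_1=R_1/R_2$, $Q_2=C_1/C_2$, show they satisfy the two-variable tree system \eqref{eq:densitiesone}, invoke Lemma~\ref{lem:densitiesone}, and translate back to $(\alpha,\beta)$ via \eqref{eq:pppmtemp}. The "reciprocity identity" you highlight is just the choice of whether to normalize by $C_1$ or $C_2$ and is not a genuine hurdle, but the rest of the bookkeeping is exactly what the paper does.
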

\begin{lemma}\label{lem:firboundary}
The function $\phi_1(\alpha,\beta)$ does not have any local maximum on the boundary of the region \[\big\{(\alpha,\beta)\mid 0\leq \alpha\leq 1,\ 0\leq\beta\leq 1\big\}.\]
\end{lemma}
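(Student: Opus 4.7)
}
The plan is to show that at every boundary point of the unit square $[0,1]^2$, moving infinitesimally into the interior strictly increases $\phi_1$. The driving mechanism is the infinite slope of the binary entropy $H(\alpha)=-\alpha\ln\alpha-(1-\alpha)\ln(1-\alpha)$ at $\alpha=0$: the binomial factors in the first moment produce a term of order $-\alpha\ln\alpha$ that dominates all smooth $O(\alpha)$ corrections coming from the edge-activity sum.

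First, I would exploit the symmetries of $\phi_1$. The $V_1\leftrightarrow V_2$ symmetry (swap $\alpha\leftrightarrow\beta$) and the spin-flip symmetry (swap $\alpha\leftrightarrow 1-\alpha$, $\beta\leftrightarrow 1-\beta$, $B_1\leftrightarrow B_2$, $\lambda\leftrightarrow 1/\lambda$) reduce the four edges of $[0,1]^2$ to a single case, the edge $\alpha=0$. On this edge, the constraints \eqref{eq:xregion} force $x_{11}=x_{12}=0$, $x_{21}=\beta$, $x_{22}=1-\beta$, so one obtains the closed form
\[
\phi_1(0,\beta) \;=\; \beta\ln\lambda \;+\; \Delta(1-\beta)\ln B_2 \;-\; \beta\ln\beta \;-\; (1-\beta)\ln(1-\beta).
\]

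Second, for $\beta\in(0,1)$ and small $\alpha>0$, I would exhibit a feasible $\mathbf{X}$ in the region \eqref{eq:xregion} whose value of $\Phi_1$ already exceeds $\phi_1(0,\beta)$; since $\phi_1(\alpha,\beta)\geq\Phi_1(\alpha,\beta,\mathbf{X})$, this is enough. In the soft-constraint regime $B_1,B_2>0$, a convenient choice is the product (``independent'') coupling $x_{11}=\alpha\beta$, $x_{12}=\alpha(1-\beta)$, $x_{21}=(1-\alpha)\beta$, $x_{22}=(1-\alpha)(1-\beta)$, for which $-\sum_{ij}x_{ij}\ln x_{ij}=H(\alpha)+H(\beta)$. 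A direct computation yields
\[
\Phi_1(\alpha,\beta,\mathbf{X}) - \phi_1(0,\beta) \;=\; \alpha\ln\lambda \;-\; [\alpha\ln\alpha + (1-\alpha)\ln(1-\alpha)] \;+\; \Delta\alpha\beta\ln B_1 \;-\; \Delta\alpha(1-\beta)\ln B_2,
\]
whose expansion as $\alpha\to 0^{+}$ is $-\alpha\ln\alpha + O(\alpha)$. Since $-\alpha\ln\alpha>0$ and dominates the $O(\alpha)$ remainder for sufficiently small $\alpha$, we conclude $\phi_1(\alpha,\beta)>\phi_1(0,\beta)$, ruling out a local maximum at any $(0,\beta)$ with $\beta\in(0,1)$.

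Third, I would dispose of the degenerate situations. For the hard-core regime $B_1=0$ the product coupling gives $-\infty$; I would instead use $(x_{11},x_{12},x_{21},x_{22})=(0,\alpha,\beta,1-\alpha-\beta)$, feasible for $\alpha+\beta<1$, which leads to the same leading asymptotic $-\alpha\ln\alpha+O(\alpha)$. For the corners $(0,0),(0,1),(1,0),(1,1)$, I would approach along the diagonal $\alpha=\beta=t\to 0^+$ (or the analogous direction at the other corners) and invoke the same expansion: the entropy contribution $H(t)$ again overwhelms the linear correction. The degenerate activity $B_2=0$ forces $\phi_1(0,\beta)=-\infty$ for $\beta<1$, leaving only $(0,1)$, which is handled by the symmetric edge $\beta=1$.

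The analysis is quite concrete and the main (mild) obstacle is bookkeeping the boundary/corner/degenerate cases so that the chosen feasible $\mathbf{X}$ remains admissible; in every case the same infinite-slope entropy mechanism applies, and the coefficient identity $(\Delta-1)-\Delta=-1$ in front of $\alpha\ln\alpha$ guarantees the sign.
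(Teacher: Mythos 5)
Your approach is correct in spirit but takes a genuinely different route from the paper. The paper differentiates the value function $\phi_1$ via the Lagrange structure of Lemma~\ref{lem:maxentropy}: it writes $\partial\phi_1/\partial\alpha = \ln\big(\lambda\,\frac{1-\alpha}{\alpha}\,(\tfrac{B_1C_1+C_2}{C_1+B_2C_2})^\Delta\big)$, shows via Claim~\ref{cl:helpfulfir} that the ratio $(B_1C_1+C_2)/(C_1+B_2C_2)$ is bounded away from $0$ and $\infty$ when $B_1B_2>0$, and concludes that the derivative diverges (to $+\infty$ as $\alpha\to 0^+$, to $-\infty$ as $\alpha\to 1^-$). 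You instead evaluate $\phi_1$ on the boundary explicitly and exhibit a concrete feasible $\mathbf{X}$ at nearby interior points whose $\Phi_1$-value exceeds the boundary value; the sign comes from the coefficient $(\Delta-1)-\Delta=-1$ in front of $\alpha\ln\alpha$. Your route is more elementary in that it never needs the implicit $R_i,C_j$, and it is more self-contained in that you attempt to cover $B_1=0$ directly (the paper defers to \cite[Claim 2.2]{DFJ}). For $B_1B_2>0$ your computation checks out on both the edge interiors and the corners.

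However, there is a genuine gap in your handling of the corners in the hard-constrained case. At the corner $(0,1)$, approaching along the diagonal $\alpha=t,\ \beta=1-t$ forces $\alpha+\beta=1$, so with $B_1=0$ the \emph{only} feasible $\mathbf{X}$ is $(0,t,1-t,0)$ (the constraints \eqref{eq:xregion} determine it uniquely once $x_{11}=0$). A direct computation gives
$\phi_1(t,1-t) = \ln\lambda + (2-\Delta)\,H(t)$
with $H(t)=-t\ln t-(1-t)\ln(1-t)$, and since $\Delta\ge 3$ this is \emph{strictly less} than $\phi_1(0,1)=\ln\lambda$ for small $t>0$: the diagonal direction actually decreases $\phi_1$ in the hard-core regime, so your proposed alternative coupling does not close this case. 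A direction that does work is along the edge $\alpha=0$ with $\beta=1-s$, which gives
$\phi_1(0,1-s)-\phi_1(0,1)=H(s)-s\ln\lambda>0$
for small $s>0$; a symmetric fix applies at $(1,0)$. So the overall strategy is sound but the corner bookkeeping for $B_1=0$ (which you flagged as the ``mild obstacle'') in fact requires choosing a direction other than the diagonal, and as written the proof does not do this.
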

\begin{lemma}\label{lem:firhessian}
Let $\mathbf{X}^{*}=\arg\max_{\mathbf{X}}\Phi_1(p^+,p^-,\mathbf{X})$ and $\mathbf{X}^{o}=\arg\max_{\mathbf{X}}\Phi_1(p^*,p^*,\mathbf{X})$. 
\begin{enumerate}
\item When $B_1,B_2,\lambda$ lie in the non-uniqueness region of $\TD$, the function $\Phi_1(\alpha,\beta,\mathbf{X})$ has a local maximum at the point $(p^+,p^-, \mathbf{X}^*)$ and a saddle point at $(p^*,p^*, \mathbf{X}^o)$.
\item When $B_1,B_2,\lambda$ lie in the uniqueness region of $\TD$, the function $\Phi_1(\alpha,\beta,\mathbf{X})$ has a local maximum at $(p^*,p^*, \mathbf{X}^o)$.
\end{enumerate}
\end{lemma}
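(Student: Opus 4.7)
The plan is a two-stage reduction. Since $g_1(\mathbf{X})$ is strictly concave on the region \eqref{eq:xregion} (as noted in Section~\ref{sec:generallog}), at any interior point $(\alpha^*,\beta^*,\mathbf{X}^{**})$ with $\mathbf{X}^{**}=\arg\max_{\mathbf{X}}\Phi_1(\alpha^*,\beta^*,\mathbf{X})$ the $\mathbf{X}$-block of the Hessian of $\Phi_1$ is negative definite. Its Schur complement is exactly the $2\times 2$ Hessian $H$ of $\phi_1(\alpha,\beta):=\max_{\mathbf{X}}\Phi_1(\alpha,\beta,\mathbf{X})$ at $(\alpha^*,\beta^*)$, so the full Hessian is negative definite (respectively indefinite) iff $H$ is. It therefore suffices to show that $\phi_1$ has a local maximum at $(p^+,p^-)$ in the non-uniqueness region and at $(p^*,p^*)$ in the uniqueness region, and a saddle point at $(p^*,p^*)$ in the non-uniqueness region.

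To compute $H$, Lemma~\ref{lem:maxentropy} applied to $g_1$ gives $\frac{\partial\phi_1}{\partial\alpha}=\ln\lambda+(\Delta-1)\ln\tfrac{\alpha}{1-\alpha}-\Delta\ln u$ (and symmetrically for $\beta$), where $u:=R_1/R_2$ and $v:=C_1/C_2$ are implicit functions of $(\alpha,\beta)$ given by the marginal constraints of Lemma~\ref{lem:gmax}, i.e.,
\[
\frac{\alpha}{1-\alpha}=u\cdot\frac{B_1 v+1}{v+B_2},\qquad \frac{\beta}{1-\beta}=v\cdot\frac{B_1 u+1}{u+B_2}.
\]
Setting $\nabla\phi_1=0$ and substituting recovers the tree recursion \eqref{eq:densitiesone}, consistent with Lemma~\ref{lem:fircritical}. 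Log-differentiating the implicit relations yields a linear system for $du/u,\,dv/v$ in terms of $d\alpha,\,d\beta$ whose coefficient matrix has determinant $1-\Omega$, where $\Omega:=(Pv)(Qu)$ with $P:=\tfrac{1-B_1B_2}{(B_1 v+1)(v+B_2)}$ and $Q:=\tfrac{1-B_1B_2}{(B_1 u+1)(u+B_2)}$. At the critical points, $\Omega$ specialises to $\omega$ at $(u,v)=(Q^{\pm},Q^{\mp})$ and to $\omega^*$ at $(u,v)=(Q^*,Q^*)$, matching \eqref{eq:definitionofomega}. After inverting this system and assembling $H$, an algebraic simplification — whose linchpin is the identity $Pv\cdot A_\beta = Qu\cdot A_\alpha$ (with $A_\alpha:=1/(\alpha(1-\alpha))$, $A_\beta:=1/(\beta(1-\beta))$) that follows directly from the implicit relations and reflects the symmetry of mixed partials — yields
\[
\det H = A_\alpha A_\beta\cdot\frac{1-(\Delta-1)^2\,\Omega}{1-\Omega},\qquad H_{\alpha\alpha} = A_\alpha\cdot\frac{(\Delta-1)(1-\Omega)-\Delta}{1-\Omega}.
\]

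A short direct check shows $\omega,\omega^*<1$ in the antiferromagnetic regime, since for any $Q>0$ one has $(B_2+Q)(1+B_1Q)-(1-B_1B_2)Q = B_2+2B_1B_2Q+B_1Q^2>0$. Hence $1-\Omega>0$ and $H_{\alpha\alpha},H_{\beta\beta}<0$ at each critical point, and Lemma~\ref{lem:technicalinequality} closes the argument: at $(p^+,p^-)$, $(\Delta-1)^2\omega<1$ yields $\det H>0$, so $H$ is negative definite and $\Phi_1$ has a local maximum; at $(p^*,p^*)$ in the non-uniqueness region, $(\Delta-1)^2\omega^*>1$ yields $\det H<0$, so $H$ is indefinite and $\Phi_1$ has a saddle; and in the uniqueness region, the tree criterion $(\Delta-1)^2\omega^*\leq 1$ recorded at the end of Section~\ref{sec:tree-recursions} gives $\det H\geq 0$ with $H_{\alpha\alpha}<0$, again making $H$ negative definite. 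The main obstacle I anticipate is the algebraic simplification of $\det H$ to the clean form above: the raw Hessian entries after inversion are unwieldy, and the fact that the final expression collapses to a single factor $1-(\Delta-1)^2\Omega$ rests on carefully exploiting $Pv\,A_\beta=Qu\,A_\alpha$ and tracking the $1-\Omega$ factors through the inversion.
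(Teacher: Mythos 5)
Your proof is correct, and it takes a genuinely different and more conceptual route than the paper. The paper works directly with the $3\times 3$ Hessian of $\Phi_1$ in the full-dimensional variables $(\alpha,\beta,x_{11})$, rearranges $-\mathbf{H_1}$, and checks Sylvester's criterion by computing the three leading principal minors $P_1,P_2,P_3$ with Maple after substituting the explicit values from \eqref{eq:pppm} and \eqref{optimalxfirst}: $P_1,P_2>0$ are manifest and $P_3$ collapses to $\frac{\Delta E_1^4\big(1-(\Delta-1)^2\omega\big)}{B_1B_2(Q^+Q^-)^2}$, so Lemma~\ref{lem:technicalinequality} finishes (the saddle case follows by replacing $Q^{\pm}$ with $Q^*$). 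You instead observe that because the $\mathbf{X}$-block is negative definite (strict concavity of $g_1$), Haynsworth inertia additivity reduces the sign question to the $2\times 2$ Schur complement, which is exactly $\nabla^2\phi_1$, and you compute $\nabla^2\phi_1$ by implicit differentiation of the fixed-point system for $u=R_1/R_2,\ v=C_1/C_2$. Your closed forms check out: the cross-term identity $Pv\,A_\beta=Qu\,A_\alpha$ you flag as the linchpin is in fact immediate, since both sides reduce to $\frac{(1-B_1B_2)\big(B_1uv+u+v+B_2\big)^2}{(B_1v+1)(v+B_2)(B_1u+1)(u+B_2)}$ by symmetry of $B_1uv+u+v+B_2$ in $u,v$, and the cancellation $\big((\Delta-1)(1-\Omega)-\Delta\big)^2-\Delta^2\Omega=(1-\Omega)\big(1-(\Delta-1)^2\Omega\big)$ is a one-line expansion. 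Compared with the paper, your route avoids symbolic computation entirely and makes transparent why the two-step-recursion derivative $(\Delta-1)^2\omega$ governs the sign of the Hessian, which in the paper simply drops out of Maple's output. A caveat shared with the paper: at the uniqueness boundary $(\Delta-1)^2\omega^*=1$ both arguments yield $\det H=0$ (resp.\ $P_3=0$) and the second-derivative test is inconclusive, but this marginal case is irrelevant for the open non-uniqueness region used in Theorem~\ref{thm:2spin}.
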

\begin{proof}[Proof of Lemma~\ref{lem:firstmomentmax}.]
Recall from \eqref{eq:limitfirst} and \eqref{eq:phi1phi2} that  
\[\lim_{n\rightarrow \infty} \frac{1}{n}\log\E_{\G}\big[Z^{\alpha,\beta}_G\big] = \phi_1(\alpha,\beta)=\max_{\mathbf{X}}\Phi_1(\alpha,\beta,\mathbf{X}).\] 
By Lemma~\ref{lem:firboundary}, the function $\phi_1(\alpha,\beta)$ does not attain its maximum on the boundary and hence its (global) maximum is achieved at a critical point. Lemma~\ref{lem:fircritical} gives the critical points in the uniqueness/nonuniqueness region and Lemma~\ref{lem:firhessian} classifies which are local maxima. The result follows. 
\end{proof}
We defer the proofs of Lemmas~\ref{lem:firboundary} and~\ref{lem:firhessian} to Section~\ref{sec:logfirstmaxproof}. We give the more interesting proof of Lemma~\ref{lem:fircritical}.

\begin{proof}[Proof of Lemma~\ref{lem:fircritical}.]
We apply Lemma~\ref{lem:maxentropy} with 
\begin{gather*}
\mathbf{Z}\leftarrow \mathbf{X},\ u\leftarrow\alpha,\  v\leftarrow\beta,\ \alpha_1\leftarrow\alpha,\ \alpha_2\leftarrow1-\alpha \ \beta_1\leftarrow\beta,\ \beta_2\leftarrow1-\beta,\\
f\leftarrow(\alpha+\beta)\ln\lambda+(\Delta-1)f_1,\ g\leftarrow \Delta g_1.
\end{gather*}
The matrix $M$ is given by
\[M= \left(\begin{array}{cc}
   B_1 & 1 \\
   1 & B_2 \\
 \end{array}\right).\]
Lemma~\ref{lem:maxentropy} gives that the critical points of  $\max_\mathbf{X}\Phi_1 (\alpha,\beta,\mathbf{X})$ must satisfy
\begin{equation}\label{eaeaea2}
\begin{aligned}
\frac{\partial \phi_1 }{\partial\alpha}&=\ln\left(\lambda \left(\frac{\alpha}{1-\alpha}\right)^{\Delta-1} \left(\frac{R_2}{R_1}\right)^\Delta \right)=0,\\
\frac{\partial \phi_1}{\partial\beta}&=\ln\left(\lambda \left(\frac{\beta}{1-\beta}\right)^{\Delta-1} \left(\frac{C_2}{C_1}\right)^\Delta \right)=0,
\end{aligned}
\end{equation}
where
\begin{equation}\label{eb}
\begin{array}{lll}
R_1 (B_1 C_1 + C_2)= \alpha,& & C_1 (B_1 R_1 + R_2)= \beta,\\
R_2 (C_1 + B_2 C_2)= 1-\alpha,& & C_2 (R_1 + B_2 R_2)=1-\beta.
\end{array}
\end{equation}
Let $r=R_1/R_2$ and $c=C_1/C_2$. Using \eqref{eaeaea2}  and \eqref{eb}, we obtain
\[r \frac{B_1 c + 1}{c + B_2} = \frac{\alpha}{1-\alpha} = r^{1+1/(\Delta-1)} (1/\lambda)^{1/(\Delta-1)}\Longrightarrow r=\lambda \left(\frac{B_1 c + 1}{c + B_2}\right)^{\Delta-1},\]
\[c \frac{B_1 r + 1}{r + B_2} = \frac{\beta}{1-\beta} = c^{1+1/(\Delta-1)} (1/\lambda)^{1/(\Delta-1)}\Longrightarrow c=\lambda \left( \frac{B_1 r + 1}{r + B_2}\right)^{\Delta-1}.\]
The right hand side equations are exactly the equations \eqref{eq:densitiesone} and in light of Lemma~\ref{lem:densitiesone} it must be the case that in the non-uniqueness region either that $r=\frac{q^\pm}{1-q^{\pm}}, c=\frac{q^\mp}{1-q^{\mp}}$ or $r=c=\frac{q^{*}}{1-q^{*}}$. By \eqref{eq:pppm}, we hence obtain that the critical points for the first moment in the non-uniqueness region are given by $(\alpha,\beta)=(p^\pm,p^\mp)$ or $(\alpha,\beta)=(p^{*},p^{*})$. Similarly for the uniqueness region.
\end{proof}

\subsection{The Logarithm of the Second Moment - Maximum}\label{sec:logsecmax}
In this section, we prove Lemma~\ref{lem:secondmax2spin}. As established in Sections~\ref{sec:generallog} and~\ref{sec:maxentropy}, it suffices to study the maxima of $\phi_2(\gamma,\delta)$. Lemma~\ref{lem:secondmax2spin} is immediately implied by the following lemmas. 

\begin{lemma}\label{lem:seccritical}
Under the hypotheses of Theorem~\ref{thm:2spin}, for $(\alpha,\beta)=(p^\pm,p^\mp)$, the only critical points of $\phi_2(\gamma,\delta)$ satisfy $\gamma=\alpha^2,\ \delta=\beta^2$.
\end{lemma}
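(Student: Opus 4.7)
The plan is to mirror the strategy used for Lemma~\ref{lem:fircritical}: apply Lemma~\ref{lem:maxentropy} to translate the critical-point equations for $\phi_2$ into a finite algebraic system, and then identify this system with a tree recursion on $\TD$ whose fixed points are already classified in Lemma~\ref{lem:densitiesone}. The key structural observation is that the weight matrix $M$ appearing in the second moment has a tensor-product form: indexing the four classes by pairs $(a,b)\in\{-,+\}^2$, one has $M_{(a,b),(c,d)}=m_{a,c}\,m_{b,d}$ where $m_{-,-}=B_1$, $m_{+,+}=B_2$, $m_{-,+}=m_{+,-}=1$ is exactly the first-moment matrix from the proof of Lemma~\ref{lem:fircritical}. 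This tensor structure reflects that $w_G(\sigma_1)w_G(\sigma_2)$ factors edge-wise and it drives the reduction to the first-moment recursion.

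Applying Lemma~\ref{lem:maxentropy} with $u=\gamma$, $v=\delta$, and the marginals $L_i, R_j$ of \eqref{eq:yregion} produces positive Lagrange multipliers $R_{(a,b)}$ and $C_{(c,d)}$ satisfying the marginal equations \eqref{e3} together with two critical-point equations analogous to \eqref{ea1}--\eqref{ea2}, whose right-hand sides are logarithms of the ratios $R_{(-,-)}R_{(+,+)}/(R_{(-,+)}R_{(+,-)})$ and $C_{(-,-)}C_{(+,+)}/(C_{(-,+)}C_{(+,-)})$. These ratios measure the defect from independence of the $R$- and $C$-vectors. The symmetry $L_2=L_3$, $R_2=R_3$ coming from the swap $\sigma_1\leftrightarrow\sigma_2$, together with strict concavity of $g_2$, already forces $R_{(-,+)}=R_{(+,-)}$ and $C_{(-,+)}=C_{(+,-)}$, trimming the free variables. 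A factorable (product) ansatz $R_{(a,b)}=R^{(1)}_a R^{(2)}_b$, $C_{(c,d)}=C^{(1)}_c C^{(2)}_d$ sends both defect ratios to $1$ and, combined with the tensor form of $M$, decouples \eqref{e3} into two independent copies of the first-moment marginal equation \eqref{eb}. By Lemma~\ref{lem:fircritical} each copy solves \eqref{eq:densitiesone}, and the marginal constraint $(\alpha,\beta)=(p^+,p^-)$ singles out the solution $(Q^+,Q^-)$; together with \eqref{eq:pppm} this yields the unique product critical point with $\gamma=\alpha^2$ and $\delta=\beta^2$.

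The main obstacle is to rule out ``non-factorable'' critical points, i.e.,\ those for which the defect ratios above are not equal to $1$. The plan is to introduce the defect ratios as new variables, substitute everything into \eqref{e3} and the critical-point equations, and collapse the system to a self-consistent two-step recursion on $\TD$ extending the first-moment recursion \eqref{eq:densitiesone}. The linearization of this auxiliary recursion at the product solution is governed by the quantity $(\Delta-1)^2\omega$, which Lemma~\ref{lem:technicalinequality} bounds strictly below $1$ throughout the non-uniqueness region; the hypothesis $\sqrt{B_1B_2}\geq (\sqrt{d}-1)/(\sqrt{d}+1)$ from Theorem~\ref{thm:2spin} is precisely what upgrades this local contraction to the global algebraic inequality needed to conclude that only the factorable solution survives. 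This final short inequality, relating second-moment critical points to the stable form of tree recursions on $\TD$, is the technical crux of the argument and parallels the clean use of tree uniqueness advertised in the introduction; notably it breaks down for the hard-core and Ising cases in the regimes where Lemmas~\ref{lem:secondmaxising} and~\ref{lem:secondmaxhardcore} require dedicated analyses.
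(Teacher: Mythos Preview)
Your setup is correct and matches the paper: apply Lemma~\ref{lem:maxentropy} with the $4\times 4$ tensor-product matrix, use the swap symmetry to force $R_2=R_3$, $C_2=C_3$, and verify that the ``product'' solution $r_1r_4=c_1c_4=1$ (in the paper's ratio variables $r_i=R_i/R_2$, $c_j=C_j/C_2$) yields $\gamma=\alpha^2$, $\delta=\beta^2$.

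The gap is in how you rule out non-factorable critical points. You promise to ``collapse the system to a self-consistent two-step recursion on $\TD$ extending~\eqref{eq:densitiesone}'' whose linearization at the product point is $(\Delta-1)^2\omega$, and then invoke Lemma~\ref{lem:technicalinequality}. Two things go wrong. First, Lemma~\ref{lem:technicalinequality} is a \emph{local} statement (Jacobian $<1$ at the product fixed point); it is exactly what the paper uses later to check the Hessian of $\Phi_2$ is negative definite (Lemma~\ref{lem:sechessian}), but it cannot by itself exclude other fixed points globally, and you give no mechanism for the ``upgrade'' you allude to. Second, the auxiliary recursion that actually emerges is \emph{not} an extension of~\eqref{eq:densitiesone}: after the algebraic identity
\[
(B_1^2r_1+2B_1+r_4)(r_1+2B_2+B_2^2r_4)=(B_1r_1+(B_1B_2+1)+B_2r_4)^2+(1-B_1B_2)^2(r_1r_4-1)
\]
and an AM-GM step, the system collapses to an inequality version of the tree recursion for the \emph{ferromagnetic} Ising model with parameter $B'=(1+B_1B_2)/(2\sqrt{B_1B_2})$ in the variables $x=\sqrt{r_1r_4}$, $y=\sqrt{c_1c_4}$. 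The hypothesis $\sqrt{B_1B_2}\geq(\sqrt d-1)/(\sqrt d+1)$ is exactly $B'\leq (d+1)/(d-1)$, the ferromagnetic uniqueness threshold, and the paper then uses the elementary one-variable inequality of Lemma~\ref{lem:maininequality} (not Lemma~\ref{lem:technicalinequality}) to dispose of the cases $xy>1$ and $xy<1$ directly. This ferromagnetic reduction is the missing idea; without it your plan does not close.
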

\begin{lemma}\label{lem:secboundary}
For $(\alpha,\beta)=(p^\pm,p^\mp)$, the function $\phi_2(\gamma,\delta)$ does not have any local maximum on the boundary of the region
\[\big\{(\gamma,\delta)\mid \gamma\geq0,\ \alpha-\gamma\geq0,\ 1-2\alpha+\gamma\geq 0,\ \delta\geq 0,\ \beta-\delta\geq0,\ 1-2\beta+\delta\geq0\big\}.\]
\end{lemma}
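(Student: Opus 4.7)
My plan is to show that $\phi_2$ has, at every boundary point, a directional derivative pointing into the interior that tends to $+\infty$, ruling out any local maximum. Fix $(\alpha,\beta)=(p^+,p^-)$ (the other case is symmetric). Under the hypothesis we have $\alpha,\beta\in(0,1)$, and the matrix $M$ underlying $g_2$ (whose entries are products of powers of $B_1,B_2$) is entrywise strictly positive.

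First, I apply Lemma~\ref{lem:maxentropy} with $\mathbf{Z}\leftarrow\mathbf{Y}$, $u\leftarrow\gamma$, $v\leftarrow\delta$, $\alpha_i\leftarrow L_i$, $\beta_j\leftarrow R_j$, and combine with the explicit computation $\partial f_2/\partial\gamma=\ln\bigl[\gamma(1-2\alpha+\gamma)/(\alpha-\gamma)^2\bigr]$ (and the analogous expression in $\delta$) to obtain
\[
\frac{\partial\phi_2}{\partial\gamma}=(\Delta-1)\ln\frac{\gamma(1-2\alpha+\gamma)}{(\alpha-\gamma)^2}+\Delta\bigl(-\ln\hat{R}_1+\ln\hat{R}_2+\ln\hat{R}_3-\ln\hat{R}_4\bigr),
\]
where $\hat{R}_i,\hat{C}_j$ denote the dual variables of Lemma~\ref{lem:gmax} for the $g_2$-optimization at $(\gamma,\delta)$ (the expression is invariant under the scaling symmetry of the duals, since the coefficients sum to $-1+1+1-1=0$).

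Next, I analyze each of the six boundaries via the degeneration of the relevant dual variables. Take $\gamma\to 0^+$ as the model case. The constraint $\hat{R}_1\sum_j M_{1j}\hat{C}_j=L_1=\gamma$ forces $\hat{R}_1=\Theta(\gamma)$, while all other $\hat{R}_i,\hat{C}_j$ converge to strictly positive finite limits; plugging in gives
\[
\frac{\partial\phi_2}{\partial\gamma}=(\Delta-1)\ln\gamma-\Delta\ln\gamma+O(1)=-\ln\gamma+O(1)\longrightarrow+\infty,
\]
so moving inward (increasing $\gamma$) strictly increases $\phi_2$. The same mechanism handles the other five boundaries. On $\gamma=\alpha$ both $L_2$ and $L_3$ vanish linearly, forcing $\hat{R}_2,\hat{R}_3=\Theta(\alpha-\gamma)$, and a direct sign tally yields $\partial\phi_2/\partial\gamma\sim 2\ln(\alpha-\gamma)\to-\infty$, so the inward direction (decreasing $\gamma$) has derivative $+\infty$. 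On $\gamma=2\alpha-1$ (only binding when $\alpha>1/2$), the vanishing of $L_4$ gives $\partial\phi_2/\partial\gamma\to+\infty$. The three $\delta$-boundaries are handled identically via the $\hat{C}_j$. At corners where two boundaries meet, a positive combination of the two relevant inward partial derivatives still diverges to $+\infty$, so the argument goes through.

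The main obstacle is justifying that the ``surviving'' dual variables $\hat{R}_i,\hat{C}_j$ converge to strictly positive finite limits as each boundary is approached (this is what makes the $-\ln(\text{distance})$ leading term visible and lets the $(\Delta-1)$ and $\Delta$ contributions from $f_2$ and $g_2^*$ cancel to a clean $-\ln(\text{distance})$). This follows from a standard compactness-continuity argument: the boundary-restricted $g_2$-optimization (with the vanishing rows or columns of $\mathbf{Y}$ deleted) is strictly concave with strictly positive reduced marginals and a strictly positive sub-matrix of $M$, so its unique optimum has strictly positive entries, and appropriately normalized dual variables of the full problem converge to those of the boundary problem by continuity of the optimizer under continuous perturbations of the marginals.
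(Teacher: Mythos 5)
Your proposal is correct and arrives at the same leading-order blowup, but you and the paper resolve the boundary analysis in genuinely different ways. Both start from the same formula
\[
\frac{\partial\phi_2}{\partial\gamma}=(\Delta-1)\ln\frac{\gamma(1-2\alpha+\gamma)}{(\alpha-\gamma)^2}+\Delta\bigl(-\ln R_1+\ln R_2+\ln R_3-\ln R_4\bigr),
\]
but you then track the degeneration of the dual variables directly: on $\gamma\to 0^+$ you argue $R_1=\Theta(\gamma)$ while the remaining (suitably normalized) $R_i$ and $C_j$ stay bounded away from $0$ and $\infty$, so the $(\Delta-1)\ln\gamma$ from $f_2$ and the $-\Delta\ln\gamma$ from $-\ln R_1$ combine to $-\ln\gamma\to+\infty$. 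The paper instead \emph{eliminates} the $R_i$ from the $\gamma$-derivative altogether, by multiplying the first and third and dividing by the square of the second row of the constraint system \eqref{eq:frty} to express $R_1R_4/R_2^2$ through $\gamma(1-2\alpha+\gamma)/(\alpha-\gamma)^2$ times a functional of the $C_j$ alone, arriving at \eqref{eq:dergammacopy}. It then needs only the elementary homogeneity observation (Claim~\ref{cl:veryhelpful}) that this $C_j$-functional is sandwiched between positive constants, with no discussion of limiting behavior of duals at all. The trade-off is that the paper's route replaces your ``compactness-continuity of the scaled dual solution'' step — which is true but needs to be spelled out (you must fix a normalization such as $\sum_j C_j=1$, show limit points of the $C_j$ are strictly positive by ruling out the $\sum_iM_{ij}R_i\to\infty$ alternative, and only then conclude $R_1\sim\gamma$) — by a two-line algebraic identity. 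Your version makes more transparent \emph{why} the cancellation is exactly $-\ln(\text{dist})$, but the paper's is the shorter rigorous argument. Two small additional points: you should explicitly restrict to $B_1B_2>0$ (your positivity of $M$ fails for hard-core $B_1=0$, which the paper delegates to \cite[Lemma 12]{Galanis}), and the corner case needs slightly more than a ``positive combination diverges'' remark, since you have one inward derivative tending to $+\infty$ and possibly another tending to $-\infty$ on the opposite face — you need to argue along an inward cone where the dominant distance shrinks.
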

\begin{lemma}\label{lem:sechessian}
For $(\alpha,\beta)=(p^+,p^-)$, let $\mathbf{Y}^{*}=\arg\max_{\mathbf{Y}}\Phi_2(\alpha^2,\beta^2,\mathbf{Y})$.  The function $\Phi_2(\gamma,\delta,\mathbf{Y})$ has a local maximum at $(\alpha^2,\beta^2,\mathbf{Y}^{*})$.
\end{lemma}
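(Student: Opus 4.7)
\textbf{Proof plan for Lemma~\ref{lem:sechessian}.}

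The plan is to reduce the question of a local maximum at $(\alpha^2,\beta^2,\mathbf{Y}^*)$ to one at $(\alpha^2,\beta^2)$ for $\phi_2(\gamma,\delta)=\max_{\mathbf{Y}}\Phi_2(\gamma,\delta,\mathbf{Y})$, and then to compute the $2\times 2$ Hessian of $\phi_2$ using Lemma~\ref{lem:maxentropy}. Since $g_2$ is strictly concave in $\mathbf{Y}$ over the convex region \eqref{eq:yregion} (Section~\ref{sec:generallog}), $\Phi_2(\alpha^2,\beta^2,\cdot)$ is strictly concave at $\mathbf{Y}^*$. By the standard Schur-complement/implicit-function argument, the full Hessian of $\Phi_2$ at $(\alpha^2,\beta^2,\mathbf{Y}^*)$ is negative definite if and only if the Hessian of $\phi_2$ at $(\alpha^2,\beta^2)$ is. So the task reduces to proving negative definiteness of the $2\times 2$ Hessian of $\phi_2$.

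The key simplification at the critical point is a \emph{tensor product structure}. The 16 entries of the weight matrix $M$ of the second-moment problem factor as $M_{(i_1i_2)(j_1j_2)} = M'_{i_1j_1}M'_{i_2j_2}$, where $M'=\bigl(\begin{smallmatrix}B_1 & 1\\ 1& B_2\end{smallmatrix}\bigr)$ is the first-moment weight matrix. Moreover, at $(\gamma,\delta)=(\alpha^2,\beta^2)$ the marginals factor as $L_{(i_1i_2)}=\alpha'_{i_1}\alpha'_{i_2}$ and $R_{(j_1j_2)}=\beta'_{j_1}\beta'_{j_2}$ with $\alpha'=(\alpha,1-\alpha)$, $\beta'=(\beta,1-\beta)$. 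By the uniqueness part of Lemma~\ref{lem:gmax} applied to the second-moment maximization, the row/column factors decouple: $R_{(i_1i_2)} = \rho_{i_1}\rho_{i_2}$, $C_{(j_1j_2)} = \chi_{j_1}\chi_{j_2}$, where $(\rho,\chi)$ are the first-moment factors from the proof of Lemma~\ref{lem:fircritical} at $(\alpha,\beta)=(p^+,p^-)$. In particular $\mathbf{Y}^*_{(i_1i_2)(j_1j_2)} = \mathbf{X}^*_{i_1j_1}\mathbf{X}^*_{i_2j_2}$, so the second-moment optimum at the critical point is literally the tensor square of the first-moment optimum.

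Using Lemma~\ref{lem:maxentropy} with $\partial L_i/\partial\gamma, \partial R_j/\partial\delta \in\{-1,0,+1\}$, the gradient of $\phi_2$ is
\[\tfrac{\partial\phi_2}{\partial\gamma} = (\Delta-1)\tfrac{\partial f_2}{\partial\gamma} - \Delta\ln\tfrac{R_1R_4}{R_2R_3},\qquad \tfrac{\partial\phi_2}{\partial\delta} = (\Delta-1)\tfrac{\partial f_2}{\partial\delta} - \Delta\ln\tfrac{C_1C_4}{C_2C_3}.\]
To obtain the Hessian I would implicitly differentiate the 8 balance equations \eqref{e3} in the 16 variables $Y_{ij}$. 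At the critical point the tensor decoupling shows that the linearized system factors into two copies of the linearization of the first-moment equations, so the relevant $2\times 2$ Jacobian that appears is exactly the Jacobian of one step of the parity-alternating recursion \eqref{eq:densitiesone} evaluated at $(Q^+,Q^-)$. Plugging into the Hessian of $\phi_2$ and simplifying using the explicit formulas \eqref{eq:pppm} for $p^\pm$, the off-diagonal and diagonal entries organize into expressions whose determinant factors (up to a positive quantity coming from $f_2$) as a constant multiple of $1-(\Delta-1)^2\omega$, with $\omega$ as in \eqref{eq:definitionofomega}, while the trace is manifestly negative.

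The main obstacle is executing the implicit differentiation cleanly: with 16 variables and 8 constraints the naive computation is unwieldy. The essential shortcut, as hinted in the text after Lemma~\ref{lem:technicalinequality}, is that under the tensor decoupling the linearized problem is exactly the derivative of the \emph{two-step} tree recursion on $\TDary$ at its fixed point $q^+$, which equals $(\Delta-1)^2\omega$. Thus the entire analytic content of the Hessian's negative definiteness is packaged in the inequality $(\Delta-1)^2\omega<1$ supplied by Lemma~\ref{lem:technicalinequality}, and the proof reduces to verifying that the algebraic identification between the Hessian and this stability derivative is correct.
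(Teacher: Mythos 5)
Your Schur-complement reduction to the $2\times 2$ Hessian of $\phi_2$ is exactly what the paper does, phrased differently: it orders Sylvester's criterion so that the first nine leading principal minors are the (negated) Hessian of the strictly concave $g_2$ --- automatically positive --- and then verifies only $P_{10}$ and $P_{11}$, which are precisely the Schur complement onto the $(\gamma,\delta)$ block. The paper handles this last step by direct closed-form Maple computation and reads off positivity from Lemma~\ref{lem:technicalinequality}. Your plan instead tries to obtain the Hessian from the tensor-product structure and the derivative of the two-step tree recursion, but the decisive algebraic step is asserted rather than carried out, and the one explicit formula you state is wrong: the determinant of the $2\times 2$ Hessian of $\phi_2$ at $(\alpha^2,\beta^2)$ is a positive multiple of $1-(\Delta-1)^2\omega^2$ (this is $P_{11}$ in the paper's proof, and likewise $4DF-E^2$ in the proof of Lemma~\ref{lem:asympsec}), not of $1-(\Delta-1)^2\omega$, which is what appears in the \emph{first}-moment Hessian $P_3$ in Lemma~\ref{lem:firhessian}. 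Both are positive by Lemma~\ref{lem:technicalinequality}, so the conclusion still follows, but the mismatch shows the claimed identification between the Hessian and the recursion derivative has not actually been checked.

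The more substantive gap is the claim that ``the linearized system factors into two copies of the linearization of the first-moment equations.'' At the critical point $\mathbf{Y}^*$ is indeed the tensor square of $\mathbf{X}^*$ (compare \eqref{eq:optimalxsecond} with \eqref{optimalxfirst}), but perturbing $(\gamma,\delta)$ away from $(\alpha^2,\beta^2)$ does not move the optimal $\mathbf{Y}$ along tensor-product directions, so the linearized $16$-variable, $8$-constraint system for the $(R_i,C_j)$ multipliers does not automatically block-diagonalize into two copies of the first-moment $2\times 2$ linearization. Some invariant-subspace decoupling coming from the $M\otimes M$ structure should indeed exist --- the computed determinant confirms the answer is governed by $\omega$ --- but exhibiting it is precisely where the real work lies, and you flag it yourself as an unresolved ``obstacle.'' You also assert without verification that the diagonal entries (not just the determinant) have the right sign; the paper covers this by computing $P_{10}$ separately. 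As it stands, the proposal correctly identifies the stability quantity and the right tool but leaves the core algebraic step unproven.
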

\begin{proof}[Proof of Lemma~\ref{lem:secondmax2spin}.]
Just combine Lemmas~\ref{lem:seccritical},~\ref{lem:secboundary} and~\ref{lem:sechessian} as in the proof of Lemma~\ref{lem:firstmomentmax} in Section~\ref{sec:logfirstmax}.
\end{proof}
The proofs of Lemmas~\ref{lem:secboundary} and~\ref{lem:sechessian} are valid for all $B_1,B_2, \lambda$, 
so that to prove Lemmas~\ref{lem:secondmaxising} and~\ref{lem:secondmaxhardcore}, one only needs to obtain the analogues of Lemma~\ref{lem:seccritical} in the respective settings of $B_1,B_2$. The proofs of these analogues are given in Sections~\ref{sec:secondmaxisingproof} and \ref{sec:hardhard}, see also Section~\ref{sec:remarks} for a proof overview. We defer the proofs of Lemmas~\ref{lem:secboundary} and~\ref{lem:sechessian} to Section~\ref{sec:logsecmaxproof} and give here the more interesting proof of Lemma~\ref{lem:seccritical}.
\begin{proof}[Proof of Lemma~\ref{lem:seccritical}.]
We apply once more Lemma~\ref{lem:maxentropy} with
\begin{gather*}
\mathbf{Z}\leftarrow \mathbf{Y},\ u\leftarrow \gamma,\ v\leftarrow \delta,\ \alpha_1\leftarrow\gamma,\ \alpha_2=\alpha_3\leftarrow\alpha-\gamma,\ \alpha_4\leftarrow1-2\alpha+\gamma\\
\beta_1\leftarrow\gamma,\ \beta_2=\beta_3\leftarrow\beta-\delta,\ \beta_4\leftarrow1-2\beta+\delta,\ f\leftarrow (\Delta-1)f_2,\ g\leftarrow \Delta g_2.
\end{gather*}
The matrix $M$ is given by
\[M=\left(\begin{array}{cccc}
   B_1^2 & B_1 & B_1 & 1 \\
   B_1 & B_1 B_2 & 1 & B_2 \\
   B_1 & 1 & B_1 B_2 & B_2 \\
   1 & B_2 & B_2 & B_2^2 \\
 \end{array}\right).\]
We obtain that the critical points of $\max_{\mathbf{Y}}\Phi_2(\gamma,\delta,\mathbf{Y})$ must satisfy
\begin{align}
\frac{\partial \phi_2}{\partial \gamma}&  =
\ln\left( \left(\frac{\gamma(1-2\alpha+\gamma)}{(\alpha-\gamma)^2}\right)^{\Delta-1}\left(\frac{R_2 R_3}{R_1 R_4}\right)^\Delta \right)=0,\label{es1}\\
\frac{\partial \phi_2}{\partial \delta}& =
\ln\left( \left(\frac{\delta(1-2\beta+\delta)}{(\beta-\delta)^2}\right)^{\Delta-1}\left(\frac{C_2 C_3}{C_1 C_4}\right)^\Delta \right)=0,\label{es2}
\end{align}
where the $R_i,C_j$ satisfy the following equations (symbol $\circ$ denotes the Hadamard product).
\begin{equation}\label{et}
\left(\begin{array}{cc}R_1 & C_1\\R_2 & C_2\\R_3 & C_3\\R_4 & C_4\end{array}\right)\circ M\left(\begin{array}{cc}C_1 & R_1\\C_2& R_2\\C_3 & R_3\\C_4 & R_4\end{array}\right)=\left(\begin{array}{cc}\gamma  & \delta\\ \alpha-\gamma & \beta-\delta \\\alpha-\gamma & \beta-\delta \\1-2\alpha+\gamma & 1-2\beta+\delta\end{array}\right).
\end{equation}
We will write out an explicit form of the equations after establishing the following claim.
\begin{claim}\label{claim:equalityrc}
It holds that $R_2=R_3$ and $C_2=C_3$.
\end{claim}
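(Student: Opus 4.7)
The plan is to exploit the evident permutation symmetry in the setup. The rows/columns indexed by $2$ and $3$ both encode the ``one spin is $-1$, the other is $+1$'' case, and accordingly the matrix $M$ is invariant under simultaneous swaps of its second and third rows and second and third columns, i.e., $M_{ij} = M_{\pi(i)\pi(j)}$ for $\pi$ the transposition $(2\,3)$. The prescribed marginals also respect this symmetry: $\alpha_2 = \alpha_3 = \alpha-\gamma$ and $\beta_2=\beta_3=\beta-\delta$. Consequently, if $\mathbf{Y}^{\ast}$ maximizes $g_2$ on the region \eqref{eq:yregion}, then so does the permuted tensor $\mathbf{Y}'$ defined by $y'_{ij} := y^{\ast}_{\pi(i)\pi(j)}$.

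The first step is to appeal to strict concavity of $g_2$ on the convex region \eqref{eq:yregion}, which is exactly what Lemma~\ref{lem:gmax} records (more precisely, the quadratic decay of the full-dimensional representation around the unique maximum). Strict concavity forces $\mathbf{Y}' = \mathbf{Y}^{\ast}$, so the maximizer satisfies $y^{\ast}_{ij} = y^{\ast}_{\pi(i)\pi(j)}$ for all $i,j$. In particular, $y^{\ast}_{2,1} = y^{\ast}_{3,1}$ and $y^{\ast}_{1,2} = y^{\ast}_{1,3}$.

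The second step converts this into the claim about $R$ and $C$. By Lemma~\ref{lem:gmax}, $y^{\ast}_{ij} = M_{ij} R_i C_j$ with $R_i, C_j > 0$. The identity $y^{\ast}_{2,1} = y^{\ast}_{3,1}$ becomes $M_{2,1}R_2 C_1 = M_{3,1}R_3 C_1$; since $M_{2,1} = M_{3,1} = B_1$ and $C_1 > 0$, this yields $R_2 = R_3$. The identity $y^{\ast}_{1,2} = y^{\ast}_{1,3}$ gives $M_{1,2} R_1 C_2 = M_{1,3} R_1 C_3$, and $M_{1,2}=M_{1,3}=B_1$ with $R_1>0$ yields $C_2 = C_3$. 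I do not anticipate a real obstacle here; the only subtle point is invoking the right uniqueness statement — the $R_i, C_j$ are only unique up to the scaling $R_i\mapsto tR_i, C_j\mapsto C_j/t$, but this scaling preserves equalities $R_2=R_3$ and $C_2=C_3$, so the conclusion is unambiguous.
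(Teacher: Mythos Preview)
Your proof is correct and follows essentially the same approach as the paper: exploit the invariance of $M$ (and of the marginals) under the simultaneous swap of indices $2$ and $3$, then use strict concavity of $g_2$ to force uniqueness of the maximizer. The paper phrases the symmetry directly at the level of the $(R_i,C_j)$ solving \eqref{et}, whereas you route through the maximizer $\mathbf{Y}^\ast$ first and then read off $R_2=R_3$, $C_2=C_3$ from specific entries; these are the same argument unpacked slightly differently.
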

\begin{proof}
Observe that the matrix $M$ remains invariant upon interchanging its 2nd,3rd rows and 2nd,3rd columns, i.e.,
\begin{multline*}
(R_1,R_2,R_3,R_4,C_1,C_2,C_3,C_4) \mbox{ satisfy } \eqref{et}\Longleftrightarrow \\ (R_1,R_3,R_2,R_4,C_1,C_3,C_2,C_4) \mbox{ satisfy } \eqref{et}.
\end{multline*}
Since $g_2(\mathbf{Y})$ is strictly concave, this yields the claim.
\end{proof}
The rest of the proof will work as follows: as in the first moment, we first do some manipulations with the equations so that we arrive to a nice form. This form coincides with an inequality version of the tree equations \eqref{eq:densitiesone} for the ferromagnetic Ising model without external field, which can be analyzed easily. 

Using Claim~\ref{claim:equalityrc}, the equalities \eqref{et} give
\begin{equation}\label{eq:frty}
\begin{aligned}
&R_1=\frac{\gamma}{B^2_1 C_1+2B_1 C_2+C_4}, & &C_1=\frac{\delta}{B^2_1R_1+2B_1R_2+R_4},\\
&R_2=\frac{\alpha-\gamma}{B_1 C_1+(B_1B_2+1)C_2+B_2C_4}, & &C_2= \frac{\beta-\delta}{B_1R_1+(B_1B_2+1)R_2+B_2R_4},\\
&R_4=\frac{1-2\alpha+\gamma}{C_1+2B_2 C_2+B^2_2 C_4}, & & C_4=\frac{ 1-2\beta+\delta}{R_1+2B_2R_2+B^2_2R_4}. 
\end{aligned}
\end{equation}
Also, using again Claim~\ref{claim:equalityrc}, the equalities in \eqref{es1} and \eqref{es2} give 
\begin{gather}
\frac{\partial \phi_2}{\partial \gamma} =
\ln\left( \left(\frac{\gamma(1-2\alpha+\gamma)}{(\alpha-\gamma)^2}\right)^{\Delta-1}\left(\frac{R_2^2}{R_1 R_4}\right)^\Delta \right)=0,\label{eq:dergamma}\\
\frac{\partial \phi_2}{\partial \delta} =
\ln\left( \left(\frac{\delta(1-2\beta+\delta)}{(\beta-\delta)^2}\right)^{\Delta-1}\left(\frac{C_2^2}{C_1 C_4}\right)^\Delta \right)=0.\label{eq:derdelta}
\end{gather}
We now set $r_1=R_1/R_2, r_4=R_4/R_2, c_1=C_1/C_2, c_4=C_4/C_2$. After dividing the appropriate pairs of equations in \eqref{eq:frty}, we obtain
\begin{align}
r_1=\frac{\gamma}{\alpha-\gamma}\cdot\frac{B_1c_1+(B_1B_2+1)+B_2c_4}{B^2_1c_1+2B_1+c_4},&\  r_4 =\frac{1-2\alpha+\gamma}{\alpha-\gamma}\cdot\frac{B_1c_1+(B_1B_2+1)+B_2c_4}{c_1+2B_2+B^2_2c_4},\label{r1r4}\\
c_1 = \frac{\delta}{\beta-\delta}\cdot\frac{B_1r_1+(B_1B_2+1)+B_2r_4}{B^2_1r_1+2B_1+r_4},& \  
c_4 = \frac{1-2\beta+\delta}{\beta-\delta}\cdot\frac{B_1r_1+(B_1B_2+1)+B_2r_4}{r_1+2B_2+B^2_2r_4}.\label{c1c4}
\end{align}
Equations \eqref{eq:dergamma} and \eqref{eq:derdelta} become
\begin{align}
\left(r_1r_4\right)^\Delta=\left(\frac{\gamma(1-2\alpha+\gamma)}{(\alpha-\gamma)^2}\right)^{\Delta-1},\quad
\left(c_1c_4\right)^\Delta=\left(\frac{\delta(1-2\beta+\delta)}{(\beta-\delta)^2}\right)^{\Delta-1}.\label{secab}
\end{align}
Using the identity
\begin{multline*}
(B^2_1c_1+2B_1+c_4)(c_1+2B_2+B^2_2c_4)=\\
(B_1c_1+(B_1B_2+1)+B_2c_4)^2+(1-B_1B_2)^2(c_1c_4-1),
\end{multline*}
and multiplying the equations in \eqref{r1r4}, we obtain
\begin{align}
r_1r_4&=\frac{\gamma(1-2\alpha+\gamma)}{(\alpha-\gamma)^2}\cdot\frac{(B_1c_1+(B_1B_2+1)+B_2c_4)^2}{(B_1c_1+(B_1B_2+1)+B_2c_4)^2+(1-B_1B_2)^2(c_1c_4-1)}.\label{eq:prodr}
\end{align}
It will be convenient at this point to work with $d=\Delta-1$. We plug the first equation in \eqref{secab} into \eqref{eq:prodr}, yielding
\begin{equation}\label{recurone}
\begin{aligned}
(r_1r_4)^{1/d}-1&=\frac{(1-B_1B_2)^2(c_1c_4-1)}{(B_1c_1+(B_1B_2+1)+B_2c_4)^2},\\
 (c_1c_4)^{1/d}-1&=\frac{(1-B_1B_2)^2(r_1r_4-1)}{(B_1r_1+(B_1B_2+1)+B_2r_4)^2}.
\end{aligned}
\end{equation}
The second equality in \eqref{recurone} follows by a completely symmetric argument for $c_1,c_4$. Observe that \eqref{recurone} implies that one of the following three cases can hold.
\[\mbox{\sc{Case I}}: r_1 r_4=c_1c_4=1, \   \mbox{\sc{Case II}}: r_1 r_4>1,c_1c_4>1, \ \mbox{\sc{Case III}}: r_1 r_4<1,c_1c_4<1. \] 
\mbox{\sc{Case I}} reduces to $\gamma=\alpha^2, \ \delta=\beta^2$ by \eqref{secab}. Thus we may focus on \mbox{\sc{Cases II}} and \mbox{\sc{III}}. We further restrict our attention to \mbox{\sc{Case II}}, with \mbox{\sc{Case III}} being completely analogous.
From AM-GM, we have the inequalities 
\[B_1c_1+B_2c_4\geq 2\sqrt{B_1B_2c_1 c_4}, \ B_1r_1+B_2r_4\geq 2\sqrt{B_1B_2r_1 r_4},\]
so \eqref{recurone} gives
\begin{gather*}
(r_1r_4)^{1/d}-1\leq\frac{(1-B_1B_2)^2(c_1c_4-1)}{\big(B_1B_2+1+2\sqrt{B_1B_2c_1c_4}\big)^2}, \\
(c_1c_4)^{1/d}-1\leq\frac{(1-B_1B_2)^2(r_1r_4-1)}{\big(B_1B_2+1+2\sqrt{B_1B_2r_1r_4}\big)^2}.
\end{gather*}
It is straightforward to verify the identity \[\Big(B_1B_2+1+2z\sqrt{B_1B_2}\Big)^2+(1-B_1B_2)^2(z^2-1)=\Big((1+B_1B_2)z+2\sqrt{B_1 B_2}\Big)^2.\] Set $x=\sqrt{r_1 r_4}$, $y=\sqrt{c_1c_4}$. Using the identity with $z=x,y$ and taking square roots, we obtain
\begin{equation}
x^{1/d}\leq\frac{(1+B_1B_2)y+2\sqrt{B_1 B_2}}{2\sqrt{B_1B_2}y+B_1B_2+1}, \quad
y^{1/d}\leq\frac{(1+B_1B_2)x+2\sqrt{B_1 B_2}}{2\sqrt{B_1B_2}x+B_1B_2+1}.\label{ineq:recura}
\end{equation}
Using the substitution $B'=(1+B_1 B_2)/2\sqrt{B_1B_2}$, this can be rewritten in the form
\begin{equation}
x^{1/d}\leq\frac{B'y+1}{y+B'}, \quad
y^{1/d}\leq\frac{B'x+1}{x+B'}.\label{ineq:recurb}
\end{equation}
The astute reader will immediately realize the analogy of
\eqref{ineq:recurb} with the tree equations \eqref{eq:densitiesone}
for the Ising model without external field.
Moreover, since $B'\geq 1$ we are in the case of the {\em
ferromagnetic} Ising model.
For the ferromagnetic Ising model on the
infinite tree $\Tree_{d+1}$, we have uniqueness when $1<B\leq
\frac{d+1}{d-1}$.
Hence intuitively the lemma holds when $B'\leq \frac{d+1}{d-1}$. However, in
\eqref{ineq:recurb} we have an inequality version of the tree
recursions \eqref{eq:densitiesone} and therefore a bit more work is required.

In our setting, we obtain that \eqref{ineq:recurb} cannot hold when $x>1,y>1$. To see this, multiply the inequalities in \eqref{ineq:recurb} to obtain
\begin{equation}\label{eq:stablea}
x^{1/d}y^{1/d}\leq \frac{B'x+1}{x+B'}\cdot \frac{B'y+1}{y+B'}.
\end{equation}
Note that \eqref{eq:stablea}  is reversed in \mbox{\sc{Case III}}. The following lemma implies that \eqref{eq:stablea} cannot hold in \mbox{\sc{Case II}}, and similarly for the reverse inequality \eqref{eq:stablea} in  \mbox{\sc{Case III}}. 
\begin{lemma}\label{lem:maininequality}
Let $d\geq 2$. When $B'\leq 1/B_c(\Tree_{d+1})=\frac{d+1}{d-1}$, for $z>1$ it holds that $z^{1/d}> \displaystyle \frac{B'z+1}{z+B'}$. The inequality is reversed for $z<1$.
\end{lemma}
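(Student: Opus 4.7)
The plan is to reduce the lemma to a single-variable polynomial positivity question. Substituting $t = z^{1/d}$ (so that $z > 1 \Leftrightarrow t > 1$) and clearing the positive denominator $z + B' = t^d + B'$, the claimed inequality $z^{1/d} > (B'z+1)/(z+B')$ becomes $G(t) > 0$ for $t > 1$ and $G(t) < 0$ for $0 < t < 1$, where
\[
G(t) := t^{d+1} - B' t^d + B' t - 1.
\]
Since $G(1) = 0$, polynomial division yields the factorization $G(t) = (t-1)\, Q(t)$ with
\[
Q(t) := t^d + (1 - B')\sum_{k=1}^{d-1} t^k + 1.
\]
Both inequalities then reduce to the single claim that $Q(t) > 0$ for every $t > 0$ with $t \neq 1$.

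For $B' \leq 1$, all coefficients of $Q$ are non-negative and the positivity is immediate. The interesting range is $1 < B' \leq (d+1)/(d-1)$, where $B' - 1 \leq 2/(d-1)$, so the positivity of $Q$ follows from the stronger, $B'$-independent bound
\[
(d-1)(t^d + 1) > 2\sum_{k=1}^{d-1} t^k, \quad t > 0,\ t \neq 1.
\]
That this is the right target is not a coincidence: the bound encodes exactly the uniqueness threshold $B' = (d+1)/(d-1)$ for the ferromagnetic Ising model on $\TDary$, so one anticipates equality being approached only in the limit $t \to 1$ at the critical value of $B'$.

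To prove the $B'$-independent bound, I introduce the auxiliary polynomial
\[
H(t) := (d-1) t^{d+1} - (d+1) t^d + (d+1) t - (d-1),
\]
which a short expansion shows equals $(t-1)$ times $(d-1)(t^d + 1) - 2\sum_{k=1}^{d-1} t^k$ (using $t(t^{d-1}-1)/(t-1) = t + t^2 + \cdots + t^{d-1}$). A direct differentiation gives $H(1) = H'(1) = 0$ and
\[
H''(t) = d(d-1)(d+1)\, t^{d-2}(t-1).
\]
Hence $H'$ is decreasing on $(0,1)$ and increasing on $(1,\infty)$, so it attains its unique minimum at $t = 1$ with value $0$; in particular $H'(t) > 0$ for every $t \neq 1$. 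Thus $H$ is strictly increasing on $(0,\infty)$, and $H(1) = 0$ forces $H(t)$ to have the same sign as $t-1$. Dividing the displayed identity by $t-1$ (which has the matching sign) delivers the required strict inequality.

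I do not anticipate any serious obstacle. The only non-routine observation is the reduction to the $B'$-independent bound, which isolates the role of the critical value $B' = (d+1)/(d-1)$ and turns the remainder of the analysis into a short verification via the double zero of $H$ at $t = 1$.
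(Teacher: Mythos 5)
Your proof is correct, and it takes a genuinely different route from the paper, even though both arguments revolve around the same polynomial. The paper reduces $z<1$ to $z>1$ by the inversion $z \mapsto 1/z$ and then shows directly that $f(w)=w^{d+1}-B'w^{d}+B'w-1$ is positive for $w>1$ via a three-level derivative argument ($f''>0$ for $w>1$, hence $f'$ increasing, $f'(1)=(d+1)-(d-1)B'\geq 0$, hence $f>0$); the hypothesis $B'\le (d+1)/(d-1)$ enters through the sign of $f'(1)$. You instead factor $G(t)=(t-1)Q(t)$, which handles $z>1$ and $z<1$ symmetrically, and reduce the positivity of $Q$ to the $B'$-free extremal case $B'=(d+1)/(d-1)$; the residual polynomial $H$ you analyze is precisely $(d-1)\,f$ at this critical value, and your derivative analysis of $H$ mirrors the paper's treatment of $f$. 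Your factorization-and-reduction step is a clean way to isolate the role of the critical threshold and avoid the separate $z<1$ case, at the cost of a little extra algebraic bookkeeping. As a side remark, your $B'$-independent bound also admits a one-line proof that would replace the $H$ analysis entirely:
\[
(d-1)(t^{d}+1)-2\sum_{k=1}^{d-1}t^{k}=\sum_{k=1}^{d-1}\bigl(t^{k}-1\bigr)\bigl(t^{d-k}-1\bigr)\geq 0,
\]
with equality if and only if $t=1$, since each summand is a product of two factors of the same sign.
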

We give the proof of Lemma~\ref{lem:maininequality} after observing that $B'\leq \frac{d+1}{d-1}$ reduces to $\sqrt{B_1 B_2}\geq \frac{\sqrt{d}-1}{\sqrt{d}+1}$. This concludes the proof of Lemma~\ref{lem:seccritical}.
\end{proof}
\begin{proof}[Proof of Lemma~\ref{lem:maininequality}.]
The case $z<1$ reduces to the case $z>1$ using the inversion $z\leftarrow 1/z$. Thus we focus on proving the Lemma for $z>1$.  
Let $f(w):=w^{d+1}-B' w^d+B' w-1$. The lemma reduces to proving  $f(w)>0$ when $w>1$ (under the substitution  $w=z^{1/d}$). We have
\begin{align*}
f'(w)&=(d+1)w^d-dB'w^{d-1}+B'.\\
f''(w)&=dw^{d-2}\big((d+1)w-(d-1)B'\big).
\end{align*}
The assumptions imply that $(d+1)-(d-1)B'\geq 0$. Hence, when $w>1$, we have $f''(w)>f''(1)$. Note that $f''(1)\geq 0$ only if $(d+1)-(d-1)B'\geq 0$. Thus, $f''(w)$ is strictly positive when $w>1$ and hence $f'(w)$ is strictly increasing. It follows that for $w>1$, we have $f'(w)> f'(1)$ and
\[f'(1)=(d+1)-(d-1)B'\geq0.\]
Thus, $f$ is strictly increasing for $w>1$, yielding $f(w)>f(1)=0$.
\end{proof}

\subsection{Some Remarks}\label{sec:remarks}
Here we give the idea behind the arguments establishing Lemmas~\ref{lem:secondmaxising} and~\ref{lem:secondmaxhardcore}. As observed in Section~\ref{sec:logsecmax}, these lemmas boil down to proving that, for $(\alpha,\beta)=(p^+,p^-)$, the only critical point of $\phi_2(\gamma,\delta)$ is at $(\gamma,\delta)=(\alpha^2,\beta^2)$ under the respective hypotheses.

Our point of departure is once again the equations in \eqref{recurone}, albeit we proceed with a more thorough analysis. Looking at the argument for the case $\frac{\sqrt{d}-1}{\sqrt{d}+1}\leq \sqrt{B_1B_2}$, the point of the proof which is subject to tighter analysis is the use of the AM-GM inequalities. These are weak if the ratios $r_1/r_4,c_1/c_4$ are much bigger than $B_2/B_1$. In Section~\ref{sec:secondmaxisingproof}, we turn this weakness into our favor. Namely, for the antiferromagnetic Ising model without external field, in the region $0<B<\frac{\sqrt{d}-1}{\sqrt{d}+1}$, the densities $p^\pm$ are heavily biased towards 1 and 0 respectively, and this reflects at the values of $r_1,r_4,c_1,c_4$. This observation can be turned into a simple proof of Lemma~\ref{lem:secondmaxising}. For the proof of Lemma~\ref{lem:secondmaxhardcore}, it is easier to get an explicit handle on the values $p^\pm$ (and thus on the bias of the $r_i,c_j$), which can be used to analyze the equations \eqref{recurone} (see Section~\ref{sec:hardhard}). 

Further, let us make some small minor observations. The reader may have noticed that the proof of Lemma~\ref{lem:seccritical} applies in a more general setup than the one stated in Condition~\ref{cond:maxima}. Namely, in the region $\sqrt{B_1B_2}\geq \frac{\sqrt{d}-1}{\sqrt{d}+1}$, Lemma~\ref{lem:seccritical} holds  for all values of $\alpha,\beta$ such that $1>\alpha,\beta>0$ and not just for $(\alpha,\beta)=(p^\pm,p^\mp)$ (note, this range of $B_1,B_2$ covers the whole uniqueness region even with external field). It is thus conceivable that, when $\sqrt{B_1B_2}\geq \frac{\sqrt{d}-1}{\sqrt{d}+1}$, the bound on the partition function $Z^{\alpha,\beta}_G$ in Lemma~\ref{lem:smallgraph} holds for $1>\alpha,\beta>0$  (we do not attempt to show a more general version of Lemma~\ref{lem:smallgraph} in
this paper since we are only interested in the values of $(\alpha,\beta)$ which maximize $\E_\G[Z^{\alpha,\beta}_G]$).

\section{NP-Hardness Results}
\label{sec:NP-outline}
In this section, we give the proofs for our NP-hardness results.  We will prove the following theorem, which allows us to prove Theorems~\ref{thm:hardcore},~\ref{thm:Ising},~\ref{thm:2spin}.
\begin{theorem}\label{thm:mainNP}
Let $\Delta\geq 3$. Assume that $(B_1,B_2,\lambda)$ lie in the non-uniqueness region of the infinite tree $\TD$. Moreover, assume that Condition~\ref{cond:maxima} holds. Then, unless NP$=$ RP, there is no FPRAS for the partition function of the 2-spin model with parameters $B_1,B_2,\lambda$ in graphs with maximum degree $\Delta$.
\end{theorem}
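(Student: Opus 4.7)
The proof follows the reduction scheme of Sly~\cite{Sly10} for the hard-core model, with the required second-moment inputs supplied by the results of the preceding sections. The key structural input is Theorem~\ref{thm:bimodality}: on a typical gadget $G\sim\G(n,\Delta)$, the Gibbs measure $\mu_G$ concentrates exponentially on two mirror-image ``phases'' $\mathcal{Y}^+,\mathcal{Y}^-$, consisting of configurations whose $-1$-densities on $(V_1,V_2)$ are approximately $(p^+,p^-)$ and $(p^-,p^+)$, respectively. Under the hypotheses of Theorem~\ref{thm:mainNP}, the proof of Theorem~\ref{thm:bimodality} combines Lemma~\ref{lem:firstmomentmax} (to pin down the location of the first-moment maximum at the unbalanced pairs $(p^{\pm},p^{\mp})$), the assumed Condition~\ref{cond:maxima} (so that uncorrelated pairs dominate the second moment), and Lemmas~\ref{lem:ratiofirstsecond}--\ref{lem:smallgraph} (which yield concentration of $Z^{p^\pm,p^\mp}_G$ via the small-graph-conditioning method), exactly as sketched in Section~\ref{sec:proof-approach}.

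Given this bimodal structure, the plan is to carry out Sly's reduction. Starting from an instance $H$ of an NP-hard problem --- MAX-CUT on cubic graphs is convenient --- one builds a graph $H'$ of maximum degree $\Delta$ by (i) replacing each $v\in V(H)$ with a copy $G_v$ of a bipartite random gadget of degree slightly less than $\Delta$, leaving a small number of free half-edges to serve as ``ports'', and (ii) inserting a constant-size matching between the ports of $G_u$ and $G_v$ for each edge $(u,v)\in E(H)$. Every configuration of $H'$ then carries a phase assignment $\tau:V(H)\to\{+,-\}$ recording which of $\mathcal{Y}^{\pm}$ each gadget lies in. Because the model is antiferromagnetic ($B_1B_2<1$), cross-edges joining opposite-phase gadgets predominantly carry $(-,+)$ endpoints and thus contribute more weight than cross-edges joining same-phase gadgets; the aim is to show that $Z_{H'}$ is exponentially dominated by those~$\tau$ that realize a maximum cut of~$H$. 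Applying the standard self-reducibility reduction to the hypothesized $\fpras$ for $Z_{H'}$ yields an efficient approximate sampler from $\mu_{H'}$, so reading off the phase assignment of a single sample recovers a constant-factor approximation to MAX-CUT of $H$, which is NP-hard, contradicting $\textsf{NP}=\textsf{RP}$.

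The main obstacle is quantifying the bimodality of Theorem~\ref{thm:bimodality} robustly enough to survive the boundary perturbations introduced by the inter-gadget matchings and the global conditioning implicit in Sly's reduction. Concretely, one must re-run the first- and second-moment analysis on the modified random-graph distribution used in~\cite{Sly10}; this is precisely why Lemma~\ref{lem:ratiofirstsecond} is formulated for a general $\omega$ that covers the modified model. The required extensions of Lemma~\ref{lem:firstmomentmax} and Condition~\ref{cond:maxima} to the perturbed distribution should follow by a continuity/stability argument from the results already established, since the perturbation only moves the relevant $(\alpha,\beta)$ by $o(1)$ and the critical-point analysis of Section~\ref{sec:logsecmax} (in particular the strict concavity of the entropy function $g$ and the strict inequality of Lemma~\ref{lem:technicalinequality}) is stable under such perturbations. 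The small-graph-conditioning step of Section~\ref{sec:smallgraph} likewise extends. Once these local-to-global extensions are in place, the MAX-CUT-recovery argument of~\cite{Sly10} goes through essentially verbatim, completing the reduction.
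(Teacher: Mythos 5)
Your high-level plan --- carry out Sly's MAX-CUT reduction using bimodality of the gadget --- is the right one and matches the paper's. But as written there is a genuine gap: you describe the gadget as a bipartite random graph of degree slightly below $\Delta$ with a few free half-edges serving directly as ``ports''. This omits the essential ingredient of Sly's construction that the paper (Section~\ref{sec:NP-outline}) preserves: the gadget is $\overline{G}$ (with high-degree core $W_\pm$ and low-degree boundary $U_\pm$) together with $k$ disjoint $(\Delta-1)$-ary trees of depth $\ell$ whose leaves form a partition of $U$; the \emph{roots} $R$ of those trees are the ports. The whole point of the appended trees is Lemma~\ref{lem:gadget2}, item~2: conditioned on the phase, the spins on $R$ are approximately an \emph{independent} product measure with marginals $q^\pm$. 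That decoupling is what makes the phase of one gadget interact with its neighbours in a controlled, near-product way, and it is obtained only by routing the boundary influence through a long tree and invoking the non-reconstruction estimate (Lemma~\ref{lem:Sly42} / \cite[Lemma 4.2]{Sly10}, which requires the extremality of $\hat\mu^\pm$). If you instead hang ports directly off the bipartite core, spins at the ports remain strongly correlated conditioned on the phase, and there is no way to convert the global bimodality into the quantitative statement that $Z_{H'}$ concentrates on max-cut phase assignments. You should add the appended-tree layer and the non-reconstruction step.

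A second, smaller inaccuracy: you write that the moment estimates transfer to the perturbed gadget distribution ``by a continuity/stability argument'' and that Lemma~\ref{lem:ratiofirstsecond} is ``formulated for a general $\omega$ that covers the modified model.'' That is not what happens. Lemma~\ref{lem:ratiofirstsecond} concerns $\G(n,\Delta)$ only; the paper proves separate analogues for the gadget distribution $\overline{\mathcal{G}}$ --- the conditional first moment (Lemma~\ref{lem:modfirst}), conditional second moment (Lemma~\ref{lem:modsecond}), the resulting variance ratio (Lemma~\ref{lem:rfsgadget}), and the small-graph conditioning bound (Lemma~\ref{lem:smallgraphgadget}) --- by explicit recomputation, using the fact that the boundary $U$ has size $m' = o(n^{1/4})$ so the contribution of fixing $\sigma_U=\eta$ is a multiplicative factor asymptotic to a product over $U$. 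These are clean computations, not a soft perturbation argument, and in fact Lemma~\ref{lem:modfirst} is exactly what identifies the product measure $Q^\pm_U$ that feeds into the tree/non-reconstruction step. So while your instinct that the gadget-level estimates should follow from the $\G(n,\Delta)$ ones is not wrong in spirit, you should replace ``continuity argument'' with the explicit Lemma~\ref{lllo}-style multinomial ratio estimates, and you must carry the conclusion all the way to the product form on $U$ since that is what the tree argument consumes.

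One more caution: you assert that cross-edges between opposite-phase gadgets contribute more weight ``because $B_1B_2<1$''. This intuition is correct for the Ising model without field, but for general antiferromagnetic $2$-spin models (which Theorem~\ref{thm:mainNP} covers) the phase-interaction bookkeeping is done in \cite{Sly10} via the product measure on ports, not by a direct edge-weight comparison; it is the asymmetry $q^+\neq q^-$ that drives the MAX-CUT dominance. This is another reason the product-measure-on-$R$ step cannot be skipped.
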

With Theorem~\ref{thm:mainNP} at hand, it is straightforward to prove  Theorems~\ref{thm:hardcore},~\ref{thm:Ising},~\ref{thm:2spin}.

\begin{proof}[Proofs of  Theorems~\ref{thm:hardcore},~\ref{thm:Ising},~\ref{thm:2spin}.]
Theorem~\ref{thm:2spin} follows immediately from Lemma~\ref{lem:secondmax2spin} and Theorem~\ref{thm:mainNP}.
Theorem~\ref{thm:hardcore} was known to hold for $\Delta=3$ and $\Delta\geq 6$ (\cite{Sly10}, \cite{Galanis}). Lemma~\ref{lem:secondmaxhardcore} in combination with Theorem~\ref{thm:mainNP} prove the cases $\Delta=4,5$ as well. 

We now prove Theorem~\ref{thm:Ising}. In the case of Ising model with no external field, note that Lemma~\ref{lem:secondmax2spin} and Theorem~\ref{thm:mainNP} give hardness for $\Delta\geq 3$ when 
\begin{equation}
\frac{\sqrt{d}-1}{\sqrt{d}+1}\leq B<\frac{d-1}{d+1} \mbox{ where } d=\Delta-1.\label{eq:intervals}
\end{equation}
The theorem will be proved once we show that this can be extended to the region $0<B<\frac{\sqrt{d}-1}{\sqrt{d}+1}$. Observe that hardness for $B,\Delta$ gives hardness for $B,\Delta+1$. It is easy to check that for $d\geq 2$, it holds that
\[\frac{\sqrt{d+1}-1}{\sqrt{d+1}+1}<\frac{d-1}{d+1}<\frac{d}{d+2},\]
so that the intervals corresponding to $\Delta$ and $\Delta+1$ are overlapping for $\Delta\geq 3$. Thus it suffices to check NP hardness for $d=2$ and $0<B<\frac{\sqrt{d}-1}{\sqrt{d}+1}$. This follows from Lemma~\ref{lem:secondmaxising} and Theorem~\ref{thm:mainNP}, thus completing the proof.
\end{proof}

We thus focus on proving Theorem~\ref{thm:mainNP}. Sly's reduction \cite{Sly10}, while presented for the particular case of the hard-core model, can be used to show hardness for  general antiferromagnetic 2-spin models. The only part in Sly's reduction which needs modifications is in establishing the properties of the gadget he uses. Let us first describe the gadget used in \cite{Sly10}. 

The construction of the gadget has two parameters $0<\theta,\psi<1/8$. Let $k:=(\Delta-1)^{\lfloor \theta \log_{\Delta-1}n\rfloor}$ and $\ell:=2\lfloor \frac{\psi}{2}\log_{\Delta-1}n\rfloor$ and let $m':=k(\Delta-1)^{\ell}$. Note that $m'=o(n^{1/4})$. The gadget is constructed in two steps: first, a random bipartite graph $\overline{G}$ 
with $n+m'$ vertices on each side is constructed. The two sides of the graph will be labelled with $+,-$. For $s\in\{+,-\}$, let the vertices on the $s$-side be $W_s\cup U_s$, where
$|W_s|=n$ and $|U_s|=m'$. The edges of the graph are the
union of $\Delta-1$ uniformly random perfect matchings between $W_+\cup U_+$ and
$W_-\cup U_-$, together with a uniformly random perfect
matching between $W_+$ and $W_-$. Thus, in the resulting random graph $\overline{G}$ , the vertices
in $W:=W_+\cup W_-$ have degree
$\Delta$ and the vertices in $U:=U_+\cup U_-$ have degree $\Delta-1$. We denote the graph distribution defined by this construction as $\overline{\mathcal{G}}$.

	The second part of the construction appends complete trees of depth $\ell$ to $\overline{G}$  in the following manner.  For $s\in\{+,-\}$, partition the vertices of $U_s$ into $k$ groups of $(\Delta-1)^\ell$ vertices.  For each group we create
a new $(\Delta-1)$-ary tree of even depth $\ell$ (where $\ell$ is an even integer as specified earlier). The leaves of this tree are the group of vertices in $U_s$, the other vertices
of the tree are new.  After the addition of these $2k$ trees to $\overline{G}$ the vertices of $U_s$ have degree $\Delta$; in fact, all vertices in this new graph have degree $\Delta$ except for the roots of the  trees which have degree $\Delta-1$.  Let $R_+$ (respectively, $R_-$) denote the roots of those trees whose leaves are a subset of $U_+$ ($U_-$).  Let $R:=R_+\cup R_-$. Informally, the addition of the trees makes it easier to prove stronger concentration properties for the spins of the vertices with degree $\Delta-1$ (which are now the roots of the trees).

Analogously to \cite{Sly10}, for a configuration $\sigma$ on $H$, the phase $Y(\sigma)$ of $\sigma$ is defined to be $+$ if the number of vertices assigned $-1$ in $W_+$ is greater than the number of vertices assigned $-1$ in $W_-$, otherwise $Y(\sigma)=-$. In other words, the phase $Y(\sigma)$ of a configuration $\sigma$ simply points to the set  between  $W_+,W_-$ with the greatest number of  vertices assigned $-1$ in $\sigma$. 

The properties of Sly's gadget are stated in the following lemma (proved in Section~\ref{sec:sketchproof}). We use the following notation: for a configuration $\sigma$ on $H$, $\sigma_R$ denotes the restriction of $\sigma$ to vertices in $R$. Moreover, for a spin $s\in\{-1,+1\}$, $\sigma^{-1}(s)$ denotes the vertices assigned the spin $s$.

\begin{lemma}[Analogue of {\cite[Theorem 2.1]{Sly10}}]\label{lem:gadget2}
Let $\Delta\geq 3$. Assume that $(B_1,B_2,\lambda)$ lie in the non-uniqueness region of the infinite tree $\TD$. Moreover, assume that Condition~\ref{cond:maxima} holds. There exist constants $\theta(\Delta,B_1,B_2,\lambda)$, $\psi(\Delta,B_1,B_2,\lambda)>0$ such that for all sufficiently large $n$  the graph $H$ satisfies with probability $1-o_n(1)$ the following:
\begin{enumerate}
\item The two phases occur with roughly equal probability, i.e., for $i\in\{+,-\}$, we have \label{it:phaseprob2}
\[\mu_H\big(Y(\sigma)=i\big)\geq \frac{1}{n}.\]
\item Conditioned on the phase $i\in\{+,-\}$, the spins of vertices in $R$ are approximately independent, that is,   \label{it:approxind2}
\[\max_{\eta\in \{-1,+1\}^{R}}\Big|\frac{\mu_H\big(\sigma_R=\eta\, |\, Y(\sigma)=i\big)}{Q_R^{i}(\eta)}-1\Big|\leq n^{-2\theta},\]
\end{enumerate}
where $Q^{i}_R$ is the following product distribution on configurations $\eta: R_+\cup R_-\rightarrow \{\pm 1\}$:
\[Q^{i}_R(\eta)=
(q^i)^{ |\eta^{-1}(-1)\cap R_+| }
(1-q^i)^{ |\eta^{-1}(+1)\cap R_+| }
(q^{-i})^{ |\eta^{-1}(-1)\cap R_-| }
(1-q^{-i})^{ |\eta^{-1}(+1)\cap R_-| }.
\]
Recall that for $i\in\{+,-\}$, $q^i$ is the probability that the root of the infinite $(\Delta-1)$-ary tree $\TDary$ is assigned the spin $-1$ in the extremal measure $\hat{\mu}^{j}$ (see Section~\ref{sec:tree-recursions}).
\end{lemma}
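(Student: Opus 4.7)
The strategy follows the blueprint of Sly's argument \cite{Sly10} for the hard-core model, with the general two-spin ingredients supplied by Sections~\ref{sec:moments}--\ref{sec:smallgraph} replacing the hard-core-specific computations. The first step is to repeat the first- and second-moment analysis of Section~\ref{sec:moments} for the modified distribution $\overline{\mathcal{G}}$. Since $\overline{\mathcal{G}}$ differs from $\mathcal{G}(n,\Delta)$ only by one additional matching on $W$ and by $m' = o(n^{1/4})$ extra degree-$(\Delta-1)$ vertices on each side, the rate functions and their maximisers in Lemmas~\ref{lem:firstmomentmax} and \ref{lem:secondmax2spin}, the constant in Lemma~\ref{lem:ratiofirstsecond}, and the small-graph conditioning of Lemma~\ref{lem:smallgraph} all transfer with the same proofs up to lower-order corrections. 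Consequently, a.a.s.\ over $\overline{\mathcal{G}}$ one has $Z^{\alpha,\beta}_{\overline{G}}\geq \frac{1}{n}\,\mathbb{E}_{\overline{\mathcal{G}}}[Z^{\alpha,\beta}_{\overline{G}}]$ at $(\alpha,\beta)=(p^+,p^-)$, and $\mu_H$ is concentrated on $\Sigma^{p^+,p^-}\cup\Sigma^{p^-,p^+}$. Part~1 of the lemma then follows at once: $\overline{\mathcal{G}}$ is invariant under the involution $W_+\leftrightarrow W_-,\ U_+\leftrightarrow U_-$, so $\mathbb{E}[Z^+_H]=\mathbb{E}[Z^-_H]$, and the concentration above forces $\mu_H(Y(\sigma)=\pm)=1/2-o(1)\geq 1/n$ a.a.s.

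For part~2 I exploit the decomposition of $H$ into the bulk $\overline{G}$ plus $k$ disjoint $(\Delta-1)$-ary trees of even depth $\ell$ whose leaves partition $U$. Conditioning on $\sigma_U$, the tree contribution to $w_H$ factorises into a product over the individual trees, and within each tree the marginal at the root $r\in R$ is obtained by applying the $\ell$-step tree recursion on $\hat{\mathbb{T}}_\Delta$ to the boundary $\sigma_U$ restricted to the leaves of that tree. I therefore need (i) that, conditional on phase $+$, the boundary $\sigma_U$ is close in total variation to a product measure with marginals $q^+$ on $U_+$ and $q^-$ on $U_-$, and (ii) that the tree recursion contracts to the extremal marginals $q^{\pm}$ at geometric rate. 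For (i), I refine the first-moment computation by tracking the additional densities $\alpha',\beta'$ of $-1$ on $U_+,U_-$; applying the Lagrange-multiplier analysis of Section~\ref{sec:maxentropy} identifies the unique maximiser of $\mathbb{E}_{\overline{\mathcal{G}}}[Z^{p^+,p^-,\alpha',\beta'}_{\overline{G}}]$ at $(\alpha',\beta')=(q^+,q^-)$ (the $U$-vertices sit one tree-recursion step away from the $W$-vertices, and so carry the $\hat{\mu}^\pm$ root marginals). The quadratic decay of $g$ near its maximiser (Lemma~\ref{lem:gmax}) combined with the second-moment concentration then upgrades this to approximate independence of $\sigma_U$ with the correct marginals, with error polynomially small in $n$. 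For (ii), Lemma~\ref{lem:technicalinequality} provides the key input: the derivative of the two-step recursion at $q^\pm$ equals $(\Delta-1)^2\omega<1$, so after $\ell=\Theta(\log n)$ iterations any boundary is pulled within $((\Delta-1)^2\omega)^{\ell/2}=n^{-\Omega(\psi)}$ of $q^{\pm}$. Choosing $\psi$ so that this is $\leq n^{-2\theta}$, a union bound over the $|R|=O(n^{\theta})$ trees and a term-by-term comparison of the product distributions yields the claimed bound.

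The main technical obstacle is step~(i): I must lift the $(\alpha,\beta)$-concentration established in Lemma~\ref{lem:smallgraph} to a joint concentration at the finer $(\alpha,\beta,\alpha',\beta')$-level, which requires extending the second-moment plus small-graph-conditioning machinery to the refined random variable $Z^{\alpha,\beta,\alpha',\beta'}_{\overline{G}}$. The additional bookkeeping mirrors Section~\ref{sec:calculating-moments}, but the algebraic content is the same as in Lemma~\ref{lem:secondmax2spin}: the negative-definiteness of the Hessian of the joint rate function at $(p^+,p^-,q^+,q^-)$ reduces, after the Lagrange-multiplier elimination, to the stability inequality $(\Delta-1)^2\omega<1$ of Lemma~\ref{lem:technicalinequality}. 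Once this joint concentration is in hand, (i) is immediate, and combining it with the tree-recursion contraction in (ii) completes the proof exactly as in \cite[Section~4]{Sly10}.
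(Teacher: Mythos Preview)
Your proposal follows the correct blueprint (Sly's two-step program), but the execution of part~2 has two genuine gaps.

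\textbf{Step (ii) is the serious one.} The inequality $(\Delta-1)^2\omega<1$ of Lemma~\ref{lem:technicalinequality} is an infinitesimal statement: it says the two-step recursion is a local contraction \emph{at the fixed point} $q^{\pm}$. It does \emph{not} imply that ``any boundary is pulled within $((\Delta-1)^2\omega)^{\ell/2}$ of $q^{\pm}$'' --- for instance, the all-$+1$ boundary on the leaves does not converge to $q^{+}$. What is actually required is that the root marginal, viewed as a random variable when the leaves are drawn from the product measure $Q^{\pm}_U$, is concentrated around $q^{\pm}$ with tails small enough to kill the polynomial slack coming from step~(i). The paper invokes the non-reconstruction result of Lemma~\ref{lem:Sly42}, which gives the much stronger doubly-exponential bound $\mathbb{P}\big[|X_{\rho,\ell,\pm}-q^{\pm}|\geq e^{-\zeta_1\ell}\big]\leq e^{-e^{\zeta_2\ell}}$. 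That strength is essential: the control in~(i) is only up to a multiplicative factor of order $n$, so the set of ``bad'' leaf configurations must have super-polynomially small mass under $Q^{\pm}_U$. Your linear-rate argument cannot deliver this.

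\textbf{Step (i) is also too coarse.} Concentrating the densities $(\alpha',\beta')$ on $U_+,U_-$ around $(q^+,q^-)$ does not yield approximate independence of the spins in $U$; it only says the empirical marginals are right. The paper instead fixes the \emph{full} boundary $\eta\in\{-1,+1\}^U$ and proves (Lemmas~\ref{lem:modfirst}--\ref{lem:smallgraphgadget}) that $Z^{p^+,p^-}_{\overline G}(\eta)\geq n^{-1/2}\,\E_{\overline{\G}}[Z^{p^+,p^-}_{\overline G}(\eta)]$ a.a.s.\ via small graph conditioning applied to $Z^{\alpha,\beta}_{\overline G}(\eta)$ itself, together with $\E_{\overline{\G}}[Z^{p^+,p^-}_{\overline G}(\eta)]\propto Q^{\pm}_U(\eta)$. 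This gives per-$\eta$ control of $\mu_H(\sigma_U=\eta\mid Y=\pm)$ within a factor $n$ of $\nu^{\pm}(\eta)$, which is exactly what the doubly-exponential bound in~(ii) then absorbs.

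A minor point on part~1: small graph conditioning only gives the one-sided bound $Z\geq r(n)\,\E[Z]$, not $(1+o(1))$-concentration, so your claim $\mu_H(Y=\pm)=1/2-o(1)$ is too strong. The paper obtains $\mu_H(Y=\pm)\geq 1/n$ by combining the lower bound from Lemma~\ref{lem:smallgraphgadget} with a Markov upper bound on the opposite phase.
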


%The property of the gadget $H$ which is required by the reduction in \cite{Sly10} is the following: in the Gibbs distribution of the general 2 spin model, conditioned on the phase $P$, the spins of the vertices in $R:=R_1\cup R_2$ should behave roughly as a properly defined product measure on $\{-1,+1\}^R$. Moreover, the marginal probability of a vertex in $R_1$ having the spin $-$ should be different than the respective quantity of a vertex in $R_2$. At an intuitive level, this property allows $H$ to function as a boolean gadget (depending on whether $P=1$ or $P=2$). 

We now describe briefly how these properties  of $H$ are established in \cite{Sly10}. This is done in two steps: 
\begin{enumerate}
\item First prove that in the graph $\overline{G}$, conditioned on the phase $i\in \{+,-\}$, the spins of the vertices in $U$ are asymptotically (jointly) distributed as a product distribution $Q^i_U$ on $\{-1,+1\}^{U}$. The latter distribution is obtained by replacing $R$ with $U$ in the definition of $Q^i_R$ in Lemma~\ref{lem:gadget2}. Roughly, this can be obtained as follows from what we have seen so far: in Section~\ref{sec:proof-approach} we showed that for a random $\Delta$-regular bipartite graph, with high probability over the choice of the graph, in the Gibbs distribution the total weight of configurations from the set $\Sigma^{p^+,p^-}$ dominate exponentially over the rest. An analogous phenomenon occurs for a graph $\overline{G}\sim \overline{\G}$, which allows to  quantify the aforementioned behavior, see Section~\ref{sec:modifiedaaa} for more details.
\item Translate the results to $R$. This is accomplished in a crafty way in \cite{Sly10}: the product measures $Q^{i}_U$ on $\{-1,+1\}^{U}$  are analogous to the even and odd boundaries in the infinite $(\Delta-1)$-ary tree, providing for an elegant translation of the results (see the proof of the second part of Lemma~\ref{lem:gadget2} for more details). 
Moreover, the extremality of the measures implies that the roots of the trees, conditioned on the phase, are strongly concentrated around their  expected value. The latter part follows after studying non-reconstruction in the extremal measures and proving that the correlation of the spins of the vertices in $R$ to the spins of the vertices in $U$ decays doubly exponentially in $\ell$, see Section~\ref{sec:reco}. Crucially, this allows to bound the distance between the (true) distribution of the spins of the vertices in $R$ (conditioned on the phase $i$) and the product distribution $Q^i_R$ by an inverse polynomial function of $n$. 
\end{enumerate}
In Section~\ref{sec:modifiedaaa}, we give the lemmas in \cite{Sly10} which require adjustment in the 2-spin model. Their proofs are analogous to those in \cite{Sly10}, by comparing the quantities in the graph distribution $\overline{\mathcal{G}}$ with those in the distribution $\G$. In Section~\ref{sec:reco}, we describe the non-reconstruction results that Sly uses. Finally, in Section~\ref{sec:sketchproof}, we give the proof of Lemma~\ref{lem:gadget2} and conclude Theorem~\ref{thm:mainNP}, checking that Sly's arguments  go through in our setting as well.

\subsection{The Partition Function for Graphs Sampled from $\overline{\mathcal{G}}$}\label{sec:modifiedaaa}
In this section, we give an outline of the lemmas which are needed to extend Sly's reduction for general 2-spin models. 

Let $\eta$ be an assignment of spins to $U$, i.e., $\eta\in\{-1,+1\}^U$. For a graph $\overline{G}\sim\overline{\G}$, let $Z^{\alpha,\beta}_{\overline{G}}(\eta)$ be the partition function over
configurations which agree with $\eta$ on $U$, i.e., $Z^{\alpha,\beta}_{\overline{G}}(\eta)$ is the total weight of configurations in $\sigma\in\{-1,+1\}^{V(\overline{G})}$ subject to $|N_-(\sigma)\cap W_+|=\alpha n$, $|N_-(\sigma)\cap W_-|=\beta n$ and $\sigma_U=\eta$.

Denote by $\eta_1^+$ the number of vertices in $U_+$ which are assigned spin $+1$ and let $\eta_1^- = m'-\eta_i^+$ (the number of vertices in $U_+$ that are assigned spin $-1$). Define similarly $\eta_2^+,\eta_2^-$ for $U_-$. We have the following analogue of \cite[Lemma 3.1]{Sly10}. 
\begin{lemma}\label{lem:modfirst}
For any $\alpha,\beta$ and all $\eta\in\{-1,+1\}^U$ we have
\begin{multline*}
\frac{\E_{\overline{\G}}[Z^{\alpha,\beta}_{\overline{G}}(\eta)]}{\E_\G[Z^{\alpha,\beta}_{G}]}=\\ (1+o(1))
\left(\frac{\alpha^{\eta_1^-}(1-\alpha)^{\eta_1^+}\beta^{\eta_2^-}(1-\beta)^{\eta_2^+}}{(\alpha-x^*)^{\eta_1^-}
(1-\alpha-\beta+x^*)^{\eta_1^+-\eta_2^-}
(\beta-x^*)^{\eta_2^-}}B_2^{\eta_1^+ - \eta_2^-}
\right)^{\Delta-1}\lambda^{\eta_1^-+\eta_2^-},
\end{multline*}
where $x^*$ is the (unique) non-negative solution of the equation $B_1B_2(\alpha-x)(\beta-x)=x(1-\alpha-\beta+x)$. In particular, when $(\alpha,\beta)=(p^+,p^-)$, it holds that
\begin{equation}\label{eq:productmeasure}
\frac{\E_{\overline{\G}}[Z^{\alpha,\beta}_{\overline{G}}(\eta)]}{\E_\G[Z^{\alpha,\beta}_{G}]}= (1+o(1))\frac{C^*}{(1-q^+)^{m'}(1-q^-)^{m'}}(q^+)^{\eta_1^-}(1-q^+)^{\eta_1^+}(q^-)^{\eta_2^-}(1-q^-)^{\eta_2^+},
\end{equation}
where 
\begin{equation}\label{eq:cstar}
C^* = \left(\frac{(B_2+Q^+)(B_2+Q^-)}{B_2+Q^++Q^-+B_1Q^+Q^-}\right)^{(\Delta-1)m'}.
\end{equation}
\end{lemma}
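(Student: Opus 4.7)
The plan is to write down $\E_{\overline{\G}}[Z^{\alpha,\beta}_{\overline{G}}(\eta)]$ explicitly in the spirit of \eqref{eq:firstmoment1}, divide by $\E_{\G}[Z^{\alpha,\beta}_G]$, and identify the leading asymptotic by a pointwise-ratio Laplace argument. Because $\eta$ is fixed on $U$, a configuration contributing to $Z^{\alpha,\beta}_{\overline{G}}(\eta)$ is specified by its $(-1)$-subsets of $W_+$ and $W_-$, giving $\binom{n}{\alpha n}\binom{n}{\beta n}$ choices and a vertex activity $\lambda^{(\alpha+\beta)n+\eta_1^-+\eta_2^-}$. The random edges split in two: the single $W_+$--$W_-$ matching contributes in expectation exactly one copy of the sum appearing in \eqref{eq:firstmoment1}, because its two sides are marked with $\alpha n$ and $\beta n$ spin-$(-1)$ vertices just as in the original model; each of the $\Delta-1$ matchings between $W_+\cup U_+$ and $W_-\cup U_-$ contributes the analogous sum $S_{\mathrm{mod}}$, with $\alpha n,\beta n,n$ replaced by $a:=\alpha n+\eta_1^-$, $b:=\beta n+\eta_2^-$, $N:=n+m'$ and with $B_2$-exponent $N-a-b+\delta$. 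Dividing by \eqref{eq:firstmoment1}, the $W_+$--$W_-$ factor cancels one of the $\Delta$ terms in the denominator, so
\[
\frac{\E_{\overline{\G}}[Z^{\alpha,\beta}_{\overline{G}}(\eta)]}{\E_\G[Z^{\alpha,\beta}_G]}=\lambda^{\eta_1^-+\eta_2^-}\bigl(S_{\mathrm{mod}}/S_{\mathrm{orig}}\bigr)^{\Delta-1},
\]
where $S_{\mathrm{orig}}$ is the single-matching sum in \eqref{eq:firstmoment1}.

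Next comes a pointwise-ratio Laplace estimate. By Stirling, the summand of $S_{\mathrm{orig}}$ at $\delta=xn$ has logarithm $nf(x)+O(\log n)$ for a strictly concave $f$; setting $f'(x^\ast)=0$ yields exactly the equation $B_1B_2(\alpha-x^\ast)(\beta-x^\ast)=x^\ast(1-\alpha-\beta+x^\ast)$ of the lemma. The ratio $R(\delta)$ of the summand of $S_{\mathrm{mod}}$ to that of $S_{\mathrm{orig}}$ at the same $\delta$ is computed by applying the elementary estimate $m!/(m-k)!=m^{k}(1+O(k^2/m))$ to each of the four binomial coefficients, giving at $\delta=x^\ast n$
\[
R(x^\ast n)=B_2^{\eta_1^+-\eta_2^-}\bigl(\tfrac{\alpha}{\alpha-x^\ast}\bigr)^{\eta_1^-}\bigl(\tfrac{1-\alpha}{1-\alpha-\beta+x^\ast}\bigr)^{\eta_1^+}\bigl(\tfrac{\beta(1-\alpha-\beta+x^\ast)}{\beta-x^\ast}\bigr)^{\eta_2^-}(1-\beta)^{\eta_2^+},
\]
which matches the displayed rational factor of the lemma after regrouping the powers of $\alpha,1-\alpha,\beta,1-\beta$. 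Because the logarithmic derivative of $R$ in $\delta$ is $O(m'/n)$ and $S_{\mathrm{orig}}$ concentrates on a window of width $O(n^{1/2}\log n)$, $R$ is constant across this window up to a factor $1+o(1)$, giving $S_{\mathrm{mod}}/S_{\mathrm{orig}}=R(x^\ast n)(1+o(1))$ and the first formula of the lemma.

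For \eqref{eq:productmeasure}, specialise $(\alpha,\beta)=(p^+,p^-)$ and set $D:=B_2+Q^++Q^-+B_1Q^+Q^-$. Using \eqref{eq:pppm}, direct substitution shows $x^\ast=B_1Q^+Q^-/D$ satisfies $\alpha-x^\ast=Q^+/D$, $\beta-x^\ast=Q^-/D$, $1-\alpha-\beta+x^\ast=B_2/D$, so it is the critical point. Plugging these into the formula turns the four rational factors into $(1+B_1Q^-)^{\eta_1^-}$, $((B_2+Q^-)/B_2)^{\eta_1^+}$, $(B_2(1+B_1Q^+)/D)^{\eta_2^-}$, $((B_2+Q^+)/D)^{\eta_2^+}$; combined with the prefactor $B_2^{\eta_1^+-\eta_2^-}$ all explicit $B_2$ powers outside $D$ cancel. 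Raising to the $(\Delta-1)$-th power and using the tree recursion $\lambda(1+B_1Q^\mp)^{\Delta-1}=Q^\pm(B_2+Q^\mp)^{\Delta-1}$ from \eqref{eq:recurone} absorbs $\lambda^{\eta_1^-+\eta_2^-}$ and leaves $(Q^+)^{\eta_1^-}(Q^-)^{\eta_2^-}$ together with powers of $(B_2+Q^+)$, $(B_2+Q^-)$ and $D$. By $\eta_1^-+\eta_1^+=\eta_2^-+\eta_2^+=m'$ the $\eta$-independent exponents collapse into the constant $C^\ast$ of \eqref{eq:cstar}, and $(Q^+)^{\eta_1^-}(Q^-)^{\eta_2^-}$ becomes the required product measure after distributing $1/\bigl((1-q^+)^{m'}(1-q^-)^{m'}\bigr)$ through $Q^\pm=q^\pm/(1-q^\pm)$ and using $\eta_1^+-m'=-\eta_1^-$, $\eta_2^+-m'=-\eta_2^-$.

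The delicate step is the Laplace estimate: one needs $R$ to be uniformly close to $R(x^\ast n)$ across the concentration window of $S_{\mathrm{orig}}$. This follows from $f''(x^\ast)=\Theta(1)$ together with $\eta_j^\pm=O(m')=o(n^{1/4})$, which bounds the variation of $\log R$ across the window by $O(m'n^{-1/2}\log n)=o(1)$, and it is the only point in the proof where the choice $m'=o(n^{1/4})$ is used.
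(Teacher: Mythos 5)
Your proof is correct and follows essentially the same route as the paper's. Both write the ratio as $\lambda^{\eta_1^-+\eta_2^-}\bigl(S_{\mathrm{mod}}/S_{\mathrm{orig}}\bigr)^{\Delta-1}$ by observing that the $W_+$--$W_-$ matching contributes the same single-matching sum in numerator and denominator, both compute the pointwise ratio of summands via the factorial estimate $m!/(m-k)! = m^k\bigl(1+O(k^2/m)\bigr)$ (the paper's Lemma~\ref{lllo}), and both then conclude via a Laplace concentration argument around the critical point $x^*$ of the single-matching sum, after which the algebraic simplification to the product measure form via \eqref{eq:pppm} and \eqref{eq:recurone} is identical. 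The only cosmetic difference is the concentration window: you work at scale $O(n^{1/2}\log n)$ in $\delta$, while the paper uses the wider window $|x-x^*|\le n^{-1/4}$ so that contributions outside are $\exp(-\Omega(n^{1/2}))$-suppressed, which makes it trivial to also bound the tail of $S_{\mathrm{mod}}$; you should state explicitly that the tail of $S_{\mathrm{mod}}$ outside the window is negligible (it is, for either choice of window), since the pointwise constancy of $R$ inside the window by itself does not control that tail.
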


\noindent The proof of Lemma~\ref{lem:modfirst} is given in Section~\ref{sec:modfirstproof}. Seeking high probability lower bounds on $Z^{\alpha,\beta}_{\overline{G}}(\eta)$, we study the second moment of $Z^{\alpha,\beta}_{\overline{G}}(\eta)$.
\begin{lemma}[analogue of Lemma 3.5 in~\cite{Sly10}]\label{lem:modsecond}
When Condition~\ref{cond:maxima} is true, for $(\alpha,\beta)=(p^+,p^-)$ and all $\eta\in\{-1,+1\}^U$ we have
\begin{equation}
\frac{\E_{\overline{\G}}\big[\big(Z^{\alpha,\beta}_{\overline{G}}(\eta)\big)^2\big]}{\E_{\G}\big[\big(Z^{\alpha,\beta}_{G}\big)^2\big]}= (1+o(1))
(C^*)^2 \left(\lambda\left(\frac{1+B_1Q^-}{B_2+Q^-}\right)^{\Delta-1}\right)^{2\eta_1^-}
\left(\lambda\left(\frac{1+B_1Q^+}{B_2+Q^+}\right)^{\Delta-1}\right)^{2\eta_2^-}\label{eq:secondmodratio}
\end{equation}
where $Q^-,Q^+$ are given by~\eqref{eq:qpqm} and
$C^*$ is given by~\eqref{eq:cstar}.
\end{lemma}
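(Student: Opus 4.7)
The strategy is to mirror the proof of Lemma~\ref{lem:modfirst} at the level of second moments, using Condition~\ref{cond:maxima} to reduce to a single dominant overlap and exploiting the tensor-product structure of the second-moment optimisation.

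First I would write
\[\E_{\overline{\G}}\!\bigl[(Z^{\alpha,\beta}_{\overline{G}}(\eta))^2\bigr]=\sum_{\gamma,\delta}\E_{\overline{\G}}\!\bigl[Y^{\gamma,\delta}_{\overline{G}}(\eta)\bigr],\]
where $Y^{\gamma,\delta}_{\overline{G}}(\eta)$ sums $w(\sigma_1)w(\sigma_2)$ over configuration pairs agreeing with $\eta$ on $U$ and having overlap $(\gamma n,\delta n)$ on $W_+,W_-$. Proceeding as in Section~\ref{sec:secondmomentform}, each of the $\Delta-1$ full matchings contributes a type-matrix sum on $(n+m')+(n+m')$ vertices with the $U$-marginals pinned by $\eta$, while the extra $W_+\!-\!W_-$ matching contributes an $\eta$-independent factor isomorphic to the corresponding factor in $\E_\G[(Z^{\alpha,\beta}_G)^2]$. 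Because $m'=o(n^{1/4})$, the asymptotic Laplace analysis of Section~\ref{sec:generallog} produces the same rate function $\Phi_2$ of \eqref{eq:limitsecond}, and Condition~\ref{cond:maxima} concentrates the sum at the uncorrelated overlap $(\gamma,\delta)=(\alpha^2,\beta^2)$.

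At this dominant overlap with $(\alpha,\beta)=(p^+,p^-)$, the $4\times 4$ matrix $M$ appearing in the proof of Lemma~\ref{lem:seccritical} is exactly the tensor square of the $2\times 2$ first-moment matrix used in the proof of Lemma~\ref{lem:fircritical}, and the strict concavity guaranteed by Lemma~\ref{lem:gmax} forces the solution of \eqref{et} to be the tensor product $Y^\ast_{(i_1,i_2),(j_1,j_2)}=X^\ast_{i_1,j_1}X^\ast_{i_2,j_2}$ of two copies of the first-moment optimiser. In particular $R^\ast$ and $C^\ast$ split as tensor products whose one-dimensional ratios are $Q^+$ and $Q^-$ respectively, precisely matching the first-moment analysis at $(p^+,p^-)$. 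This decoupling means that the $\eta$-dependent correction coming from a single full matching in the second moment is exactly the square of the corresponding correction computed in the proof of Lemma~\ref{lem:modfirst}. Combining these observations, the ratio $\E_{\overline{\G}}[Y^{\alpha^2,\beta^2}_{\overline{G}}(\eta)]/\E_\G[Y^{\alpha^2,\beta^2}_G]$ equals, up to $1+o(1)$, the square of the first-moment ratio~\eqref{eq:productmeasure}; using $q^\pm/(1-q^\pm)=Q^\pm$ to absorb the $(1-q^\pm)^{2m'}$ factors into $(C^\ast)^2$, and the tree recursion $\lambda\bigl((1+B_1Q^\mp)/(B_2+Q^\mp)\bigr)^{\Delta-1}=Q^\pm$ from~\eqref{eq:recurone} to recast $(Q^\pm)^{2\eta_{1,2}^-}$ in the exponential form displayed in \eqref{eq:secondmodratio}, one recovers the claimed identity.

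The main obstacle will be quantitative control of the Laplace sum under pinned $U$-marginals: conditioning on $\eta$ shifts the maximiser of $\Phi_2$ by $O(m'/n)$, and one must verify, using the strict concavity from Lemma~\ref{lem:gmax} together with $m'=o(n^{1/4})$, that this shift contributes only a $1+o(1)$ multiplicative factor. A smaller point is that Condition~\ref{cond:maxima} was formulated for $\G$; its transfer to $\overline{\G}$ with fixed $\eta$ is immediate because the $\eta$-dependence is subexponential in $n$ and therefore cannot overturn the exponential rate-function gap between $(\alpha^2,\beta^2)$ and any $(\gamma,\delta)$ bounded away from it.
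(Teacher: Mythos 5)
Your proposal is correct and follows essentially the same path the paper takes: write out the explicit pinned/unpinned second-moment sums as in \eqref{eq:secondmoment}, reduce to the dominant overlap $(\gamma,\delta)=(\alpha^2,\beta^2)$ using Condition~\ref{cond:maxima} together with the quadratic-decay estimate of Lemma~\ref{lem:gmax}, control the per-matching ratio of summands via the binomial estimate of Lemma~\ref{lllo}, and finally substitute the optimizers from \eqref{eq:pppm} and \eqref{eq:optimalxsecond}. Your framing via the Kronecker structure ($M_2 = M_1\otimes M_1$ with optimizer $\mathbf{Y}^*=\mathbf{X}^*\otimes\mathbf{X}^*$) is a clean way to anticipate that the answer must be the square of \eqref{eq:qwertya}; the paper's proof carries this out by direct substitution (landing on the squared exponents $2m',2\eta_1^-,2\eta_2^-$ after constructing the explicit bijection $\mathbf{Y}\mapsto\hat{\mathbf{Y}}$) and only makes the tensor-product structure explicit later, in the proof of Lemma~\ref{lem:firstsecvalues}.
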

\noindent The proof of Lemma~\ref{lem:modsecond} is given in Section~\ref{lem:modsecondproof}. Once again, it turns out that the second moment method fails to give high probability bounds, as is captured by the following lemma.
\begin{lemma}~\label{lem:rfsgadget}
When Condition~\ref{cond:maxima} is true, for $(\alpha,\beta)=(p^+,p^-)$ it holds that
\begin{equation*}
\lim_{n\rightarrow\infty}\frac{\E_{\overline{\G}}\big[\big(Z^{\alpha,\beta}_{\overline{G}}(\eta)\big)^2\big]}{\left(\E_{\overline{\G}}[Z^{\alpha,\beta}_{\overline{G}}(\eta)]\right)^2}=\big(1-\omega^2\big)^{-(\Delta-1)/2}\big(1-(\Delta-1)^2\omega^2\big)^{-1/2},
\end{equation*}
where $\omega$ is given by \eqref{eq:definitionofomega}.
\end{lemma}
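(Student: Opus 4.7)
The plan is to reduce the ratio for the modified graph distribution $\overline{\G}$ to the analogous ratio for $\G$, which is already given by Lemma~\ref{lem:ratiofirstsecond}. Concretely, I would write the decomposition
\[
\frac{\E_{\overline{\G}}\big[\big(Z^{\alpha,\beta}_{\overline{G}}(\eta)\big)^2\big]}{\left(\E_{\overline{\G}}[Z^{\alpha,\beta}_{\overline{G}}(\eta)]\right)^2}
=
\frac{\E_{\overline{\G}}\big[\big(Z^{\alpha,\beta}_{\overline{G}}(\eta)\big)^2\big]}{\E_{\G}\big[\big(Z^{\alpha,\beta}_{G}\big)^2\big]}
\cdot
\frac{\E_{\G}\big[\big(Z^{\alpha,\beta}_{G}\big)^2\big]}{\left(\E_{\G}[Z^{\alpha,\beta}_{G}]\right)^2}
\cdot
\left(\frac{\E_{\G}[Z^{\alpha,\beta}_{G}]}{\E_{\overline{\G}}[Z^{\alpha,\beta}_{\overline{G}}(\eta)]}\right)^2.
\]
The middle factor tends to $(1-\omega^2)^{-(\Delta-1)/2}(1-(\Delta-1)^2\omega^2)^{-1/2}$ by Lemma~\ref{lem:ratiofirstsecond}. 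All that remains is to show that the product of the first and third factors tends to $1$.

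For this I would substitute the explicit expressions from Lemmas~\ref{lem:modfirst} and~\ref{lem:modsecond}. Writing $A=\E_{\overline{\G}}[Z^{\alpha,\beta}_{\overline{G}}(\eta)]/\E_\G[Z^{\alpha,\beta}_{G}]$ and $B=\E_{\overline{\G}}[(Z^{\alpha,\beta}_{\overline{G}}(\eta))^2]/\E_\G[(Z^{\alpha,\beta}_{G})^2]$, the factor $(C^*)^2$ appears in both $A^2$ and $B$ and so cancels. Using $\eta_i^+ = m' - \eta_i^-$ for $i=1,2$, the $m'$-powers of $(1-q^\pm)$ combine with the $\eta_i^+$-powers to give $(1-q^\pm)^{2\eta_i^-}$, leaving
\[
\frac{B}{A^2}=(1+o(1))\left(\tfrac{1-q^+}{q^+}\cdot\lambda\Big(\tfrac{1+B_1Q^-}{B_2+Q^-}\Big)^{\!\Delta-1}\right)^{2\eta_1^-}\!\left(\tfrac{1-q^-}{q^-}\cdot\lambda\Big(\tfrac{1+B_1Q^+}{B_2+Q^+}\Big)^{\!\Delta-1}\right)^{2\eta_2^-}.
\]

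The key step is to observe that each base on the right-hand side is exactly $1$ by the tree recursion~\eqref{eq:recurone}. Indeed $(1-q^\pm)/q^\pm = 1/Q^\pm$ and~\eqref{eq:recurone} gives $Q^\pm = \lambda((1+B_1Q^\mp)/(B_2+Q^\mp))^{\Delta-1}$, so each parenthesised expression equals $Q^\pm/Q^\pm = 1$. Hence $B/A^2 = 1+o(1)$, which, combined with the middle factor, yields the claimed limit. I do not expect a significant obstacle here: once Lemmas~\ref{lem:modfirst} and~\ref{lem:modsecond} are in hand, the proof is a short bookkeeping computation whose only nontrivial input is that the defining fixed-point equations for $Q^\pm$ are exactly what is needed for the $\eta$-dependent factors to cancel.
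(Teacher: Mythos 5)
Your proposal is correct and follows essentially the same route as the paper: reduce the $\overline{\G}$ ratio to the $\G$ ratio via Lemmas~\ref{lem:modfirst} and~\ref{lem:modsecond}, then invoke Lemma~\ref{lem:ratiofirstsecond}. The only cosmetic difference is that the paper works with the intermediate form~\eqref{eq:qwertya} (in terms of $Q^{\pm}$), so the $\eta$-dependent factors cancel in $B/A^2$ without any further input; you instead used the stated form~\eqref{eq:productmeasure} (in terms of $q^{\pm}$), which re-introduces the fixed-point equation~\eqref{eq:recurone} just to undo the substitution made in passing from~\eqref{eq:qwertya} to~\eqref{eq:productmeasure}. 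Either way the bookkeeping is sound and the conclusion $B/A^2 = 1+o(1)$ holds.
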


\noindent The proof of Lemma~\ref{lem:rfsgadget} is given in Section~\ref{sec:rfsgadgetproof}. We then apply the small subgraph conditioning method to obtain the following analogue of \cite[Lemma 3.9]{Sly10}.
\begin{lemma}~\label{lem:smallgraphgadget}
For $\Delta\geq 3$, suppose that for $B_1,B_2, \lambda$ in the non-uniqueness region, Condition~\ref{cond:maxima} is true. Then, for $(\alpha,\beta)=(p^+,p^-)$,  asymptotically almost surely over the choice of the graph $\overline{G}\sim\overline{\G}$, for  $\eta\in \{-1,+1\}^U$, it holds that 
\[ Z^{\alpha,\beta}_{\overline{G}}(\eta)>\frac{1}{\sqrt{n}}\E_{\overline{\G}}[Z^{\alpha,\beta}_{\overline{G}}(\eta)].\]
\end{lemma}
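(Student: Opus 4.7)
The plan is to apply the small graph conditioning method of Wormald (see Chapter 9 of~\cite{JLR}), following Sly's proof of his analogous Lemma 3.9 in~\cite{Sly10}. The situation is exactly the one that the method was invented to handle: Lemma~\ref{lem:rfsgadget} produces a limiting ratio
\[
\lim_{n\to\infty}\frac{\E_{\overline{\G}}\big[\big(Z^{\alpha,\beta}_{\overline{G}}(\eta)\big)^2\big]}{\big(\E_{\overline{\G}}[Z^{\alpha,\beta}_{\overline{G}}(\eta)]\big)^2}=c>1,
\]
so Paley--Zygmund by itself is insufficient, yet the excess factor $\big(1-\omega^2\big)^{-(\Delta-1)/2}\big(1-(\Delta-1)^2\omega^2\big)^{-1/2}$ has the characteristic product-of-Poisson form that the method attributes to the fluctuations of short cycle counts.

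The first step is to identify the relevant cycle statistics of $\overline{G}$. Since $\overline{G}$ is built from one perfect matching between $W_+$ and $W_-$ and $\Delta-1$ perfect matchings between $W_+\cup U_+$ and $W_-\cup U_-$, short cycles are classified by their length and by the ordered sequence of matchings they traverse. A standard computation shows that these counts converge jointly to independent Poisson variables with explicit means $\lambda_{i,\mathbf{m}}$. The second step is to compute the mixed factorial moments $\E_{\overline{\G}}\big[Z^{\alpha,\beta}_{\overline{G}}(\eta)(X_{i,\mathbf{m}})_k\big]$ and verify the factorization condition: the per-cycle ratio
\[
\delta_{i,\mathbf{m}}\ :=\ \lim_{n\to\infty}\frac{\E_{\overline{\G}}\big[Z^{\alpha,\beta}_{\overline{G}}(\eta)\,X_{i,\mathbf{m}}\big]}{\E_{\overline{\G}}[Z^{\alpha,\beta}_{\overline{G}}(\eta)]\,\lambda_{i,\mathbf{m}}}
\]
is obtained by conditioning on the cycle being present and redoing the local first-moment computation of Lemma~\ref{lem:modfirst}; concretely, $\delta_{i,\mathbf{m}}$ is expressed in terms of products of the fixed-point quantities $Q^\pm$ from Lemma~\ref{lem:densitiesone}. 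The technical heart is to check that $\prod_{i,\mathbf{m}}\exp\!\big(\lambda_{i,\mathbf{m}}(\delta_{i,\mathbf{m}}-1)^2\big)$ matches the constant $c$ of Lemma~\ref{lem:rfsgadget}; this parallels the verification that underlies Lemma~\ref{lem:smallgraph} for $\G(n,\Delta)$, with the only new feature being the decoration of certain cycle edges by the fixed spins prescribed by $\eta$, which is absorbed cleanly into the $Q^\pm$ recursion.

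Once the factorization is verified, the small graph conditioning theorem gives $Z^{\alpha,\beta}_{\overline{G}}(\eta)/\E_{\overline{\G}}[Z^{\alpha,\beta}_{\overline{G}}(\eta)]\to 1$ in probability for each fixed~$\eta$. The main obstacle is then to promote this to a statement holding simultaneously for every $\eta\in\{-1,+1\}^U$. Since $|U|=2m'=o(n^{1/4})$, a naive union bound needs failure probability $2^{-\omega(1)}$, whereas the method as stated delivers only $o(1)$. This is where the substantial slack afforded by the $1/\sqrt{n}$ factor in the conclusion becomes essential: it allows a coarser Chebyshev-type bound using the post-conditioning variance estimate from the small graph conditioning method (which is $o(1)$ relative to the first moment squared), and this can be strengthened to a polynomially small failure probability by controlling higher moments of the cycle-conditioned variable $Z^{\alpha,\beta}_{\overline{G}}(\eta)$. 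The product of this polynomial error with the $2^{o(n^{1/4})}$ union bound is $o(1)$, which closes the argument.
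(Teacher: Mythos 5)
Your overall framework is the right one --- apply the small graph conditioning theorem (Theorem~\ref{thm:smallgraphmethod}) to $Y=Z^{\alpha,\beta}_{\overline{G}}(\eta)$, with the condition-4 constant supplied by Lemma~\ref{lem:rfsgadget} and the per-cycle ratios $\delta_i$ coming out as powers of the fixed-point quantities. That part matches the paper, which verifies conditions $1$--$4$ via Lemmas~\ref{lem:cycle}, \ref{lem:finitesequence}, \ref{lem:sumasymptotics}, \ref{lem:rfsgadget} and takes $r(n)=1/\sqrt{n}$. Two points of divergence, one cosmetic and one substantive.

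Cosmetic: the paper does not split the cycle counts by the ordered sequence of matchings traversed. It works directly with $X_i=$ number of cycles of length $i$, whose asymptotic Poisson mean already carries the coloring count $r(\Delta,i)/i$ (Lemma~\ref{lem:cycle}), and shows in Lemma~\ref{lem:ratiosmallgraph} that the per-cycle factor depends only on the length, $\delta_i=\omega^i$, so the finer classification you propose buys nothing here.

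Substantive: you have misidentified the quantifier over $\eta$ and then introduced an unsupported fix. The claim, as it is actually used in the paper via equation~\eqref{eq:wertyb}, is $\lim_{n\to\infty}\max_{\eta}\Pr_{\Gco}\bigl(Z^{\alpha,\beta}_{\overline{G}}(\eta)<\tfrac{1}{\sqrt{n}}\E_{\Gco}[Z^{\alpha,\beta}_{\overline{G}}(\eta)]\bigr)=0$, i.e.\ a per-$\eta$ bound that is uniform in $\eta$, \emph{not} a statement holding simultaneously for all $\eta$. Uniformity in $\eta$ is automatic from the proof because the ratios appearing in conditions~2 and~4 are $\eta$-independent (the $\eta$-dependent prefactors in Lemmas~\ref{lem:modfirst} and~\ref{lem:modsecond} cancel), so the single application of Theorem~\ref{thm:smallgraphmethod} gives the same $o(1)$ for every $\eta$. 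No union bound is ever taken: downstream, the events $\{Z^{\pm}_{\Go}(\eta)<\tfrac{1}{\sqrt{n}}\E[Z^{\pm}_{\Go}(\eta)]\}$ are aggregated through the weighted sum $\sum_\eta Z_T(\eta)\E[Z^{\pm}_{\Go}(\eta)]$ and Markov's inequality, as in the derivation following~\eqref{eq:wertyb}. Your proposed route --- pushing the method to polynomially small failure probability by controlling higher moments of the cycle-conditioned variable --- would require estimates the small graph conditioning method does not supply (the conditional law of $Y/\E[Y]$ given the short-cycle counts is a nondegenerate limit, not a concentration to $1$, and the error term in the method is generic $o(1)$), and, more importantly, it is not needed. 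As stated, your last paragraph is a gap: the claimed Chebyshev-plus-higher-moments upgrade is neither carried out nor achievable within the stated framework.
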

\noindent The proof of Lemma~\ref{lem:smallgraphgadget} is given in Section~\ref{sec:smallgraphregular}.

\subsection{The Reconstruction Problem in the Extremal Measures}\label{sec:reco}
The final ingredient we need is proving that a strong form of non-reconstruction holds in the extremal measures. Let us first introduce some notation. 

Denote by $\hat{T}_{\rho,\ell,\Delta}$ the subtree of $\TDary$ rooted at $\rho$ up to depth $\ell$. Let  $S_{\rho,\ell}$ be the leaves of $\hat{T}_{\rho,\ell,\Delta}$. Denote by $X_{\rho,\ell,+}$ the marginal probability that the root $\rho$ is
occupied in a configuration $\sigma$ generated by the following process: 
\begin{enumerate}
\item Sample a configuration $\hat{\sigma}$ from the projection of the measure $\hat{\mu}^+$ on $\hat{T}_{\rho,\ell,\Delta}$, 
\item Conditioning on the configuration $\hat{\sigma}_S$ on $S_{\rho,\ell}$, 
sample a configuration $\sigma$ from the Gibbs distribution on $\hat{T}_{\rho,\ell,\Delta}$. 
\end{enumerate}
Note that the configuration of the vertices in $S_{\rho,\ell}$ is a random vector, so that $X_{\rho,\ell,+}$ is a random variable. Define similarly $X_{\rho,\ell,-}$. We need the following analogue of \cite[Lemma 4.2]{Sly10}.
\begin{lemma}\label{lem:Sly42}
Assume that $\Delta\geq 3$ and $B_1,B_2,\lambda$ lie in the non-uniqueness region of $\TD$. There exist constants $\zeta_1(\Delta,B_1,B_2,\lambda), \zeta_2(\Delta,B_1,B_2,\lambda) > 0$, such that for all sufficiently large $\ell$, 
$$\mathbb{P}\big[\big\vert X_{\rho,\ell,\pm} - q^\pm\big\vert  \geq \exp(-\zeta_1\ell)\big]\leq \exp(-\exp(\zeta_2\ell)).$$
\end{lemma}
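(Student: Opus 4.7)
The plan is to prove Lemma~\ref{lem:Sly42} in two stages: first establish exponential decay of the variance of $X_{\rho,\ell,\pm}$ via the two-step tree recursion, then boost this to the stated doubly-exponential tail by exploiting the independence of the $(\Delta-1)^2$ grandchild subtrees of $\rho$.

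I would begin by unfolding the definition of $X_{\rho,\ell,+}$ using the tree Markov property of $\hat{\mu}^+$. Conditional on $\sigma_\rho$, the $\Delta-1$ subtrees hanging off the root are mutually independent under the projection of $\hat{\mu}^+$, and each, viewed from its own root (a child of $\rho$), is distributed as the projection of the complementary extremal measure $\hat{\mu}^-$ to a $(\Delta-1)$-ary tree of depth $\ell-1$. Bayes' rule yields the tree recursion
\[
\frac{X_{\rho,\ell,+}}{1-X_{\rho,\ell,+}} = \lambda\prod_{i=1}^{\Delta-1} \frac{B_1 Y_i + 1}{Y_i + B_2},\qquad Y_i:=\frac{X_{v_i,\ell-1,-}}{1-X_{v_i,\ell-1,-}},
\]
with $\{Y_i\}$ conditionally independent given $\sigma_\rho$. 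Composing the recursion with itself gives a two-step map whose linearization at the attractive fixed point $q^+$ has slope $F'(Q^+)\cdot F'(Q^-)=(\Delta-1)^2\omega$, which is strictly below $1$ by Lemma~\ref{lem:technicalinequality}.

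With the recursion in hand, I would show by induction on $\ell$ that $\mathrm{Var}(X_{\rho,\ell,\pm})\leq C r^\ell$ for some $r<1$. The conditional independence of the $Y_i$'s together with a second-moment expansion of the two-step recursion around $(Q^+,Q^-)$ transfers a variance bound at level $\ell-2$ to one at level $\ell$ with multiplicative factor $(\Delta-1)^2\omega+o(1)$, provided the iterates stay in a fixed neighborhood of the fixed points, which a uniform compactness argument ensures. Chebyshev then gives weak concentration, but this falls far short of the doubly-exponential tail we need.

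To close the gap, I would run a stronger induction at the scale $\ell\mapsto\ell-2$. Suppose each grandchild $u$ of $\rho$ satisfies $|X_{u,\ell-2,\pm}-q^\pm|\leq \exp(-\zeta_1(\ell-2))$ with failure probability at most $\exp(-\exp(\zeta_2(\ell-2)))$. Conditioning on the level-$1$ and level-$2$ spin configuration decouples the $(\Delta-1)^2$ grandchild subtrees; outside the union of the bad events the root deviation is a smooth function of $(\Delta-1)^2$ independent bounded ``good'' grandchild deviations, whose first-order sensitivities are controlled by the Lemma~\ref{lem:technicalinequality} derivative bound. Hoeffding's inequality applied to this sum of $(\Delta-1)^2$ independent contributions produces a concentration gain of $\exp(-c(\Delta-1)^2)$ beyond the deterministic linear contraction, while the union bound over bad grandchildren costs only $(\Delta-1)^2\exp(-\exp(\zeta_2(\ell-2)))$. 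Choosing $\zeta_1$ strictly below $\tfrac{1}{2}\log\!\bigl(1/((\Delta-1)^2\omega)\bigr)$ and $\zeta_2<\log(\Delta-1)$ closes the induction, since $(\Delta-1)^\ell$ grows doubly-exponentially in $\ell$.

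The main obstacle is the bookkeeping in the final induction: one must verify that the two-step linearization remains valid on the event that every grandchild lies in the inductive window (so the quadratic remainder is dominated by the linear contraction), and calibrate $\zeta_1,\zeta_2$ so that the deterministic contraction, the Hoeffding gain, and the union-bound loss fit together at every scale. A secondary care point is combining the two conditional bounds (one per value of $\sigma_\rho$) by total probability, using that the marginal $\hat\mu^+(\sigma_\rho=-1)=q^+$ is bounded away from $0$ and $1$.
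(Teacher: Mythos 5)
The paper does not actually supply a proof of Lemma~\ref{lem:Sly42}: it cites Martinelli--Sinclair--Weitz (Theorem 5.1 of \cite{Martinelli}) for the Ising case and Sly (Lemma 4.2 of \cite{Sly10}, extended in \cite{Galanis}) for the hard-core case, remarking that either method applies in general. So you are not matched against an in-paper argument; I will assess your plan against the cited arguments and on its own terms.

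Your set-up and the first stage are sound: the conditional-on-$\sigma_\rho$ independence of the child subtrees, the two-step recursion, and the identification of the linearized contraction rate $(\Delta-1)^2\omega<1$ (via Lemma~\ref{lem:technicalinequality}) all match the standard analysis, and the variance bound $\mathrm{Var}(X_{\rho,\ell,\pm})\lesssim r^\ell$ follows from that as you say.

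The bootstrap step, however, does not close, and the gap is quantitative, not bookkeeping. In the $\ell\mapsto\ell-2$ induction the union bound over the $(\Delta-1)^2$ grandchildren already produces a failure probability at least $(\Delta-1)^2\exp\bigl(-\exp(\zeta_2(\ell-2))\bigr)$, whereas the target at level $\ell$ is $\exp\bigl(-\exp(\zeta_2\ell)\bigr)$. Comparing logarithms, you would need $\exp(\zeta_2\ell)-\exp(\zeta_2(\ell-2))\le -2\log(\Delta-1)$, but the left side equals $\exp(\zeta_2(\ell-2))(\exp(2\zeta_2)-1)>0$. So the union bound alone already overshoots the target by a doubly-exponential factor, for every fixed $\zeta_2>0$. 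Your proposed Hoeffding step cannot repair this: with only $(\Delta-1)^2$ independent grandchild contributions, the tail improvement is a fixed constant $\exp(-c(\Delta-1)^2)$ per step, independent of $\ell$. A per-step gain that is constant cannot close an induction whose per-step target shrinks doubly-exponentially; indeed the term $\exp(-c(\Delta-1)^2)$ does not even tend to $0$ with $\ell$. The basic obstruction is that the number of independent blocks you condition on must grow with $\ell$ to produce a doubly-exponential tail, and a two-step peel only ever exposes a constant number.

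The repair, and what \cite{Sly10} and \cite{Martinelli} actually do, is to exploit all of the leaves at once rather than a constant number of grandchildren. View $X_{\rho,\ell,\pm}=\mathbb{E}[\mathbf{1}_{\sigma_\rho=-1}\mid\hat\sigma_S]$ and expose the $(\Delta-1)^\ell$ leaf spins one at a time, forming a Doob martingale $M_0=q^\pm,\dots,M_N=X_{\rho,\ell,\pm}$. The martingale differences are bounded by the influence of a single leaf on the root, which is $O(\omega^{\ell/2})$ via the same two-step contraction you identified. Hence $\sum_k c_k^2 \lesssim (\Delta-1)^\ell\omega^\ell=\bigl((\Delta-1)\omega\bigr)^\ell$, which is exponentially small because $(\Delta-1)\omega<(\Delta-1)^2\omega<1$. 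Azuma--Hoeffding then gives $\mathbb{P}[|X_{\rho,\ell,\pm}-q^\pm|\ge t]\le 2\exp\bigl(-t^2/(2((\Delta-1)\omega)^\ell)\bigr)$, and taking $t=\exp(-\zeta_1\ell)$ with $2\zeta_1<\log\bigl(1/((\Delta-1)\omega)\bigr)$ yields $\exp\bigl(-\exp(\zeta_2\ell)\bigr)$ for any $0<\zeta_2<\log\bigl(1/((\Delta-1)\omega)\bigr)-2\zeta_1$. (Equivalently, condition at a level $m=\Theta(\ell)$ to obtain $(\Delta-1)^m$ genuinely independent subtrees and apply McDiarmid there.) Note the martingale formulation sidesteps the fact that the leaf spins under $\hat\mu^\pm$ are not themselves independent; Azuma needs only bounded differences along the filtration. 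The constant you need below $1$ is $(\Delta-1)\omega$, not $(\Delta-1)^2\omega$, though the latter implies the former.
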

\begin{proof} 
Martinelli, Sinclair, and Weitz \cite[Proof of Theorem 5.1, Equation (21)]{Martinelli} prove the required strong concentration for the variables $X_{\rho,\ell,\pm}$ for the Ising model with arbitrary external field (in a different form than the one stated here but completely equivalent). Their results can be translated to the general 2-spin model, after carrying out the reduction described in \cite{SST}. Finally, in the case of hard constraints Lemma~\ref{lem:Sly42} follows by the respective result in \cite[Lemma 4.2]{Sly10} (slightly extended in \cite[Lemma 30]{Galanis}). Of course, one could also apply the methods in \cite{Martinelli} or \cite{Sly10} to directly get the result in general.
\end{proof}

\subsection{Finishing the Proofs of the Hardness Results}\label{sec:sketchproof}
We first give the proof of Lemma~\ref{lem:gadget2}; this is  a straightforward extension of the proof of \cite[Theorem 2.1]{Sly10}, so we give a desciption of the main elements and point to the relevant lemmas which verify that Sly's proof goes through in our setting as well. 

\def\Go{\overline{G}}
\def\Gc{\mathcal{G}}
\def\Gco{\overline{\mathcal{G}}}
\def\PrGco{\mathrm{Pr}_{\Gco}}
\begin{proof}[Proof of Lemma~\ref{lem:gadget2}.]
We first introduce some notation that is used by Sly \cite{Sly10}. Define $\Sigma^{\pm}$ to be the configurations $\sigma\in \{-1,+1\}^{V(\Go)}$ such that $|N_-(\sigma)\cap W_{\pm}|\geq |N_-(\sigma)\cap W_\mp|$, that is, a configuration in $\Sigma^{+}$ assigns the spin $-1$ to more vertices in $W_{+}$ than in $W_{-}$ (and conversely for $\Sigma^-$). Similarly, for $\eta\in\{-1,+1\}^U$, define $\Sigma^{\pm}(\eta)$ to be the configurations $\sigma$ in $\sigma\in \Sigma^{\pm}$ such that $\sigma_U=\eta$. Further, for $G\sim \Go$, define
\[Z^{\pm}_{\Go}(\eta)=\sum_{\sigma\in \Sigma^{\pm}(\eta)}w_G(\sigma), \quad Z^{\pm}_{\Go}:=\sum_{\sigma\in \Sigma^{\pm}}w_G(\sigma).\]
Note that $Z^{\pm}_{\Go}=\sum_{\eta\in\{-1,+1\}^U}Z^{\pm}_{\Go}(\eta)$. Define similarly $Z^{\pm}_G$ for a random $\Delta$-regular graph $G$, i.e., for $G\sim \Gc$, let $Z^+_G=\sum_{\alpha\geq \beta}Z^{\alpha,\beta}_G$ and $Z^-_G=\sum_{\alpha< \beta}Z^{\alpha,\beta}_G$.

Using that $(\alpha,\beta)=(p^\pm,p^\mp)$ maximizes the logarithm of $\E_{\Gc}[Z^{\alpha,\beta}_G]$ (see Lemma~\ref{lem:firstmomentmax}), the first item of Lemma~\ref{lem:modfirst} together with Markov's inequality, shows that the largest contribution to   $\E_{\Gco}[Z^{\pm}_{\Go}(\eta)]$ is given by $\E_{\Gco}[Z^{p^{\pm},p^{\mp}}_{\Go}(\eta)]$. This gives
\begin{equation}\label{eq:aazzss}
\E_{\Gco}[Z^{\pm}_{\Go}(\eta)]=(1+o(1))(C')^{m'}Q^{\pm}_U(\eta)\E_{\Gc}[Z^{\pm}_G],
\end{equation}
where the last equality uses the conclusion \eqref{eq:productmeasure} of Lemma~\ref{lem:modfirst} (the constant $C'$ can be inferred from \eqref{eq:productmeasure}, though it will not be important for what follows). The derivation of \eqref{eq:aazzss} is identical to the proof of \cite[Lemma 3.3, Equation (3.9)]{Sly10}. Summing \eqref{eq:aazzss} over $\eta\in \{-1,+1\}^{U}$ yields
\begin{equation}\label{eq:aazzssb}
\E_{\Gco}[Z^{\pm}_{\Go}]=(1+o(1))(C')^{m'}\E_{\Gc}[Z^{\pm}_G].
\end{equation}
Our results from the small subgraph conditioning method (see Section~\ref{sec:oversmall}), combined with the above observations, yield that $Z^{\pm}_{\Go}(\eta)$ is close to its expectation, in particular 
\begin{equation}\label{eq:wertyb}
\lim_{n\rightarrow\infty}\max_{\eta\in\{-1,+1\}^U} \PrGco\left(Z^{\pm}_{\Go}(\eta)<\frac{1}{\sqrt{n}}\E_{\Gco}[Z^{\pm}_{\Go}(\eta)]\right)=0.
\end{equation}
see the proof of \cite[Theorem 3.10]{Sly10} for details. 

At this point, one needs to transfer these results to the graph $H$. Define $Z_H^{\pm}$ to be the contribution to the partition function of $H$ from configurations with phase $\pm$ (note that the restriction of such configurations on $\Go$ is a configuration in $\Sigma^{\pm}$). To connect the partition function of $H$ with the partition function of $\Go$, let $T$ be the graph induced by the edges $E(H)\backslash E(\Go)$. In particular, $T$ is a union of disjoint trees and the union of the leaves of the trees is precisely the set $U$. For $\eta\in\{-1,+1\}^U$, define $Z_T(\eta)$ to be the contribution to the partition function of $T$ from configurations $\sigma\in \{-1,+1\}^T$ such that $\sigma_U=\eta$. These definitions allow to write 
\[Z^{\pm}_H=\sum_{\eta\in\{-1,+1\}^U}Z^{\pm}_{\Go}(\eta)Z_T(\eta).\]
It follows by \eqref{eq:wertyb} and Markov's inequality that 
\begin{equation*}
\begin{aligned}
\lim_{n\rightarrow\infty}\PrGco\left(\sum_{\eta\in\{-1,+1\}^U}Z^{\pm}_{\Go}(\eta)Z_T(\eta)<\frac{1}{2\sqrt{n}}\sum_{\eta\in\{-1,+1\}^U}Z_T(\eta)\E_{\Gco}[Z^{\pm}_{\Go}(\eta)]\right)=0,\\
\lim_{n\rightarrow\infty}\PrGco\left(\sum_{\eta\in\{-1,+1\}^U}Z^{\pm}_{\Go}(\eta)Z_T(\eta)>\frac{\sqrt{n}}{3}\sum_{\eta\in\{-1,+1\}^U}Z_T(\eta)\E_{\Gco}[Z^{\pm}_{\Go}(\eta)]\right)=0,
\end{aligned}
\end{equation*}
see also \cite[(4.22)\ \&\ (4.23)]{Sly10}. It follows that $\mu_H(Y(\sigma)=\pm)=Z_H^{\pm}/Z_H\geq 1/n$ a.a.s.\ over the choice of the graph $\Gc$. This proves the first item in Lemma~\ref{lem:gadget2}.

The proof of the second item in Lemma~\ref{lem:gadget2} utilizes to a greater extent the connection to the tree and relies crucially on the non-reconstruction results we overviewed in Section~\ref{sec:reco}. First observe that 
\begin{equation}\label{eq:Hproductmeasure}
\mu_H(\sigma_U=\eta\, |\, Y(\sigma)=\pm)=\frac{Z^{\pm}_{\Go}(\eta)Z_T(\eta)}{Z_H^{\pm}}=\frac{Z^{\pm}_{\Go}(\eta)Z_T(\eta)}{\sum_{\eta'\in\{-1,+1\}^U}Z^{\pm}_{\Go}(\eta')Z_T(\eta')}.
\end{equation}
Now, provided that $Z^{\pm}_{\Go}(\eta)<\frac{1}{\sqrt{n}}\E_{\Gco}[Z^{\pm}_{\Go}(\eta)]$ (see \eqref{eq:wertyb}), the right hand side of \eqref{eq:Hproductmeasure} is (with large probability) within a multiplicative factor $n$ from 
\[\nu^{\pm}(\eta):=\frac{Z_T(\eta)\E_{\Gco}[Z^{\pm}_{\Go}(\eta)]}{\sum_{\eta'\in\{-1,+1\}^U}Z_T(\eta')\E_{\Gco}[Z^{\pm}_{\Go}(\eta')]}=\frac{Z_{T}(\eta)Q^{\pm}_U(\eta)}{\sum_{\eta'\in\{-1,+1\}^U}Z_T(\eta')Q^{\pm}_U(\eta')},\]
where in the first equality we used \eqref{eq:aazzssb}. The crucial idea, captured in the proof of \cite[Lemma 4.3]{Sly10}, is that the measures $\nu^{\pm}$ on configurations $\{-1,+1\}^U$ are the same as the distributions induced by the random processes we described in Section~\ref{sec:reco}, see the proof of \cite[Lemma 4.3]{Sly10}  for more details. This allows to use Lemma~\ref{lem:Sly42} together with \eqref{eq:wertyb} to bound the probability mass of ``bad" configurations $\eta$ on $U$, i.e., configurations on $U$ which exert large influence on the roots $R$ of the trees, see \cite[Proof of Theorem 2.1]{Sly10}. This yields Item 2 in Lemma~\ref{lem:gadget2}, concluding the argument.
\end{proof}

With Lemma~\ref{lem:gadget2} at hand, Theorem~\ref{thm:mainNP} can be proved completely analogously to \cite[Proof of Theorems 1 \& 2]{Sly10}.
\begin{proof}[Proof of Theorem~\ref{thm:mainNP}.]
Use  Lemma~\ref{lem:gadget2} and the reduction in \cite[Section 2.2]{Sly10}. 
\end{proof}

\section{Calculating Moments}
\label{sec:calculating-moments}
\subsection{Random Bipartite $\Delta$-Regular Graphs}\label{sec:ratiofirstsecondproof}
We use the information we have obtained for the first and second moments to prove Lemma~\ref{lem:ratiofirstsecond}. Its proof is a technical calculation. 

Recall that $Q^\pm=\frac{q^\pm}{1-q^\pm}$. To aid the presentation, it will be convenient to define the following expressions, which will appear in the analysis of the asymptotics of the first and second moments.
\begin{equation}\label{eq:helpfulAs}
\begin{aligned}
E_1&:=B_1 Q^{-} Q^{+}+Q^{-}+Q^{+}+B_2,\\
E_2&:=B_1 Q^{-} Q^{+}+B_1 B_2(Q^{+}+Q^{-})+B_2,\\
E_3&:=(1-B_1 B_2)^2 Q^{-}Q^{+}\\
&\qquad \quad+\big(1+B_1(Q^{+}+Q^{-}) + B_1^2Q^{-}Q^{+}\big)\big(B^2_2+B_2(Q^{-}+Q^{+})+Q^{-}Q^{+}\big).
\end{aligned}
\end{equation}
We are also going to need the explicit values of $\mathbf{X}^*=\arg\max_{\mathbf{X}}g_1(\mathbf{X})$ and $\mathbf{Y}^*=\arg\max_{\mathbf{X}}g_2(\mathbf{Y})$ when $\alpha=p^+,\beta=p^-,\gamma=\alpha^2,\delta=\beta^2$. These are given by:
\begin{gather}
\left(\begin{array}{cc}x^{*}_{11}&x^{*}_{12}\\x^{*}_{21}&x^{*}_{22}\end{array}\right)=\frac{1}{E_1}\left(\begin{array}{cc}B_1 Q^{-}Q^{+}& Q^{+}\\Q^{-}& B_2\end{array}\right),\label{optimalxfirst}\\[0.2cm]
\left(\begin{array}{cccc}
y^*_{11}&y^*_{12}&y^*_{13}&y^*_{14}\\
y^*_{21}&y^*_{22}&y^*_{23}&y^*_{24}\\
y^*_{31}&y^*_{32}&y^*_{33}&y^*_{34}\\
y^*_{41}&y^*_{42}&y^*_{43}&y^*_{44}
\end{array}\right)=
\left(\begin{array}{cc}x^{*}_{11}&x^{*}_{12}\\x^{*}_{21}&x^{*}_{22}\end{array}\right) \otimes \left(\begin{array}{cc}x^{*}_{11}&x^{*}_{12}\\x^{*}_{21}&x^{*}_{22}\end{array}\right),\label{eq:optimalxsecond}
\end{gather}
where the operator $\otimes$ denotes tensor product. The easiest way to argue about the validity of these values is to use the strict concavity of the functions $g_1(\mathbf{X}), g_2(\mathbf{Y})$ and just check that the required equalities for the critical points are satisfied.

\begin{lemma}\label{lem:asympfirst}
For $(\alpha,\beta)=(p^+,p^-)$, we have that
\begin{equation*}
\lim_{n\rightarrow\infty}\frac{\E_{\G}[Z^{\alpha,\beta}_G]}{\frac{1}{n}e^{n\Phi_1(\alpha,\beta,\mathbf{X}^{*})}}=\frac{1}{2\pi} \big(\alpha\beta(1-\alpha)(1-\beta)\big)^{(\Delta-1)/2}\left(\frac{Q^{+}Q^{-}E_2}{E^3_1}\right)^{-\Delta/2}.
\end{equation*}
\end{lemma}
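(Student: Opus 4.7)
The plan is to evaluate $\E_{\G}[Z^{\alpha,\beta}_G]$ up to $(1+o(1))$ precision by applying Stirling's approximation to every binomial and multinomial coefficient in the one-dimensional expression \eqref{eq:firstmoment1}, and then applying the Laplace method to the resulting discrete sum over $\delta$ (which parametrises $x_{11}$ via \eqref{eq:xregion}). For the two outer binomials, Stirling gives
\[ \binom{n}{\alpha n}\binom{n}{\beta n}=(1+o(1))\,\frac{1}{2\pi n\sqrt{\alpha(1-\alpha)\beta(1-\beta)}}\,e^{-nf_1(\alpha,\beta)}, \]
while for each summand $\tau(\delta)$ of $T(\alpha,\beta):=\sum_\delta\tau(\delta)$, a term-by-term Stirling (identifying $(\delta,\alpha-\delta,\beta-\delta,1-\alpha-\beta+\delta)$ with $(x_{11},x_{12},x_{21},x_{22})$) yields $\tau(\delta)=(1+o(1))\,P(\delta)\,e^{n[g_1(\mathbf{X})+f_1(\alpha,\beta)]}$, where
\[ P(\delta)=\sqrt{\frac{\alpha(1-\alpha)\beta(1-\beta)}{2\pi n\,x_{11}x_{12}x_{21}x_{22}}}. \]

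Viewed as a function of the single variable $\delta$, $g_1$ is strictly concave and attains its maximum at $\delta^{*}=x_{11}^{*}=B_1Q^+Q^-/E_1$, which lies in the interior of the summation range (indeed $p^{\pm}-\delta^{*}=Q^{\pm}/E_1>0$ using \eqref{eq:pppm}). The Laplace method applied to the discrete sum with spacing $1/n$ therefore yields
\[ T(\alpha,\beta)=(1+o(1))\,\sqrt{2\pi n\sigma^2}\,P(\delta^{*})\,e^{n[g_1(\mathbf{X}^{*})+f_1(\alpha,\beta)]},\qquad \sigma^2=-\frac{1}{g_1''(x_{11}^{*})}. \]
Direct differentiation of $g_1$ after substituting the constraints in \eqref{eq:xregion} gives $g_1''(x_{11})=-\sum_{i,j}1/x_{ij}$. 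Plugging in the explicit values from \eqref{optimalxfirst} produces
\[ -g_1''(x_{11}^{*})=E_1\!\left(\frac{1}{B_1Q^+Q^-}+\frac{1}{Q^+}+\frac{1}{Q^-}+\frac{1}{B_2}\right)=\frac{E_1E_2}{B_1B_2Q^+Q^-}, \]
where the second equality is the definition of $E_2$ in \eqref{eq:helpfulAs}.

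Assembling $\E_{\G}[Z^{\alpha,\beta}_G]=\lambda^{(\alpha+\beta)n}\binom{n}{\alpha n}\binom{n}{\beta n}T(\alpha,\beta)^\Delta$, the exponential factors combine cleanly into $e^{n\Phi_1(\alpha,\beta,\mathbf{X}^{*})}$, and what remains is
\[ \E_{\G}[Z^{\alpha,\beta}_G]=(1+o(1))\,\frac{1}{n}\,e^{n\Phi_1(\alpha,\beta,\mathbf{X}^{*})}\cdot\frac{(2\pi n\sigma^2)^{\Delta/2}\,P(\delta^{*})^\Delta}{2\pi\sqrt{\alpha(1-\alpha)\beta(1-\beta)}}. \]
The factor $\sqrt{n}^{\,\Delta}$ coming from $(2\pi n\sigma^2)^{\Delta/2}$ cancels the $1/\sqrt{n}^{\,\Delta}$ inside $P(\delta^{*})^\Delta$. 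Using the product $x_{11}^{*}x_{12}^{*}x_{21}^{*}x_{22}^{*}=B_1B_2(Q^+Q^-)^2/E_1^4$ (read off from \eqref{optimalxfirst}) together with the value of $\sigma^2$ above, one computes $(2\pi n\sigma^2)P(\delta^{*})^2=\alpha(1-\alpha)\beta(1-\beta)E_1^3/(Q^+Q^-E_2)$, and the prefactor collapses to
\[ \frac{1}{2\pi}\bigl(\alpha\beta(1-\alpha)(1-\beta)\bigr)^{(\Delta-1)/2}\left(\frac{Q^+Q^-E_2}{E_1^3}\right)^{-\Delta/2}, \]
which is exactly the claimed limit.

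The one non-routine step is the algebraic identification $\sum_{i,j}1/x_{ij}^{*}=E_2/(B_1B_2Q^+Q^-)$; this is direct from \eqref{optimalxfirst} and the definition \eqref{eq:helpfulAs} of $E_2$, and it is precisely what forces the combination $E_2/E_1^3$ into the final formula. The rest of the argument is standard Stirling/Laplace bookkeeping (cf.\ \cite[Chapter 4]{deBru}).
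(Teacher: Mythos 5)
Your proof is correct and follows essentially the same route as the paper's: Stirling's approximation on every binomial in \eqref{eq:firstmoment1}, the Laplace method on the resulting one-variable sum over $\delta=x_{11}$, and then the explicit computation of $-g_1''(x_{11}^*)\prod_{i,j}x_{ij}^{*}=Q^+Q^-E_2/E_1^3$ via the closed-form $\mathbf X^*$ from \eqref{optimalxfirst}. The only cosmetic difference is that you package the Gaussian contribution in terms of $\sigma^2=-1/g_1''$ rather than writing out the normalized Gaussian integral as the paper does; the bookkeeping is otherwise identical, and the constants check out.
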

\begin{lemma}\label{lem:asympsec}
For $(\alpha,\beta)=(p^+,p^-)$, we have that
\begin{multline*}
\lim_{n\rightarrow\infty}\frac{\E_{\G}[(Z^{\alpha,\beta}_G)^2]}{\frac{1}{n^2}e^{n\Phi_2(\gamma^*,\delta^*,\mathbf{Y}^{*})}}=\frac{1}{4\pi^2}\big(\alpha\beta(1-\alpha)(1-\beta)\big)^{2(\Delta-1)}\left(\frac{(Q^{+}Q^{-})^4E^3_2E_3}{E^{13}_1}\right)^{-\Delta/2}\\
\times\left(\frac{(Q^+ Q^-)^2 E_2 E_3}{E_1^{7}}\right)^{1/2}\left(1-(\Delta-1)^2\omega^2\right)^{-1/2},
\end{multline*}
where $\gamma^*=\alpha^2,\delta^*=\beta^2$, $\mathbf{Y}^{*}$ is given by \eqref{eq:optimalxsecond} and $\omega$ by \eqref{eq:definitionofomega}.
\end{lemma}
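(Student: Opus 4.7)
The plan is to follow the blueprint of the proof of Lemma~\ref{lem:asympfirst}, carried out one dimension higher. First, I would apply Stirling's formula to each of the binomial and multinomial coefficients appearing in the explicit formula \eqref{eq:secondmoment} in order to extract the exponential $e^{n\Phi_2(\gamma,\delta,\mathbf{Y})}$ together with polynomial prefactors in $n$. By Lemma~\ref{lem:secondmax2spin} (which establishes Condition~\ref{cond:maxima} under the hypotheses of Theorem~\ref{thm:2spin}) combined with the strict concavity guaranteed by Lemmas~\ref{lem:seccritical}--\ref{lem:sechessian}, the sum concentrates in an $O(n^{-1/2})$ neighbourhood of the unique maximiser $(\gamma^*,\delta^*,\mathbf{Y}^*)=(\alpha^2,\beta^2,\mathbf{Y}^*)$, with $\mathbf{Y}^*$ given by \eqref{eq:optimalxsecond}; so the asymptotic is determined by the value of $\Phi_2$ at the maximiser together with the determinant of its Hessian there.

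I would perform the Laplace evaluation in two stages. For fixed $(\gamma,\delta)$, the inner sum $\sum_{\mathbf{Y}}$ is a nine-dimensional Gaussian integral (the sixteen $y_{ij}$ subject to seven independent linear constraints), whose leading contribution is the integrand at $\mathbf{Y}^*(\gamma,\delta)$ times the inverse square root of the Hessian determinant of $\Delta g_2$ in the full-dimensional representation of Section~\ref{sec:maxentropy}. Raising to the $\Delta$-th power to account for the $\Delta$ independent matchings and combining with the outer multinomial prefactors $|\Sigma^{\alpha,\beta}_{\gamma,\delta}|$ yields a polynomial-in-$n$ times $e^{n\phi_2(\gamma,\delta)}$. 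A second, two-dimensional Gaussian integration over $(\gamma,\delta)$ around $(\gamma^*,\delta^*)$ then completes the evaluation, picking up $1/\sqrt{\det(-nH_{\phi_2}/(2\pi))}$.

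The observation that collapses the inner stage onto the first-moment answer is that the sixteen optimal entries $y^*_{ij}$ in \eqref{eq:optimalxsecond} factor as products of the four $x^*_{ij}$ in \eqref{optimalxfirst}, as one expects from the probabilistic interpretation of the ``uncorrelated'' overlap $(\alpha^2,\beta^2)$. A direct computation then gives $f_2(\alpha^2,\beta^2)=2f_1(\alpha,\beta)$, $g_2(\mathbf{Y}^*)=2g_1(\mathbf{X}^*)$, and hence $\Phi_2(\gamma^*,\delta^*,\mathbf{Y}^*)=2\Phi_1(\alpha,\beta,\mathbf{X}^*)$, while the Hessian of $g_2$ and its associated polynomial prefactors evaluate to the squares of their first-moment counterparts (for instance $\partial^2 f_2/\partial\gamma^2 = 1/[\alpha(1-\alpha)]^2$ at $\gamma=\alpha^2$, squaring the corresponding first-moment entry). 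This single observation accounts for $\frac{1}{4\pi^2}$, the $(\alpha\beta(1-\alpha)(1-\beta))^{2(\Delta-1)}$ factor, and all the powers of $Q^\pm,E_1,E_2$ in the statement, which simply replicate the first-moment formula squared.

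The main obstacle, and the only genuinely new piece of algebra, is the two-by-two Hessian of $\phi_2$ at $(\gamma^*,\delta^*)$. To compute it I would implicitly differentiate the first-order conditions \eqref{eq:dergamma}--\eqref{eq:derdelta} together with the constraint identities \eqref{r1r4}--\eqref{c1c4}, using the observation (already central to the proof of Lemma~\ref{lem:seccritical}) that along the optimising family $(\gamma,\delta)\mapsto\mathbf{Y}^*(\gamma,\delta)$ the ratios $\sqrt{r_1 r_4}$ and $\sqrt{c_1 c_4}$ satisfy an equality version of the two-step tree recursion on $\TDary$, whose linearisation at the fixed point $(Q^+,Q^-)$ encodes $(\Delta-1)\omega$ as its off-diagonal coupling, exactly as in Lemma~\ref{lem:technicalinequality}. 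A Schur-complement calculation on the resulting $2\times 2$ matrix yields $\det(-H_{\phi_2}/n)=D\cdot(1-(\Delta-1)^2\omega^2)$ for an explicit diagonal factor $D$ built from the $Q^\pm$ and $E_i$, whose inverse square root gives the $(1-(\Delta-1)^2\omega^2)^{-1/2}$ in the statement. The remainder is a routine (albeit tedious) identification of $D$ with the $E_3$-containing prefactor in the lemma, using the explicit formulae \eqref{eq:qpqm}, \eqref{eq:pppm} and \eqref{eq:helpfulAs}.
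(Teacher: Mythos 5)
Your overall scaffolding matches the paper's: Laplace concentration around $(\gamma^*,\delta^*,\mathbf{Y}^*)$, a two-stage Gaussian integration (first over $\mathbf{Y}$, then over $(\gamma,\delta)$), and a Schur-complement evaluation of the effective $2\times 2$ Hessian, which is indeed where $(1-(\Delta-1)^2\omega^2)^{-1/2}$ comes from. Your identification of the $(\gamma,\delta)$-Hessian with the linearization of the two-step tree recursion is also in the right spirit of Lemma~\ref{lem:technicalinequality}.

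However, the central shortcut you propose — that the $\mathbf{Y}$-Hessian determinant and $\prod_{i,j} y^*_{ij}$ "evaluate to the squares of their first-moment counterparts" — is false, and this is exactly where the genuinely new algebra lives. From \eqref{optimalxfirst} the first-moment quantity is $(-H)\prod_{i,j}x^*_{ij} = Q^+Q^-E_2/E_1^3$; its square is $(Q^+Q^-)^2E_2^2/E_1^6$. But the paper's Maple computation gives $\det(-\mathbf{H}_\mathbf{Y})\prod_{i,j}y^*_{ij} = (Q^+Q^-)^4E_2^3E_3/E_1^{13}$, whose ratio to the naive square is $(Q^+Q^-)^2E_2E_3/E_1^7$ — this is precisely the $E_3$-containing factor that, via the identity $E_1E_2E_3\big/\big((B_2+Q^-)(B_2+Q^+)(1+B_1Q^-)(1+B_1Q^+)\big)^2 = 1-\omega^2$, produces the $(1-\omega^2)^{-(\Delta-1)/2}$ portion of the ratio in Lemma~\ref{lem:ratiofirstsecond}. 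In other words, the $E_3$ (equivalently $\omega$) structure already enters at the level of the $9\times9$ $g_2$-Hessian, not solely at the $(\gamma,\delta)$ stage. The same issue shows in the $(\alpha\beta(1-\alpha)(1-\beta))$ prefactor: the first moment gives exponent $(\Delta-1)/2$, whose square is $\Delta-1$, yet the lemma has $2(\Delta-1)$; the extra factor comes from $\gamma^*=\alpha^2$, $1-2\alpha+\gamma^*=(1-\alpha)^2$, $\alpha-\gamma^*=\alpha(1-\alpha)$ being literal squares of $\alpha,1-\alpha$, which is a different phenomenon than "first-moment formula squared." So while $\Phi_2(\gamma^*,\delta^*,\mathbf{Y}^*)=2\Phi_1(\alpha,\beta,\mathbf{X}^*)$ and the $f_2$-Hessian diagonal entries square (those parts you got right, cf. Lemma~\ref{lem:firstsecvalues}), the $g_2$-Hessian and the multinomial prefactors do not, and you would still have to carry out the full $9\times9$ determinant computation from \eqref{eq:optimalxsecond} — the paper resorts to Maple for exactly this.
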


The proofs of Lemmas~\ref{lem:asympfirst} and~\ref{lem:asympsec} are given in Sections~\ref{sec:asympfirst} and~\ref{sec:asympsec} respectively. We need to state one more lemma before we can give the proof of Lemma~\ref{lem:ratiofirstsecond}.
\begin{lemma}\label{lem:firstsecvalues}
It holds that $2\Phi_1(\alpha,\beta,\mathbf{X^*})=\Phi_2(\alpha^2,\beta^2,\mathbf{Y}^*)$.
\end{lemma}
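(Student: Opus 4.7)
The heart of the argument is the tensor product structure of the optimal matrices in \eqref{optimalxfirst} and \eqref{eq:optimalxsecond}: under the bijection $\{1,2,3,4\}\leftrightarrow\{1,2\}^2$ given by $i\mapsto(i_1,i_2)$ with $i=2(i_1-1)+i_2$, one has the identity
\begin{equation}\label{eq:plan-tensor}
y^*_{ij}=x^*_{i_1,j_1}\,x^*_{i_2,j_2}.
\end{equation}
My first step would be to verify \eqref{eq:plan-tensor} entry-by-entry from the explicit formulas. Conceptually this is exactly what one expects: the overlap $(\gamma,\delta)=(\alpha^2,\beta^2)$ corresponds to $\sigma_1,\sigma_2$ being uncorrelated, so on each edge the joint type of $(\sigma_1,\sigma_2)$ factors as a product of independent single-configuration marginals.

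Granted \eqref{eq:plan-tensor}, the equality $2\Phi_1(\alpha,\beta,\mathbf{X}^*)=\Phi_2(\alpha^2,\beta^2,\mathbf{Y}^*)$ splits into three matching checks. The external-field contribution is immediate since $(\alpha+\beta)\ln\lambda$ in $\Phi_1$ appears doubled in $\Phi_2$. For the graph-entropy term, using $\alpha-\alpha^2=\alpha(1-\alpha)$ and $1-2\alpha+\alpha^2=(1-\alpha)^2$ (and symmetrically for $\beta$) one expands $f_2(\alpha^2,\beta^2)$ and collects the coefficients of $\ln\alpha,\ln(1-\alpha),\ln\beta,\ln(1-\beta)$; each turns out to equal twice the corresponding coefficient in $f_1(\alpha,\beta)$, giving $f_2(\alpha^2,\beta^2)=2f_1(\alpha,\beta)$.

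For the $g$-term, \eqref{eq:plan-tensor} together with the normalization $\sum_{i,j}x^*_{ij}=1$ yields
\[
-\sum_{i,j}y^*_{ij}\ln y^*_{ij}=-\sum_{i_1,i_2,j_1,j_2}x^*_{i_1j_1}x^*_{i_2j_2}\bigl(\ln x^*_{i_1j_1}+\ln x^*_{i_2j_2}\bigr)=-2\sum_{i,j}x^*_{ij}\ln x^*_{ij}.
\]
For the $\ln B_1$ coefficient in $g_2(\mathbf{Y}^*)$, namely $2y^*_{11}+y^*_{12}+y^*_{13}+y^*_{21}+y^*_{22}+y^*_{31}+y^*_{33}$, one observes that these are exactly the $y^*_{ij}$ for which $(i_1,j_1)=(1,1)$ or $(i_2,j_2)=(1,1)$, counted with the appropriate multiplicity (so that $y^*_{11}$, which satisfies both, receives weight $2$). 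Under \eqref{eq:plan-tensor} and $\sum x^*=1$ this sum collapses to $2x^*_{11}$, which is twice the $\ln B_1$-coefficient in $g_1(\mathbf{X}^*)$. The identical bookkeeping works for $\ln B_2$, so $g_2(\mathbf{Y}^*)=2g_1(\mathbf{X}^*)$. Combined with the $f$ and $\lambda$ identities and the $(\Delta-1),\Delta$ prefactors in the definition of $\Phi_1,\Phi_2$, this proves the lemma.

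There is no deep obstacle: the only nontrivial aspect is the routine but error-prone indexing, namely verifying \eqref{eq:plan-tensor} at all sixteen entries and correctly matching the seven $B_1$-tagged terms (and the seven $B_2$-tagged terms) in $g_2$ with the edges on which $\sigma_1$ or $\sigma_2$ realizes a $(-,-)$ (or $(+,+)$) edge. Once that is set up, every step reduces to a short symbolic manipulation.
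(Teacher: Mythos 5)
Your proof is correct and follows essentially the same route as the paper: both hinge on the observation that $\mathbf{Y}^*$ in \eqref{eq:optimalxsecond} is the tensor square of $\mathbf{X}^*$ in \eqref{optimalxfirst}, from which the factor-of-two relation falls out for each piece of $\Phi_2$. The only cosmetic difference is that the paper dresses the $f_2=2f_1$ check as an entropy identity for independent Bernoulli pairs (and then says the same tensor-product idea ``extends'' to $g$, omitting the bookkeeping), whereas you verify $f_2=2f_1$ by direct coefficient matching and spell out the index bookkeeping for $g_2=2g_1$ explicitly; your identification of the seven $B_1$-tagged entries as those with $(i_1,j_1)=(1,1)$ or $(i_2,j_2)=(1,1)$ (with $y^*_{11}$ counted twice) and the analogous statement for $B_2$ is exactly right and fills in the details the paper leaves to the reader.
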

\begin{proof}
It clearly suffices to prove
\begin{align*}
f_2(\alpha^2,\beta^2)=2f_1(\alpha,\beta)\mbox{ and } g_2(\mathbf{Y}^*)=2g_1(\mathbf{X^*}).
\end{align*}
Both of these can be verified by straightforward calculations. A neater way to do this, using more directly the uncorrelated property that Condition \ref{cond:maxima} represents, is to use entropy arguments. We illustrate this in the case of  $f_1$ and $f_2$, with the same technique extending to $g_1,g_2$ as well. 

Define $A,B$ to be independent Bernoulli variables with success probabilities $\alpha,\beta$ respectively. It is straightforward to see that 
 $f_1(\alpha,\beta)=-\big(H(A)+H(B)\big)$ where $H(X)$ is the entropy of the random variable $X$. Let $A'=(A_1,A_2)$, where $A_1,A_2$ are independent copies of the random variable $A$. Define similarly $B'$. Using the definition of entropy, it is easy to check that $f_2(\alpha^2,\beta^2)=-\big(H(A')+H(B')\big)$. But clearly $H(A')=H(A_1)+H(A_2)=2H(A)$ and similarly for $B,B'$, thus verifying that $f_2(\alpha^2,\beta^2)=2f_1(\alpha,\beta)$.   

Using \eqref{eq:optimalxsecond}, one can easily extend the previous argument to show $g_2(\mathbf{Y}^*)=2g_1(\mathbf{X^*})$. We should emphasize that the proof of Lemma~\ref{lem:firstsecvalues} does not rely on the fact that $Q^+,Q^-$ are a function of the critical probabilities on the infinite tree, i.e., we do not need Lemma~\ref{lem:densitiesone}, but rather the representation given in \eqref{eq:pppm} and the uncorrelated property that Condition \ref{cond:maxima} represents.
\end{proof}

\begin{proof}[Proof of Lemma~\ref{lem:ratiofirstsecond}.]
Let $L:=\lim_{n\rightarrow\infty}\frac{\E_{\G}\left[(Z^{\alpha,\beta}_G)^2\right]}{\left(\E_{\G}[Z^{\alpha,\beta}_G]\right)^2}$. Employ Lemmas~\ref{lem:asympfirst},~\ref{lem:asympsec}, and~\ref{lem:firstsecvalues} to obtain
\begin{equation*}
L=\left(1-(\Delta-1)^2\omega^2\right)^{-1/2}\big(\alpha\beta(1-\alpha)(1-\beta)\big)^{\Delta-1}\left(\frac{(Q^+ Q^-)^2 E_2 E_3}{E_1^{7}}\right)^{1/2}R^{-\Delta/2},
\end{equation*}
where
\[R=\left(\frac{(Q^{+}Q^{-})^4E^3_2E_3}{E^{13}_1}\right)\cdot \left(\frac{Q^{+}Q^{-}E_2}{E^3_1}\right)^{-2}=\frac{(Q^{+}Q^{-})^2E_2E_3}{E^7_1}.\]
Thus
\begin{align*}
L&=\big(1-(\Delta-1)^2\omega^2\big)^{-1/2}\big(\alpha\beta(1-\alpha)(1-\beta))^{\Delta-1}\left(\frac{(Q^{+}Q^{-})^2E_2E_3}{E^7_1}\right)^{-(\Delta-1)/2}\\
&=\big(1-(\Delta-1)^2\omega^2\big)^{-1/2}\left(\frac{1}{\alpha^2\beta^2(1-\alpha)^2(1-\beta)^2}\cdot\frac{(Q^{+}Q^{-})^2E_2E_3}{E^7_1}\right)^{-(\Delta-1)/2}.
\end{align*}
By \eqref{eq:pppm}, we have $\alpha=\frac{Q^+(1 +B_1 Q^-)}{E_1}$ and $\beta=\frac{Q^-(1 +B_1 Q^+)}{E_1}$, so that
\begin{align*}
L&=\big(1-(\Delta-1)^2\omega^4\big)^{-1/2}\left(\frac{E_1 E_2 E_3}{(B_2 + Q^-)^2(B_2+Q^+)^2(1 +B_1 Q^-)^2(1 +B_1 Q^+)^2}\right)^{-\frac{\Delta-1}{2}}.
\end{align*}
The lemma follows after observing the easy to prove identity
\[\frac{E_1 E_2 E_3}{(B_2 + Q^-)^2(B_2+Q^+)^2(1 +B_1 Q^-)^2(1 +B_1 Q^+)^2}=1-\omega^2.\]

\end{proof}

\subsubsection{The asymptotics of the first moment}\label{sec:asympfirst}
In this section we prove Lemma~\ref{lem:asympfirst}, i.e. compute the asymptotics of $\E_{\G}[Z^{\alpha,\beta}_G]$ for $(\alpha,\beta)=(p^+,p^-)$, see the relevant Sections~\ref{sec:firstmomentform} and~\ref{sec:generallog}.

\begin{proof}[Proof of Lemma~\ref{lem:asympfirst}.]
We will use the full-dimensional representation (see Section~\ref{sec:maxentropy}) of $g_1(\mathbf{X})$: while we are still going to refer to the values $x^{*}_{ij}$, in what follows the only free variable is going to be $x_{11}$. By the approximation
\[\binom{bn}{an}=(1+o(1))\frac{1}{\sqrt{2\pi n}}\sqrt{\frac{b}{a(b-a)}}e^{n\big(b\ln b-a\ln a-(a-b)\ln(a-b)\big)},\]
we obtain
\begin{align*}
\frac{\E_{\G}[Z^{\alpha,\beta}_G]}{\frac{1}{n}e^{n\Phi_1(\alpha,\beta,\mathbf{X}^{*})}}&=(1+o(1))\frac{1}{2\pi}\cdot \Big(\alpha\beta(1-\alpha)(1-\beta)\Big)^{\frac{\Delta-1}{2}}\cdot\\
&\qquad\left(\sum_{x_{11}}\Big(\prod_{i,j}x_{ij}\Big)^{-1/2}\frac{1}{\sqrt{2\pi n}}\cdot e^{n\big(g_1(x_{11})-g_1(x_{11}^*)\big)}\right)^{\Delta}.
\end{align*}
Note that in the above sum terms with $|x_{11}-x_{11}^*|\leq O(1)$ dominate exponentially over the rest and in a small enough ball around $x^*_{11}$ the function decays quadratically, as established by Lemma~\ref{lem:gmax}. Thus, standard integration arguments (for precise details see \cite[Chapter 9]{JLR}) 
yield
\begin{align*}
\lim_{n\rightarrow\infty}\frac{\E_{\G}[Z^{\alpha,\beta}_G]}{\frac{1}{n}e^{n\Phi_1(\alpha,\beta,\mathbf{X}^{*})}}&=\frac{1}{2\pi} \Big(\alpha\beta(1-\alpha)(1-\beta)\Big)^{(\Delta-1)/2}\Big(\prod_{i,j} x^{*}_{ij}\Big)^{-\Delta/2}\\
&\qquad\left(\frac{1}{\sqrt{2\pi}}\int^\infty_{-\infty}\exp\left(\frac{\partial^2 g_1}{\partial x_{11}^2}(x_{11}^*)\cdot\frac{(x_{11}-x_{11}^*)^2}{2}\right)d x_{11}\right)^\Delta.
\end{align*}
Let $H:=\frac{\partial^2 g_1}{\partial x_{11}^2}(x_{11}^*)$. Since $g_1$ is strictly concave and it achieves its maximum at $x_{11}^*$ we have $H<0$. It follows that
\[\frac{1}{\sqrt{2\pi}}\int^\infty_{-\infty}\exp\left(\frac{\partial^2 g_1}{\partial x_{11}^2}(x_{11}^*)\cdot\frac{(x_{11}-x_{11}^*)^2}{2}\right)d x_{11}=(-H)^{-1/2},\]
and consequently
\begin{align*}
\lim_{n\rightarrow\infty}\frac{\E_{\G}[Z^{\alpha,\beta}_G]}{\frac{1}{n}e^{n\Phi_1(\alpha,\beta,\mathbf{X}^{*})}}&= \frac{1}{2\pi} \big(\alpha\beta(1-\alpha)(1-\beta)\big)^{(\Delta-1)/2}\Big((-H)\prod_{i,j} x^{*}_{ij}\Big)^{-\Delta/2}.
\end{align*}
We have 
\[\frac{\partial^2 g_1}{\partial x_{11}^2}=-\left(\frac{1}{x_{11}}+\frac{1}{\alpha-x_{11}}+\frac{1}{\beta-x_{11}}+\frac{1}{1-\alpha-\beta+x_{11}}\right).\]
Substituting the values of $\alpha,\beta$ from \eqref{eq:pppm} and the values of the $x^*_{ij}$'s from \eqref{optimalxfirst}, we obtain
\[-H=\frac{E_1E_2}{B_1 B_2 Q^- Q^+},\ \prod_{i,j} x^{*}_{ij}=\frac{B_1B_2(Q^{+}Q^{-})^2}{E_1^4}, \mbox{ so that }
(-H)\prod_{i,j} x^{*}_{ij}=\frac{Q^{+}Q^{-}E_2}{E^3_1},\]
and the result follows.
\end{proof}

\subsubsection{The Asymptotics of the Second Moment}\label{sec:asympsec}
In this section we prove Lemma~\ref{lem:asympsec}, i.e., compute the asymptotics of $\E_{\G}[(Z^{\alpha,\beta}_G)^2]$ for $(\alpha,\beta)=(p^+,p^-)$. We have
\begin{equation*}
\E_{\G}[(Z^{\alpha,\beta}_G)^2]=\sum_{\gamma,\delta}\E_{\G}[Y^{\gamma,\delta}_G],
\end{equation*}
where $\E_{\G}[Y^{\gamma,\delta}_G]$ was calculated in Section~\ref{sec:secondmomentform} (see also the relevant Section~\ref{sec:generallog}).

\begin{proof}[Proof of Lemma~\ref{lem:asympsec}.]
As in the proof of Lemma~\ref{lem:asympfirst}, in the following we are going to refer to the full-dimensional representations (see Section~\ref{sec:maxentropy}) of $g_2,\Phi_2$, i.e. our free variables are going to be $\{y_{ij}\}$ with $1\leq i,j\leq3$, so that the rest of the $y_{ij}$ are just shorthand for their respective values (in terms of the free variables). We have that
\begin{align*}
\frac{\E_{\G}[(Z^{\alpha,\beta}_G)^2]}{\frac{1}{n^2}e^{n\Phi_2(\gamma^*,\delta^*,\mathbf{Y}^{*})}}&= (1+o(1))\frac{1}{4\pi^2}\\
&\quad\times\sum_{\gamma,\delta}\left(\frac{1}{\sqrt{2\pi n}}\right)^2\Big(\gamma\delta(1-2\alpha+\gamma)(1-2\beta+\delta)(\alpha-\gamma)^2(\beta-\delta)^2\Big)^{\frac{\Delta-1}{2}}\\
&\quad\ \times\left(\sum_\mathbf{Y}\left(\frac{1}{\sqrt{2\pi n}}\right)^9\left(\prod^4_{i=1}\prod^4_{j=1}\frac{1}{\sqrt{y_{ij}}}\right)e^{n(\Phi_2(\gamma,\delta,\mathbf{Y})-\Phi_2(\gamma^*,\delta^*,\mathbf{Y}^*))/\Delta}\right)^{\Delta}.
\end{align*}
The function $\Phi_2(\gamma,\delta,\mathbf{Y})$ has a unique maximum at $(\gamma^*,\delta^*,\mathbf{Y}^*)$. Thus, in the above sum, terms with $\norm{(\gamma,\delta,\mathbf{Y})-(\gamma^*,\delta^*,\mathbf{Y}^*)}\leq O(1)$ dominate exponentially over the rest. Moreover, in the region $\norm{(\gamma,\delta,\mathbf{Y})-(\gamma^*,\delta^*,\mathbf{Y}^*)}\leq O(1)$ the function $\Phi_2$ decays quadratically. 

For convenience, let $L:=\lim_{n\rightarrow \infty}\frac{\E_{\G}[(Z^{\alpha,\beta}_G)^2]}{\frac{1}{n^2}e^{n\Phi_2(\gamma^*,\delta^*,\mathbf{Y}^{*})}}$. As in the proof of Lemma~\ref{lem:asympfirst}, we obtain
\begin{align}
\begin{split}
L&=\left(\frac{1}{\sqrt{2\pi}}\right)^{15}\Big(\gamma^*\delta^*(1-2\alpha+\gamma^*)(1-2\beta+\delta^*)(\alpha-\gamma^*)^2(\beta-\delta^*)^2\Big)^{\frac{\Delta-1}{2}}\\
&\qquad \qquad\left(\prod^4_{i,j=1}y^*_{ij}\right)^{-\frac{\Delta}{2}}\int^{\infty}_{-\infty}\int^{\infty}_{-\infty}\left(\frac{1}{\left(\sqrt{2\pi}\right)^9}\int^{\infty}_{-\infty}e^{\frac{1}{2}(\gamma,\delta,\mathbf{Y})\cdot\mathbf{H}\cdot(\gamma,\delta,\mathbf{Y})^{T}}d\mathbf{Y}\right)^\Delta d\gamma d\delta\notag
\end{split}\notag\\
\begin{split}\label{eq:secmain}
&=\frac{1}{4\pi^2}\Big(\alpha \beta(1-\alpha)(1-\beta)\Big)^{2(\Delta-1)}\left(\prod^4_{i,j=1}y^*_{ij}\right)^{-\frac{\Delta}{2}}\\
&\qquad \qquad \left(\frac{1}{\sqrt{2\pi}}\right)^{2}\int^{\infty}_{-\infty}\int^{\infty}_{-\infty}\left(\frac{1}{\left(\sqrt{2\pi}\right)^9}\int^{\infty}_{-\infty}e^{\frac{1}{2}(\gamma,\delta,\mathbf{Y})\cdot\mathbf{H}\cdot(\gamma,\delta,\mathbf{Y})^{T}}d\mathbf{Y}\right)^\Delta d\gamma d\delta,
\end{split}
\end{align}
where $\mathbf{H}$ denotes the Hessian matrix of $\Phi_2$ evaluated at $(\gamma^*,\delta^*,\mathbf{Y}^*)$ scaled by $1/\Delta$ and the operator~$\cdot$ stands for matrix multiplication. 

We thus focus on computing the integral in \eqref{eq:secmain}. This will be achieved by two successive Gaussian integrations. First, we split the exponent of the quantity inside the integral as follows.
\begin{equation*}
\frac{1}{2}(\gamma,\delta,\mathbf{Y})\cdot\mathbf{H}\cdot(\gamma,\delta,\mathbf{Y})^{T}=\frac{1}{2}(\gamma,\delta)\cdot\mathbf{H}_{\gamma,\delta}\cdot(\gamma,\delta)^{T}-\frac{1}{2}\mathbf{Y}\cdot(-\mathbf{H}_{\mathbf{Y}})\cdot\mathbf{Y}^T+ \sum^3_{i=1}\sum^3_{j=1}T_{ij} y_{ij},
\end{equation*}
where $\mathbf{H}_{\gamma,\delta}$ denotes the principal minor of $\mathbf{H}$ corresponding to $\gamma,\delta$, $\mathbf{H}_{\mathbf{Y}}$ denotes the principal minor of $\mathbf{H}$ corresponding to $\mathbf{Y}$ and
\[T_{ij}=\frac{1}{\Delta}\left(\gamma \cdot \frac{\partial^2 \Phi_2}{\partial \gamma \partial y_{ij}}+\delta \cdot \frac{\partial^2 \Phi_2}{\partial \delta \partial y_{ij}}\right).\]

It follows that 
\begin{multline*}
\left(\int^{\infty}_{-\infty}e^{\frac{1}{2}(\gamma,\delta,\mathbf{Y})\cdot\mathbf{H}\cdot(\gamma,\delta,\mathbf{Y})^{T}}d\mathbf{Y}\right)^\Delta=\\
e^{\frac{\Delta}{2}(\gamma,\delta)\cdot\mathbf{H}_{\gamma,\delta}\cdot(\gamma,\delta)^{T}}\left(\int^{\infty}_{-\infty}e^{-\frac{1}{2}\mathbf{Y}\cdot(-\mathbf{H}_{\mathbf{Y}})\cdot\mathbf{Y}^T+ \sum^3_{i=1}\sum^3_{j=1}T_{ij} y_{ij}}d\mathbf{Y}\right)^\Delta.
\end{multline*}
Let $\mathbf{T}$ denote the row vector with entries $T_{ij}$. Note that $\mathbf{H}_\mathbf{Y}$ is the Hessian of $g_2(\mathbf{Y})$ evaluated at $\mathbf{Y}^*$. Since $g_2(\mathbf{Y})$ is concave and $g_2$ achieves its maximum at $\mathbf{Y}^*$, we have that $\mathbf{H}_\mathbf{Y}$ is negative definite. Hence,
\begin{multline*}
\left(\frac{1}{\left(\sqrt{2\pi}\right)^9}\int^{\infty}_{-\infty}e^{\frac{1}{2}(\gamma,\delta,\mathbf{Y})\cdot\mathbf{H}\cdot(\gamma,\delta,\mathbf{Y})^{T}}d\mathbf{Y}\right)^\Delta =\\
\frac{1}{\left(\mathrm{Det}(-\mathbf{H}_{\mathbf{Y}})\right)^{\Delta/2}}e^{\frac{\Delta}{2}\left(\mathbf{T}\cdot (-\mathbf{H}_\mathbf{Y})^{-1}\cdot\mathbf{T}^T+(\gamma,\delta)\cdot\mathbf{H}_{\gamma,\delta}\cdot(\gamma,\delta)^{T}\right)}.
\end{multline*}
We are thus left with the task of computing the integral 
\begin{equation}
\left(\frac{1}{\sqrt{2\pi}}\right)^2\int^{\infty}_{-\infty}\int^{\infty}_{-\infty}e^{\frac{\Delta}{2}\left(\mathbf{T}\cdot (-\mathbf{H}_\mathbf{Y})^{-1}\cdot\mathbf{T}^T+(\gamma,\delta)\cdot\mathbf{H}_{\gamma,\delta}\cdot(\gamma,\delta)^{T}\right)}.\label{eq:firstgaussian}
\end{equation}
It can easily be seen that the expression in the exponent is a quadratic polynomial in $\gamma,\delta$, so that we may again perform Gaussian integration. We have 
\begin{equation}
\frac{\Delta}{2}\left(\mathbf{T}\cdot (-\mathbf{H}_\mathbf{Y})^{-1}\cdot\mathbf{T}^T+(\gamma,\delta)\cdot\mathbf{H}_{\gamma,\delta}\cdot(\gamma,\delta)^{T}\right)= D\gamma^2+E\gamma\delta+F\delta^2,\label{eq:defDEF}
\end{equation}
for some complicated $D,E,F$, which can be determined explicitly by \eqref{eq:defDEF}, once we perform the substitutions. Hence,
\begin{equation}
\left(\frac{1}{\sqrt{2\pi}}\right)^2\int^{\infty}_{-\infty}\int^{\infty}_{-\infty}e^{\frac{\Delta}{2}\left(\mathbf{T}\cdot (-\mathbf{H}_\mathbf{Y})^{-1}\cdot\mathbf{T}^T+(\gamma,\delta)\cdot\mathbf{H}_{\gamma,\delta}\cdot(\gamma,\delta)^{T}\right)}d\gamma d\delta=\frac{1}{(4DF-E^2)^{1/2}},\label{eq:secondgaussian}
\end{equation}
provided that $4DF-E^2>0$, so that the integration is meaningful. We will check this condition later.

Combining equations \eqref{eq:secmain}, \eqref{eq:firstgaussian}, \eqref{eq:secondgaussian}, we obtain
\begin{align}
L&=\frac{1}{4\pi^2}\Big(\alpha \beta(1-\alpha)(1-\beta)\Big)^{2(\Delta-1)}\bigg(\mathrm{Det}(-\mathbf{H}_{\mathbf{Y}})\prod^4_{i,j=1}y^*_{ij}\bigg)^{-\Delta/2}(4DF-E^2)^{-1/2}\notag\\
&=\frac{1}{4\pi^2}\Big(\alpha \beta(1-\alpha)(1-\beta)\Big)^{2(\Delta-1)}\cdot R,\label{eq:secmomentalmost}
\end{align}
where
\[R:=\bigg(\mathrm{Det}(-\mathbf{H}_{\mathbf{Y}})\prod^4_{i,j=1}y^*_{ij}\bigg)^{-\Delta/2}(4DF-E^2)^{-1/2}.\]
At this point, we use Maple to perform the remaining calculations. The solution in the form of \eqref{eq:optimalxsecond} is particularly handy, since it rationalizes the relevant expressions. We have:
\begin{equation*}
\begin{gathered}
\mathrm{Det}(-\mathbf{H}_{\mathbf{Y}})=\frac{E_1^{19} E_2^3 E_3}{(B_1B_2)^8(Q^+ Q^-)^{12}},\quad \prod^4_{i,j=1}y^*_{ij}=\frac{(B_1B_2)^8(Q^+ Q^-)^{16}}{E_1^{32}},\\
4DF-E^2=\frac{E_1^{7}}{(Q^+Q^-)^2 E_2 E_3}\cdot\left(1-(\Delta-1)^2\omega^2\right).
\end{gathered}
\end{equation*}
Note that by Lemma~\ref{lem:technicalinequality}, we have $4DF-E^2>0$, so that the integration in \eqref{eq:secondgaussian} is indeed meaningful. After plugging the above values into $R$ and substituting back in \eqref{eq:secmomentalmost}, the lemma follows.
\end{proof}

\subsection{Gadget}
\label{sec:gadget}
In this subsection we prove Lemmas \ref{lem:modfirst} and \ref{lem:modsecond}. We will need the following bound.
\begin{lemma}[general form of Lemma 3.2 in~\cite{Sly10}]\label{lllo}
Let $b_1,\dots,b_\ell\geq 0$ and $y_1,\dots,y_\ell$ be integers. Let $a=b_1+\dots+b_\ell$ and $x=y_1+\dots+y_\ell$. Assume
$y_i^2\leq b_i$ for each $i\in\{1,\dots,\ell\}$. Then
$$
\frac{\binom{a+x}{ b_1+y_1, b_2+y_2,\dots,b_\ell+y_\ell}}{\binom{a}{b_1,b_2,\dots,b_\ell}} = \left(1+O\left(\sum_{i=1}^\ell y_i^2/b_i \right)\right)\frac{a^x}{b_1^{y_1}\dots
b_\ell^{y_\ell}}.
$$
\end{lemma}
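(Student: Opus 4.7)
The plan is to write the ratio of multinomial coefficients as an explicit product of linear factors, then factor out the anticipated main term $a^x/\prod_i b_i^{y_i}$ and show the remaining correction is $1+O(\sum_i y_i^2/b_i)$ via a Taylor expansion of the logarithm.

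First I would use the identity
\[
\frac{\binom{a+x}{b_1+y_1,\dots,b_\ell+y_\ell}}{\binom{a}{b_1,\dots,b_\ell}} \;=\; \frac{(a+x)!/a!}{\prod_i (b_i+y_i)!/b_i!} \;=\; \frac{\prod_{j=1}^{x}(a+j)}{\prod_{i=1}^{\ell}\prod_{k=1}^{y_i}(b_i+k)},
\]
which follows from the definition of the multinomial coefficient and cancellation of common factorial factors. Pulling out an $a$ from each numerator factor and a $b_i$ from each denominator factor gives
\[
\frac{\prod_{j=1}^{x}(a+j)}{\prod_{i,k}(b_i+k)} \;=\; \frac{a^x}{\prod_i b_i^{y_i}}\cdot \frac{\prod_{j=1}^{x}(1+j/a)}{\prod_{i=1}^{\ell}\prod_{k=1}^{y_i}(1+k/b_i)},
\]
so everything reduces to showing that the logarithm $L$ of the second (correction) factor satisfies $L = O\bigl(\sum_i y_i^2/b_i\bigr)$.

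Next I would bound $L = \sum_{j=1}^x \log(1+j/a) - \sum_{i}\sum_{k=1}^{y_i}\log(1+k/b_i)$ using the elementary inequality $|\log(1+u) - u|\le u^2$ valid for $0\le u\le 1/2$ (the assumption $y_i^2\le b_i$ and $x\le\sum_i\sqrt{b_i}\le\sqrt{\ell a}$ keep every argument $j/a$ and $k/b_i$ in this range up to an initial harmless regime where $\sum_i y_i^2/b_i=\Theta(1)$ and the conclusion is vacuous). The linear parts sum telescopically to $\tfrac{x(x+1)}{2a}-\sum_i \tfrac{y_i(y_i+1)}{2b_i}$, and the quadratic error is bounded by $\sum_j (j/a)^2 + \sum_{i,k}(k/b_i)^2 = O(x^3/a^2) + O\bigl(\sum_i y_i^3/b_i^2\bigr)$, each of which is at most a constant times $\sum_i y_i^2/b_i$ under the hypothesis $y_i^2\le b_i$ (since $y_i^3/b_i^2 = (y_i/b_i)\cdot(y_i^2/b_i)\le y_i^2/b_i$).

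The one genuine inequality in the plan is the Cauchy--Schwarz bound
\[
\frac{x^2}{a} \;=\; \frac{\bigl(\sum_i y_i\bigr)^2}{\sum_i b_i} \;\le\; \sum_i \frac{y_i^2}{b_i},
\]
obtained by writing $y_i = \sqrt{b_i}\cdot(y_i/\sqrt{b_i})$ and applying Cauchy--Schwarz in the numerator. Together with the preceding bounds, this gives $L = O\bigl(\sum_i y_i^2/b_i\bigr)$, and exponentiating yields the claimed multiplicative error. The main obstacle is simply verifying that the $\log$-expansion regime is the relevant one: if $\sum_i y_i^2/b_i$ is not small the conclusion is trivially true for an appropriate constant, so one only needs to argue carefully in the small regime, where the hypothesis $y_i^2\le b_i$ forces each individual ratio $j/a$ and $k/b_i$ to be small and legitimizes the Taylor expansion.
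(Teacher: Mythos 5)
Your proof is correct and takes essentially the same route as the paper's: both begin by writing the ratio of multinomials as $\frac{a^x}{\prod_i b_i^{y_i}}\cdot\frac{\prod_j(1+j/a)}{\prod_{i,k}(1+k/b_i)}$, bound the correction factor, and finish with the Cauchy--Schwarz inequality $x^2/a\le\sum_i y_i^2/b_i$. The only difference is that you spell out the logarithmic Taylor expansion where the paper simply asserts that the correction factor is $1+O(x^2/a+\sum_i y_i^2/b_i)$; one small caveat is that your remark that the conclusion is ``trivially true'' when $\sum_i y_i^2/b_i=\Theta(1)$ is not quite accurate (the ratio could then exceed $1+C\sum_i y_i^2/b_i$), but this regime is never used, and the paper's proof also tacitly assumes the error terms are small.
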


\begin{proof}
We have
\begin{align*}
\frac{\binom{a+x}{ b_1+y_1, b_2+y_2,\dots,b_\ell+y_\ell}}{\binom{a}{b_1,b_2,\dots,b_\ell}}& =
\frac{(a+x)!}{a!}\prod_{i=1}^\ell\frac{b_i!}{(b_i+y_i)!}
\\ &=
\frac{a^x}{b_1^{y_1}\dots
b_\ell^{y_\ell}}\left(\prod_{j=1}^x(1+j/a)\right)/
\prod_{i=1}^\ell \left(\prod_{j=1}^{y_i} (1+j/b_i)\right) \\
& = \frac{a^x}{b_1^{y_1}\dots
b_\ell^{y_\ell}}\left(1+O\left(\frac{x^2}{a}+\frac{y_1^2}{b_1}+\dots+\frac{y_\ell^2}{b_\ell}\right)\right).
\end{align*}
The lemma now follows from the fact that $x^2/a\leq
y_1^2/b_1+\dots+y_\ell^2/b_\ell^2$ (which follows from the Cauchy-Schwarz
inequality).
\end{proof}

\subsubsection{First Moment}\label{sec:modfirstproof}

\begin{proof}[Proof of Lemma~\ref{lem:modfirst}.]
By \eqref{eq:firstmoment1}, recall that
   \begin{align*}
\E_\G\big[
Z^{\alpha,\beta}_{G}\big] =\lambda^{(\alpha+\beta) n} \binom{n}{\alpha n}\binom{n}{\beta n} \left(
\sum_{x\leq\min\{\alpha,\beta\}}
\kappa^{\alpha,\beta,x}_{G} \right)^{\Delta},
\end{align*}
where
\begin{align*}
\kappa^{\alpha,\beta,x}_{G} = & \frac{\binom{\alpha
n}{x n}\binom{(1-\alpha)n}{(\beta-x)n}}{\binom{n}{\beta n}} B_1^{xn}
B_2^{(1-\alpha-\beta+x)n}.
\end{align*}
Similarly, we have 
\begin{multline*}
\E_{\overline{\G}}\big[ Z^{\alpha,\beta}_{\overline{G}}(\eta)\big]=\lambda^{(\alpha+\beta)n+\eta^-_1+\eta^-_2} \binom{n}{\alpha n}\binom{n}{\beta n}\\
\times \left(
\sum_{x\leq\min\{\alpha+\eta_1^-/n,\beta+\eta_2^-/n\}}
\kappa^{\alpha,\beta,x}_{\overline{G}}(\eta) \right)^{\Delta-1} \left(
\sum_{x\leq\min\{\alpha,\beta\}}
\kappa^{\alpha,\beta,x}_{G} \right),
\end{multline*}
where
\begin{align*}
\kappa^{\alpha,\beta,x}_{\overline{G}}(\eta) = & \frac{\binom{\alpha
n+\eta_1^-}{x n}\binom{(1-\alpha)n+\eta_1^+}{(\beta-x)n+\eta_2^-}}{\binom{n+m'}{\beta
n+\eta^-_2}} B_1^{xn}
B_2^{(1-\alpha-\beta+x)n+\eta_1^+-\eta_2^-}.
\end{align*}
From Lemma~\ref{lllo} and the fact that
$\eta_1^+,\eta_1^-,\eta_2^+,\eta_2^-\leq n^{1/4}$, we have the
following estimate on the
ratio of $\kappa^{\alpha,\beta,x}_{\overline{G}}(\eta)$ and
$\kappa^{\alpha,\beta,x}_{G}$:
\begin{equation}\label{wwzzz}
\frac{\kappa^{\alpha,\beta,x}_{\overline{G}}(\eta)}{\kappa^{\alpha,\beta,x}_{G}}
= (1+o(1))
\frac{\alpha^{\eta_1^-}(1-\alpha)^{\eta_1^+}\beta^{\eta_2^-}(1-\beta)^{\eta_2^+}}{(\alpha-x)^{\eta_1^-}
(1-\alpha-\beta+x)^{\eta_1^+-\eta_2^-}
(\beta-x)^{\eta_2^-}}B_2^{\eta_1^+ -
\eta_2^-}.
%\left(\frac{\alpha}{\alpha-x}\right)^{\eta_1^-}
%\left(\frac{1-\alpha}{1-\alpha-\beta+x}\right)^{\eta_1^+}
%\left(\frac{1-\alpha-\beta+x}{\beta-x}\right)^{\eta_2^-}\\
%& \beta^{\eta_2^-} (1-\beta)^{\eta_2^+} B^{\eta_1^+ - \eta_2^+}.
\end{equation}

It is easy to see that for any $\alpha,\beta$, the logarithm of the function $\kappa^{\alpha,\beta,x}_{G}$ scaled by $n$ is exactly the full-dimensional representation of the function $g_1(\mathbf{X})$ (see Section~\ref{sec:generallog}). Since $g_1(\mathbf{X})$ is strictly concave with respect to $\mathbf{X}$, it has a unique maximum. With these observations, it can easily be calculated that the value of $x=x_{11}$ for
which $g_1(\mathbf{X})$ achieves its maximum is given by the unique positive solution of \[B_1B_2(\alpha-x^*)(\beta-x^*)=x^*(1-\alpha-\beta+x^*).\]
By Lemma~\ref{lem:gmax}, the logarithm
of $\kappa^{\alpha,\beta,x}_{G}$ decays quadratically
in the distance from $x^*$. Let
\[{\cal A} = \{x : |x-x^*|\leq n^{-1/4}\}.\]
The relative contribution of terms outside ${\cal A}$ to $\E_\G\big[Z^{\alpha,\beta}_{G}\big]$ is $\exp(-\Omega(n^{1/2}))$ and hence
can be omitted. Similarly, using~\eqref{wwzzz}, the relative
contribution of terms outside ${\cal A}$ to $\E_{\overline{\G}}\big[ Z^{\alpha,\beta}_{\overline{G}}(\eta)\big]$ is $\exp(-\Omega(n^{1/2}))$
and hence can also be omitted.

For $x\in{\cal A}$ we have
\begin{align*}
\frac{\kappa^{\alpha,\beta,x}_{\overline{G}}(\eta)}{\kappa^{\alpha,\beta,x}_{G}}
= & (1+o(1))
\frac{\alpha^{\eta_1^-}(1-\alpha)^{\eta_1^+}\beta^{\eta_2^-}(1-\beta)^{\eta_2^+}}{(\alpha-x^*)^{\eta_1^-}
(1-\alpha-\beta+x^*)^{\eta_1^+-\eta_2^-}
(\beta-x^*)^{\eta_2^-}}B_2^{\eta_1^+ - \eta_2^-}.
\end{align*}
Thus, 
\begin{multline*}
\frac{\E_{\overline{\G}}\big[ Z^{\alpha,\beta}_{\overline{G}}(\eta)\big]}{\E_\G\big[
Z^{\alpha,\beta}_{G}\big]}=\\
(1+o(1))\left(
\frac{\alpha^{\eta_1^-}(1-\alpha)^{\eta_1^+}\beta^{\eta_2^-}(1-\beta)^{\eta_2^+}}{(\alpha-x^*)^{\eta_1^-}
(1-\alpha-\beta+x^*)^{\eta_1^+-\eta_2^-}
(\beta-x^*)^{\eta_2^-}}B_2^{\eta_1^+ - \eta_2^-}
\right)^{\Delta-1}\lambda^{\eta_1^-+\eta_2^-},
\end{multline*}
which proves the first part of the lemma.

When $(\alpha,\beta)=(p^+,p^-)$, we express $\alpha,\beta$ and $x^*=x^*_{11}$ in terms of $Q^+,Q^-$
using equations~\eqref{eq:pppm}
and~\eqref{optimalxfirst}. We obtain
\begin{align*}
\lefteqn{
\frac{\alpha^{\eta_1^-}(1-\alpha)^{\eta_1^+}\beta^{\eta_2^-}(1-\beta)^{\eta_2^+}}{(\alpha-x^*)^{\eta_1^-}
(1-\alpha-\beta+x^*)^{\eta_1^+-\eta_2^-}
(\beta-x^*)^{\eta_2^-}}B_2^{\eta_1^+ - \eta_2^+}
} \hspace{0.6in}
 \\
&=\left(\frac{B_2(1-\alpha)(1-\beta)}{1-\alpha-\beta+x^*}\right)^{m'} 
\left(\frac{\alpha(1-\alpha-\beta+x^*)}{B_2
(\alpha-x^*)(1-\alpha)}\right)^{\eta_1^-}
\left(\frac{\beta(1-\alpha-\beta+x^*)}{B_2
(\beta-x^*)(1-\beta)}\right)^{\eta_2^-}\\
& = \left(\frac{(B_2+Q^+)(B_2+Q^-)}{B_2+Q^++Q^-+B_1Q^+Q^-}\right)^{m'}
\left(\frac{1+B_1Q^-}{B_2+Q^-}\right)^{\eta_1^-}
\left(\frac{1+B_1Q^+}{B_2+Q^+}\right)^{\eta_2^-},
\end{align*}
and hence
\begin{equation}\label{eq:qwertya}
\frac{\E_{\overline{\G}}\big[ Z^{\alpha,\beta}_{\overline{G}}(\eta)\big]}{\E_\G\big[
Z^{\alpha,\beta}_{G}\big]}= (1+o(1))
C^* \left(\lambda\left(\frac{1+B_1Q^-}{B_2+Q^-}\right)^{\Delta-1}\right)^{\eta_1^-}
\left(\lambda\left(\frac{1+B_1Q^+}{B_2+Q^+}\right)^{\Delta-1}\right)^{\eta_2^-},
\end{equation}
where $C^*$ is given by  \eqref{eq:cstar}. Using the fact that $Q^+,Q^-$ satisfy \eqref{eq:densitiesone}, we have
\[Q^+=\lambda\left(\frac{1+B_1Q^-}{B_2+Q^-}\right)^{\Delta-1}\quad\mbox{and}\quad Q^-=\lambda\left(
\frac{1+B_1Q^+}{B_2+Q^+}\right)^{\Delta-1}.\]
Hence, we obtain
\begin{eqnarray*}
\frac{\E_{\overline{\G}}\big[ Z^{\alpha,\beta}_{\overline{G}}(\eta)\big]}{\E_\G\big[
Z^{\alpha,\beta}_{G}\big]} 
&= & (1+o(1))C^*
(1+Q^+)^{m'}(1+Q^-)^{m'}
\\
& & \times\left(\frac{Q^+}{1+Q^+}\right)^{\eta_1^-}
\left(\frac{1}{1+Q^+}\right)^{\eta_1^+}
\left(\frac{Q^-}{1+Q^-}\right)^{\eta_2^-}
\left(\frac{1}{1+Q^-}\right)^{\eta_2^+}.
\end{eqnarray*}
The second part of the lemma follows after observing that $q^{\pm}=\frac{Q^\pm}{1+Q^\pm},\ 1-q^\pm=\frac{1}{1+Q^\pm}$.
\end{proof}

\subsubsection{Second Moment}\label{lem:modsecondproof}
\begin{proof}[Proof of Lemma~\ref{lem:modsecond}.]
By \eqref{eq:secondmoment}, we have
\begin{equation}\label{OOUU1}
\begin{aligned}
\E_{\G}\big[\big(Z^{\alpha,\beta}_{G}\big)^2\big]&=\lambda^{2(\alpha+\beta) n}\binom{n}{\alpha n}\binom{n}{\beta
n}\\
&\quad\times\sum_{\gamma,\delta}\binom{\alpha n}{\gamma
n}\binom{(1-\alpha)n}{(\alpha-\gamma)n}\binom{\beta n}{\delta
n}\binom{(1-\beta)n}{(\beta-\gamma)n}\bigg(\sum_{\mathbf{Y}}\kappa^{\gamma,\delta,\mathbf{Y}}_G\bigg)^\Delta
\end{aligned}
\end{equation}
where
\begin{align*}
\kappa^{\gamma,\delta,\mathbf{Y}}_G = &
\frac{\prod_{i=1}^4\binom{L_i
n}{y_{i1}n,y_{i2}n,y_{i3}n,y_{i4}n}\prod^4_{j=1}
\binom{R_jn}{y_{1j}n,y_{2j}n,y_{3j}n,y_{4j}n}}{\binom{n}{y_{11}n, y_{12}n, \dots ,y_{44}n}}\\
& \times B_1^{(2y_{11} + y_{12} + y_{13}+ y_{21} + y_{22} +y_{31}+y_{33})n}
B_2^{(y_{22} + y_{24} + y_{33} +
y_{34}+y_{42}+y_{43}+2y_{44})n},
\end{align*}
\begin{equation*}
\begin{array}{lll}
L_1=\gamma,& L_2=L_3=\alpha-\gamma,& L_4=1-2\alpha+\gamma,\\
R_1=\delta,& R_2=R_3=\beta-\delta,& R_4=1-2\beta+\delta,
\end{array}
\end{equation*}
and $\mathbf{Y}$ stands for the nonnegative variables $y_{ij}$,
$1\leq i,j\leq 4$ such that \[\mbox{$\sum_j$}y_{ij}=L_i\mbox{ and }\mbox{$\sum_i$}y_{ij}=R_j.\]

Similarly, we have that
\begin{equation}\label{OOUU2}
\begin{aligned}
\E_{\overline{\G}}\big[\big(Z^{\alpha,\beta}_{\overline{G}}(\eta)\big)^2\big]&=\lambda^{2(\alpha+\beta) n + 2(\eta_1^- + \eta_2^-)}\binom{n}{\alpha
n}\binom{n}{\beta n}
\\ &\quad \times\sum_{\gamma,\delta}\binom{\alpha n}{\gamma
n}\binom{(1-\alpha)n}{(\alpha-\gamma)n}\binom{\beta n}{\delta
n}\binom{(1-\beta)n}{(\beta-\gamma)n}\\
&\qquad \qquad\times\bigg(\sum_{\hat{\mathbf{Y}}}\kappa^{\gamma,\delta,\hat{\mathbf{Y}}}_{\overline{G}}(\eta)\bigg)^{\Delta-1}\bigg(\sum_{\mathbf{Y}}\kappa^{\gamma,\delta,\mathbf{Y}}_G\bigg),
\end{aligned}
\end{equation}
where
\begin{align*}
\kappa^{\gamma,\delta,\hat{\mathbf{Y}}}_{\overline{G}}(\eta)=
& \frac{\prod_{i=1}^4\binom{\hat{L}_i
n}{\hat{y}_{i1}n,\hat{y}_{i2}n,\hat{y}_{i3}n,\hat{y}_{i4}n}
\prod^4_{j=1} \binom{\hat{R}_jn}{\hat{y}_{1j}n,\hat{y}_{2j}n,\hat{y}_{3j}n,\hat{y}_{4j}n}}{\binom{n+m'}{\hat{y}_{11}n, \hat{y}_{12}n, \dots ,\hat{y}_{44}n}}\\
& \times \B_1^{(2\hat{y}_{11} + \hat{y}_{12} + \hat{y}_{13}+ \hat{y}_{21} +
\hat{y}_{22} +\hat{y}_{31}+\hat{y}_{33})n}
\B_2^{(\hat{y}_{22} + \hat{y}_{24} + \hat{y}_{33} +
\hat{y}_{34}+\hat{y}_{42}+\hat{y}_{43}+2\hat{y}_{44})n},
\end{align*}
\begin{equation*}
\begin{array}{lll}
\hat{L}_1=\gamma+\eta_1^-,& \hat{L}_2=\hat{L}_3=\alpha-\gamma,&
\hat{L}_4=1-2\alpha+\gamma+\eta_1^+,\\
\hat{R}_1=\delta+\eta_2^-,& \hat{R}_2=\hat{R}_3=\beta-\delta,&
\hat{R}_4=1-2\beta+\delta+\eta_2^+,
\end{array}
\end{equation*}
and $\hat{\mathbf{Y}}$ stands for the nonnegative variables
$\hat{y}_{ij}$, $1\leq i,j\leq 4$ such that \[\mbox{$\sum_j$}\hat{y}_{ij}=\hat{L}_i\mbox{ and }\mbox{$\sum_j$}\hat{y}_{ij}=\hat{R}_j.\]

For a given $\mathbf{Y}$, consider $\hat{\mathbf{Y}}$ such that
$y_{ij}=\hat{y}_{ij}$ for all $i,j\in\{1,2,3,4\}$, except
$\hat{y}_{14}=y_{14}+\eta_1^-$, $\hat{y}_{41}=y_{41}+\eta_2^-$, and
$\hat{y}_{44}=y_{44}+m'-\eta_1^- - \eta_2^-$. This is always possible, except at the boundary, but as we will see shortly, these cases can be safely ignored. By
Lemma~\ref{lllo}, we have
\begin{equation}\label{eyy}
\frac{\kappa^{\gamma,\delta,\hat{\mathbf{Y}}}_{\overline{G}}(\eta)}{\kappa^{\gamma,\delta,\mathbf{Y}}_G}=
(1+o(1)) \frac{\gamma^{\eta_1^-}(1-2\alpha+\gamma)^{\eta_1^+}\delta^{\eta_2^-}(1-2\beta+\delta)^{\eta_2^+}}{y_{14}^{\eta_1^-}
y_{41}^{\eta_2^-} y_{44}^{m'-\eta_1^--\eta_2^-}}
B_2^{2m'-2\eta_1^--2\eta_2^-}.
\end{equation}
Let
$$
{\cal A} = \{(\gamma,\delta,\mathbf{Y}) :
\|(\gamma,\delta,\mathbf{Y})-(\gamma^*,\delta^*,\mathbf{Y}^*)\|_2\leq
n^{-1/4}\},
$$
where $\gamma^*=\alpha^2$, $\delta^*=\beta^2$ and $\mathbf{Y}^*$ is given by \eqref{eq:optimalxsecond}. By Condition~\ref{cond:maxima} and Lemma~\ref{lem:gmax}, the relative contribution of terms outside ${\cal A}$ to $\E_{\G}\big[\big(Z^{\alpha,\beta}_{G}\big)^2\big]$ is
$\exp(-\Omega(n^{1/2}))$ and hence can be omitted. Similarly,
using~\eqref{eyy}, the relative contribution of terms outside
${\cal A}$ to $\E_{\overline{\G}}\big[\big(Z^{\alpha,\beta}_{\overline{G}}(\eta)\big)^2\big]$ is
$\exp(-\Omega(n^{1/2}))$ and hence
can also be omitted. For $(\gamma,\delta,{\mathbf Y})\in{\cal A}$ we have
\begin{equation*}
\frac{\kappa^{\gamma,\delta,\hat{\mathbf{Y}}}_{\overline{G}}(\eta)}{\kappa^{\gamma,\delta,\mathbf{Y}}_G}=
(1+o(1))
\frac{(\gamma^*)^{\eta_1^-}(1-2\alpha+\gamma^*)^{\eta_1^+}(\delta^*)^{\eta_2^-}(1-2\beta+\delta^*)^{\eta_2^+}}{(y_{14}^*)^{\eta_1^-}
(y_{41}^*)^{\eta_2^-} (y_{44}^*)^{m'-\eta_1^--\eta_2^-}}
B_2^{2m'-2\eta_1^--2\eta_2^-}.
\end{equation*}
Using equations~\eqref{eq:pppm}
and~\eqref{eq:optimalxsecond}, we express $\alpha,\beta$ and $\mathbf{Y}$ in terms of $Q^+,Q^-$. For $(\gamma,\delta,{\mathbf
Y})\in{\cal A}$, we obtain that
\begin{align*}
\frac{\kappa^{\gamma,\delta,\hat{\mathbf{Y}}}_{\overline{G}}(\eta)}{\kappa^{\gamma,\delta,\mathbf{Y}}_G}
& = (1+o(1)) \left(\frac{B_2^2
(1-2\alpha+\gamma^*)(1-2\beta+\delta^*)}{y^*_{44}}\right)^{m'}\\
& \quad \times \left(\frac{\gamma^* y^*_{44}}{B_2^2 y^*_{14}
(1-2\alpha+\gamma^*)}\right)^{\eta_1^-}
\left(\frac{\delta^* y^*_{44}}{B_2^2 y^*_{41}
(1-2\beta+\delta^*)}\right)^{\eta_2^-}\\
& = (1+o(1)) \left(\frac{(B_2+Q^+)(B_2+Q^-)}{B_2+Q^++Q^-+B_1Q^+Q^-}\right)^{2m'}\\
&\quad \times\left(\frac{1+B_1Q^-}{B_2+Q^-}\right)^{2\eta_1^-}
\left(\frac{1+B_1Q^+}{B_2+Q^+}\right)^{2\eta_2^-}.
\end{align*}
Plugging the last equation into~\eqref{OOUU1} and~\eqref{OOUU2} we
obtain the lemma.
\end{proof}

\subsubsection{Proof of Lemma~\ref{lem:rfsgadget}}\label{sec:rfsgadgetproof}
\begin{proof}[Proof of Lemma~\ref{lem:rfsgadget}.]
By equations \eqref{eq:secondmodratio} and \eqref{eq:qwertya}, we have
\[\lim_{n\rightarrow\infty}\frac{\E_{\overline{\G}}\big[\big(Z^{\alpha,\beta}_{\overline{G}}(\eta)\big)^2\big]}{\big(\E_{\overline{\G}}\big[Z^{\alpha,\beta}_{\overline{G}}(\eta)\big]\big)^2}=\lim_{n\rightarrow\infty}\frac{\E_{\G}\big[\big(Z^{\alpha,\beta}_{G}\big)^2\big]}{\big(\E_{\G}\big[Z^{\alpha,\beta}_{G}\big]\big)^2}.\]
The lemma follows, after using Lemma~\ref{lem:ratiofirstsecond}.
\end{proof}

\section{Asymptotically Almost Surely results}\label{sec:smallgraph}

\subsection{Overview}\label{sec:oversmall}
In this section, we apply the small subgraph conditioning method to obtain Lemmas~\ref{lem:smallgraph} and~\ref{lem:smallgraphgadget}, see Section~\ref{sec:proof-approach} for a brief discussion. Closely related results have appeared in \cite{MWW,Sly10} for the hard-core model, but here we need to account for the extra parameters $B_1,B_2$. The proofs are similar, however the technical calculations are a bit more complicated due to the extra parameters.

The following theorem is taken from \cite{MWW}. The notation $[X]_{m}$ refers to the $m$-th order falling factorial of the variable $X$.

\begin{theorem}\label{thm:smallgraphmethod}
Let $\lambda_i>0$ and $\delta_i>-1$ be real numbers for $i=1,2,\hdots$. Let $r(n)\rightarrow 0$ and suppose that for each $n$ there are random variables $X_i=X_i(n)$, $i=1,2,\hdots$ and $Y=Y(n)$, all defined on the same probability space $\G=\G_n$ such that $X_i$ is nonnegative integer valued, $Y$ is nonnegative and $\E[ Y]>0$ (for $n$ sufficiently large). Suppose furthermore that
\begin{enumerate}
\item For each $k\geq 1$, the variables $X_1,\hdots,X_k$ are asymptotically independent Poisson random variables with $\E X_i\rightarrow \lambda_i$,
\item for every finite sequence $m_1,\hdots,m_k$ of nonnegative integers,
\begin{equation}
\frac{\E[Y[X_1]_{m_1}\cdots [X_k]_{m_k}]}{\E[Y]}\rightarrow \prod^k_{i=1}\big(\lambda_i(1+\delta_i)\big)^{m_i},\label{eq:prodsmall}
\end{equation}
\item $\sum_i \lambda_i \delta^2_i<\infty$.
\item $\E[Y^2]/(\E[Y])^2\leq \exp\big(\sum_i \lambda_i \delta^2_i)+o(1)$ as $n\rightarrow \infty$.
\end{enumerate}
Then $Y>r(n)\E[Y]$ asymptotically almost surely.
\end{theorem}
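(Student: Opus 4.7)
The conclusion is equivalent to asserting that $W_n := Y/\E[Y]$ converges in probability to a strictly positive random variable. Guided by the exponential form of the limit appearing in condition (2), the natural candidate is
\[
W_\infty := \prod_{i=1}^\infty (1+\delta_i)^{X_i^\infty} e^{-\lambda_i \delta_i},
\]
where $(X_i^\infty)_{i\geq 1}$ are independent Poisson variables with means $\lambda_i$. Condition (3) ensures that this infinite product converges almost surely to a strictly positive limit, since $\log(1+\delta_i) - \delta_i = O(\delta_i^2)$ makes the log-sum a convergent series of centered independent terms with summable variances. Moreover, using the identity $\E[a^{\mathrm{Pois}(\lambda)}] = e^{\lambda(a-1)}$, a direct calculation yields $\E[W_\infty^2] = \exp\bigl(\sum_i \lambda_i \delta_i^2\bigr)$, so condition (4) is exactly the asymptotic upper-bound half of a second-moment match.

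The plan is to prove that $W_n \to W_\infty$ in $L^2$, which implies convergence in probability; then positivity of $W_\infty$ yields $Y > r(n)\E[Y]$ a.a.s.\ for any $r(n) \to 0$. The argument has three steps. First, for fixed $k$, use condition (2) together with standard factorial-moment inversion (expressing $\mathbf{1}\{X_1=m_1,\ldots,X_k=m_k\}$ as an alternating sum of products of falling factorials $[X_i]_{\ell_i}$) to show the joint limit
\[
\E\bigl[W_n \cdot f(X_1,\ldots,X_k)\bigr] \;\longrightarrow\; \E\bigl[U_k^\infty \cdot f(X_1^\infty,\ldots,X_k^\infty)\bigr]
\]
for every bounded $f$, where $U_k^\infty := \prod_{i=1}^k (1+\delta_i)^{X_i^\infty} e^{-\lambda_i \delta_i}$. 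Second, the explicit formula gives $\E[(U_k^\infty)^2] = \exp\bigl(\sum_{i\leq k} \lambda_i \delta_i^2\bigr) \nearrow \E[W_\infty^2]$, so $U_k^\infty \to W_\infty$ in $L^2$ as $k\to\infty$. Third, expand
\[
\E[(W_n - U_k^\infty)^2] = \E[W_n^2] - 2\E[W_n \cdot U_k^\infty] + \E[(U_k^\infty)^2],
\]
where on the left we couple the copy of $(X_i)$ used in $U_k^\infty$ to $W_n$'s via the graph distribution $\mathcal{G}_n$. Taking $n\to\infty$ for fixed $k$: the first term is at most $\E[W_\infty^2] + o(1)$ by (4); the cross term tends to $2\E[(U_k^\infty)^2]$ by Step 1; the third term equals $\E[(U_k^\infty)^2]$. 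Hence $\limsup_n \E[(W_n - U_k^\infty)^2] \leq \E[W_\infty^2] - \E[(U_k^\infty)^2]$, and sending $k \to \infty$ completes the $L^2$-convergence.

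\textbf{Main obstacle.} The delicate point is applying condition (2), which only controls mixed factorial moments, to the exponential quantity $U_k^\infty$ (a product of $(1+\delta_i)^{X_i}$). The natural route is to expand $(1+\delta_i)^{X_i} = \sum_{\ell\geq 0} \tfrac{\delta_i^\ell}{\ell!}[X_i]_\ell$, interchange summation with $\E[W_n\,\cdot\,]$ using (2), and identify the resulting series with $(1+\delta_i(1+\delta_i))^{X_i^\infty}/\ldots$; the technical cost is justifying the interchange with error bounds uniform in $n$, for which one exploits precisely the tight calibration between the exponential target in (2) and the exponential second-moment limit in (4). Without this exact matching, Step 3 would leave a residual gap preventing $L^2$ convergence; with it, the proof closes cleanly.
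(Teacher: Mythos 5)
The paper does not prove this theorem: it is imported verbatim from \cite{MWW} (and goes back to the small subgraph conditioning method of Robinson--Wormald and Janson; see \cite{Wormald,Janson,JLR}). So there is no ``paper proof'' to compare against, and your proposal stands or falls on its own merits. What you sketch is, in structure, the standard proof. The core chain is correct: set $U_{kn}:=\prod_{i\le k}(1+\delta_i)^{X_i(n)}e^{-\lambda_i\delta_i}$ on the same space as $W_n$ (not $U_k^\infty$, which lives on a different space --- your phrase ``$W_n\to W_\infty$ in $L^2$'' is not literal and should be replaced by this two-step approximation); expand $\E[(W_n-U_{kn})^2]$; use condition (2) to get $\E[W_nU_{kn}]\to\exp(\sum_{i\le k}\lambda_i\delta_i^2)$ and condition (1) to get $\E[U_{kn}^2]\to\exp(\sum_{i\le k}\lambda_i\delta_i^2)$; use condition (4) for $\limsup_n\E[W_n^2]\le\exp(\sum_i\lambda_i\delta_i^2)$. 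The bonus observation you should make explicit is that nonnegativity of $\E[(W_n-U_{kn})^2]$ then forces $\liminf_n\E[W_n^2]\ge\exp(\sum_{i\le k}\lambda_i\delta_i^2)$ for every $k$, so (4) becomes an equality in the limit and the $k\to\infty$ step closes. From $\E[(W_n-U_{kn})^2]\to 0$, $U_{kn}\Rightarrow U_k^\infty$, $U_k^\infty\to W_\infty$ a.s.\ with $W_\infty>0$ a.s.\ (condition (3) plus a three-series argument), you get $W_n\Rightarrow W_\infty$ in distribution; then a portmanteau argument (not literal $L^2$ or probability convergence) gives $\P(W_n\le r(n))\le\P(W_n\le\epsilon)\to\P(W_\infty\le\epsilon)\to 0$.

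The gap you flag is genuine and is exactly where the classical proofs invest effort: passing from the falling-factorial control in (2) to the exponential quantities $\E[W_n(1+\delta_i)^{X_{in}}]$ requires a truncation/dominated-convergence argument (the integrands are unbounded, and when some $\delta_i<0$ the series alternate), and likewise for $\E[U_{kn}^2]$ from the Poisson convergence in (1). Your remark that the ``exact matching'' with condition (4) makes this work is a bit too optimistic --- (4) is only an upper bound, and it does not by itself supply the uniform integrability; the correct fix is the standard truncation plus Fatou argument in, e.g., \cite[Thm.~9.12]{JLR}. Also the expression ``$(1+\delta_i(1+\delta_i))^{X_i^\infty}/\ldots$'' is garbled; the interchange produces the scalar $e^{\lambda_i\delta_i(1+\delta_i)}$, not another function of $X_i^\infty$. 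With those details supplied, your argument is the standard and correct one.
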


\begin{proof}
See, for example, \cite[Theorem 7.1]{MWW}.
\end{proof}

Recall that $Z^{\alpha,\beta}_G$  is the measure of configurations in $\Sigma^{\alpha,\beta}_G$ for a random $\Delta$-regular bipartite graph $G$. Let $X_i$ be the number of cycles of length $i$ in $G$. Clearly $X_i=0$ iff $i$ is odd. We have the following lemmas.
\begin{lemma}[Lemma 7.3 in \cite{MWW}]\label{lem:cycle}
Condition 1 of Theorem~\ref{thm:smallgraphmethod} holds for even $i$ with
\[\lambda_i=\frac{r(\Delta,i)}{i},\]
where $r(\Delta,i)$ is the number of ways one can properly edge color a cycle of length $i$ with $\Delta$ colors.
\end{lemma}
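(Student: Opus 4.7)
The plan is to apply the method of moments: by the Brun sieve for Poisson convergence, it suffices to show that for every finite sequence of even integers $2 \leq i_1 < \cdots < i_k$ and all nonnegative integers $m_1,\dots,m_k$,
\[
\E_\G\bigl[[X_{i_1}]_{m_1}\cdots [X_{i_k}]_{m_k}\bigr] \;\longrightarrow\; \prod_{j=1}^k \lambda_{i_j}^{m_j}.
\]
Odd-length cycles are absent from the bipartite $G$, which handles the case $X_i\equiv 0$ for odd $i$.

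First I would compute the individual mean. For even $i=2\ell$, write $X_i=\sum_C \mathbf{1}[C\subseteq G]$, where $C$ ranges over potential $i$-cycles in the complete bipartite graph $K_{n,n}$. The number of such $C$ is $[n]_\ell^2/(2\ell)$, the factor $2\ell$ accounting for the $\ell$ rotations and two reflections of a labeled cycle. Since $G$ is the union of $\Delta$ independent uniform perfect matchings, the event $\{C\subseteq G\}$ decomposes over edge-to-matching assignments: two incident edges of $C$ cannot lie in the same matching, so the admissible assignments are precisely the proper edge colorings $\chi\colon E(C)\to[\Delta]$, of which there are $r(\Delta,i)$. For a fixed $\chi$, independence of the matchings together with the identity $\P[E'\subseteq M]=(n-|E'|)!/n!$ for a uniformly random perfect matching $M$ gives total probability $n^{-i}(1+O(n^{-1}))$, so that
\[
\E_\G[X_i] \;=\; \frac{[n]_\ell^2}{2\ell}\cdot r(\Delta,i)\cdot n^{-i}\,(1+o(1)) \;\longrightarrow\; \frac{r(\Delta,i)}{i} \;=\; \lambda_i.
\]

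Next I would extend the argument to the joint factorial moment. Expanding each $[X_{i_j}]_{m_j}$ as a sum of indicators over ordered $m_j$-tuples of distinct $i_j$-cycles turns the left-hand side into a sum over ordered tuples $(C_1,\dots,C_M)$, $M=\sum_j m_j$, of distinct cycles of the prescribed lengths, weighted by $\P[C_1,\dots,C_M\subseteq G]$. For tuples of pairwise vertex-disjoint cycles the edges and the matching-color constraints decouple across the $C_s$, so the single-cycle computation applies in product form; the number of disjoint tuples is $\prod_j([n]_{\ell_j}^2/(2\ell_j))^{m_j}(1+o(1))$, and summing over proper edge colorings of each cycle yields exactly $\prod_j \lambda_{i_j}^{m_j}(1+o(1))$ in the limit.

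The delicate step is showing that tuples in which two cycles share a vertex contribute $o(1)$. For any non-disjoint tuple, let $H$ be the union of the cycles appearing in it; then $|V(H)|\leq I-1$ and $|E(H)|\leq I$ where $I=\sum_j m_j i_j$. A crude bound gives that the number of embeddings of a fixed isomorphism class of $H$ into $K_{n,n}$ is $O(n^{|V(H)|})=O(n^{I-1})$, the number of relevant isomorphism classes of $H$ and of matching-color assignments to $E(H)$ is $O_{\Delta,I}(1)$, and for any fixed embedding and assignment the probability that all required edges appear in $G$ is $O(n^{-|E(H)|})$. Since $|E(H)|\geq |V(H)|$ whenever $H$ contains a cycle, the overall contribution of non-disjoint tuples is $O(n^{-1})$. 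This completes the verification of the factorial moment limit, and hence Condition~1 of Theorem~\ref{thm:smallgraphmethod} with $\lambda_i=r(\Delta,i)/i$ as claimed.
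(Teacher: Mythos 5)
The paper does not supply its own proof of this lemma: it cites \cite[Lemma 7.3]{MWW}. Your approach (method of moments / Brun sieve for Poisson convergence) is the standard one for such cycle-count statements and is essentially what MWW do, so the route is sound, and the computation of $\E_\G[X_i]$ is correct.

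There is, however, a genuine gap in the final step, where you dispose of the non-disjoint tuples. You bound the contribution per isomorphism class of $H$ by $O(n^{|V(H)|-|E(H)|})$ and invoke ``$|E(H)|\geq |V(H)|$ whenever $H$ contains a cycle.'' But that inequality only gives $O(1)$, not $o(1)$; and your other bound $|V(H)|\leq I-1$ combines with $|E(H)|\leq I$ only to give $O(n^{-1})$ when the cycles are edge-disjoint (so that $|E(H)|=I$). When cycles in the tuple share edges, neither estimate closes the argument. The fix is to observe the strict inequality $|E(H)|\geq |V(H)|+1$ for any non-disjoint tuple of \emph{distinct} cycles: every component of $H$ is a union of cycles and hence has $|E|\geq|V|$, and if a component contains two distinct cycles (which must happen when the tuple is not vertex-disjoint, since sharing a vertex puts two distinct cycles in the same component) then its cycle rank is at least $2$, so $|E|-|V|\geq 1$ for that component and therefore for $H$. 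With this, each non-disjoint class contributes $O(n^{-1})$, and since there are $O_{\Delta,I}(1)$ classes the total is $O(n^{-1})=o(1)$, completing the factorial-moment calculation.
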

\begin{proof}
The proof of Lemma~\ref{lem:cycle} is given in \cite[Lemma 7.3]{MWW}.
\end{proof}

\begin{lemma}\label{lem:ratiosmallgraph}
For $(\alpha,\beta)=(p^\pm,p^\mp)$, we have that
\[\frac{\E_\G[Z^{\alpha,\beta}_G X_i]}{\E_\G[Z^{\alpha,\beta}_G]}\rightarrow \lambda_i(1+\delta_i)\quad\mbox{ and }\quad \frac{\E_{\overline{\G}}[Z^{\alpha,\beta}_{\overline{G}}(\eta) X_i]}{\E_{\overline{\G}}[Z^{\alpha,\beta}_{\overline{G}}(\eta)]}\rightarrow \lambda_i(1+\delta_i),\]
where
\[\delta_i=\frac{(1-B_1B_2)^{i/2}(Q^+Q^-)^{i/2}}{\Big((1+B_1 Q^+)(1+B_1 Q^-)(B_2+Q^+)(B_2+Q^-)\Big)^{i/2}}.\]
\end{lemma}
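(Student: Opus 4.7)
The plan is to follow the approach of Mossel, Weitz, and Wormald \cite[Lemma 7.4]{MWW} for the hard-core model, generalized to the 2-spin setting. The key object will be a $2\times 2$ ``transfer matrix'' per cycle step whose spectral gap produces $\delta_i$. I will carry out the argument in detail for the random bipartite graph distribution $\G$; the adjustments for $\overline{\G}$ are routine and parallel those already used in Section~\ref{sec:modifiedaaa}.

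First I would write
\[
\E_\G[Z^{\alpha,\beta}_G X_i] \;=\; \sum_{\sigma \in \Sigma^{\alpha,\beta}} \sum_{C} \E_\G\bigl[w_G(\sigma)\,\mathbb{1}[C\subset G]\bigr],
\]
where $C$ ranges over potential $i$-cycles, i.e.\ a cyclic sequence of $i/2$ vertices from $V_1$ and $i/2$ from $V_2$ (alternating) together with a proper edge-coloring of the cycle by $\Delta$ colors (since same-color edges form a matching). Because the $\Delta$ matchings are independent, the inner expectation factors over colors. For each color $c$, conditioning on the cycle edges of color $c$ lying in $M_c$ leaves a uniform matching on the remaining vertices; the cycle edges contribute a deterministic product of $B_1, B_2, 1$ determined by the spins of their endpoints. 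Summed over vertex choices, this yields a factor $(1+o(1))\,n^i$, and after dividing by the symmetry factor $2i$ and collecting over proper $\Delta$-edge-colorings, the combinatorial prefactor becomes $r(\Delta,i)/i = \lambda_i$ from Lemma~\ref{lem:cycle}.

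The main computation is the sum over spin assignments at the cycle vertices. Fix a cycle $v_1 u_1 v_2 u_2 \cdots v_{i/2} u_{i/2}$ with $v_j \in V_1$ and $u_j \in V_2$, and let $s_j, t_j \in \{-,+\}$ denote their spins under $\sigma$. Summing the ratio $\E_\G[w_G(\sigma)\mathbb{1}[C\subset G]]/(\E_\G[w_G(\sigma)]\P_\G[C\subset G])$ against the empirical distribution on $(V_1,V_2)$ vertex labels yields, to leading order, the trace of a $2\times 2$ transfer matrix $T$. The entry $T_{s,t}$ encodes the ratio of the edge weight $W(s,t)\in\{B_1,1,B_2\}$ against the maximum-entropy expected edge weight $R^{*}_s C^{*}_t$ coming from Lemma~\ref{lem:gmax} (cf.\ the values in~\eqref{optimalxfirst} at $(\alpha,\beta)=(p^\pm,p^\mp)$). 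Concretely, using $p^\pm,p^\mp$ and~\eqref{optimalxfirst}, the matrix $T$ has Perron eigenvalue $1$ (with eigenvector capturing the first-moment normalization) and a second eigenvalue whose square equals
\[
\frac{(1-B_1B_2)^2\,Q^+ Q^-}{(1+B_1 Q^+)(1+B_1 Q^-)(B_2+Q^+)(B_2+Q^-)}.
\]
Taking the $i/2$-th power of the trace and subtracting the contribution of the Perron eigenvalue gives exactly $1+\delta_i$ with $\delta_i$ as in the statement, because each ``step'' of the transfer alternates sides of the bipartition and so a full cycle corresponds to $i/2$ applications of $T$.

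For the gadget $\overline{G}\sim\overline{\G}$, the same argument applies verbatim to cycles that avoid the tree appendages (which is the dominant term since $m'=o(n^{1/4})$); the extra matching between $W_+$ and $W_-$ merely refines which matchings appear in $r(\Delta,i)$, and the conditional-expectation corrections coming from specifying $\sigma_U=\eta$ are absorbed into the $(1+o(1))$ factors already accounted for in Lemma~\ref{lem:modfirst}. The main obstacle I expect is the transfer-matrix diagonalization step: in the hard-core case one has a rank-one perturbation of a diagonal matrix and the second eigenvalue falls out immediately, whereas here one must verify the identity giving the second eigenvalue in terms of $Q^\pm, B_1, B_2$, which requires invoking the fixed-point equations~\eqref{eq:densitiesone} to simplify. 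This is tedious but elementary algebra, analogous to the identity used at the end of the proof of Lemma~\ref{lem:ratiofirstsecond}.
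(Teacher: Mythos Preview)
Your approach is essentially the paper's: decompose $X_i$ over properly $\Delta$-edge-colored rooted oriented cycles, compute the conditional ratio $\E[w_G(\sigma)\mid\text{cycle present}]/\E[w_G(\sigma)]$ matching by matching, and collect the sum over spin patterns along the cycle via a transfer matrix whose trace yields $1+\delta_i$.

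The one place your write-up needs tightening is the $2\times 2$ formulation. Because $\alpha=p^+\neq p^-=\beta$, the one-edge transfer from a $V_1$-vertex to a $V_2$-vertex is \emph{not} the same matrix as in the reverse direction (your description of $T_{s,t}$ as a single-edge ratio $W(s,t)/(R^*_sC^*_t)$ does not by itself have Perron eigenvalue $1$). What you must take for $T$ is the product of the two one-step matrices, i.e.\ a two-step map $V_1\to V_1$; only then does ``$i/2$ applications'' make sense. With that reading, the second eigenvalue of $T$ equals $\omega$ itself (not a number whose square is $\omega$), and $\mathrm{Tr}(T^{i/2})=1+\omega^{i/2}=1+\delta_i$ as required. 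The paper avoids this bookkeeping by working with a $4\times 4$ matrix $\mathbf{A}$ that tracks both the spin and the side of the current cycle vertex; after normalization its eigenvalues are $\pm 1,\pm\sqrt{\omega}$, so $\mathrm{Tr}((x\mathbf{A})^i)=2(1+\delta_i)$ directly, and the factor $2$ is absorbed by the $1/(2i)$ from rooting and orienting. The two formulations are equivalent, but the $4\times 4$ version handles the bipartite asymmetry transparently and makes the eigenvalue computation (your anticipated ``main obstacle'') a straightforward determinant.
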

\noindent The proof of Lemma~\ref{lem:ratiosmallgraph} is given in Section~\ref{sec:smallgraphregular}.

\begin{lemma}\label{lem:sumasymptotics}
It holds that \[\exp\left(\sum_{\mbox{even }i\geq 2}\lambda_i\delta^2_i\right)=\big(1-\omega^2\big)^{-(\Delta-1)/2}\big(1-(\Delta-1)^2\omega^2\big)^{-1/2},\] where $\omega$ is given by \eqref{eq:definitionofomega}.
\end{lemma}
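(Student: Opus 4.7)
The plan is to recognize both factors on the right-hand side as exponentials of logarithmic series restricted to even indices. Three elementary ingredients combine to produce the identity.

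First, I would verify by direct comparison between the definition of $\delta_i$ in Lemma~\ref{lem:ratiosmallgraph} and the definition of $\omega$ in \eqref{eq:definitionofomega} that, for even $i$,
\[
\delta_i^2 = \omega^i.
\]
This is the only essentially algebraic observation: the exponents of $1-B_1B_2$, of $Q^+Q^-$, and of each of the four factors in the denominator of $\omega$ line up after squaring $\delta_i$.

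Second, I would invoke the closed-form expression $r(\Delta,i) = (\Delta-1)^i + (-1)^i(\Delta-1)$, which for even $i$ simplifies to $(\Delta-1)^i + (\Delta-1)$. This is the chromatic polynomial of the cycle $C_i$ evaluated at $\Delta$: since the line graph of $C_i$ equals $C_i$, proper edge $\Delta$-colorings of $C_i$ are in bijection with proper vertex $\Delta$-colorings of $C_i$.

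Third, I would use the standard restriction of the logarithm's power series to even indices,
\[
\sum_{\substack{i\geq 2\\ i \text{ even}}} \frac{x^i}{i} \;=\; \tfrac{1}{2}\bigl(-\ln(1-x) - \ln(1+x)\bigr) \;=\; -\tfrac{1}{2}\ln(1-x^2),
\]
applied with $x=(\Delta-1)\omega$ and with $x=\omega$. Both applications are legitimate because Lemma~\ref{lem:technicalinequality} gives $(\Delta-1)^2\omega<1$, which (together with $\omega>0$ and $\Delta\geq 3$) yields $(\Delta-1)^2\omega^2<1$, so in particular $|(\Delta-1)\omega|<1$.

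Combining the three ingredients, the exponent splits as
\[
\sum_{\substack{i\geq 2\\ i\text{ even}}} \lambda_i\delta_i^2 \;=\; \sum_{\substack{i\geq 2\\ i\text{ even}}} \frac{((\Delta-1)\omega)^i}{i} \;+\; (\Delta-1)\!\!\sum_{\substack{i\geq 2\\ i\text{ even}}}\! \frac{\omega^i}{i} \;=\; -\tfrac{1}{2}\ln\bigl(1-(\Delta-1)^2\omega^2\bigr) - \tfrac{\Delta-1}{2}\ln\bigl(1-\omega^2\bigr),
\]
and exponentiating delivers the claimed identity. No serious obstacle is anticipated: the entire argument is bookkeeping once the algebraic identity $\delta_i^2=\omega^i$ is in hand and the standard chromatic-polynomial formula for $C_i$ is recalled.
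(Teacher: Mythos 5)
Your proposal is correct and takes essentially the same approach as the paper: identify $\delta_i^2=\omega^i$, expand $r(\Delta,i)=(\Delta-1)^i+(\Delta-1)$ for even $i$, split the exponent into two even-index logarithmic series, and invoke Lemma~\ref{lem:technicalinequality} for convergence. The only inessential difference is your cute line-graph justification of the edge-coloring count, which the paper simply imports as Lemma~\ref{lem:properedgecolorings} from \cite{MWW}.
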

\noindent The proof of Lemma~\ref{lem:sumasymptotics} is given in Section~\ref{sec:smallgraphregular}.

\begin{lemma}\label{lem:finitesequence}
For $(\alpha,\beta)=(p^\pm,p^\mp)$ and for every finite sequence $m_1,\hdots,m_k$ of nonnegative integers, it holds that 
\[\frac{\E_\G\big[Z^{\alpha,\beta}_G(\eta) [X_2]_{m_1}\cdots [X_{2k}]_{m_k}\big]}{\E_\G[Z^{\alpha,\beta}_G]}\rightarrow \prod^{k}_{i=1}\big(\lambda_i(1+\delta_i)\big)^{m_i},\]
and
\[\frac{\E_{\overline{\G}}\big[Z^{\alpha,\beta}_{\overline{G}}(\eta)[X_2]_{m_1}\cdots [X_{2k}]_{m_k}\big]}{\E_{\overline{\G}}[Z^{\alpha,\beta}_{\overline{G}}(\eta)]}\rightarrow \prod^{k}_{i=1}\big(\lambda_i(1+\delta_i)\big)^{m_i}.\]
\end{lemma}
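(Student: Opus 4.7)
The plan is to extend the single-cycle calculation of Lemma~\ref{lem:ratiosmallgraph} to joint factorial moments of several cycle counts. Recall that $[X_{2i}]_{m_i}$ counts ordered $m_i$-tuples of distinct cycles of length $2i$, so $\prod_{i=1}^{k}[X_{2i}]_{m_i}$ enumerates ordered sequences $\mathcal{C}=(C_1,\ldots,C_m)$ of distinct cycles with prescribed length statistics, where $m=\sum_i m_i$. Writing the expectation as a sum over configurations $\sigma\in\Sigma^{\alpha,\beta}$ and over such cycle sequences, we obtain
\[\E_\G\Big[Z^{\alpha,\beta}_G\prod_{i=1}^{k}[X_{2i}]_{m_i}\Big]=\sum_{\mathcal{C}}\sum_{\sigma}w_G(\sigma)\,\mathbb{P}_\G[\mathcal{C}\subseteq G],\]
and the task is to evaluate this sum asymptotically.

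First, I would argue that the dominant contribution comes from vertex-disjoint cycle collections. Since each cycle has bounded length and the expected number of short cycles in $\G(n,\Delta)$ is $O(1)$, the probability that any two cycles in $\mathcal{C}$ share a vertex is $O(1/n)$, and hence the total contribution of non-disjoint collections is $o(1)$ compared with the main term. Moreover, for the gadget graph $\overline{G}$, the probability that a short cycle uses a vertex of $U$ is $O(|U|/n)=o(1)$, since $|U|=2m'=o(n^{1/4})$, so we may restrict attention to cycles lying entirely in $W$. Within this regime the probability $\mathbb{P}_\G[\mathcal{C}\subseteq G]$ factors over the cycles, and the number of ordered $m_i$-tuples of distinct labelled $2i$-cycles has expectation asymptotic to $\lambda_i^{m_i}$, a joint-Poisson extension of Lemma~\ref{lem:cycle} obtained by the standard second-moment computation for short cycles in random regular graphs.

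Next, for a fixed vertex-disjoint collection $\mathcal{C}$, I would compute the ratio $\E_\G[Z^{\alpha,\beta}_G\mathbf{1}\{\mathcal{C}\subseteq G\}]/(\E_\G[Z^{\alpha,\beta}_G]\,\mathbb{P}_\G[\mathcal{C}\subseteq G])$ by the same transfer-matrix argument underlying Lemma~\ref{lem:ratiosmallgraph}. Conditioning on $\mathcal{C}$ simply forces a $\Delta$-edge-coloring on each cycle and removes the corresponding edges from the matching structure; because the cycles are vertex-disjoint, these constraints decouple across cycles, and the Laplace-type analysis of Section~\ref{sec:asympfirst} proceeds with the same saddle point $\mathbf{X}^*$ given in \eqref{optimalxfirst}. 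The local contribution of each cycle of length $2i$ is therefore exactly the trace-type product that Lemma~\ref{lem:ratiosmallgraph} identifies as $(1+\delta_i)$. Multiplying over the cycles and combining with the $\lambda_i^{m_i}$ count of cycle tuples yields the product on the right-hand side of \eqref{eq:prodsmall}. The argument for $\overline{\G}$ is essentially identical: Lemma~\ref{lem:modfirst} shows that the conditioning on $\sigma_U=\eta$ only shifts the constant prefactor and not the saddle point, so the single-cycle formula of Lemma~\ref{lem:ratiosmallgraph} produces the same $(1+\delta_i)$ factor per cycle, and the product structure carries over.

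The main obstacle will be the bookkeeping to justify that contributions from different cycles genuinely factor to leading order. The two delicate points are (i) verifying that forcing $O(1)$ specified edges to belong to the $\Delta$ random matchings does not perturb the Laplace-method saddle point used in Lemma~\ref{lem:asympfirst}, and (ii) in the gadget case, tracking simultaneously the $\Delta-1$ matchings between $W_+\cup U_+$ and $W_-\cup U_-$ together with the extra $W_+$--$W_-$ matching. Both issues are mild: the Laplace method is insensitive to $O(1)$ perturbations of the integrand, and the extra matching contributes additively to the same exponent at the same saddle point, so any short cycle passing through it picks up the same factor $(1+\delta_i)$ after the same computation. The asymptotic independence between cycle counts of different lengths follows from the joint Poisson convergence in Lemma~\ref{lem:cycle}, completing the verification of condition (2) of Theorem~\ref{thm:smallgraphmethod}.
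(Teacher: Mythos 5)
Your proposal reconstructs the standard small-graph-conditioning argument (Janson/Wormald, and in this form MWW Lemma 7.5): expand $\prod_i [X_{2i}]_{m_i}$ as a sum over ordered collections of distinct cycles, note that vertex-disjoint collections dominate and (for $\overline{G}$) that cycles through $U$ are negligible because $|U|=o(n^{1/4})$, and then factor the conditional expectation over the cycles so that each contributes the same $(1+\delta_i)$ factor as in the single-cycle computation of Lemma~\ref{lem:ratiosmallgraph}. This is precisely what the paper relies on — it cites the MWW proof verbatim — so your route and the paper's coincide.
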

\begin{proof}
The proof of Lemma~\ref{lem:finitesequence} is identical to \cite[Proof of Lemma 7.5]{MWW}.\end{proof}

We are now in position to prove Lemmas~\ref{lem:smallgraph} and \ref{lem:smallgraphgadget}.
\begin{proof}[Proofs of Lemmas~\ref{lem:smallgraph} and~\ref{lem:smallgraphgadget}.]
We verify the conditions of Theorem~\ref{thm:smallgraphmethod} in each case: Condition 1 holds (Lemma~\ref{lem:cycle}), Condition 2 holds (Lemma~\ref{lem:finitesequence}), Conditions 3 and 4 hold (Lemma~\ref{lem:sumasymptotics} and Lemmas~\ref{lem:ratiofirstsecond} and~\ref{lem:rfsgadget}). Lemmas~\ref{lem:smallgraph} and \ref{lem:smallgraphgadget} follow by picking $r(n)=1/n$ and $r(n)=1/\sqrt{n}$ respectively.
\end{proof}

\subsection{Random Bipartite $\Delta$-regular Graphs}\label{sec:smallgraphregular}
In this section, we prove Lemma~\ref{lem:ratiosmallgraph}. We give the proof for the variables $Z^{\alpha,\beta}_G$ and omit the proof for the variables $Z^{\alpha,\beta}_{\overline{G}}(\eta)$, which can be carried out the same way as in \cite[Lemma 3.8]{Sly10} and along the lines of Section~\ref{sec:modfirstproof}.

Recall that $r(\Delta,i)$ denotes the number of proper $\Delta$-edge colorings of a cycle of length $i$.
\begin{lemma}\label{lem:properedgecolorings}
$r(\Delta,i)=(\Delta-1)^i+(-1)^i(\Delta-1)$.
\end{lemma}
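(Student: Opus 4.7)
The plan is to reduce the edge-coloring count to a standard trace/chromatic-polynomial computation. Observe that a proper $\Delta$-edge-coloring of the cycle $C_i$ is exactly a cyclic sequence of colors $(c_1,\ldots,c_i)\in\{1,\ldots,\Delta\}^i$ with $c_j\neq c_{j+1}$ for $j=1,\ldots,i$ (indices mod $i$). Equivalently, it is a closed walk of length $i$ on the complete graph $K_\Delta$ (with vertex set the palette of colors), where each step uses any of the $\Delta-1$ edges leaving the current color.

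I would then compute the number of such closed walks via the transfer-matrix method. Let $A$ be the adjacency matrix of $K_\Delta$; then $r(\Delta,i)=\mathrm{tr}(A^i)$. The matrix $A$ equals $J-I$ where $J$ is the all-ones matrix, so its eigenvalues are $\Delta-1$ (with multiplicity $1$, on the all-ones eigenvector) and $-1$ (with multiplicity $\Delta-1$, on the orthogonal complement). Consequently
\[
r(\Delta,i)=\mathrm{tr}(A^i)=(\Delta-1)^i+(\Delta-1)(-1)^i,
\]
which is the desired formula.

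As a sanity check I would verify the identity directly for $i=2$ (giving $\Delta(\Delta-1)$, the number of ordered color pairs with $c_1\neq c_2$) and for $i=3$ (giving $\Delta(\Delta-1)(\Delta-2)$, the number of proper $3$-colorings of a triangle's edges), both of which match the formula. There is no real obstacle here; the only mild subtlety is the reduction step identifying edge colorings of $C_i$ with closed walks on $K_\Delta$, but this is immediate because consecutive edges on the cycle share exactly one vertex and hence must receive different colors. Alternatively, one could invoke the well-known chromatic polynomial $P(C_i,k)=(k-1)^i+(-1)^i(k-1)$ of the cycle graph (noting $L(C_i)=C_i$), proved by a single deletion-contraction recurrence $P(C_i,k)=P(P_i,k)-P(C_{i-1},k)$ with $P(P_i,k)=k(k-1)^{i-1}$; either route yields the claim.
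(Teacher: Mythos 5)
Your proof is correct. Note that the paper does not actually supply its own argument for this lemma: it simply cites \cite[Lemma 7.6]{MWW} and moves on, so there is no in-text proof to compare against. Your reduction is sound: a proper $\Delta$-edge-coloring of the $i$-cycle is a cyclic word $(c_1,\dots,c_i)$ over a $\Delta$-letter alphabet with no two cyclically consecutive letters equal (equivalently, a proper vertex coloring of $C_i$, since $L(C_i)\cong C_i$, or a closed walk of length $i$ in $K_\Delta$ with a distinguished base point). The transfer-matrix count $r(\Delta,i)=\mathrm{tr}(A^i)$ with $A=J-I$ is then immediate, and the spectrum $\{\Delta-1\}\cup\{-1,\dots,-1\}$ gives $(\Delta-1)^i+(\Delta-1)(-1)^i$. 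Your sanity checks at $i=2,3$ and the alternative deletion-contraction derivation of the cycle chromatic polynomial are also correct. Both routes are standard and either would serve as a complete replacement for the citation; nothing is missing.
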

\begin{proof}
The (simple) proof of Lemma~\ref{lem:properedgecolorings} is given in \cite[Lemma 7.6]{MWW}.
\end{proof}

\begin{proof}[Proof of Lemma~\ref{lem:ratiosmallgraph}.]
For sets $S,T$ such that $S\subset V_1, T\subset V_2,|S|=\alpha n, |T|=\beta n$, denote by $Y_{S,T}$ the measure of the configuration $\sigma$ that $S,T$ induce, i.e. for a vertex $v\in V(G)$, $\sigma(v)=1$ iff $v\in S\cup T$.

Fix a specific pair of $S,T$. By symmetry, 
\begin{equation}
\frac{\E[Z^{\alpha,\beta}_GX_i]}{\E[Z^{\alpha,\beta}_G]}=\frac{\E[Y_{S,T} X_i]}{\E[Y_{S,T}]}. \label{eq:smallratio}
\end{equation}
We now decompose $X_i$ as follows:
\begin{itemize}
\item $\xi$ will denote a proper $\Delta$-edge colored, rooted and oriented $i$-cycle ($r(\Delta,i)$ possibilities), in which the vertices are colored with red($R$), blue($B$) and white($W$) such that no two adjacent vertices in the cycle have both $R$ or $B$ color, i.e. the only assignments which are prohibited for an edge are $(R,R)$, $(B,B)$.  The color of the edges will prescribe which of the $\Delta$ perfect matchings an edge of a (potential) cycle will belong to. $R$ denotes vertices belonging to $S$, $B$ denotes vertices belonging to $T$ and $W$ the remaining vertices.
\item Given $\xi$, $\zeta$ denotes a position that an $i$-cycle can be in (i.e. the exact vertices it traverses, in order) such that prescription of the vertex colors of $\xi$ is satisfied. 
\item $\mathbf{1}_{\xi,\zeta}$ is the indicator function whether a cycle specified by $\xi,\zeta$ is present in the graph $G$.
\end{itemize}
Note that each possible cycle corresponds to exactly $2i$ different configurations $\xi$ (the number of ways to root and orient the cycle). For each of those $\xi$, the respective sets of configurations $\zeta$ are the same. Hence, we may write
\[X_i=\frac{1}{2i}\sum_\xi\sum_\zeta \mathbf{1}_{\xi,\zeta}.\]
Let $P_1:= \Pr[\mathbf{1}_{\xi,\zeta}=1]$. It follows that
\begin{align*}
\E[Y_{S,T}X_i]&=\frac{1}{2i}\sum_\xi\sum_\zeta P_1\cdot\E[Y_{S,T}|\mathbf{1}_{\xi,\zeta}=1].
\end{align*}
In light of \eqref{eq:smallratio}, we need to study the ratio $\E[Y_{S,T}|\mathbf{1}_{\xi,\zeta}=1]/\E[Y_{S,T}]$. At this point, to simplify notation, we may assume that $\xi,\zeta$ are fixed. 

We have shown in Section~\ref{sec:firstmomentform} that
\[ \E[Y_{S,T}]=\lambda^{(\alpha+\beta)n}\left(\sum_{\delta} \frac{\binom{\alpha n}{\delta n}\binom{(1-\alpha)n}{(\beta-\delta)n}}{\binom{n}{\beta n}}B_1^{\delta n}B_2^{(1-\alpha-\beta+\delta)n}\right)^\Delta,\]
where $\delta$ denotes the number of edges between $S,T$ in one matching.

To calculate $\E[Y_{S,T}|\mathbf{1}_{\xi,\zeta}=1]$, we need some notation. For colors $c_1,c_2\in\{R,B,W\}$, we say that an edge is of type $\{c_1,c_2\}$ if its endpoints have colors $c_1,c_2$. Let $j_1,j_2$ denote the number of red and blue vertices in the coloring prescribed by $\xi$. For $k = 1,\hdots,\Delta$, let $a_1(k)$ denote the number of edges of color $k$ of type $\{R,B\}$, $a_2(k)$ denote the number of edges of color $k$ of type $\{W,W\}$, $d_1(k)$ denote the number of edges of color $k$ of type $\{R,W\}$, $d_2(k)$ denote the number of edges of color $k$ of type $\{B,W\}$. Finally, for $j=1,2$ let $a_j=\sum_k a_j(k)$ and $d_j=\sum_k d_j(k)$. By considering the sum of the degrees of $R$ vertices, the sum of the degrees of $B$ vertices and the total number of edges of the cycle, we obtain the following equalities.
\begin{equation}
a_1+d_1=2j_1,\ a_1+d_2=2j_2,\ a_1+a_2+d_1+d_2=i.\label{eq:graphsimple}
\end{equation}
A straightforward modification of the arguments in Section~\ref{sec:firstmomentform} yields
\begin{multline*}
\E[Y_{S,T}|\mathbf{1}_{\xi,\zeta}=1]=\\\lambda^{(\alpha+\beta)n}\prod^\Delta_{k=1}\left(\sum_{\delta_k} \frac{\binom{\alpha n-a_1(k)-d_1(k)}{\delta_k n}\binom{(1-\alpha)n-a_2(k)-d_2(k)}{(\beta-\delta_k)n-a_1(k)-d_2(k)}}{\binom{n-a_1(k)-a_2(k)-d_1(k)-d_2(k)}{\beta n-a_1(k)-d_2(k)}}B_1^{\delta_k n+a_1(k)}B_2^{(1-\alpha-\beta+\delta_k) n+a_1(k)}\right).
\end{multline*}

By Lemma~\ref{lllo}, we have
\begin{align*}
\frac{\binom{\alpha n-a_1(k)-d_1(k)}{\delta_k n}}{\binom{\alpha n}{\delta_k n}}&\sim \left(\frac{\alpha-\delta_k}{\alpha}\right)^{a_1(k)+d_1(k)},\\
\frac{\binom{(1-\alpha)n-a_2(k)-d_2(k)}{(\beta-\delta_k)n-a_1(k)-d_2(k)}}{\binom{(1-\alpha)n}{(\beta-\delta_k)n}}&\sim \left(\frac{1-\alpha-\beta+\delta_k}{1-\alpha}\right)^{a_2(k)+d_2(k)}\left(\frac{\beta-\delta_k}{1-\alpha-\beta+\delta_k}\right)^{a_1(k)+d_2(k)},\\
\frac{\binom{n-a_1(k)-a_2(k)-d_1(k)-d_2(k)}{\beta n-a_1(k)-d_2(k)}}{\binom{n}{\beta n}}&\sim\beta^{a_1(k)+d_2(k)}\left(1-\beta\right)^{a_2(k)+d_1(k)}.
\end{align*}
By the same line of arguments as in the proofs of Lemmas~\ref{lem:modfirst} and~\ref{lem:modsecond}, we obtain that
\begin{align*}
\frac{\E[Y_{S,T}|\mathbf{1}_{\xi,\zeta}=1]}{\E[Y_{S,T}]}&\sim\frac{(\delta^{*})^{a_2-a_1}(\alpha-\delta^{*})^{a_1+d_1}(\beta-\delta^{*})^{a_1+d_2}(B_1B_2)^{a_1}}{\alpha^{a_1+d_1}\beta^{a_1+d_2}(1-\alpha)^{a_2+d_2}(1-\beta)^{a_2+d_1}}.
\end{align*}
Clearly $P_1\sim n^{-i}$ and for given $\xi$, the number of possible $\zeta$ is asymptotic to \[\alpha^{j_1}\beta^{j_2}(1-\alpha)^{i/2-j_1}(1-\beta)^{i/2-j_2}n^i.\]
Thus, for the given $\xi$, we have
\begin{align*}
\frac{\sum_\zeta P_1\E[Y_{S,T}|\mathbf{1}_{\xi,\zeta}=1]}{\E[Y_{S,T}]}\sim R,
\end{align*}
where 
\[R:=\frac{(1-\alpha-\beta+\delta^{*})^{a_2-a_1}(\alpha-\delta^{*})^{a_1+d_1}(\beta-\delta^{*})^{a_1+d_2}(B_1B_2)^{a_1}}{\alpha^{a_1+d_1-j_1}\beta^{a_1+d_2-j_2}(1-\alpha)^{a_2+d_2+j_1-i/2}(1-\beta)^{a_2+d_1+j_2-i/2}}.\]
Utilize \eqref{eq:graphsimple} to express everything in terms of $i,j_1,j_2,a_2$ to get
\[R=\frac{(1-\alpha-\beta+\delta^*)^{i-2  j_1-2j_2}(\alpha-\delta^*)^{2  j_1} (\beta-\delta^*)^{2  j_2}(B_1B_2)^{a_2 - i + 2 j_1 + 2 j_2}}{\alpha^{ j_1}\beta^{ j_2}(1-\alpha)^{\frac{i}{2}- j_1}  (1-\beta)^{\frac{i}{2}- j_2} }.\]
Substituting
\[\alpha= \frac{Q^{+}(1+B_1 Q^{-})}{B_2+Q^{-}+Q^{+}+B_1 Q^{-} Q^{+}},\ \beta=\frac{Q^{-}(1+B_1Q^{+})}{B_2+Q^{-}+Q^{+}+B_1 Q^{-} Q^{+}}\]
and $\delta^*=\frac{B_1 Q^{-}Q^{+}}{B_2+Q^{-}+Q^{+}+B_1 Q^{-} Q^{+}}$, we obtain 
\[R=\frac{B_1^{a_2 - i + 2 j_1 + 2 j_2}B_2^{a_2}(Q^+)^{j_1}(Q^-)^{j_2}}{(1+B_1 Q^{-})^{j_1}(1+B_1 Q^{+})^{j_2}(B_2+Q^-)^{i/2-j_1}(B_2+Q^+)^{i/2-j_2}}.\]
We thus obtain
\begin{equation*}
\frac{\E[Y_{S,T}|\mathbf{1}_{\xi,\zeta}=1]}{\E[Y_{S,T}]}\sim x^i y^{j_1} z^{j_2} w^{a_2},
\end{equation*}
where
\begin{gather*}
x=\frac{1}{B_1 \sqrt{(B_2+Q^-)(B_2+Q^+)}},\quad y=\frac{B_1^2 Q^+(B_2+Q^-)}{1+B_1 Q^{-}},\\
z=\frac{B^2_1 Q^-(B_2+Q^+)}{1+B_1 Q^{+}},\quad w=B_1B_2.
\end{gather*}
Thus, by \eqref{eq:smallratio}, we obtain
\begin{equation}
\frac{\E[YX_i]}{\E[Y]}=\frac{r(\Delta,i)}{2i}\cdot x^i\sum_{j_1,j_2,a_2}a_{j_1,j_2,a_2}y^{j_1} z^{j_2} w^{a_2},\label{eq:poissonsum}
\end{equation}
where $a_{i,j_1,j_2,a_2}$ is the number of possible $\xi$ with $j_1$ red vertices, $j_2$ blue vertices and $a_2$ edges of type $\{W,W\}$. To analyze such sums, an elegant technique (given in \cite{Janson}) is to define an appropriate transition matrix. The powers of the matrix count the (multiplicative) weight of walks in the implicit graph. In our setting, the transition matrix is given by
\begin{equation*}
\mathbf{A}=\left(\begin{array}{ccccccc}
0& & y& & 0& & y\\
z& & 0& & z& & 0\\
0& & 1& & 0& & w\\
1& & 0& & w& & 0
\end{array}\right).
\end{equation*}
Note that a rooted oriented cycle with specification $\xi,\zeta$ corresponds to a unique closed walk in the implicit graph defined by $\mathbf{A}$. We have that $\mathrm{Tr}(\mathbf{A}^i)=\sum^4_{j=1}e^i_j$, where $e_j$ are the eigenvalues of $\mathbf{A}$ ($j=1,\hdots,4$). We thus obtain
\begin{equation*}
\frac{\E[YX_i]}{\E[Y]}=\frac{r(\Delta,i)}{2i}\sum^4_{j=1} (x e_j)^i.
\end{equation*}
A computation gives
\begin{equation*}
\begin{aligned}
e_1=-e_2&=B_1 \left(1-B_1 B_2\right) \sqrt{\frac{Q^+ Q^-}{(1+B_1 Q^+) (1+B_1 Q^-)}},\\
e_3=-e_4&=B_1 \sqrt{(B_2+Q^+) (B_2+Q^-)}.
\end{aligned}
\end{equation*}
It follows that for even $i$, we have
\begin{equation*}
\frac{\E[YX_i]}{\E[Y]}=\frac{r(\Delta,i)}{i}\left(1+\left(\frac{(1-B_1B_2)\sqrt{Q^+Q^-}}{\sqrt{(1+B_1 Q^+)(1+B_1 Q^-)(B_2+Q^+)(B_2+Q^-)}}\right)^i\right),
\end{equation*}
thus proving the lemma.
\end{proof}

\begin{proof}[Proof of Lemma~\ref{lem:sumasymptotics}.]
Using Lemma~\ref{lem:properedgecolorings}, we have
\begin{equation*}
\sum_{\mbox{\small{even} }i\geq 2}\lambda_i\delta^2_i=\sum_{\mbox{\small{even} }i\geq 2}\frac{r(\Delta,i)}{i}\cdot \omega^i=\sum_{\mbox{\small{even} }i\geq 2}\frac{(\Delta-1)^i+(\Delta-1)}{i}\cdot\omega^i.
\end{equation*}
Observe that $\sum_{j\geq 1}\frac{x^{2j}}{2j}=-\frac{1}{2}\ln(1-x^2)$ for all $|x|<1$. Clearly this holds for $\omega$ and by Lemma~\ref{lem:technicalinequality}, it also holds for $(\Delta-1)\omega$. It follows that
\[\sum_{\mbox{\small{even} }i\geq 2}\lambda_i\delta^2_i=-\frac{1}{2}\Big(\ln\big(1-(\Delta-1)^2\omega^2\big)+(\Delta-1)\ln\big(1-\omega^2\big)\Big),\]
thus proving the lemma.
\end{proof}

\section{Remaining Proofs}\label{sec:remainingproofs}

\subsection{Proof of Lemma~\ref{lem:technicalinequality}}\label{sec:technicalinequality}
In this section, we give the proof of Lemma~\ref{lem:technicalinequality}. We prove only the inequality for $\omega$, since the inequality for $\omega^*$ is the condition for non-uniqueness, see the discussion in Section \ref{sec:tree-recursions}. We will in fact establish a slightly stronger inequality, as is captured by the following lemma. To aid the presentation, let us work with $x=Q^+,y=Q^-, d=\Delta-1$.
\begin{lemma}\label{lem:aidforproof}
For any $d\geq 2$, it holds that $B_1 xy+B_1B_2(x+y)+B_2> (d-1)(1-B_1B_2)\sqrt{xy}.$
\end{lemma}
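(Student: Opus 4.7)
The plan is to bound $E_2 := B_1 xy + B_1 B_2(x+y) + B_2$ by combining AM-GM on $x+y$ with the tree recursions for $x = Q^+$ and $y = Q^-$. Writing $q = \sqrt{xy}$, the inequality $x+y \ge 2\sqrt{xy}$ reduces the claim to the quadratic-in-$q$ inequality
\[ B_1 q^2 + \bigl(2B_1B_2 - (d-1)(1-B_1B_2)\bigr)q + B_2 \ge 0. \]
A discriminant computation shows this quadratic is strictly positive for all $q\ge 0$ precisely when $\sqrt{B_1B_2} \ge (d-1)/(d+1)$, with strict inequality in the desired bound coming from $B_2>0$ and the fact that $x\ne y$ in the non-uniqueness region (so AM-GM on $x+y$ is strict). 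This disposes of a substantial slice of the non-uniqueness region by elementary means.

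In the complementary slice $\sqrt{B_1B_2} < (d-1)/(d+1)$ the plain AM-GM bound is too weak, and the plan is to use the tree recursion to supply additional structure. Two identities verified by direct expansion,
\[ E_2 = B_2(B_1x+1)(B_1y+1) + B_1(1-B_1B_2)\,xy = B_1(x+B_2)(y+B_2) + B_2(1-B_1B_2), \]
combined with the relation $(xy)^{1/d}(x+B_2)(y+B_2) = \lambda^{2/d}(B_1x+1)(B_1y+1)$ obtained by multiplying the two tree recursions in~\eqref{eq:densitiesone} and taking $d$-th roots, let us eliminate $\lambda$ and rewrite $E_2$ entirely in terms of $xy$ and one of the two symmetric quantities $(x+B_2)(y+B_2)$ or $(B_1x+1)(B_1y+1)$. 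A further AM-GM bound $(x+B_2)(y+B_2) \ge (\sqrt{xy}+B_2)^2$ then reduces the claim to a one-variable polynomial inequality in $q$, which can be settled by the kind of monotonicity/derivative argument already used in the proof of Lemma~\ref{lem:maininequality}.

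The main obstacle I foresee is sharpness. In the Ising case $B_1=B_2=B$ with $d=2$, a direct computation using $Q^+Q^-=1$ and $Q^++Q^- = (1-2B-B^2)/B^2$ (valid throughout $B\in(0,1/3)$) gives $E_2 = 1-B^2 = (d-1)(1-B_1B_2)\sqrt{xy}$ \emph{exactly}, so the inequality is tight in the worst corner of the non-uniqueness region. Every AM-GM step in the argument therefore has to be chosen so that its equality condition is compatible with this Ising configuration, or the bound is lost in the hardest regime. If strictness cannot be preserved uniformly by this route, the plan is to prove the non-strict form of Lemma~\ref{lem:aidforproof} and then recover the strict bound required by Lemma~\ref{lem:technicalinequality} by combining with a companion estimate $E_1 \ge (d+1)(1-B_1B_2)\sqrt{xy}$ (which is strict throughout the non-uniqueness region, with equality only at the uniqueness boundary), via the identity $(x+B_2)(y+B_2)(B_1x+1)(B_1y+1) = E_1 E_2 + (1-B_1B_2)^2\,xy$ which is checked by direct expansion.
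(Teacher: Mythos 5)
There is a genuine gap. Your ``easy slice'' $\sqrt{B_1B_2}\ge (d-1)/(d+1)$ is actually empty: in the non-uniqueness region --- which is the only place where $x=Q^+\neq Q^-=y$ exist and the lemma is meaningful --- one always has $\sqrt{B_1B_2}<(\Delta-2)/\Delta=(d-1)/(d+1)$. So the plain AM-GM case does not dispose of anything, and the entire burden falls on your second, sketched, case. That sketch does not close. Your ``eliminate $\lambda$'' step only uses the \emph{product} of the two tree recursions, which still contains $\lambda$ and yields a single constraint among three unknowns. Combined with the identity $E_2 = B_1(x+B_2)(y+B_2)+B_2(1-B_1B_2)$ and the AM-GM $(x+B_2)(y+B_2)\ge(\sqrt{xy}+B_2)^2$, you land right back at the bound $E_2 \ge B_1 q^2 + 2B_1B_2 q + B_2$ that you already showed is insufficient in the regime at hand. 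The missing ingredient is the \emph{quotient} of the two tree recursions (which genuinely kills $\lambda$): $x\big((B_1x+1)(y+B_2)\big)^d = y\big((B_1y+1)(x+B_2)\big)^d$. The paper's proof rewrites each factor as $W+(1-B_1B_2)y$ and $W+(1-B_1B_2)x$ (with $W=E_2$), expands by the binomial theorem, cancels the $k=0,1$ terms, divides by $x-y$, and arrives at $W^d = xy\sum_{k=2}^d\binom{d}{k}W^{d-k}(1-B_1B_2)^k\frac{x^{k-1}-y^{k-1}}{x-y}$, from which the bound follows by the divided-difference AM-GM (Claim~\ref{cl:simplebound}) and a telescoping application of Newton's formula. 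None of that structure appears in your proposal.

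Your sharpness observation, however, is a real catch. For $d=2$ the identity above becomes $W^2=(1-B_1B_2)^2xy$, so the lemma holds with \emph{equality}, not strictly; your Ising computation $E_2=1-B^2=(1-B^2)\sqrt{xy}$ is a correct special case of this, though the equality is not a ``corner'' but holds identically along $d=2$. Correspondingly, Claim~\ref{cl:simplebound} overstates strictness: for $k=2$ the sum has a single term, so it is an equality for all $x\neq y$. The downstream Lemma~\ref{lem:technicalinequality} survives anyway, because the AM-GM used in~\eqref{eq:mediuminequality} is itself strict whenever $x\neq y$ (the equality condition reduces to $(y-x)(B_1xy+x+y+B_2)=0$), so a non-strict Lemma~\ref{lem:aidforproof} is all that is needed there. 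Your proposed repair --- prove the non-strict form and import strictness from a companion bound $E_1\ge (d+1)(1-B_1B_2)\sqrt{xy}$ --- has the same difficulty as the main claim: that companion bound does not follow from plain AM-GM in the non-uniqueness region either, so you would again need the quotient-recursion machinery, at which point you may as well prove the original statement directly.
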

\begin{proof}[Proof of Lemma~\ref{lem:technicalinequality}.]
Let $W:=B_1 xy+B_1B_2(x+y)+B_2$. Observe
\[W=(B_1x+1)(y+B_2)-(1-B_1B_2)y=(B_1y+1)(x+B_2)-(1-B_1B_2)x.\]
Using the two expressions of $W$ and the AM-GM inequality, we obtain
\begin{align}
W^2&=(B_1x+1)(y+B_2)(B_1y+1)(x+B_2)+(1-B_1B_2)^2xy \notag\\
&\qquad -(1-B_1B_2)\big(y(B_1x+1)(B_2+y)+x(B_1y+1)(B_2+x)\big)\notag\\
&\leq\big(\sqrt{(B_1x+1)(B_1y+1)(B_2+x)(B_2+y)}-(1-B_1B_2)\sqrt{xy}\big)^2.\label{eq:mediuminequality}
\end{align}
Trivially 
\[(B_1x+1)(B_1y+1)(B_2+x)(B_2+y)>xy>(1-B_1B_2)^2 xy,\]
so \eqref{eq:mediuminequality} and Lemma~\ref{lem:aidforproof} give
\[\sqrt{(B_1x+1)(y+B_2)(B_1y+1)(x+B_2)}-(1-B_1B_2)\sqrt{xy}>(d-1)(1-B_1B_2).\]
which after trivial manipulations reduces to the desired inequality.
\end{proof}

\begin{proof}[Proof of Lemma~\ref{lem:aidforproof}.]
The $x,y$ satisfy 
\[x=\lambda \left(\frac{B_1 y+1}{y+B_2}\right)^d\mbox{ and } y=\lambda \left(\frac{B_1 x+1}{x+B_2}\right)^d.\]
It follows that
\begin{equation}
x\left(\frac{B_1 x+1}{x+B_2}\right)^d=y\left(\frac{B_1 y+1}{y+B_2}\right)^d \Rightarrow x\Big((B_1 x+1)(y+B_2)\Big)^d=y\Big((B_1 y+1)(x+B_2)\Big)^d.\label{eq:startingpoint}
\end{equation}
Let $W:=B_1 xy+B_1B_2x+B_1B_2y+B_2$. Observe that
\begin{align*}
(B_1 x+1)(y+B_2)&=B_1 xy+B_1B_2x+y+B_2=W+(1-B_1B_2)y,\\
(B_1 y+1)(x+B_2)&=B_1 xy+B_1B_2y+x+B_2=W+(1-B_1B_2)x.
\end{align*}
Hence, \eqref{eq:startingpoint} gives
\[x\left(W+(1-B_1B_2)y\right)^d=y\left(W+(1-B_1B_2)x\right)^d.\]
Expanding using Newton's formula gives
\[x\sum^d_{k=0}\binom{d}{k}W^{d-k}\big((1-B_1B_2)y\big)^{k}=y\sum^d_{k=0}\binom{d}{k}W^{d-k}\big((1-B_1B_2)x\big)^{k},\]
which is equivalent to
\[W^d(x-y)=xy\sum^d_{k=2}\binom{d}{k}W^{d-k}
(1-B_1B_2)^k\left(x^{k-1}-y^{k-1}\right).\]
Since $x\neq y$, this can be rewritten as
\begin{equation}
W^d=xy\sum^d_{k=2}\binom{d}{k}W^{d-k}
(1-B_1B_2)^k\left(\frac{x^{k-1}-y^{k-1}}{x-y}\right).\label{eq:startingpointa}
\end{equation}
\begin{claim}\label{cl:simplebound}
For $k\geq 2$ and $x,y>0$ with $x\neq y$, it holds that
$\displaystyle\frac{x^{k-1}-y^{k-1}}{x-y}> (k-1)(xy)^{(k-2)/2}$.
\end{claim}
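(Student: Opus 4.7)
The plan is to prove the claim by a direct application of the AM-GM inequality to the standard factorization
\[
\frac{x^{k-1}-y^{k-1}}{x-y} = \sum_{i=0}^{k-2} x^{k-2-i}\, y^{i},
\]
which expresses the left-hand side as a sum of exactly $k-1$ positive terms.

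First I would compute the geometric mean of these $k-1$ terms. The product telescopes:
\[
\prod_{i=0}^{k-2} x^{k-2-i}\, y^{i} \;=\; x^{\sum_{i=0}^{k-2}(k-2-i)}\, y^{\sum_{i=0}^{k-2} i} \;=\; x^{(k-1)(k-2)/2}\, y^{(k-1)(k-2)/2} \;=\; (xy)^{(k-1)(k-2)/2},
\]
so the geometric mean is $(xy)^{(k-2)/2}$. Applying the AM-GM inequality to the $k-1$ positive terms $x^{k-2-i}y^i$ ($0\leq i\leq k-2$) yields
\[
\sum_{i=0}^{k-2} x^{k-2-i}\, y^{i} \;\geq\; (k-1)\,(xy)^{(k-2)/2},
\]
which is precisely the desired inequality.

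For the strict inequality when $x\neq y$, I would invoke the equality condition of AM-GM: equality holds only when all summands are identical. For $k\geq 3$ the summands include both $x^{k-2}$ and $y^{k-2}$, which differ whenever $x\neq y$, so AM-GM is strict. (For $k=2$ the sum reduces to the single term $1$ and both sides equal $1$; this boundary case does not affect the way the claim is used in \eqref{eq:startingpointa}, since the ultimate application needs only that the overall right-hand side strictly exceeds a specific quantity, and the strict contribution from any $k\geq 3$ term suffices.)

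There is essentially no obstacle: the only subtlety is bookkeeping the exponent in the product, which I've done above via the identity $\sum_{i=0}^{k-2} i = \sum_{i=0}^{k-2}(k-2-i) = (k-1)(k-2)/2$. The entire argument is one line of AM-GM once the sum representation is written down.
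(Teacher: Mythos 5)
Your proof is correct and coincides exactly with the paper's argument: both rewrite $(x^{k-1}-y^{k-1})/(x-y)$ as the geometric sum $x^{k-2}+x^{k-3}y+\cdots+y^{k-2}$ and apply AM-GM to its $k-1$ terms. Your parenthetical note about $k=2$ is a fair observation (the paper's "equality holds iff $x=y$" is indeed off for $k=2$), but your claim that a $k\geq 3$ term always rescues strictness in \eqref{eq:startingpointa} is itself not quite right for $d=2$, where the sum consists of the single $k=2$ term; in that case the needed strictness in Lemma~\ref{lem:technicalinequality} actually comes from the AM-GM step in \eqref{eq:mediuminequality}, not from Claim~\ref{cl:simplebound}.
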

The simple proof of Claim~\ref{cl:simplebound} is given at the end. Using Claim~\ref{cl:simplebound}, \eqref{eq:startingpointa} gives
\begin{equation*}
W^d> \sum^d_{k=2}\binom{d}{k}(k-1)W^{d-k}
(1-B_1B_2)^k (xy)^{k/2},
%\label{eq:endingpoint}
\end{equation*}
or equivalently
\begin{equation}
\underbrace{W^d+ \sum^d_{k=2}\binom{d}{k}W^{d-k}(1-B_1B_2)^k (xy)^{k/2}}_{C}> \underbrace{\sum^d_{k=2}\binom{d}{k}kW^{d-k}
(1-B_1B_2)^k (xy)^{k/2}}_{D}.\label{eq:endingpoint}
\end{equation}
Using again Newton's formula and the identity $\binom{d}{k}=\frac{d}{k}\binom{d-1}{k-1}$, we have
\begin{align*}
C&=\big(W+(1-B_1B_2)\sqrt{xy}\big)^d-dW^{d-1}(1-B_1B_2)\sqrt{xy},\\
D&=d(1-B_1B_2)\sqrt{xy}\sum^d_{k=2}\binom{d-1}{k-1}W^{d-k}
(1-B_1B_2)^{k-1} (\sqrt{xy})^{k-1}\\
&=d(1-B_1B_2)\sqrt{xy}\Big(\big(W+(1-B_1B_2)\sqrt{xy}\big)^{d-1}-W^{d-1}\Big).
\end{align*}
Thus, \eqref{eq:endingpoint} gives
\[W> (d-1)(1-B_1B_2)\sqrt{xy},\]
which is exactly the inequality we wanted. Finally, we give the proof of Claim~\ref{cl:simplebound}.
\begin{proof}[Proof of Claim~\ref{cl:simplebound}.]
Since $k\geq 2$, observe that
\begin{align*}
\frac{x^{k-1}-y^{k-1}}{x-y}&=x^{k-2}+x^{k-3}y+\hdots+y^{k-2}\\
&\geq (k-1)\Big((xy)^{(k-1)(k-2)/2}\Big)^{1/(k-1)}=(k-1)(xy)^{(k-2)/2}.
\end{align*}
The inequality is an application of the AM-GM inequality to $x^{k-2},\hdots,y^{k-2}$. Equality holds iff $x~=~y$.
\end{proof}

This completes the proof of Lemma \ref{lem:aidforproof}.
\end{proof}

\subsection{Proofs for Section~\ref{sec:maxentropy}}\label{sec:maxentropyproof}
\begin{proof}[Proof of Lemma~\ref{lem:gmax}.]
We first give the proof when $M_{ij}>0$ for all $i,j$. Then it is easy to see that the function $g(\mathbf{Z})$ attains its maximum in the interior of the region \eqref{e1}, since
\[\frac{\partial g}{\partial Z_{ij}}=\ln M_{ij}-\ln Z_{ij}-1\rightarrow +\infty \mbox{ as } Z_{ij}\rightarrow 0^+.\]
Since $g(\mathbf{Z})$ is differentiable in the interior of the region \eqref{e1} (by $\alpha_i\neq 0\neq \beta_j$), we may formulate the maximization of $g(\mathbf{Z})$ using Lagrange multipliers. Let
\[\Lambda(\mathbf{Z},\boldsymbol{\lambda}, \boldsymbol{\lambda'})=g(\mathbf{Z})+\lambda_i(\sum_j Z_{ij}-\alpha_i)+\lambda_j'(\sum_i Z_{ij}-\beta_j).\]
The optimal $\mathbf{Z}$ must satisfy
\[\frac{\partial \Lambda}{\partial Z_{ij}}=0 \Longleftrightarrow \ln M_{ij}-\ln Z_{ij}-1+\lambda_i+\lambda_j'\Longleftrightarrow Z_{ij}=\exp\left(\lambda_i+\lambda'_j-1\right)M_{ij}.\]
Setting $R_i=\exp(\lambda_i), C_j=\exp(\lambda'_j-1)$ one obtains that the stationary points of $g(\mathbf{Z})$ have the form stated in Lemma~\ref{lem:gmax}. The equalities \eqref{e3} come from the conditions $\frac{\partial \Lambda}{\partial \lambda_i}=\frac{\partial \Lambda}{\partial \lambda_j'}=0$.

We next treat the case where some entries of $M_{ij}$ are zero.  Recall the convention (Section~\ref{sec:generallog}) that $\ln 0\equiv -\infty$ and $0\ln 0\equiv 0$. To see why these conventions are relevant, note that whenever $M_{ij}=0$, $g(\mathbf{Z})$ can have a maximum at $\mathbf{Z}^*$ only if $Z^*_{ij}=0$. Going back to the interpretations of $g(\mathbf{Z})$ in the first and second moments, the reader will immediately notice that these are in complete accordance with the combinatorial meaning of the variables $Z_{ij}$. Let $P=\{(i,j):\ M_{ij}=0\}$. We may restrict our attention for the maximum of $g(\mathbf{Z})$ in the region
\begin{equation}\label{rege1}
\begin{array}{ll}
\sum_{j} Z_{ij} = \alpha_i(u),& \mbox{for } i\in 1,\hdots, m,\\
\sum_{i} Z_{ij} = \beta_j(v),& \mbox{for } j\in 1,\hdots,n,\\
Z_{ij}\geq 0,& \mbox{for } (i,j)\notin P,\\
Z_{ij}=0& \mbox{for }(i,j)\in P.
\end{array}
\end{equation}
The proof given above (under the assumptions $M_{ij}>0$ for all $i,j$) now goes through for the region \eqref{rege1}, as it can readily be checked. Since $M_{ij}=0$ for $(i,j)\in P$, the form of $\mathbf{Z}^*$ stated in Lemma~\ref{lem:gmax} is consistent with the region \eqref{rege1}.

Finally, the desired quadratic decay is an immediate consequence of the stationarity of $\mathbf{Z}^*$, the strict concavity of $g$ and Taylor's expansion.
\end{proof}

\begin{proof}[Proof of Lemma~\ref{lem:maxentropy}.]
The lemma is essentially an application of the envelope theorem, but we give the short proof for the sake of completeness. From~\eqref{e3} we have
\begin{eqnarray}
\left(\frac{\partial}{\partial u} R_i\right) \sum_{j=1}^m M_{ij} C_j + R_i \sum_{j=1}^m M_{ij} \frac{\partial}{\partial u} C_j & = & \frac{\partial}{\partial u} \alpha_i,\label{e4A} \\
\left(\frac{\partial}{\partial u} C_j\right) \sum_{i=1}^n M_{ij} R_i + C_j \sum_{i=1}^n M_{ij} \frac{\partial}{\partial u} R_i & = & 0 \label{e4B}.
\end{eqnarray}

Then
\begin{multline*}
\frac{\partial g^{*}}{\partial u} =  -\sum_{i=1}^n \sum_{j=1}^m M_{ij} \left(
 \left(\frac{\partial}{\partial u} R_i\right)(C_j\ln (R_i) + C_j\ln (C_j)) \right.\\ 
\left.+\left(\frac{\partial}{\partial u} C_j\right)(R_i\ln (R_i) + R_i\ln (C_j))+ \left(\frac{\partial}{\partial u} R_i\right) C_j + R_i \left(\frac{\partial}{\partial u} C_j\right) \right),
\end{multline*}
which, using~\eqref{e4A} and~\eqref{e4B}, simplifies to
\begin{align*}
\frac{\partial g^{*}}{\partial u} =  - & \sum_{i=1}^n \ln (R_i)\left(  \left(\frac{\partial}{\partial u} R_i\right)\sum_{j=1}^m M_{ij} C_j + R_i\left(\sum_{j=1}^m M_{ij}\frac{\partial}{\partial u} C_j \right)\right)\\
& - \sum_{j=1}^m \ln (C_j)\left(  \left(\frac{\partial}{\partial u} C_j\right)\sum_{i=1}^n M_{ij} R_i + C_j\left(\sum_{i=1}^n M_{ij}\frac{\partial}{\partial u} R_i \right)\right)\\
& - \sum_{j=1}^m \left(\left(\frac{\partial}{\partial u} C_j\right) \sum_{i=1}^n M_{ij} R_i + C_j \sum_{i=1}^n M_{ij} \frac{\partial}{\partial u} R_i\right)\\
=-& \sum_{i=1}^n \ln(R_i) \frac{\partial}{\partial u}\alpha_i.
\end{align*}
\end{proof}

\subsection{Proofs for Section~\ref{sec:logfirstmax}}\label{sec:logfirstmaxproof}

\begin{proof}[Proof of Lemma~\ref{lem:firboundary}.]
To avoid unnecessary complications, we give the proof for $B_1 B_2>0$. The case with $B_1=0$ (essentially hard-core model) is covered in \cite[Claim 2.2]{DFJ}. By \eqref{eaeaea2} we have
\begin{equation}\label{eq:tfvbg3}
\frac{\partial \phi_1 }{\partial\alpha}=\ln\left(\lambda \frac{1-\alpha}{\alpha}\left(\frac{\alpha}{1-\alpha}\frac{R_2}{R_1}\right)^\Delta \right), \quad 
\frac{\partial \phi_1}{\partial\beta}=\ln\left(\lambda \frac{1-\beta}{\beta}\left(\frac{\beta}{1-\beta}\frac{C_2}{C_1}\right)^\Delta \right),
\end{equation}
where the $R_i,C_j$ satisfy \eqref{eb}. By \eqref{eb}, we have
\[\frac{\alpha}{1-\alpha}=\frac{R_1}{R_2}\cdot\frac{B_1C_1+C_2}{C_1+B_2C_2},\quad \frac{\beta}{1-\beta}=\frac{C_1}{C_2}\cdot\frac{B_1R_1+R_2}{R_1+B_2R_2}.\]
Hence, the derivatives $\displaystyle\frac{\partial \phi_1}{\partial \alpha},\displaystyle \frac{\partial \phi_1}{\partial \beta}$ in \eqref{eq:tfvbg3}  become 
\begin{equation}\label{ea1copy}
\frac{\partial \phi_1 }{\partial\alpha}=\ln\left(\lambda \frac{1-\alpha}{\alpha}\left(\frac{B_1C_1+C_2}{C_1+B_2C_2}\right)^\Delta \right),\quad \frac{\partial \phi_1}{\partial\beta}=\ln\left(\lambda \frac{1-\beta}{\beta}\left(\frac{B_1R_1+R_2}{R_1+B_2R_2}\right)^\Delta \right).
\end{equation}
We next establish the  following claim.
\begin{claim}\label{cl:helpfulfir}
When $B_1>0, B_2>0$, the function
\[f(x,y):=\frac{B_1x+y}{x+B_2y},\]
defined for $x,y\geq 0$ such that at least one of $x,y$ is non-zero is upper and lower bounded by strictly positive constants depending only on $B_1$, $B_2$.
\end{claim}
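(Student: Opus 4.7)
\medskip

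\noindent\textbf{Proof plan for Claim~\ref{cl:helpfulfir}.}
The plan is to exploit the fact that $f(x,y)$ is homogeneous of degree $0$ in $(x,y)$ and hence reduces to a one-variable function. Concretely, for $x,y>0$ set $t:=x/y$, so that
\[
f(x,y)=\frac{B_1(x/y)+1}{(x/y)+B_2}=\frac{B_1 t+1}{t+B_2}=:g(t),
\]
and the entire claim is just the assertion that $g:[0,\infty)\to\mathbb{R}$ is bounded between two strictly positive constants depending only on $B_1,B_2$.

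The first step is to record the two endpoint values $g(0)=1/B_2>0$ and $\lim_{t\to\infty}g(t)=B_1>0$; both are finite and strictly positive because $B_1,B_2>0$. The second step is monotonicity: a direct computation gives
\[
g'(t)=\frac{B_1(t+B_2)-(B_1 t+1)}{(t+B_2)^2}=\frac{B_1B_2-1}{(t+B_2)^2},
\]
whose sign is constant on $[0,\infty)$ and agrees with the sign of $B_1B_2-1$. So $g$ is monotone (strictly increasing if $B_1B_2>1$, strictly decreasing if $B_1B_2<1$, constant if $B_1B_2=1$). Combined with the endpoint values, this yields
\[
\min\!\left(B_1,\tfrac{1}{B_2}\right)\ \leq\ g(t)\ \leq\ \max\!\left(B_1,\tfrac{1}{B_2}\right)\qquad\text{for all }t\geq 0,
\]
and both bounds are strictly positive constants depending only on $B_1,B_2$. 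Translating back to $f(x,y)$ via $t=x/y$ proves the claim.

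There is no real obstacle here: the claim is essentially a one-line calculus exercise once homogeneity is used to kill one variable, and the only thing to be careful about is that $B_1,B_2>0$ is used twice—once to ensure the denominator $t+B_2$ is bounded away from $0$ on $[0,\infty)$, and once to ensure that neither endpoint value $B_1$ or $1/B_2$ degenerates to $0$. This is precisely where the hypothesis $B_1B_2>0$ (as opposed to the hard-core boundary case $B_1=0$ handled separately) enters.
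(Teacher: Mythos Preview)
Your proof is correct and follows essentially the same approach as the paper: both exploit the degree-$0$ homogeneity of $f$ to reduce to a one-parameter problem. The paper normalizes to $x+y=1$ and bounds numerator and denominator separately as convex combinations of $\{B_1,1\}$ and $\{1,B_2\}$, whereas you set $t=x/y$ and use monotonicity of $g(t)=(B_1t+1)/(t+B_2)$; your version has the minor bonus of giving the sharp bounds $\min(B_1,1/B_2)$ and $\max(B_1,1/B_2)$.
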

\begin{proof}
Observe that for $t>0$ it holds that $f(x,y)=f(tx,ty)$, so that we may assume $x+y=1$. Thus, $B_1x+y$ and $x+B_2 y$ are convex combinations of $\{1,B_1\}$ and $\{1,B_2\}$ respectively. Since we assumed that $B_1,B_2>0$, $f(x,y)$ is bounded by strictly positive constants.
\end{proof}
Observe that Claim~\ref{cl:helpfulfir} is applicable to our case since not all of $R_1,R_2$  or $C_1,C_2$ can be equal to 0 (by equations \eqref{eb}).

 We consider the boundaries with $\alpha(1-\alpha)\rightarrow 0^+$ and prove that the derivatives with respect to $\alpha$ go the right way. The  boundary cases with $\beta(1-\beta)\rightarrow 0^+$ are completely analogous. We consider cases.
\begin{enumerate}
\item $\alpha\rightarrow 0^+$. Then $1-\alpha\rightarrow1$. By Claim~\ref{cl:helpfulfir} and \eqref{ea1copy}, we obtain $\frac{\partial \phi_1}{\partial \alpha}\rightarrow +\infty$.
\item $1-\alpha\rightarrow 0^+$. Then $\alpha\rightarrow 1$. By Claim~\ref{cl:helpfulfir} and \eqref{ea1copy}, we obtain $\frac{\partial \phi_1}{\partial \alpha}\rightarrow -\infty$.
\end{enumerate}

This concludes the proof of Lemma~\ref{lem:firboundary}.
\end{proof}

\begin{proof}[Proof of Lemma~\ref{lem:firhessian}.]
Denote by $\mathbf{H_1}$ the Hessian of $\Phi_1$ evaluated at $(p^+,p^-,X^*)$. We keep only $\alpha,\beta,x_{11}$ as free variables in $\Phi_1$ (the rest of the $x_{ij}$ are substituted by \eqref{eq:xregion}). We check that the principal minors along the diagonal are positive, the first principal minor being the lower right corner of the matrix. This is equivalent to applying Sylvester's criterion in a rearranged version of $-\mathbf{H_1}$. We use Maple to perform the necessary computations and plug into the resulting formulas the values of $p^\pm,\mathbf{X}^*$ as given in \eqref{eq:pppm} and \eqref{optimalxfirst} respectively.

Denote by $P_i$ the determinant of the $i$-th leading principal minor. We obtain that $P_1=\frac{\Delta E_1 E_2}{B_1 B_2 Q^- Q^+}$, 
\begin{align*}
P_2=\frac{\Delta E_1^2 (B_2+Q^-) (1+B_1 Q^-)\big(1+(\Delta-1)\omega\big)}{B_1 B_2 (Q^-)^2 Q^+},\quad 
P_3=\frac{\Delta E^4_1\big(1-(\Delta-1)^2\omega\big)}{B_1B_2\big(Q^+Q^-\big)^2},
\end{align*}
where $E_1,E_2$ and $\omega$ are given by \eqref{eq:helpfulAs} and \eqref{eq:definitionofomega}. Trivially $P_1, P_2>0$. The fact that $P_3>0$  is just Lemma~\ref{lem:technicalinequality}.

To prove that $(p^*,p^*,X^o)$ is a saddle point, observe that the respective expressions for $P_1,P_2,P_3$ can be obtained by substituting $Q^-\leftarrow Q^*, Q^+\leftarrow Q^*$, where $Q^*=\frac{q^*}{1-q^*}$. We thus have that $P_3<0$ by Lemma~\ref{lem:technicalinequality}.
\end{proof}

\subsection{Proofs for Section~\ref{sec:logsecmax}}\label{sec:logsecmaxproof}
\begin{proof}[Proof of Lemma~\ref{lem:secboundary}.]
To avoid unnecessary complications, we give the proof for $B_1 B_2>0$. The case with $B_1=0$ (essentially hard-core model) is covered in \cite[Lemma 12]{Galanis}. 

By \eqref{eq:dergamma}, we have
\begin{equation*}
\frac{\partial \phi_2}{\partial \gamma}  =\ln\left(\frac{(\alpha-\gamma)^2}{\gamma(1-2\alpha+\gamma)}\left(\frac{\gamma(1-2\alpha+\gamma)}{(\alpha-\gamma)^2}\frac{R_2^2}{R_1 R_4}\right)^\Delta \right),
\end{equation*}
where the $R_i,C_j$ satisfy \eqref{eq:frty}. By \eqref{eq:frty}, we have 
\[\frac{\gamma(1-2\alpha+\gamma)}{(\alpha-\gamma)^2}=\frac{R_1 R_4}{R^2_2}\cdot \frac{(B^2_1 C_1+2B_1 C_2+C_4)(C_1+2B_2 C_2+B^2_2 C_4)}{(B_1 C_1+(B_1B_2+1)C_2+B_2C_4)^2},\]
Hence, the derivative $\displaystyle\frac{\partial \phi_2}{\partial \gamma}$ becomes 
\begin{equation}\label{eq:dergammacopy}
\frac{\partial \phi_2}{\partial \gamma}  =
\ln\left(\frac{(\alpha-\gamma)^2}{\gamma(1-2\alpha+\gamma)}\left(  \frac{(B^2_1 C_1+2B_1 C_2+C_4)(C_1+2B_2 C_2+B^2_2 C_4)}{(B_1 C_1+(B_1B_2+1)C_2+B_2C_4)^2}  \right)^\Delta\right),
\end{equation}
Analogously, starting from \eqref{eq:derdelta}, we can rewrite $\displaystyle \frac{\partial \phi_2}{\partial \delta}$  as
\begin{equation}\label{eq:derdeltacopy}
\frac{\partial \phi_2}{\partial \delta}  =
\ln\left(\frac{(\beta-\delta)^2}{\delta(1-2\beta+\delta)}\left(  \frac{(B^2_1 R_1+2B_1 R_2+R_4)(R_1+2B_2 R_2+B^2_2 R_4)}{(B_1 R_1+(B_1B_2+1)R_2+B_2R_4)^2}  \right)^\Delta\right).
\end{equation}

Before proceeding we need the following claim.
\begin{claim}\label{cl:veryhelpful}
When $B_1>0, B_2>0$, the function 
\[f(x,y,z):=\frac{(B^2_1 x+2B_1 y+z)(x+2B_2 y+B^2_2 z)}{(B_1 x+(B_1B_2+1)y+B_2z)^2}, \]
defined for $x,y,z\geq 0$ such that at least one of $x,y,z$ is non-zero is upper and lower bounded by strictly positive constants depending only on $B_1,B_2$.
\end{claim}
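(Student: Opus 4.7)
The plan is to exploit the fact that $f$ is homogeneous of degree $0$: both the numerator and the denominator are homogeneous of degree $2$ in $(x,y,z)$, so $f(tx,ty,tz)=f(x,y,z)$ for every $t>0$. Consequently it is enough to control $f$ on the compact slice
\[
\Delta:=\{(x,y,z)\in\mathbb{R}^3:x+y+z=1,\ x,y,z\ge 0\},
\]
and any bounds established there will transfer to the entire open octant $\{x,y,z>0\}$ by the homogeneity scaling.

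The next step is to verify that $f$ extends continuously and strictly positively to all of $\Delta$. Because $B_1,B_2>0$, each of the three linear forms appearing in $f$, namely
\[
B_1^2 x+2B_1 y+z,\qquad x+2B_2 y+B_2^2 z,\qquad B_1 x+(B_1B_2+1)y+B_2 z,
\]
has strictly positive coefficients on all three variables. On $\Delta$ we have $x+y+z=1$, so at least one coordinate is strictly positive; hence each of the three positive linear combinations above is bounded away from zero on $\Delta$, with explicit lower bounds depending only on $B_1,B_2$ (e.g.\ the minimum coefficient). Thus the numerator of $f$ is bounded below by a positive constant and the denominator is bounded below by a positive constant on $\Delta$, both depending only on $B_1,B_2$; in particular $f$ is continuous on $\Delta$.

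The conclusion is then immediate from the extreme value theorem: a continuous, strictly positive function on the compact set $\Delta$ attains a strictly positive minimum and a finite maximum, and both of these extrema depend only on $B_1,B_2$. By the initial homogeneity reduction, the same uniform positive bounds hold for all $(x,y,z)$ with $x,y,z\ge 0$ and $xyz\neq 0$, which is the claim. I do not foresee any genuine obstacle; the only substantive observation is that every linear form appearing in $f$ has strictly positive coefficients when $B_1,B_2>0$, which is what removes any possibility of degeneracy on the simplex. This argument is exactly parallel in spirit to that of Claim~\ref{cl:helpfulfir} in the first-moment boundary analysis, one dimension higher.
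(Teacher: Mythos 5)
Your proof is correct and takes essentially the same approach as the paper: reduce by degree-zero homogeneity to the simplex $x+y+z=1$, observe that each of the three linear forms is a convex combination of strictly positive coefficients and hence bounded above and below by positive constants depending only on $B_1,B_2$, and conclude. The appeal to the extreme value theorem is a mild detour; the direct min/max-coefficient bounds already give the conclusion without it, which is exactly what the paper does.
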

\begin{proof}
Observe that for $t>0$ it holds that $f(x,y,z)=f(tx,ty,tz)$, so we may assume that $x+y+z=1$. Thus, $B^2_1 x+2B_1 y+z$, $x+2B_2 y+B^2_2 z$ and $B_1x+(B_1B_2+1)y+B_2z$ are convex combinations of $\{B^2_1,2B_1,1\},\{1,2B_2,B^2_2\}$ and $\{B_1,B_1B_2+1,B_2\}$ respectively. Since we assumed that $B_1,B_2>0$, $f(x,y,z)$ is bounded by strictly positive constants.
\end{proof}
Observe that Claim~\ref{cl:veryhelpful} is applicable to our case since not all of $R_1,R_2,R_4$ or $C_1,C_2, C_4$ can be equal to 0 (by equations \eqref{eq:frty}).

We next consider  the boundaries with $\gamma(\alpha-\gamma)(1-2\alpha+\gamma)\rightarrow 0^+$ and prove that the derivative $\frac{\partial \phi_2}{\partial \gamma}$  goes to infinity the right way, using the expression in \eqref{eq:dergammacopy}. The boundary cases with $\beta(\beta-\delta)(1-2\beta+\delta)\rightarrow 0^+$ can be treated analogously using \eqref{eq:derdeltacopy}. 

We consider cases.
\begin{enumerate}
\item $\gamma\rightarrow 0^+$. This case may only occur if $1-2\alpha\geq 0$. Then $\alpha-\gamma\rightarrow \alpha>0$ and $1-2\alpha+\gamma\rightarrow 1-2\alpha\geq 0$. By Claim~\ref{cl:veryhelpful} and \eqref{eq:dergammacopy}, we obtain $\frac{\partial \phi_2}{\partial \gamma}\rightarrow +\infty$.
\item $1-2\alpha+\gamma\rightarrow 0^+$. This case may only occur if $2\alpha-1\geq 0$. Then $\gamma\rightarrow 2\alpha-1\geq 0$ and $\alpha-\gamma\rightarrow 1-\alpha>0$. By Claim~\ref{cl:veryhelpful} and \eqref{eq:dergammacopy}, we obtain $\frac{\partial \phi_2}{\partial \gamma}\rightarrow +\infty$.
\item $\alpha-\gamma\rightarrow 0^+$. Then $\gamma\rightarrow\alpha>0$ and $1-2\alpha+\gamma\rightarrow 1-\alpha>0$. By Claim~\ref{cl:veryhelpful} and \eqref{eq:dergammacopy}, we obtain $\frac{\partial \phi_2}{\partial \gamma}\rightarrow -\infty$.
\end{enumerate}
This concludes the proof of Lemma~\ref{lem:secboundary}.
\end{proof}

\begin{proof}[Proof of Lemma~\ref{lem:sechessian}.]
For $(\alpha,\beta)=(p^+,p^-)$, denote by $\mathbf{H_2}$ the Hessian of $\Phi_2$ evaluated at $(\alpha^2,\beta^2,\mathbf{Y}^*)$. We keep only $\gamma,\delta,y_{ij}$ with $i,j=1,2,3$ as free variables in $\Phi_2$ (the rest of the $y_{ij}$ are substituted by \eqref{eq:yregion}). 

We use Sylvester's criterion to check that $-\mathbf{H_2}$ is positive definite. We check that the principal minors are positive, the first principal minor being the lower right corner of the matrix. This is equivalent to applying Sylvester's criterion in a rearranged version of $-\mathbf{H_2}$. Note that the first nine principal minors are guaranteed to be positive since they correspond to the Hessian of the convex function $-g_2(\mathbf{Y})$ evaluated at its minimum point. Thus we need to check only the remaining two principal minors.  We use Maple to perform the necessary computations and we plug into the resulting formulas the values of $p^\pm$ from \eqref{eq:pppm} and the $\mathbf{Y}^*$ as given in \eqref{eq:optimalxsecond}.

Denote by $P_i$ the determinant of the $i$-th leading principal minor. We have:
\begin{align*}
P_{10}&=\frac{\Delta^9 E^{22}_1E^2_2(B_2+Q^-)^2(1+B_1 Q^-)^2\big(1+(\Delta-1)\omega^2\big)}{(B_1 B2)^8\big(Q^-\big)^{14}\big(Q^{+}\big)^{12}},\\ P_{11}&=\mathrm{det}(-\mathbf{H_2})=\frac{\Delta^9 E_1^{26}E^2_2\big(1-(\Delta-1)^2\omega^2\big)}{(B_1 B2)^8\big(Q^-\big)^{14}\big(Q^{+}\big)^{14}},
\end{align*}
where $E_1,E_2$ and $\omega$ are given by \eqref{eq:helpfulAs} and \eqref{eq:definitionofomega}. Trivially $P_{10}>0$. And we also have $P_{11}>0$ by Lemma~\ref{lem:technicalinequality}.
\end{proof}

\subsection{Proof of Lemma~\ref{lem:secondmaxising}}\label{sec:secondmaxisingproof}
We now prove Condition~\ref{cond:maxima} for the Ising model without external field and for $d=2$, that is $B_1=B_2=B, \lambda=1$ and $\Delta=3$. See Section~\ref{sec:remarks} for the proof overview.

Our point of departure is \eqref{recurone}. As in the proof of Lemma~\ref{lem:secondmax2spin}, we may assume that $r_1r_4,c_1c_4$ are either both greater than 1 or both are less than 1. We are going to argue that \eqref{recurone} cannot hold in these cases. For $B_1=B_2=B$, multiplying the equations in \eqref{recurone} gives
\begin{multline*}\label{eq:stable}
(Bc_1+(B^2+1)+Bc_4)^2(Br_1+(B^2+1)+Br_4)^2=\\
\frac{(1-B^2)^2(c_1c_4-1)}{(c_1c_4)^{1/2}-1}\frac{(1-B^2)^2(r_1r_4-1)}{(r_1r_4)^{1/2}-1}.
\end{multline*}
We will prove that for $0<B<\frac{\sqrt{2}-1}{\sqrt{2}+1}$, when $r_1 r_4\neq 1$ (and similarly when $c_1 c_4\neq 1$)
\begin{equation}
\big(Br_1+(B^2+1)+Br_4\big)^2>\frac{(1-B^2)^2(r_1r_4-1)}{\sqrt{r_1r_4}-1}.\label{ineq:onea}
\end{equation}
In the region $0<B<\frac{\sqrt{2}-1}{\sqrt{2}+1}$ the values of $\alpha$ and $\beta$ are heavily biased towards 1 and 0 respectively, and this reflects at the values of $r_1,r_4,c_1,c_4$. The required bias of the values of $\alpha,\beta$ is captured by the following lemma.
\begin{lemma}\label{lem:biasdtwo}
Assume $0<B<\displaystyle\frac{\sqrt{2}-1}{\sqrt{2}+1}$. For $(\alpha,\beta)=(p^+,p^-)$ it holds that $\displaystyle\frac{\alpha}{1-\alpha}=\displaystyle\frac{1-\beta}{\beta}>\displaystyle\frac{4}{9}\cdot \frac{1}{B^3}$.
\end{lemma}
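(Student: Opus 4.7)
The plan is to combine two observations specific to the Ising model without external field. First, the symmetry $p^{+}+p^{-}=1$ (noted at the end of Section~\ref{sec:tree-recursions}) immediately gives $\beta=1-\alpha$, hence $\frac{\alpha}{1-\alpha}=\frac{1-\beta}{\beta}$. For the strict inequality, I will derive a remarkably clean closed form for $\alpha/(1-\alpha)$ and then reduce the bound to a one-variable check.

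Using $q^{+}+q^{-}=1$, the ratios $Q^{\pm}=\frac{q^{\pm}}{1-q^{\pm}}$ satisfy $Q^{+}Q^{-}=1$. Writing $Q=Q^{+}$ (so $Q^{-}=1/Q$) and substituting into \eqref{eq:pppm} with $B_1=B_2=B$ yields
\[
\frac{\alpha}{1-\alpha}=\frac{Q(Q+B)}{1+BQ}.
\]
For $\Delta=3$ the tree recursion \eqref{eq:densitiesone}, after substituting $Q^{-}=1/Q$, collapses to $\sqrt{Q}\,(1+BQ)=B+Q$ on the branch $Q>1$; feeding this back into the display gives the neat identity $\alpha/(1-\alpha)=Q^{3/2}$. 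The target inequality $Q^{3/2}>\tfrac{4}{9B^{3}}$ is therefore equivalent to $y:=B\sqrt{Q}>(4/9)^{1/3}$.

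Next, I characterize $y$ via a quadratic. The cubic $Bx^{3}-x^{2}+x-B=0$ satisfied by $x=\sqrt{Q}$ has the spurious root $x=1$, and factoring it out leaves $Bx^{2}-(1-B)x+B=0$. Substituting $x=y/B$ gives
\[
y^{2}-(1-B)y+B^{2}=0,
\]
whose larger root (corresponding to $Q>1$) is $y_{+}(B)=\tfrac12\bigl(1-B+\sqrt{(1-3B)(1+B)}\bigr)$. A direct differentiation shows that $y_{+}$ is strictly decreasing on $(0,1/3)$, so it suffices to verify $y_{+}(3-2\sqrt{2})>(4/9)^{1/3}$.

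The main (mild) obstacle is this boundary check. Using the identity $y_{+}^{3}=(1-2B)y_{+}-(1-B)B^{2}$ obtained from the quadratic, the check reduces to an inequality involving $\sqrt{2}$ and $\sqrt{5\sqrt{2}-7}$ that one verifies by direct manipulation. Equivalently, since $y_{+}(B)>c$ is (after squaring) equivalent to $B^{2}+cB+c^{2}-c<0$, the threshold $B_{*}$ where $y_{+}(B_{*})=c:=(4/9)^{1/3}$ is the positive root $B_{*}=\tfrac12\bigl(\sqrt{c(4-3c)}-c\bigr)\approx 0.190$, which comfortably exceeds $3-2\sqrt{2}\approx 0.172$, establishing the lemma on the full range $0<B<3-2\sqrt{2}$.
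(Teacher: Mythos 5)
Your proof is correct, and it follows the same strategy as the paper's (reduce to a closed form for $\alpha/(1-\alpha)$ from the tree recursion, factor out the spurious fixed point, and finish with a sign check on a quadratic) while organizing the algebra a bit more cleanly. Both you and the paper derive the identity $\alpha/(1-\alpha)=(Q^{+})^{3/2}=(Q^{-})^{-3/2}$; the paper then keeps $y=Q^{-}$ as the variable, obtains the quadratic $B^{2}y^{2}+(B^{2}+2B-1)y+B^{2}=0$, and checks $f(B^{2}c)<0$ with $c=(3/2)\sqrt[3]{3/2}$, which requires bounding a degree-4 expression in $B$ by its endpoint value. You instead set $w=B\sqrt{Q^{+}}$, which turns the cubic $B x^{3}-x^{2}+x-B=0$ (after removing the factor $x-1$) into the tidy $w^{2}-(1-B)w+B^{2}=0$, and rescales the target to the $B$-free constant $(4/9)^{1/3}$. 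That buys you a final check that is a single quadratic sign condition purely in $B$, namely $B^{2}+cB+c^{2}-c<0$, whose positive root $B_{*}=\tfrac12(\sqrt{c(4-3c)}-c)$ you compare with $3-2\sqrt2$. Your endpoint comparison is a numerical/radical manipulation of essentially the same flavor as the paper's $\frac{(3-2\sqrt2)^4c^2+1}{c}<\frac35<16\sqrt2-22$, and your selection of the larger root $y_{+}$ is justified (at $w=B$ the quadratic is $B(3B-1)<0$ for $B<1/3$, so $y_{-}<B<y_{+}$, and $Q^{+}>1$ forces $w>B$). The argument is sound.
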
 
\noindent The proof of Lemma~\ref{lem:biasdtwo} is given in Section~\ref{sec:biasdtwo}. We utilize Lemma~\ref{lem:biasdtwo} to prove: 
\begin{lemma}\label{lem:biasedvaluesdtwo}
Assume $0<B<\frac{\sqrt{2}-1}{\sqrt{2}+1}$. For $(\alpha,\beta)=(p^+,p^-)$ and $r_1,r_4,c_1,c_4$ satisfying the equations in \eqref{r1r4}, \eqref{c1c4}, it holds that
\begin{itemize}
\item $r_1>\displaystyle\frac{1}{3}\cdot\frac{1}{B^{2}}$ and $c_4>\displaystyle\frac{1}{3}\cdot\displaystyle\frac{1}{B^{2}}$.
\item If in addition $r_1 r_4,c_1 c_4>1$, then $r_1>\displaystyle\frac{4}{9}\cdot\frac{1}{B^{2}}$ and $c_4>\displaystyle\frac{4}{9}\cdot\displaystyle\frac{1}{B^{2}}$.
\end{itemize}
\end{lemma}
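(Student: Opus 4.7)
The plan is to exploit the strong bias on $\alpha,\beta$ provided by Lemma~\ref{lem:biasdtwo} through two elementary ingredients: a lower bound on the ratios $\gamma/(\alpha-\gamma)$ and $(1-2\beta+\delta)/(\beta-\delta)$ coming purely from the boundary of the feasible region (together with monotonicity), and a trivial constant-factor lower bound on the ``cross-ratio'' factors appearing in~\eqref{r1r4} and~\eqref{c1c4}. Multiplying the two ingredients and inserting the bias estimate will give the claimed lower bounds for $r_1$ and $c_4$. Throughout, set $a:=\alpha/(1-\alpha)$; in the Ising model without external field we have $\alpha+\beta=1$, so $(1-\beta)/\beta=a$ as well, and Lemma~\ref{lem:biasdtwo} yields $a>4/(9B^3)$.

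For the boundary/monotonicity step I would argue as follows. The map $\gamma\mapsto \gamma/(\alpha-\gamma)$ is strictly increasing on $[0,\alpha)$, and the feasibility condition from the second-moment region forces $\gamma\geq \max(0,2\alpha-1)=2\alpha-1$ (since $\alpha>1/2$ in our regime). Hence
\[
\frac{\gamma}{\alpha-\gamma}\geq \frac{2\alpha-1}{1-\alpha}=a-1,
\]
and by the analogous argument on $\delta$ (using $\delta\leq\beta$, equivalently $\beta-\delta\leq\beta$ and $1-2\beta+\delta\geq 1-2\beta$) one obtains $(1-2\beta+\delta)/(\beta-\delta)\geq (1-2\beta)/\beta=a-1$. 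Under the additional hypothesis $r_1r_4>1$ and $c_1c_4>1$ of the second item, equation~\eqref{secab} is equivalent to $\gamma>\alpha^2$ and $\delta>\beta^2$, which sharpens the two bounds above to $\gamma/(\alpha-\gamma)>\alpha/(1-\alpha)=a$ and $(1-2\beta+\delta)/(\beta-\delta)>(1-\beta)/\beta=a$ respectively.

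For the cross-ratio step I would use the one-line identities
\[
\bigl(Bc_1+B^2+1+Bc_4\bigr)-B\bigl(B^2c_1+2B+c_4\bigr)=(1-B^2)(Bc_1+1)>0,
\]
\[
\bigl(Br_1+B^2+1+Br_4\bigr)-B\bigl(r_1+2B+B^2r_4\bigr)=(1-B^2)(1+Br_4)>0,
\]
which show $(Bc_1+B^2+1+Bc_4)/(B^2c_1+2B+c_4)\geq B$ and $(Br_1+B^2+1+Br_4)/(r_1+2B+B^2r_4)\geq B$. Inserting these into~\eqref{r1r4} and~\eqref{c1c4} respectively gives $r_1\geq B(a-1)$ and $c_4\geq B(a-1)$ in general, and strict $r_1>Ba$, $c_4>Ba$ under the $r_1r_4,c_1c_4>1$ hypothesis. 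A one-line arithmetic check, using $a>4/(9B^3)$ and $B^3<1/9$ (which follows from $B<(\sqrt 2-1)/(\sqrt 2+1)\approx 0.17$), then yields $r_1>\tfrac{4}{9B^2}-B>\tfrac{1}{3B^2}$ in the first case and $r_1>\tfrac{4}{9B^2}$ in the second, and likewise for $c_4$.

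There is no serious technical obstacle here; the argument is essentially a careful accounting of which constraints bind in the regime where $\alpha$ is close to $1$. The only conceptual point worth emphasizing is that the decisive lower bound on $\gamma/(\alpha-\gamma)$ does not come from the transcendental critical-point equations, but rather from the feasibility constraint $\gamma\geq 2\alpha-1$ combined with the bias $\alpha\approx 1$ supplied by Lemma~\ref{lem:biasdtwo}; once this is recognized, the numerical constants $4/9$ and $1/3$ are comfortably separated by a factor of order $B$ (absorbed by the slack between $4/(9B^3)$ and $1/(3B^3)$), and the rest is routine algebra.
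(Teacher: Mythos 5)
Your proof is correct, and it takes a route that is close to but not identical with the paper's.  Both arguments split equations~\eqref{r1r4},~\eqref{c1c4} into a ``$B$-factor'' and a ``$\gamma$-(resp.\ $\delta$-)factor''.  Your identity showing $\frac{Bc_1+B^2+1+Bc_4}{B^2c_1+2B+c_4}>B$ is, in the paper's notation, exactly the bound $C_1<1/B_2$ that underlies Lemma~\ref{lem:boundsalphabetaRC}, so the $B$-factor treatment is the same.  For the $\gamma$-factor, the paper first eliminates $\gamma$ via the identity $\frac{\alpha}{1-\alpha}=\frac{1+r_1C_1}{1+r_4C_4}$ and then invokes $r_4>0$ (item~1) and $r_4>1/r_1$ (item~2); you instead keep $\gamma$ and bound $\gamma/(\alpha-\gamma)\geq a-1$ from the feasibility constraint $\gamma\geq 2\alpha-1$.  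For item~1 these are equivalent: $r_4C_4=(1-2\alpha+\gamma)/(\alpha-\gamma)\geq 0$ is precisely $\gamma\geq 2\alpha-1$.

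The genuine divergence is in item~2.  You convert $r_1r_4>1$ into $\gamma>\alpha^2$ via the critical-point equation~\eqref{secab} and then sharpen $\gamma/(\alpha-\gamma)>a$.  This is correct at the critical points where the lemma is applied, but \eqref{secab} is not among the lemma's stated hypotheses (only \eqref{r1r4} and \eqref{c1c4} are), so strictly speaking you are proving a version of the lemma under an extra assumption.  The paper's route — from $r_1r_4>1$ directly get $\frac{1+r_1/B}{1+Br_4}<\frac{1+r_1/B}{1+B/r_1}=r_1/B$ — needs only the hypothesis $r_1r_4>1$ and no appeal to the transcendental equations.  Your closing remark that the decisive lower bound ``does not come from the transcendental critical-point equations'' is therefore accurate for item~1 but not for your treatment of item~2.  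Apart from that nuance, the numerical bookkeeping ($a>4/(9B^3)$, $B^3<1/9$, hence $4/(9B^2)-B>1/(3B^2)$) matches the paper's and is correct.
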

\noindent The proof of Lemma~\ref{lem:biasedvaluesdtwo} is given in Section~\ref{sec:biasedvaluesdtwo}. We now prove inequality \eqref{ineq:onea}. Rewrite \eqref{ineq:onea} as
\[B^2(r_1+r_4)^2+2B(B^2+1)(r_1+r_4)+(B^2+1)^2>(1-B^2)^2\left(1+\sqrt{r_1r_4}\right),\] or
\begin{equation}
B^2(r_1+r_4)^2+2B(B^2+1)(r_1+r_4)+4B^2>(1-B^2)^2\sqrt{r_1r_4}.\label{ineq:equivdtwo}
\end{equation}
At this point, we split the analysis into the cases $r_1r_4>1$ and $r_1r_4<1$, the first being considerably harder.
\begin{claim}\label{claim:proofdtwoa}
Inequality \eqref{ineq:equivdtwo} holds when $r_1r_4<1$.
\end{claim}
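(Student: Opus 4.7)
The plan is to exploit the fact that the case $r_1 r_4 < 1$ is much easier than its counterpart, because the right-hand side of \eqref{ineq:equivdtwo} is then trivially bounded. Concretely, $r_1 r_4 < 1$ immediately gives $\sqrt{r_1 r_4}<1$, and together with $0<B<1$ this yields the uniform bound
\[(1-B^2)^2\sqrt{r_1r_4}< (1-B^2)^2 < 1\]
on the RHS of \eqref{ineq:equivdtwo}. So it suffices to show that the LHS exceeds $1$.

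Next I would discard the (nonnegative) linear and constant terms in the LHS and retain only $B^2(r_1+r_4)^2 \geq B^2 r_1^2$. Lemma~\ref{lem:biasedvaluesdtwo} provides the first unconditional bound $r_1 > \tfrac{1}{3B^2}$ (which is exactly the part of the lemma that does \emph{not} need the assumption $r_1r_4>1$), so
\[B^2(r_1+r_4)^2 \geq B^2 r_1^2 > B^2 \cdot \frac{1}{9B^4} = \frac{1}{9B^2}.\]
The proof then closes by checking that our restriction on $B$ forces $\tfrac{1}{9B^2}>1$. Indeed,
\[\frac{\sqrt{2}-1}{\sqrt{2}+1} = 3-2\sqrt{2} < \frac{1}{3},\]
so $B<\tfrac13$, hence $\tfrac{1}{9B^2}>1$, and the claim is immediate by chaining the inequalities above.

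There is no real technical obstacle here: the hard work has been done in the bias estimate Lemma~\ref{lem:biasdtwo} (which feeds Lemma~\ref{lem:biasedvaluesdtwo}), and the numerical fact $\tfrac{\sqrt{2}-1}{\sqrt{2}+1}<\tfrac13$ makes the entire constant-level analysis collapse. The companion case $r_1r_4>1$ (treated elsewhere in the paper) is the genuinely delicate one, since then one cannot bound $\sqrt{r_1r_4}$ by a constant and must instead extract a matching factor of $\sqrt{r_1r_4}$ from the LHS using the stronger bias $r_1>\tfrac{4}{9B^2}$ given in the second bullet of Lemma~\ref{lem:biasedvaluesdtwo}.
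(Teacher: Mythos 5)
Your proof is correct and takes essentially the same route as the paper: both arguments discard most of the LHS, bound $\sqrt{r_1 r_4}$ (and hence the RHS) by $1$, and invoke item $1$ of Lemma~\ref{lem:biasedvaluesdtwo} to show the retained term already exceeds $1$. The only cosmetic difference is that you keep the quadratic term $B^2 r_1^2$ (needing $B<\tfrac13$) whereas the paper keeps a piece of the linear term, $2Br_1$ (needing only $B<\tfrac23$); both thresholds comfortably hold since $B<3-2\sqrt{2}$.
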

\begin{proof}
We prove the stricter inequality (since $1>B>0$ and $r_1,r_4>0$)
\[2B r_1>\sqrt{r_1r_4}.\]
By Item 1 of Lemma~\ref{lem:biasedvaluesdtwo} and the assumption $r_1r_4<1$, we have
\[2Br_1>\frac{2}{3}\cdot\frac{1}{B}>1>\sqrt{r_1 r_4}.\]
\end{proof}
\begin{claim}\label{claim:proofdtwo}
Inequality \eqref{ineq:equivdtwo} holds when $r_1r_4>1$.
\end{claim}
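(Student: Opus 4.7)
The plan is to reformulate the target inequality \eqref{ineq:equivdtwo} via the algebraic identity
\[B^2(r_1+r_4)^2+2B(B^2+1)(r_1+r_4)+4B^2=\bigl(B(r_1+r_4)+B^2+1\bigr)^2-(1-B^2)^2,\]
together with the observation that the RHS of \eqref{ineq:onea} simplifies as $(r_1r_4-1)/(\sqrt{r_1r_4}-1)=s+1$ where $s:=\sqrt{r_1r_4}>1$. Setting $u:=B(r_1+r_4)$, the inequality becomes $(u+B^2+1)^2>(1-B^2)^2(s+1)$. Since $B^2>0$ and $(1-B^2)^2<1$, it suffices to prove the stronger and cleaner bound
\[(u+1)^2 > s+1, \qquad \text{i.e.,} \qquad u^2+2u>s.\]

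I then plan to combine two complementary lower bounds on $u$. Item 2 of Lemma~\ref{lem:biasedvaluesdtwo} gives $u\geq Br_1>\tfrac{4}{9B}$, which is strong when $s$ is small; AM-GM gives $u\geq 2B\sqrt{r_1 r_4}=2Bs$, which is strong when $s$ is large. I would split cases at the natural threshold $s=\tfrac{4}{9B^2}$, where these two bounds coincide. In the large-$s$ case, plugging $u\geq 2Bs$ into $u^2+2u>s$ reduces to $4B(Bs+1)>1$, which is immediate from $Bs>\tfrac{4}{9B}$. In the small-$s$ case, since $r_1>\tfrac{4}{9B^2}\geq s$ lies on the increasing branch of $x\mapsto x+s^2/x$, I would use the sharper bound $u\geq \tfrac{4}{9B}+\tfrac{9B^3s^2}{4}$; squaring and retaining only the diagonal terms gives $u^2\geq \tfrac{16}{81B^2}+2B^2 s^2$, and the remaining step is to verify that the univariate quadratic $\tfrac{16}{81B^2}+2B^2 s^2-s$ is positive for all $s>0$ (its minimum, attained at $s=\tfrac{1}{4B^2}$ which lies in the relevant range, is a positive multiple of $1/B^2$).

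The main obstacle is keeping constants tight enough across the case split: the threshold $\tfrac{4}{9B^2}$ is dictated by Lemma~\ref{lem:biasedvaluesdtwo}, so there is no slack to absorb losses, and the small-$s$ case ultimately hinges on a fortunate numerical inequality (a specific positive rational emerges as the minimum of the quadratic). The reduction to $u^2+2u>s$ via the algebraic identity is what converts the delicate-looking \eqref{ineq:equivdtwo} into a form where these elementary bounds are strong enough to close the argument. Once this case is handled, combined with Claim~\ref{claim:proofdtwoa} and the discussion in the proof of Lemma~\ref{lem:secondmax2spin}, Condition~\ref{cond:maxima} and hence Lemma~\ref{lem:secondmaxising} follow.
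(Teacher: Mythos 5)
Your proof is correct, and it rests on exactly the same two ingredients as the paper's: the AM--GM lower bound $r_1+r_4\geq 2\sqrt{r_1r_4}$ and the bound $r_1>\frac{4}{9B^2}$ from item 2 of Lemma~\ref{lem:biasedvaluesdtwo}, combined through a case split on the size of $\sqrt{r_1r_4}$. Where you diverge is in the reduction and in how the two bounds are merged. The paper proves the cruder sufficient inequality $B^2(r_1+r_4)^2>\sqrt{r_1r_4}$ (dropping the $(1-B^2)^2$ factor and all lower-order terms in $r_1+r_4$ from the left side of \eqref{ineq:equivdtwo}), bounds $(r_1+r_4)^2\geq r_1^2+2r_1r_4$, and splits the range of $r_1r_4$ at $1/(16B^4)$ and $\bigl(\tfrac{4}{9B}\bigr)^4$, closing with the numerical check $\tfrac{16}{81}+2\cdot\tfrac{256}{6561}>\tfrac14$. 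Your version factors via $(u+B^2+1)^2-(1-B^2)^2$ so that only the cleaner target $(u+1)^2>s+1$ remains, keeps the linear term $2u$ (which makes the large-$s$ case cost nothing), and in the small-$s$ case replaces the paper's $r_1^2+2r_1r_4$ bound by the sharper monotone-branch estimate $r_1+s^2/r_1\geq\tfrac{4}{9B^2}+\tfrac{9B^2 s^2}{4}$; the closing numerical fact is the negativity of a discriminant, $1-\tfrac{128}{81}<0$, rather than a tight rational comparison. Both are elementary and buy the same result; yours is a bit more structured and, I think, slightly easier to verify, while the paper's is more blunt but shorter to state.
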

\begin{proof}
We prove the stricter inequality (since $1>B>0$ and $r_1,r_4>0$)
\begin{equation}
B^2(r_1+r_4)^2>\sqrt{r_1r_4}.\label{ineq:maindtwo}
\end{equation}
We first identify the regions where \eqref{ineq:maindtwo} is hard to prove. Note that $B^2(r^2_1+2r_1r_4+r^2_4)\geq4B^2r_1r_4$, so that \eqref{ineq:maindtwo} is true if $4B^2r_1r_4>\sqrt{r_1r_4}$. Thus we may assume that
\begin{equation}
\frac{1}{16B^4}\geq r_1 r_4.\label{rprodupper}
\end{equation}
Since $r_1r_4>1$, by Item 2 of Lemma~\ref{lem:biasedvaluesdtwo}
\begin{equation}
B^2 r_1^2>\frac{4^2}{9^2B^{2}},\label{ineq:ronelow}
\end{equation}
so that \eqref{ineq:maindtwo} is true if $\frac{4^2}{9^2B^{2}}\geq\sqrt{r_1r_4}$. Thus we may assume that
\begin{equation}
r_1r_4>\frac{4^4}{9^4B^4}.\label{rprodlower}
\end{equation}
We are now ready to prove \eqref{ineq:maindtwo}. Using \eqref{rprodupper}, \eqref{ineq:ronelow} and \eqref{rprodlower} we obtain
\begin{align*}
B^2(r^2_1+2r_1r_4+r^2_4)\geq B^2(r^2_1+2r_1r_4)>\frac{1}{B^2}\left(\frac{4^2}{9^2}+2\cdot \frac{4^4}{9^4}\right)>\frac{1}{B^2}\cdot \frac{1}{4}\geq \sqrt{r_1r_4}.
\end{align*}
\end{proof}

\begin{proof}[Proof of Lemma~\ref{lem:secondmaxising}.]
The arguments in this section established the required analogue of Lemma~\ref{lem:seccritical}, i.e., that the only critical points of $\phi_2(\gamma,\delta)$ satisfy $\gamma=\alpha^2,\ \delta=\beta^2$. Now, we can conclude using Lemmas~\ref{lem:secboundary} and~\ref{lem:sechessian}, just as in the proof of Lemma~\ref{lem:secondmax2spin} in Section~\ref{sec:logsecmax}.
\end{proof}

\subsubsection{Proof of Lemma~\ref{lem:biasdtwo}}\label{sec:biasdtwo}
By the remarks in the end of Section~\ref{sec:tree-recursions}, for the Ising model with no external field it holds that $\frac{\alpha}{1-\alpha}=\frac{1-\beta}{\beta}$ and $Q^+Q^-=1$. For ease of presentation, let $x=Q^+, y=Q^-$. By \eqref{eq:densitiesone} and \eqref{eq:pppm}, we have 
\[\frac{\alpha}{1-\alpha}=\frac{x(By+1)}{y+B},\mbox{ where } x=\left(\frac{By+1}{y+B}\right)^2 \mbox{ and } x>1>y>0.\]
Since $xy=1$, we have that
\begin{equation}\label{eq:algebraicbound}
\frac{\alpha}{1-\alpha}=\frac{By+1}{y(y+B)},\mbox{ where } \frac{1}{y^3}=\left(\frac{By+1}{y(y+B)}\right)^2\mbox{ and } 0<y<1.
\end{equation}
\begin{proof}[Proof of Lemma~\ref{lem:biasdtwo}.]
For $0<B<\frac{\sqrt{2}-1}{\sqrt{2}+1}=3-2\sqrt{2}$, we want to prove that $\displaystyle\frac{\alpha}{1-\alpha}>\displaystyle\frac{4}{9}\cdot\frac{1}{B^3}$, which in light of \eqref{eq:algebraicbound}, is equivalent to $y<(3/2)\sqrt[3]{3/2} B^2$. For convenience, let $c:= (3/2)\sqrt[3]{3/2}$. The equation for $y$ in \eqref{eq:algebraicbound} may be rewritten as
\[(y+B)^2=y(By+1)^2,\]
which in turns yields that 
\[B^2 y^2+(B^2+2B-1) y+B^2=0,\]
since, by \eqref{eq:algebraicbound}, we are interested in $0<y<1$. 

Let $f(z):= B^2 z^2+(B^2+2B-1) z+B^2$. By Lemma~\ref{eq:densitiesone}, we know that $y$ is the unique root of $f(z)$ for $0<z<1$. Observe that $f(0)=B^2>0$, so the desired inequality will follow if we prove that $f(B^2c)<0$. Now, $f$ may be rewritten as
\[f(z)=B^2z\left(z+\frac{1}{z}-\frac{1-2 B - B^2}{B^2}\right).\]
For $z=B^2c$ and $0<B<3-2\sqrt{2}$, we have
\begin{gather*}
z+\frac{1}{z}=\frac{B^4 c^2+1}{B^2c}\leq \frac{(3-2\sqrt{2})^4c^2+1}{c}\cdot \frac{1}{B^2},\\
\frac{1-2 B - B^2}{B^2}\geq \frac{1-2(3-2\sqrt{2})-(3-2\sqrt{2})^2}{B^2}=\frac{16\sqrt{2}-22}{B^2}.
\end{gather*}
A straightforward calculation gives $\frac{(3-2\sqrt{2})^4c^2+1}{c}<\frac{3}{5}<16\sqrt{2}-22$, thus proving $f(B^2c)<0$. The lemma follows.
\end{proof}

\subsubsection{Proof of Lemma~\ref{lem:biasedvaluesdtwo}}\label{sec:biasedvaluesdtwo}

We need the following  lemma.
\begin{lemma}\label{lem:boundsalphabetaRC}
When $1>B_1B_2>0$ and $r_1,r_4,c_1,c_4$ satisfy \eqref{r1r4} and \eqref{c1c4}, it holds that 
\[\frac{\alpha}{1-\alpha}<\frac{1+r_1/B_2}{1+B_2r_4}, \qquad \frac{1+B_1c_1}{1+c_4/B_1}<\frac{\beta}{1-\beta}.\]
\end{lemma}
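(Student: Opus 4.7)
The plan is to translate the first inequality into a polynomial inequality in the auxiliary variables $R_i, C_j$ via the balance equations \eqref{eq:frty}, and then show that the difference of the two sides factors as a strictly positive multiple of $1 - B_1 B_2$, using only the antiferromagnetic condition $B_1 B_2 < 1$. The second inequality will then be obtained by a clean symmetry argument.

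Concretely, writing $r_1 = R_1/R_2$ and $r_4 = R_4/R_2$, the first inequality $\frac{\alpha}{1-\alpha}<\frac{1+r_1/B_2}{1+B_2 r_4}$ is equivalent to
\[
\alpha\bigl(R_1 + 2B_2 R_2 + B_2^2 R_4\bigr) < R_1 + B_2 R_2,
\quad\text{i.e.}\quad
\alpha\cdot B_2(R_2 + B_2 R_4) < (1-\alpha)(R_1 + B_2 R_2).
\]
Using Claim~\ref{claim:equalityrc} ($R_3=R_2$, $C_3=C_2$), the equations \eqref{eq:frty} give
\[
\alpha = R_1 A + R_2 P, \qquad 1-\alpha = R_2 P + R_4 D,
\]
where $A := B_1^2 C_1 + 2B_1 C_2 + C_4$, $P := B_1 C_1 + (B_1 B_2+1) C_2 + B_2 C_4$, and $D := C_1 + 2B_2 C_2 + B_2^2 C_4$. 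Substituting into the above inequality and cancelling the common term $B_2 R_2^2 P$, it suffices to prove
\[
R_1 R_2\,(P - B_2 A) + R_1 R_4\,(D - B_2^2 A) + B_2 R_2 R_4\,(D - B_2 P) > 0.
\]

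The main step is a direct algebraic check of the three identities
\[
P - B_2 A = (1-B_1 B_2)(B_1 C_1 + C_2), \quad D - B_2 P = (1-B_1 B_2)(C_1 + B_2 C_2),
\]
\[
D - B_2^2 A = (1 - B_1 B_2)\bigl[(1 + B_1 B_2) C_1 + 2 B_2 C_2\bigr].
\]
Hence the left-hand side factors as $(1-B_1B_2)$ times a positive combination of the $R_i, C_j$; since $B_1 B_2 < 1$ in the antiferromagnetic regime, the inequality follows. This computation is the only non-trivial (but elementary) part.

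For the second inequality, observe that swapping spins $+ \leftrightarrow -$ maps $(\alpha,\beta,\gamma,\delta,B_1,B_2,R_1,R_4,C_1,C_4)$ to $(1-\alpha,1-\beta,1-2\alpha+\gamma,1-2\beta+\delta,B_2,B_1,R_4,R_1,C_4,C_1)$, leaves the system \eqref{r1r4}--\eqref{c1c4} invariant, and sends $r_1\leftrightarrow r_4$, $c_1\leftrightarrow c_4$. Applied to the (already proved) first inequality, this gives $\frac{\alpha}{1-\alpha} > \frac{1+B_1 r_1}{1+r_4/B_1}$. Composing with the bipartition swap $V_1\leftrightarrow V_2$ (which exchanges $(\alpha,r_i)\leftrightarrow(\beta,c_i)$ and fixes $B_1,B_2$) yields $\frac{\beta}{1-\beta} > \frac{1+B_1 c_1}{1+c_4/B_1}$, which is precisely the second inequality.
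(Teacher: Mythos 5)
Your proof is correct, and it reaches the result by a route that is algebraically close to but structurally different from the paper's. The paper first manufactures the clean intermediate representation \eqref{eq:mainrcRC}, namely $\frac{\alpha}{1-\alpha}=\frac{1+r_1C_1}{1+r_4C_4}$ with new auxiliary ratios $C_1=A/P$ and $C_4=D/P$ (in your notation), and then proves the four scalar bounds \eqref{cl:boundRC} one at a time; the desired inequalities drop out from the monotonicity of $c\mapsto(1+r_1c)/(1+r_4c')$. You instead clear denominators, substitute $\alpha=R_1A+R_2P$, $1-\alpha=R_2P+R_4D$ directly, and factor the resulting polynomial difference as a positive combination of $P-B_2A$, $D-B_2^2A$, $D-B_2P$, each of which you identify as $(1-B_1B_2)$ times a manifestly positive expression. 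The algebraic kernel is the same — the paper's bounds $C_1<1/B_2$ and $B_2<C_4$ are exactly your $P-B_2A>0$ and $D-B_2P>0$ after clearing denominators, and your third factor $D-B_2^2A$ is $B_2(P-B_2A)+(D-B_2P)$ — but the paper's packaging localizes the $1-B_1B_2$ manipulation into four reusable scalar inequalities, while yours is a single expansion-and-factorization. For the second inequality the paper just instantiates the same machinery with the roles of $R$'s and $C$'s exchanged, whereas you use a spin-flip symmetry $(\alpha,\gamma,B_1,r_1,c_1)\leftrightarrow(1-\alpha,1-2\alpha+\gamma,B_2,r_4,c_4)$ composed with the bipartition swap; your check that this preserves \eqref{r1r4}--\eqref{c1c4} is sound, and it is a nice observation, though it requires slightly more care than the paper's direct reuse of \eqref{eq:mainrcRC} and \eqref{cl:boundRC}. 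One small point: both proofs implicitly use the antiferromagnetic condition $B_1B_2<1$ (which you invoke and the paper invokes in proving \eqref{cl:boundRC}), even though the lemma's statement only asserts $B_1B_2>0$; this is fine in context since the paper works exclusively in the antiferromagnetic regime, but it's worth noting that the positivity of your three factors is what breaks if $B_1B_2>1$.
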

Assuming Lemma~\ref{lem:boundsalphabetaRC} for the moment, we give the proof of Lemma~\ref{lem:biasedvaluesdtwo}. 
\begin{proof}[Proof of Lemma~\ref{lem:biasedvaluesdtwo}.]
When $B_1=B_2=B$, Lemma~\ref{lem:boundsalphabetaRC} gives
\[\frac{\alpha}{1-\alpha}<\frac{1+r_1/B}{1+Br_4},\qquad \frac{1+Bc_1}{1+c_4/B}<\frac{\beta}{1-\beta}.\]
Since $r_4,c_1>0$, it follows that
\[r_1>B\left(\frac{\alpha}{1-\alpha}-1\right)>B\left(\frac{4}{9}\cdot \frac{1}{B^3}-1\right) \mbox{ and } c_4>B\left(\frac{1-\beta}{\beta}-1\right)>B\left(\frac{4}{9}\cdot\frac{1}{B^3}-1\right),\]
where we have used Lemma~\ref{lem:biasdtwo} to bound $\frac{\alpha}{1-\alpha}$ and $\frac{1-\beta}{\beta}$. This proves the first item of the lemma, after observing that $\frac{4}{9}\cdot \frac{1}{B^3}-1>\frac{1}{3}\cdot \frac{1}{B^3}$ for any $B<1/3$.

To prove the second item of the Lemma, note that when $r_1r_4,c_1c_4>1$ it holds
\[\frac{1+r_1/B}{1+Br_4}<\frac{1+r_1/B}{1+B/r_1}=\frac{r_1}{B}\mbox{ and }  \frac{1+Bc_1}{1+c_4/B}>\frac{1+B/c_4}{1+c_4/B}=\frac{B}{ c_4}.\]
It follows that
\[ r_1>B\cdot \frac{\alpha}{1-\alpha} \mbox{ and } c_4>B\cdot \frac{1-\beta}{\beta}.\]
The desired bounds now follow after using Lemma~\ref{lem:biasdtwo} to bound $\frac{\alpha}{1-\alpha}$ and $\frac{1-\beta}{\beta}$.
\end{proof}

We next give the proof of Lemma~\ref{lem:boundsalphabetaRC}.
\begin{proof}[Proof of Lemma~\ref{lem:boundsalphabetaRC}.]
Define
\begin{equation*}
\begin{array}{lll}
C_1'=\displaystyle\frac{B^2_1c_1+2B_1+c_4}{B_1c_1+(B_1B_2+1)+B_2c_4}, &  & C_4'=\displaystyle\frac{c_1+2B_2+B^2_2c_4}{B_1c_1+(B_1B_2+1)+B_2c_4},\\
& &\\
R_1'=\displaystyle\frac{B^2_1r_1+2B_1+r_4}{B_1r_1+(B_1B_2+1)+B_2r_4}, &  & R_4'=\displaystyle\frac{r_1+2B_2+B^2_2r_4}{B_1r_1+(B_1B_2+1)+B_2r_4}.
\end{array}
\end{equation*}
Using $C_1',C_4',R_1',R_4'$ we can rewrite the equations in \eqref{r1r4}, \eqref{c1c4} as
\begin{gather}
r_1C_1'=\frac{\gamma}{\alpha-\gamma},\qquad r_4C_4'=\frac{1-2\alpha+\gamma}{\alpha-\gamma},\label{eq:rs}\\
c_1R_1'=\frac{\delta}{\beta-\delta},\qquad c_4R_4'=\frac{1-2\beta+\delta}{\beta-\delta}.\label{eq:cs}
\end{gather}
Eliminating $\gamma,\delta$ from \eqref{eq:rs} and \eqref{eq:cs} respectively, we obtain
\begin{equation}
\frac{\alpha}{1-\alpha}=\frac{1+r_1C_1'}{1+r_4C_4'},\qquad \frac{\beta}{1-\beta}=\frac{1+c_1R_1'}{1+c_4R_4'}.\label{eq:mainrcRC}
\end{equation}
We have the following loose bounds on $R_1',R_4',C_1',C_4'$.
\begin{equation}\label{cl:boundRC}
B_1<R_1', \qquad R_4'<\displaystyle\frac{1}{B_1},\qquad C_1'<\displaystyle\frac{1}{B_2}, \qquad B_2<C_4'.
\end{equation}
These bounds may be proved by picking any of the inequalities, multiplying out and using $B_1B_2<1$. It is immediate now to check that the bounds in Lemma~\ref{lem:boundsalphabetaRC} follow after combining \eqref{eq:mainrcRC} and \eqref{cl:boundRC}.
\end{proof}

\subsection{Proof of Lemma~\ref{lem:secondmaxhardcore}}\label{sec:hardhard}
To prove Lemma~\ref{lem:secondmaxhardcore}, we will need to establish the following lemma.
\begin{lemma}\label{lem:hardcoreqwe}
For the hard-core model (that is, $B_{1}=0$, $B_{2}=1$) for $\Delta=3,4,5$ and 
$(\alpha,\beta)=(p^+,p^-)$ the solution of \eqref{es1}, \eqref{es2} and \eqref{et} 
satisfies $\gamma=\alpha^2$ and $\delta=\beta^2$.
\end{lemma}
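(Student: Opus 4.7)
The plan is to follow the approach outlined in Section~\ref{sec:remarks}: specialize the argument in the proof of Lemma~\ref{lem:seccritical} to the hard-core setting $B_1=0,\ B_2=1$, exploiting the explicit handle we have on the densities $p^{\pm}$ (and hence on the derived quantities $r_i,c_j$) from the hard-core tree recursion. As in the proof of Lemma~\ref{lem:seccritical}, the reduction is to show that Cases II and III of \eqref{recurone} cannot occur, so that only Case I remains, which by \eqref{secab} yields $\gamma=\alpha^2$ and $\delta=\beta^2$.

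Setting $B_1=0,\ B_2=1$ causes the key recursions \eqref{recurone} to collapse to
\[(r_1r_4)^{1/d}-1=\frac{c_1c_4-1}{(1+c_4)^2},\qquad (c_1c_4)^{1/d}-1=\frac{r_1r_4-1}{(1+r_4)^2},\]
and the defining formulas \eqref{r1r4}, \eqref{c1c4} simplify similarly, as all terms involving the forbidden $(-,-)$ edges vanish. First I would extract a necessary condition for Case II by multiplying these two identities and invoking the elementary concavity bound $(z^{1/d}-1)/(z-1)<1/d$ valid for $z>1$, which gives $(1+r_4)(1+c_4)>d$; Case III gives the reversed strict inequality $(1+r_4)(1+c_4)<d$.

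The next step is to pin down $r_4$ and $c_4$ using the hard-core tree recursion. With $B_1=0$, \eqref{eq:densitiesone} reduces to the scalar pair $Q^{+}=\lambda/(1+Q^{-})^d$, $Q^{-}=\lambda/(1+Q^{+})^d$, and an analysis of the function $z\mapsto z/(1+z)^d$ shows the bias $Q^{+}>1/(d-1)>Q^{-}$ in the non-uniqueness region $\lambda>\lambda_c(\TD)$. Combined with \eqref{eq:pppm} specialized to $B_1=0,\ B_2=1$ (namely $p^{\pm}=Q^{\pm}/(1+Q^{+}+Q^{-})$), this bias transfers, through \eqref{r1r4} and \eqref{c1c4}, to explicit quantitative bounds on $r_4$ and $c_4$, in the spirit of Lemma~\ref{lem:biasedvaluesdtwo} from the Ising argument. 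For each $\Delta\in\{3,4,5\}$ separately, I would then verify that the resulting bounds are incompatible with both $(1+r_4)(1+c_4)>d$ and $(1+r_4)(1+c_4)<d$, ruling out Cases II and III.

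The main obstacle will be the regime where $\lambda$ is close to the critical value $\lambda_c(\TD)$, since there the bias between $Q^{+}$ and $Q^{-}$ is small and the bounds above become tight. In this regime I would use a perturbative expansion $Q^{\pm}=Q^{*}(1\pm\varepsilon)+O(\varepsilon^2)$, with the separation rate controlled by Lemma~\ref{lem:technicalinequality} (which records that $Q^{*}$ is a repulsive fixed point of the two-step recursion with derivative $(\Delta-1)^2\omega^*>1$), so that the case analysis can be made uniform over $\lambda>\lambda_c(\TD)$. The restriction $\Delta\in\{3,4,5\}$ is expected to enter precisely as a quantitative check on the residual inequalities in this near-critical regime; for $\Delta\geq 6$ the analogous inequalities were already handled via a different technique in~\cite{Galanis}, with the slack improving as $\Delta$ grows.
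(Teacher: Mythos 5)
Your proposal takes a genuinely different route from the paper, and as written it has a real gap in its central step.

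Your opening move is correct and interesting: specializing \eqref{recurone} to $B_1=0$, $B_2=1$, multiplying the two relations, and applying the concavity bound $(z^{1/d}-1)/(z-1)<1/d$ for $z>1$ (reversed for $z<1$) does give the necessary condition $(1+r_4)(1+c_4)>d$ in Case~II and $(1+r_4)(1+c_4)<d$ in Case~III. This is a sensible substitute for the AM--GM step of Lemma~\ref{lem:seccritical}, which is vacuous once $B_1B_2=0$. Your claim $Q^{+}>1/(d-1)>Q^{-}$ in the non-uniqueness region also checks out (it follows, e.g., from $(\Delta-1)^2\omega<1$ in Lemma~\ref{lem:technicalinequality}, which for $B_1=0,B_2=1$ reads $d^2Q^+Q^-<(1+Q^+)(1+Q^-)$).

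The gap is in what comes next. You propose to rule out Case~II by producing an \emph{upper} bound on $(1+r_4)(1+c_4)$ from bias bounds on $r_4,c_4$ ``in the spirit of Lemma~\ref{lem:biasedvaluesdtwo},'' but that lemma and its source Lemma~\ref{lem:boundsalphabetaRC} both degenerate at $B_1=0$: the bound $(1+B_1c_1)/(1+c_4/B_1)<\beta/(1-\beta)$ becomes the vacuous $0<\beta/(1-\beta)$, and no substitute is given. Worse, the bias actually pushes in opposite directions: when $\alpha$ is close to $1$ the prefactor $(1-2\alpha+\gamma)/(\alpha-\gamma)$ in \eqref{r1r4} makes $r_4$ small, but when $\beta$ is close to $0$ the prefactor $(1-2\beta+\delta)/(\beta-\delta)$ in \eqref{c1c4} makes $c_4$ large, so it is not at all clear that $(1+r_4)(1+c_4)$ is bounded above by $d$. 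Finally, the near-critical perturbative argument is only gestured at; it would cover a neighborhood of $\lambda_c(\Tree_\Delta)$ but leaves an intermediate range of $\lambda$ where neither the crude bias bound nor the perturbation applies -- which is precisely the regime that left $\Delta=4,5$ open in \cite{MWW,Sly10,Galanis}.

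The paper's proof avoids this bookkeeping entirely. It reparametrizes by $x=(r_1r_4)^{1/d}$, $y=(c_1c_4)^{1/d}$, $a=1/r_4$, $b=1/c_4$, solves the two relations in \eqref{ettt} for $a,b$ in terms of $x,y$ via the quadratics with discriminants $q_a^2=(y-1)(x^d-1)$ and $q_b^2=(x-1)(y^d-1)$, expresses $\alpha,\beta$ in these variables via \eqref{alphabeta2}, and then imposes the exact tree-fixed-point identity $\alpha(1-\alpha)^d=\beta(1-\beta)^d$. This reduces the existence of a spurious critical point to the vanishing of a polynomial of the form $c_{00}+c_{10}q_a+c_{01}q_b+c_{11}q_aq_b$, which, after the substitution $x=(ty+y^d)/(t+1)$, factors with sign-definite coefficients (verified by Mathematica for $\Delta=3,4,5$). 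The lesson is that the decisive constraint is the algebraic identity $\alpha(1-\alpha)^d=\beta(1-\beta)^d$, not a bias inequality; to salvage your approach you would need to bring that identity to bear quantitatively, at which point you are essentially led back to the paper's polynomial computation.
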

Assuming for now Lemma~\ref{lem:hardcoreqwe}, let us conclude Lemma~\ref{lem:secondmaxhardcore}.
\begin{proof}[Proof of Lemma~\ref{lem:secondmaxhardcore}.]
By Lemma~\ref{lem:hardcoreqwe}, the only critical points of $\phi_2(\gamma,\delta)$ satisfy $\gamma=\alpha^2,\ \delta=\beta^2$. The result follows now by using Lemmas~\ref{lem:secboundary} and~\ref{lem:sechessian}, just as in the proof of Lemma~\ref{lem:secondmax2spin} in Section~\ref{sec:logsecmax}.
\end{proof}

Finally, we give the proof of Lemma~\ref{lem:hardcoreqwe}.
\begin{proof}[Proof of Lemma~\ref{lem:hardcoreqwe}.]
Let $d:=\Delta-1$ and let $c_1:=C_1/C_2$, $c_4:=C_4/C_2$,
$r_1:=R_1/R_2$, $r_4:=R_4/R_2$
(the same notation as we used in the proof of
Lemma~\ref{lem:seccritical}). Equations in \eqref{recurone}, for $B_1=0$ and $B_2=1$, become
\begin{equation}\label{ettt}
(c_1c_4)^{1/d}-1=\frac{r_1r_4-1}{(1+r_4)^2}\quad\mbox{and}\quad
(r_1r_4)^{1/d}-1=\frac{c_1c_4-1}{(1+c_4)^2}.
\end{equation}
Let $x=(r_1r_4)^{1/d}$, $y=(c_1c_4)^{1/d}$, $a=1/r_4$, and $b=1/c_4$.
We can rewrite equation~\eqref{ettt}
as follows
\begin{equation}\label{xxxyy}
y=\frac{1+2a+a^2 x^d}{1+2a+a^2}\quad\mbox{and}\quad x=\frac{1+2b+b^2
y^d}{1+2b+b^2}.
\end{equation}
Solving for $a$ and $b$ we obtain
\begin{equation}\label{barn}
a=\frac{y-1 + q_a}{x^d-y}\quad\mbox{and}\quad b=\frac{x-1 + q_b}{y^d-x},
\end{equation}
where
\begin{equation}\label{barn2}
q_a^2=(y-1)(x^d-1)\quad\mbox{and}\quad q_b^2 =(x-1)(y^d-1).
\end{equation}
Note that $a>0$ and $b>0$. Assume $y>1$. Then $x>1$ (otherwise the
sides of~\eqref{ettt} would
have different signs) and also $y\leq x^d$ (the right hand side of
\eqref{xxxyy} is a convex combination of $1$ and $x^d$) and hence
$|q_a|>y-1$. Thus, using $a>0$ in Equation~\eqref{barn}, we obtain
$q_a>0$ and similarly $q_b>0$. Now assume $y<1$. Then
$x<1$ and also $y\geq x^d$ (again, the right hand side of
\eqref{xxxyy} is a convex combination of $1$ and $x^d$) and hence
$|q_a|>1-y$. Thus, using $a>0$ in Equation~\eqref{barn}, we obtain
$q_a<0$ and similarly $q_b<0$.

Because of symmetry between $x$ and $y$ we only need to consider two cases
(in the case $x=y=1$ we have $\gamma=\alpha^2$ and $\delta=\beta^2$):
\begin{itemize}
\item CASE 1: $1<y\leq x\leq y^d$, $q_a:=\sqrt{(y-1)(x^d-1)}$,
$q_b:=\sqrt{(x-1)(y^d-1)}$, and
\item CASE 2: $0<y^d\leq x\leq y<1$, $q_a:=-\sqrt{(y-1)(x^d-1)}$,
$q_b:=-\sqrt{(x-1)(y^d-1)}$.
\end{itemize}

From Equation~\eqref{et} one obtains the following expressions for
$\alpha$ and $\beta$ (see equation \eqref{eq:mainrcRC})
$$
\frac{\alpha}{1-\alpha}=\frac{\frac{c_4}{1+c_4}r_1+1}{\frac{c_1+2+c_4}{1+c_4}r_4+1}\quad\mbox{and}\quad
\frac{\beta}{1-\beta}=\frac{\frac{r_4}{1+r_4}c_1+1}{\frac{r_1+2+r_4}{1+r_4}c_4+1}.
$$
Solving for $\alpha,\beta$ and using the parametrization with $x,y,a,b$ we have
\begin{equation}\label{alphabeta2}
\begin{array}{l}
\alpha=\displaystyle\frac{y(1+a)^2-1-a+ab}{x(1+b)^2+y(1+a)^2+2ab-1},\vspace{0.2cm}\\
\beta=\displaystyle\frac{x(1+b)^2-1-b+ab}{x(1+b)^2+y(1+a)^2+2ab-1}.
\end{array}
\end{equation}
Equations \eqref{eq:recurone}, \eqref{eq:pppm} for $B_1=0,\,B_2=1,\, (\alpha,\beta)=(p^+,p^-)$ easily imply
\begin{equation}\label{hardcore3}
\alpha (1-\alpha)^d = \beta (1-\beta)^d.
\end{equation}
Equation~\eqref{hardcore3} is equivalent to (assuming $\alpha\neq\beta$)
\begin{equation}\label{hardcore2}
\frac{\alpha (1-\alpha)^d - \beta
(1-\beta)^d}{\alpha-\beta}=0,
\end{equation}
which using the expressions for $\alpha$ and $\beta$
from~\eqref{alphabeta2} becomes
\begin{equation}\label{eeer}
\textstyle\frac{(y(1+a)^2-1-a+ab)(x(1+b)^2+ab+a)^d -
(x(1+b)^2-1-b+ab)(y(1+a)^2+ab+b)^d)}{x(b+1)^2-y(a+1)^2+a-b}=0.
\end{equation}
The final part of the proof will require some computational assistance
(just to manipulate polynomials). First we are going to
plug-in the expression for $a$ and $b$ from~\eqref{barn}
into~\eqref{eeer}. Then we are going to use the expressions
in~\eqref{barn2}
to reduce the powers of $q_a$ and $q_b$ occurring in the expression
obtaining an expression of the form
$$
c_{00} + c_{10} q_a + c_{01} q_b + c_{11} q_a q_b,
$$
where $c_{00},c_{01},c_{10},c_{11}$ are polynomials in $x$ and $y$.
Then we are going to show that in CASE 1 we have $c_{00}>0$,
$c_{01}>0$, $c_{10}>0$, $c_{11}>0$
and in CASE 2 we have $c_{00}>0$, $c_{01}<0$, $c_{10}<0$, $c_{11}>0$
(and hence~\eqref{eeer} cannot be zero) . This will be accomplished by
reparameterizing $x=(ty + y^d)/(t+1)$, factoring the expressions,
and observing that:
\begin{itemize}
\item factor $y-1$ occurs with even power in $c_{00}$ and $c_{11}$,
\item factor $y-1$ occurs with odd power in $c_{01}$ and $c_{10}$, and
\item all the other factors have all coefficients positive.
\end{itemize}
The details of the argument appear in Appendix~\ref{sec:appendixhardcore}. 
\end{proof}

\newpage
  
  \appendix
  
 \section{Computer Assisted Proofs for the Hardcore
Model}\label{sec:appendixhardcore}

\subsection{Case $\Delta=3$}

\begin{verbatim}
d = 2;
F = Factor[((y*(1 + a)^2 - 1 - a + a*b)*(x*(1 + b)^2 + a*b + a)^d -
         (x*(1 + b)^2 - 1 - b + a*b)*(y*(1 + a)^2 + a*b + b)^d)][[3]];
\end{verbatim}

\noindent {\bf Comment:} now $F$ contains the left-hand side of \eqref{eeer}
(we took the third factor; the other two factors are $-1$ and the
denominator of~\eqref{eeer}).

\begin{verbatim}
H = Factor[Expand[F /.
  {a -> (-1 + y + Qa)/(x^d - y), b -> (-1 + x + Qb)/(y^d - x)}]][[3]];
\end{verbatim}

\noindent {\bf Comment:} now $H$ contains the left-hand side of \eqref{eeer}
after substituting the values of $a,b$ given by~\eqref{barn} (the
result has three factors: $1/(x-y^d)^{2d-1}$, $1/(x^d-y)^{2d-1}$, and the
factor we assigned to $H$; note that the first two factors
cannot have value zero).

\begin{verbatim}
da = Exponent[H, Qa];
For [i = da, i >= 2, i--,
   H = Expand[H /. {Qa^i -> Qa^(i - 2)*(1 - x^d - y + x^d*y)}]];

db = Exponent[H, Qb];
For [i = db, i >= 2, i--,
   H = Expand[H /. {Qb^i -> Qb^(i - 2)*(1 - y^d - x + y^d*x)}]];
\end{verbatim}

\noindent {\bf Comment:} now $H$ contains the left-hand side of \eqref{eeer}
(multiplied by $(x-y^d)^{2d-1}(x^d-y)^{2d-1}$) after reducing the powers
of $q_a$ and $q_b$ using~\eqref{barn2}.

\begin{verbatim}
c00 = Factor[Coefficient[Coefficient[H, Qa, 0], Qb, 0]];
c01 = Factor[Coefficient[Coefficient[H, Qa, 0], Qb, 1]];
c10 = Factor[Coefficient[Coefficient[H, Qa, 1], Qb, 0]];
c11 = Factor[Coefficient[Coefficient[H, Qa, 1], Qb, 1]];
\end{verbatim}

\noindent {\bf Comment:} the following reparameterization will reveal the signs
of $c_{00}, c_{01}, c_{10}, c_{11}$. We show the expressions for $d=2$; for
larger $d$ we will print the first $6$ factors and check the positivity of the
last factor by Mathematica.

\begin{verbatim}
u00 = Factor[Expand[c00 /. {x -> (t*y + y^d )/(t + 1)}]]
u01 = Factor[Expand[c01 /. {x -> (t*y + y^d )/(t + 1)}]]
u10 = Factor[Expand[c10 /. {x -> (t*y + y^d )/(t + 1)}]]
u11 = Factor[Expand[c11 /. {x -> (t*y + y^d )/(t + 1)}]]
\end{verbatim}

\noindent {\bf OUTPUT:}

\begin{align*}
&-\frac{1}{(1+t)^9}(-1+y)^6 y^2 (1+t+y) \Big(4 t+28 t^2+84 t^3+140
t^4+140 t^5+84 t^6+28 t^7+4 t^8+12 t y\\
&+84 t^2 y+248 t^3 y+400 t^4 y+380 t^5 y+212 t^6 y+64 t^7 y+8 t^8 y+4
y^2+50 t y^2+228 t^2 y^2+532 t^3 y^2\\
&+714 t^4 y^2+570 t^5 y^2+264 t^6 y^2+64 t^7 y^2+6 t^8 y^2+16 y^3+141
t y^3+502t^2 y^3+954 t^3 y^3+1054 t^4 y^3\\
&+683 t^5 y^3+248 t^6 y^3+46 t^7 y^3+4 t^8 y^3+32 y^4+233 t y^4+694
t^2 y^4+1089 t^3 y^4+976 t^4 y^4+521 t^5 y^4\\
&+179t^6 y^4+43 t^7 y^4+5 t^8 y^4+40 y^5+243 t y^5+592 t^2 y^5+759 t^3
y^5+600 t^4 y^5+352 t^5 y^5+161 t^6 y^5\\
&+43 t^7 y^5+4 t^8 y^5+32 y^6+159 t y^6+318 t^2 y^6+387 t^3 y^6+385
t^4 y^6+302 t^5 y^6+139 t^6 y^6+26 t^7 y^6\\
&+t^8 y^6+16 y^7+63 t y^7+132 t^2 y^7+241 t^3 y^7+322 t^4 y^7+236 t^5
y^7+72 t^6 y^7+6 t^7 y^7+4 y^8+19 t y^8\\
&+82 t^2 y^8+201 t^3 y^8+234 t^4 y^8+110 t^5 y^8+15 t^6 y^8+12 t
y^9+69 t^2 y^9+137 t^3 y^9+100 t^4 y^9 \\
&+20 t^5 y^9+10 t y^{10}+44 t^2 y^{10}+54 t^3 y^{10}+15 t^4 y^{10}+6 t
y^{11}+16 t^2 y^{11}+6 t^3 y^{11}+2 t y^{12}+t^2 y^{12}\Big)
\end{align*}

\begin{align*}
&-\frac{1}{(1+t)^8}(-1+y)^5 y^2 (1+t+y) \Big(4 t+24 t^2+60 t^3+80
t^4+60 t^5+24 t^6+4 t^7+8 t y+50 t^2 y\\
&+126 t^3 y+164 t^4 y+116 t^5 y+42 t^6 y+6 t^7 y+4 y^2+38 t y^2+136
t^2 y^2+248 t^3 y^2+254 t^4 y^2+148 t^5 y^2\\
&+46 t^6 y^2+6 t^7 y^2+12 y^3+89 t y^3+258 t^2 y^3+383 t^3 y^3+311 t^4
y^3+138 t^5 y^3+35 t^6 y^3+6 t^7 y^3\\
&+20 y^4+118 t y^4+274 t^2 y^4+312 t^3 y^4+197 t^4 y^4+92 t^5 y^4+36
t^6 y^4+5 t^7 y^4+20 y^5+95 t y^5+165 t^2 y^5+\\
&156 t^3 y^5+127 t^4 y^5+87 t^5 y^5+29 t^6 y^5+2 t^7 y^5+12 y^6+42 t
y^6+65 t^2 y^6+94 t^3 y^6+113 t^4 y^6\\
&+70 t^5 y^6+12 t^6 y^6+4 y^7+11 t y^7+36 t^2 y^7+85 t^3 y^7+90 t^4
y^7+30 t^5 y^7+6 t y^8+35 t^2 y^8\\
&+65 t^3 y^8+40 t^4 y^8+6 t y^9+25 t^2 y^9+30 t^3 y^9+4 t y^{10}+12
t^2 y^{10}+2 t y^{11}\Big)
\end{align*}

\begin{align*}
&-\frac{1}{(1+t)^7}(-1+y)^5 y^2 (1+t+y) \Big(4 t+20 t^2+40 t^3+40
t^4+20 t^5+4 t^6+10 t y+50 t^2 y+96 t^3 y\\
&+88 t^4 y+38 t^5 y+6 t^6 y+4 y^2+38 t y^2+120 t^2 y^2+174 t^3 y^2+126
t^4 y^2+44 t^5 y^2+6 t^6 y^2+16 y^3 \\
&+107 t y^3+267 t^2 y^3+324 t^3 y^3+197 t^4 y^3+55 t^5 y^3+6 t^6
y^3+32 y^4+177 t y^4+371 t^2 y^4+366 t^3 y^4\\
&+179 t^4 y^4+47 t^5 y^4+5 t^6 y^4+40 y^5+183 t y^5+310 t^2 y^5+250
t^3 y^5+118 t^4 y^5+28 t^5 y^5+2 t^6 y^5\\
&+ 32y^6+119 t y^6+164 t^2 y^6+132 t^3 y^6+56 t^4 y^6+8 t^5 y^6+16
y^7+47 t y^7+68 t^2 y^7+52 t^3 y^7+12 t^4 y^7\\
&+4 y^8+13 t y^8+23 t^2 y^8+8 t^3 y^8+4 t y^9+2 t^2 y^9\Big)
\end{align*}

\begin{align*}
&-\frac{1}{(1+t)^6}(-1+y)^4 y^2 \Big(4 t+20 t^2+40 t^3+40 t^4+20 t^5+4
t^6+10 t y+48 t^2 y+88 t^3 y+76 t^4 y\\
& +30 t^5 y+4 t^6 y+4 y^2+38 t y^2+117 t^2 y^2+164 t^3 y^2+116 t^4
y^2+42 t^5 y^2+7 t^6 y^2+16 y^3+105 t y^3\\
&+253 t^2 y^3+294 t^3 y^3+173 t^4 y^3+49 t^5 y^3+6 t^6 y^3+32 y^4+171
t y^4+339 t^2 y^4+310 t^3 y^4+139 t^4 y^4\\
&+39 t^5 y^4+4 t^6 y^4+40 y^5+173 t y^5+268 t^2 y^5+192 t^3 y^5+92 t^4
y^5+20 t^5 y^5+32 y^6+109 t y^6+132 t^2 y^6\\
&+102 t^3 y^6+40 t^4 y^6+16 y^7+41 t y^7+54 t^2 y^7+40 t^3 y^7+4
y^8+11 t y^8+20 t^2 y^8+4 t y^9\Big)
\end{align*}

\subsection{Case $\Delta=4$}

\begin{verbatim}
d = 3;
F = Factor[((y*(1 + a)^2 - 1 - a + a*b)*(x*(1 + b)^2 + a*b + a)^d -
  (x*(1 + b)^2 - 1 - b + a*b)*(y*(1 + a)^2 + a*b + b)^d)][[3]];

H = Factor[Expand[F /.
  {a -> (-1 + y + Qa)/(x^d - y), b -> (-1 + x + Qb)/(y^d - x)}]][[3]];

da = Exponent[H, Qa];
For [i = da, i >= 2, i--,
 H = Expand[H /. {Qa^i -> Qa^(i - 2)*(1 - x^d - y + x^d*y)}]];
db = Exponent[H, Qb];
For [i = db, i >= 2, i--,
 H = Expand[H /. {Qb^i -> Qb^(i - 2)*(1 - y^d - x + y^d*x)}]];

c00 = Factor[Coefficient[Coefficient[H, Qa, 0], Qb, 0]];
c01 = Factor[Coefficient[Coefficient[H, Qa, 0], Qb, 1]];
c10 = Factor[Coefficient[Coefficient[H, Qa, 1], Qb, 0]];
c11 = Factor[Coefficient[Coefficient[H, Qa, 1], Qb, 1]];

u00 = Factor[Expand[c00 /. {x -> (t*y + y^d )/(t + 1)}]]; Length[u00];
u01 = Factor[Expand[c01 /. {x -> (t*y + y^d )/(t + 1)}]]; Length[u01];
u10 = Factor[Expand[c10 /. {x -> (t*y + y^d )/(t + 1)}]]; Length[u10];
u11 = Factor[Expand[c11 /. {x -> (t*y + y^d )/(t + 1)}]]; Length[u11];
\end{verbatim}

\noindent {\bf Comment:} the following code checks the positivity of the
coefficients of the last factor of $c_{00}, c_{01}, c_{10}, c_{11}$.

\begin{verbatim}
BAD = False;
u00[[1]]*u00[[2]]*u00[[3]]*u00[[4]]*u00[[5]]*u00[[6]]
For[i = 1, i <= Length[u00[[7]]], i++,
 If[ (u00[[7]][[i]] /. {T -> 1, y -> 1}) < 0, BAD = True]];
u01[[1]]*u01[[2]]*u01[[3]]*u01[[4]]*u01[[5]]*u01[[6]]
For[i = 1, i <= Length[u01[[7]]], i++,
 If[ (u01[[7]][[i]] /. {T -> 1, y -> 1}) < 0, BAD = True]];
u10[[1]]*u10[[2]]*u10[[3]]*u10[[4]]*u10[[5]]*u10[[6]]
For[i = 1, i <= Length[u10[[7]]], i++,
 If[ (u10[[7]][[i]] /. {T -> 1, y -> 1}) < 0, BAD = True]];
u11[[1]]*u11[[2]]*u11[[3]]*u11[[4]]*u11[[5]]*u11[[6]]
For[i = 1, i <= Length[u11[[7]]], i++,
 If[ (u11[[7]][[i]] /. {T -> 1, y -> 1}) < 0, BAD = True]];
Print[BAD];
\end{verbatim}

\noindent {\bf OUTPUT:}

\[-\frac{(-1+y)^{10} y^4 (1+y)^4 \left(1+t+y+y^2\right)^2}{(1+t)^{20}}\]

\[-\frac{(-1+y)^9 y^4 (1+y)^4 \left(1+t+y+y^2\right)}{(1+t)^{19}}\]

\[-\frac{(-1+y)^9 y^4 (1+y)^4 \left(1+t+y+y^2\right)}{(1+t)^{17}}\]

\[-\frac{(-1+y)^8 y^4 (1+y)^4 \left(1+t+y+y^2\right)}{(1+t)^{16}}\]

\noindent\(\text{False}\)

\subsection{Case $\Delta=5$}

\begin{verbatim}
d = 4;
F = Factor[((y*(1 + a)^2 - 1 - a + a*b)*(x*(1 + b)^2 + a*b + a)^d -
  (x*(1 + b)^2 - 1 - b + a*b)*(y*(1 + a)^2 + a*b + b)^d)][[3]];

H = Factor[Expand[F /.
  {a -> (-1 + y + Qa)/(x^d - y),b -> (-1 + x + Qb)/(y^d - x)}]][[3]];

da = Exponent[H, Qa];
For [i = da, i >= 2, i--,
 H = Expand[H /. {Qa^i -> Qa^(i - 2)*(1 - x^d - y + x^d*y)}]];
db = Exponent[H, Qb];
For [i = db, i >= 2, i--,
 H = Expand[H /. {Qb^i -> Qb^(i - 2)*(1 - y^d - x + y^d*x)}]];

c00 = Factor[Coefficient[Coefficient[H, Qa, 0], Qb, 0]];
c01 = Factor[Coefficient[Coefficient[H, Qa, 0], Qb, 1]];
c10 = Factor[Coefficient[Coefficient[H, Qa, 1], Qb, 0]];
c11 = Factor[Coefficient[Coefficient[H, Qa, 1], Qb, 1]];

u00 = Factor[Expand[c00 /. {x -> (t*y + y^d )/(t + 1)}]]; Length[u00];
u01 = Factor[Expand[c01 /. {x -> (t*y + y^d )/(t + 1)}]]; Length[u01];
u10 = Factor[Expand[c10 /. {x -> (t*y + y^d )/(t + 1)}]]; Length[u10];
u11 = Factor[Expand[c11 /. {x -> (t*y + y^d )/(t + 1)}]]; Length[u11];

(* proof for Delta=5, d=4 *)
BAD = False;
u00[[1]]*u00[[2]]*u00[[3]]*u00[[4]]*u00[[5]]*u00[[6]]
For[i = 1, i <= Length[u00[[7]]], i++,
 If[ (u00[[7]][[i]] /. {T -> 1, y -> 1}) < 0, BAD = True]];
u01[[1]]*u01[[2]]*u01[[3]]*u01[[4]]*u01[[5]]*u01[[6]]
For[i = 1, i <= Length[u01[[7]]], i++,
 If[ (u01[[7]][[i]] /. {T -> 1, y -> 1}) < 0, BAD = True]];
u10[[1]]*u10[[2]]*u10[[3]]*u10[[4]]*u10[[5]]*u10[[6]]
For[i = 1, i <= Length[u10[[7]]], i++,
 If[ (u10[[7]][[i]] /. {T -> 1, y -> 1}) < 0, BAD = True]];
u11[[1]]*u11[[2]]*u11[[3]]*u11[[4]]*u11[[5]]*u11[[6]]
For[i = 1, i <= Length[u11[[7]]], i++,
 If[ (u11[[7]][[i]] /. {T -> 1, y -> 1}) < 0, BAD = True]];
Print[BAD];
\end{verbatim}
\enlargethispage{\baselineskip}
\noindent {\bf OUTPUT:}

\[-\frac{(-1+y)^{14} y^6 \left(1+y+y^2\right)^6
\left(1+t+y+y^2+y^3\right)^2}{(1+t)^{35}}\]

\[-\frac{(-1+y)^{13} y^6 \left(1+y+y^2\right)^6
\left(1+t+y+y^2+y^3\right)^2}{(1+t)^{34}}\]

\[-\frac{(-1+y)^{13} y^6 \left(1+y+y^2\right)^6
\left(1+t+y+y^2+y^3\right)^2}{(1+t)^{31}}\]

\[-\frac{(-1+y)^{12} y^6 \left(1+y+y^2\right)^6
\left(1+t+y+y^2+y^3\right)}{(1+t)^{30}}\]

\noindent\(\text{False}\)

\end{document}